\begin{document}
\begin{frontmatter}
\title{Optimal Difference-based Variance Estimators in Time Series: A General Framework}   
\runtitle{Optimal variance estimator}  

\begin{aug}
\author[A]{\fnms{Kin Wai} \snm{Chan}\ead[label=e1]{kinwaichan@cuhk.edu.hk}}
\address[A]{Department of Statistics, The Chinese University of Hong Kong, \printead{e1}}
\end{aug}

\begin{abstract}
Variance estimation is important for statistical inference. 
It becomes non-trivial when observations are masked by serial dependence structures 
and time-varying mean structures. 
Existing methods either ignore or sub-optimally handle these nuisance structures.
This paper develops a general framework  
for the estimation of the long-run variance for time series with non-constant means.
The building blocks are difference statistics.
The proposed class of estimators is general enough to cover many existing estimators. 
Necessary and sufficient conditions for consistency are investigated. 
The first asymptotically optimal estimator is derived. 
Our proposed estimator is theoretically proven to be invariant to arbitrary mean structures, 
which may include trends and a possibly divergent number of discontinuities. 
\end{abstract}

\begin{keyword}[class=MSC2020]
\kwd[Primary ]{62G05} 
\kwd[; secondary ]{62G20} 
\end{keyword}

\begin{keyword}
\kwd{change point detection}
\kwd{non-linear time series}
\kwd{optimal bandwidth selection}
\kwd{trend inference}
\kwd{variate difference method}
\end{keyword}
\end{frontmatter}

\section{Introduction}\label{sec:intro}
\subsection{Motivation and background}\label{sec:bkgd}
Let the observed time series $X_{1:n} = \{ X_1, \ldots, X_n \}$ be generated from the signal-plus-noise model: 
\begin{equation}\label{eqt:modelX}
	X_i = \mu_i + Z_i, \qquad i=1, \ldots, n, 
\end{equation}
where the deterministic signals $\mu_i$
and the zero-mean stationary noises $Z_i$ are not directly observable. 
Many statistics  
designed for inferring $\mu_{1:n} = \{ \mu_{1},\ldots, \mu_n \}$
admit the form $T_n = T_n(\widehat{v})$,
where $\widehat{v}$ is an estimator of the long-run variance (LRV)
$v = \lim_{n\rightarrow \infty} n\Var(\bar{Z}_n)$ of $\bar{Z}_n= \sum_{i=1}^n Z_i/n$.
Deriving a good estimator $\widehat{v}$ is, therefore, important, 
and is the major goal of this article.

Examples of such $T_n(\widehat{v})$ include, but are not restricted to,  
the Kolmogorov--Smirnoff (KS) change point test and its variants 
\citep{horvath1999,Crainiceanu2007,JuhlXiao2009,GHK2018}, 
mean constancy tests \citep{wu2004CP,DallaGiraitisPhillips2015}, 
mass excess tests of relevant mean changes \citep{DetteWu2019},
tests for monotone trends \citep{wu2001}, 
simultaneous confidence bands (SCBs) for trends \citep{wu_zhao_2007}, etc.
Serving as a normalizer in $T_n(\widehat{v})$,  
the estimator $\widehat{v}$ 
measures the significance of the signals $\mu_{i}$ relative to the noises $Z_{i}$. 
Constructing a good $\widehat{v}$ is nevertheless difficult due to 
two nuisance structures.

\begin{enumerate}
	\item \label{nuisanceStructure1}\emph{Nuisance structure 1: variability of $\mu_{1:n}$}.
			The stochastic variability of $Z_{1:n} = \{Z_1, \ldots, Z_n\}$ is masked by 
			the deterministic variability of $\mu_{1:n}$; see Figure~\ref{fig:dTAVC_I006_TSplot}. 
			Disentangling the variabilities of $\mu_{1:n}$ and $Z_{1:n}$ %from $X_{1:n}$ 
			can be challenging. 
			Without the nuisance structure 2 below, 
			this task was studied by, e.g.,  \cite{hall1990}. 
			Similar and extended results include
			\cite{anderson1971}, \cite{rice1984}, and \cite{LevineTecuapetla2019}. 
	\item \label{nuisanceStructure2}\emph{Nuisance structure 2: serial dependence of $Z_{1:n}$}.
			Under regularity conditions,
			$v = \sum_{k\in\mathbb{Z}}\gamma_k$ is a sum of infinitely many unknowns, 
			where $\gamma_k = \E(Z_0Z_{k})$ is the autocovariance function (ACVF).
			So, estimation of $v$ is hard.
			Without the nuisance structure 1 above, 
			this task was studied by, e.g., 
			\citet{OBM1984}, \cite{carlstein86}, \cite{newwy_west_1987}, \cite{kunsch89}, 
			\cite{andrews1991}, \cite{politis2011}, \cite{chanyau2013}, and \cite{chanyau2014_rTACM}.
\end{enumerate}

\begin{figure}[t]
\begin{center}
\includegraphics[width=\textwidth]{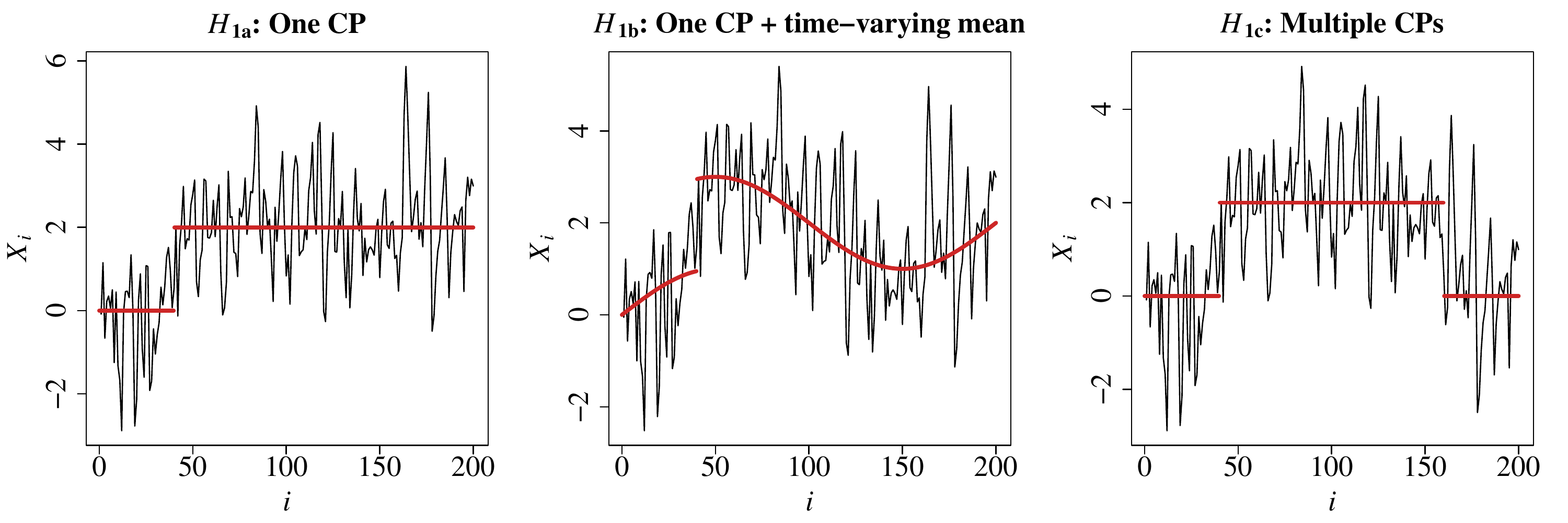} 
\end{center} 
\vspace{-0.6cm}
\caption{The black and red lines denote $X_{1:n}$ and $\mu_{1:n}$, respectively. 
		Here, 
		$H_{1a}$: $\mu_i = \Xi \mathbb{1}(i>2n/10)$; 
		${H}_{1b}$: 
		$\mu_i = \Xi \left\{\mathbb{1}(i>2n/10) + \sin(2\pi i/n)/2\right\}$;
		${H}_{1c}$: 
		$\mu_i = \Xi \mathbb{1}(2n/10<i<8n/10)$, 
		where $\Xi\in\mathbb{R}$ measures the magnitude of jump(s) and/or 
		the amplitude of trend.
		The noises $Z_{1:n}$ are generated from an autoregressive AR(2) model:
		$Z_i =  Z_{i-1}/2+ Z_{i-2}/5 + \varepsilon_i$ for each $i$, where 
		$\varepsilon_i$ follow $\Normal(0,1)$ independently.
		In particular, $n=200$ and $\Xi=2$ are used in the above plots.}
		\label{fig:dTAVC_I006_TSplot}
		\vspace{-0.4cm} 
\end{figure}

In this article, we propose a \emph{general} framework of
estimators of $v$; see Definition~\ref{def:DS} and Equation~(\ref{eqt:diffvarEst}). 
Necessary and sufficient conditions for consistency are derived;
see Theorems~\ref{thm:consistency_finite} and \ref{thm:consistency_infinite}.
They are proven to achieve the \emph{optimal} $\mathcal{L}^2$ rate of convergence 
under various strengths of serial dependence (see Theorems~\ref{thm:bias} and \ref{thm:var})
and are \emph{robust} against a wide class of mean structures (see Theorem~\ref{thm:robustness}).
The optimal $m$th order difference-based variance
estimator $\widehat{v}_{(m)}$ is given in Corollary \ref{coro:optim}, 
where the optimality refers to the best possible difference statistics used in the estimator. 
In particular, one special case (with $m=3$) is given by 
\begin{align*}
	\widehat{v}_{(3)} 
		= \sum_{|k| < \ell} \left\{1-\left(\frac{|k|}{\ell}\right)^2\right\} 
			\frac{1}{n} \sum_{i=mh+|k|+1}^n D_{i} D_{i-|k|}, 
\end{align*}
where $D_i = 0.1942X_i+0.2809X_{i-h}+0.3832X_{i-2h}-0.8582X_{i-3h}$ for each $i$, 
$\ell = O(n^{1/5})$, and $h = 2\ell$.
The proposed estimator with the optimally selected $\ell$ is presented in (\ref{eqt:suggestedEst}).
This estimator outperforms all existing estimators in terms of the mean-squared error (MSE) asymptotically; 
see (\ref{eqt:compareMSEexisting}).
We conclude this subsection with an example to illustrate the importance of this project.

\begin{example}[Change point detection]\label{eg:CP}
Suppose we want to test $H_0: \mu_1 = \ldots = \mu_n$. 
The celebrated KS change point (CP) test statistic (see, e.g., \citet{csorgo1997}) is defined as  
\begin{align}\label{eqt:KS}
	T_n(v) =  \frac{1}{\sqrt{n v}} \max_{k\in\{1,\ldots,n\}} 
				\left\vert \sum_{i=1}^{k} (X_i - \bar{X}_n) \right\vert, 
	\quad \text{where} \quad 
	\bar{X}_n = \frac{1}{n}\sum_{i=1}^n X_i.  
\end{align}
We reject $H_0$ at size $5\%$ if $T_n({v})>1.358$. 
Although this test is designed for a one-CP alternative ($H_{1a}$), 
it is still applicable to more complicated situations, e.g., 
a one-CP alternative in the presence of a smooth trend ($H_{1b}$), and 
a multiple-CP alternative ($H_{1c}$); see Figure~\ref{fig:dTAVC_I006_TSplot}. 
No matter which situation we consider, having a good estimator of $v$ is still necessary.
We compare two estimators: 
the classical Bartlett kernel estimator $\widehat{v}_{(\Andrews)}$ 
with a bandwidth selected by fitting an AR(1) model proposed in \cite{andrews1991}, and 
our proposed estimator $\widehat{v}_{(3)}$ 
with optimally selected parameters to be discussed in (\ref{eqt:suggestedEst}).

Consider the time series defined in Figure~\ref{fig:dTAVC_I006_TSplot}
with different magnitudes of jump $\Xi$.
We compute the power of the classical test $T_n(\widehat{v}_{(\Andrews)})$ 
and the proposed test $T_n(\widehat{v}_{(3)})$ against $\Xi$;
see Figure~\ref{fig:dTAVC_I006_KSpower}. 
Under $H_{1a}$,
the test $T_n(\widehat{v}_{(\Andrews)})$ is valid but less powerful than $T_n(\widehat{v}_{(3)})$ 
because $\widehat{v}_{(\Andrews)}$ is inaccurate for $v$ when $\Xi\neq 0$.
Under ${H}_{1b}$ or $H_{1c}$,
the test $T_n(\widehat{v}_{(\Andrews)})$ even fails to demonstrate monotone power when $\Xi$ increases  
because $\widehat{v}_{(\Andrews)}\rightarrow\infty$ in probability as $\Xi \rightarrow \infty$ in these cases.
So a robust and efficient estimator of $v$ is crucial.\hfill $\blacksquare$
\end{example}

\begin{figure} 
\begin{center}
\includegraphics[width=\textwidth]{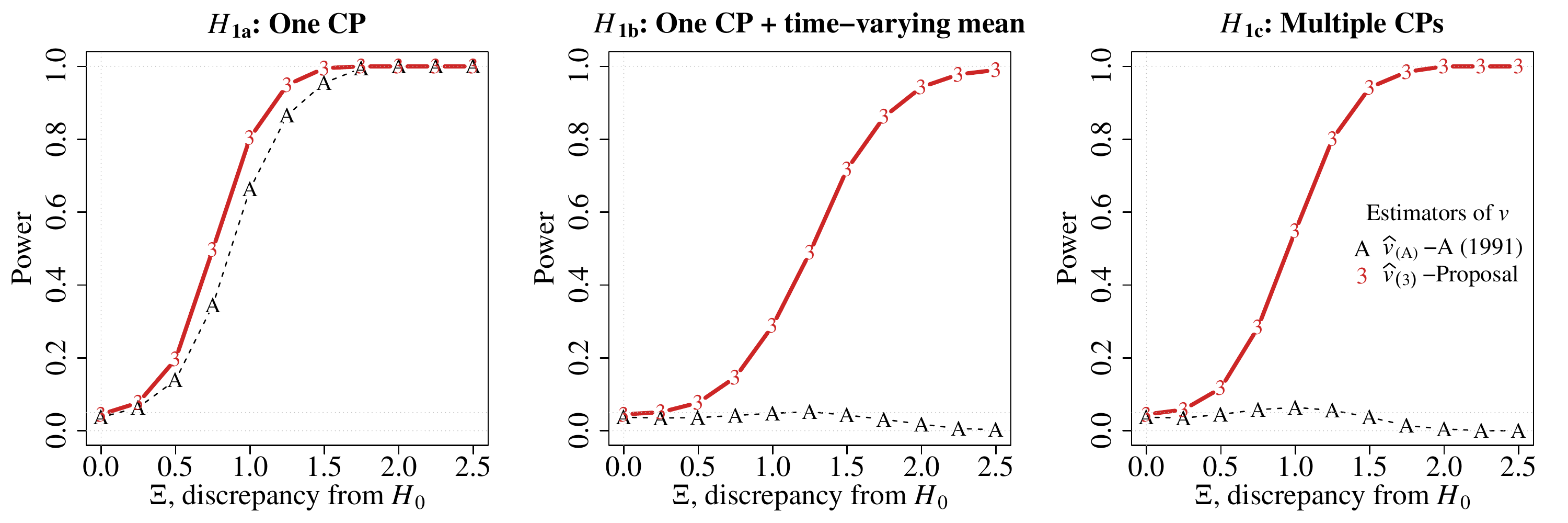} 
\end{center} 
\vspace{-0.6cm}
\caption{Power curves of the KS tests with the classical estimator $\widehat{v}_{(\Andrews)}$
			and the proposed estimator $\widehat{v}_{(3)}$
			in three different types of alternative hypotheses stated in Figure~\ref{fig:dTAVC_I006_TSplot}. }
		\label{fig:dTAVC_I006_KSpower}
		\vspace{-0.4cm} 
\end{figure}

\subsection{Notation and mathematical background}\label{sec:notations}
Let $\mu_i = \mu(i/n)$ for $i=1,\ldots, n$, where $\mu: [0,1]\rightarrow \mathbb{R}$ is a mean function. 
Suppose that $\mu(\cdot)$ consists of a continuous part $c(\cdot)$ and a step-discontinuous part $s(\cdot)$
such that 
\begin{align}\label{eqt:muDef}
	\mu(t) = c(t) + s(t),
	\qquad  
	s(t) = \sum_{j=0}^{\mathcal{J}} \xi_j \mathbb{1}(T_j/n \leq t < T_{j+1}/n),
\end{align}
where $\mathcal{J}$ is the number of discontinuities, 
$1\equiv T_0 < T_1 < \ldots < T_{\mathcal{J}} < T_{\mathcal{J}+1} \equiv n+1$ are the 
times of discontinuities, and
$\xi_0, \ldots, \xi_{\mathcal{J}}$ are the step sizes 
such that $\xi_j \neq \xi_{j-1}$ for each $j$.  
Note that $c(\cdot)$, $\mathcal{J}$, $\xi_0, \ldots, \xi_{\mathcal{J}}$, and $T_1, \ldots, T_{\mathcal{J}}$
are possibly dependent on $n$.
For example, $\mathcal{J}$ and $\xi_0, \ldots, \xi_{\mathcal{J}}$ can be divergent with $n$.
Denote the minimal gap between two consecutive CP times by 
\[
	\mathcal{G} = \min_{0\leq j\leq \mathcal{J}} (T_{j+1} - T_{j}).
\]
We measure the smoothness of $c(\cdot)$ by $\mathcal{C}$,
the maximum step magnitude of $s(\cdot)$ by $\mathcal{S}$, 
and the overall variability of $\mu(\cdot)$ by $\mathcal{V}$, where
\[
	\mathcal{C} = \sup_{0\leq t'<t\leq 1} \left\vert \frac{c(t)-c(t')}{t-t'} \right\vert, \qquad   
	\mathcal{S} = \sup_{0\leq j\leq\mathcal{J}} |\xi_j - \xi_{j+1}| , \qquad   
	\mathcal{V} = \int_0^1 \left\{ \mu(t) - \bar{\mu} \right\}^2 \, \dd t, 
\]
and $\bar{\mu} = \int_0^1 \mu(t)\, \dd t$. 
Clearly, $\mathcal{C}=0$ iff there is no trend effect;
$\mathcal{S}=0$ or $\mathcal{J}=0$ iff there is no discontinuity; and 
$\mathcal{V}=0$ iff the mean function is a constant.

Let $Z_i = g(\mathcal{F}_{i})$ for some measurable function $g$,
where $\mathcal{F}_{i}=(\ldots,\varepsilon_{i-1},\varepsilon_i)$  
and $\{\varepsilon_i\}_{i\in\mathbb{Z}}$ are independent and identically distributed (i.i.d.) innovations.
Let $\varepsilon_j'$ be an i.i.d. copy of $\varepsilon_j$, 
$\mathcal{F}_{i,\{j\}}=(\mathcal{F}_{j-1},\varepsilon_j',\varepsilon_{j+1},\ldots,\varepsilon_i)$,
and $Z_{i,\{j\}}=g(\mathcal{F}_{i,\{j\}})$.
Define $\mathcal{P}_i \,\cdot = \E(\cdot\mid\mathcal{F}_i)- \E(\cdot\mid\mathcal{F}_{i-1})$.
For $p\geq 1$, define
the \emph{physical dependence measure} and its aggregated value by 
\begin{eqnarray}\label{eqt:depMea}
	\theta_{p,i}=\| Z_i - Z_{i,\{0\}} \|_p  
	\qquad \text{and} \qquad
	\Theta_p = \sum_{i=0}^{\infty} \theta_{p,i},  
\end{eqnarray}
respectively, 
where $\left\| \,\cdot\, \right\|_p = \left( \E  \left| \,\cdot\, \right|^p \right)^{1/p}$. 
The finiteness of $\Theta_p$ 
provides a mild and easily verifiable condition for asymptotic theory;
see \citet{wu2005,wu2007,wu2011}. 

\begin{assumption}[Weak dependence]\label{ass:weakDep}
The noise sequence $\{Z_i\}_{i\in\mathbb{Z}}$ is a zero-mean strictly stationary time series
that satisfies $\E(Z_1^r)<\infty$ for some $r>4$, and $\Theta_4 <\infty$.
\end{assumption}

Indeed, Assumption~\ref{ass:weakDep} implies that the ACVFs are absolutely summable, i.e., 
$u_0 := \sum_{k\in\mathbb{Z}} |\gamma_k| <\infty$, which ensures the existence of 
$v = \sum_{k\in\mathbb{Z}} \gamma_k$.  
We remark that there exist other ways of quantifying dependence,
including various types of mixing coefficients \citep{Rosenblatt56,wolkonski1959} and  
near-epoch approach \citep{ibragimov1962}.
They have been widely adopted and studied;
see \citet{TaqquEberlein86} and \citet{bradley2005} for some surveys of results.
It is certainly interesting to develop our theoretical results under these settings, 
however, it is beyond the scope of this paper. We leave it for further study.

The following notation is used. 
Let $\mathbb{N} = \{1, 2, 3, \ldots\}$, $\mathbb{N}_0 = \{0, 1, 2, \ldots\}$, 
and $\mathbb{R}^+ = (0, \infty)$.
For any statement $E$, $\mathbb{1}_{\{E\}}=1$ if $E$ is true, otherwise $\mathbb{1}_{\{E\}}=0$.
For any $a,b\in\mathbb{R}$, $a^+ = \max(a,0)$ and $a\wedge b = \min(a,b)$.
For any $\{a_n\}_{n\in\mathbb{N}}$ and $\{b_n\}_{n\in\mathbb{N}}$ with $a_n, b_n \in\mathbb{R}^+$,   
the relation $a_n\sim b_n$ means $a_n/b_n\rightarrow 1$;
$a_n\asymp b_n$ means there is $C\in\mathbb{R}^+$ such that $1/C \leq a_n/b_n \leq C$ for all large $n$; 
$a_n\ll b_n$ or $a_n = o(b_n)$ means $a_n/b_n\rightarrow 0$; 
$a_n \lesssim b_n$ or $a_n = O(b_n)$ means there is $C>0$ such that $a_n/b_n \leq C$ for all large $n$.
Convergence in probability and convergence in distribution are denoted by ``$\inP$'' and ``$\inD$'',
respectively. 
Write $\|\cdot\|= \|\cdot\|_2$.
For any sequence of random variables $\{Z_n\}_{n\in\mathbb{N}}$,
$Z_n = O_p(a_n)$ means for any $\epsilon>0$ there exist $C\in\mathbb{R}^+$ and $N\in\mathbb{N}$ such that 
$\pr(|Z_n/a_n|>C) < \epsilon$ for all $n>N$;
$Z_n = o_p(a_n)$ means $Z_n/a_n \inP 0$.  
For any estimator $\widehat{\theta}$ of $\theta$, 
denote $\Bias(\widehat{\theta};\theta)=\Bias(\widehat{\theta})=\E(\widehat{\theta})-\theta$ 
and 
$\MSE(\widehat{\theta};\theta)=\MSE(\widehat{\theta})=\E(\widehat{\theta}-\theta)^2$.

In this article,
we propose and study a general framework for estimating 
\begin{align}\label{eqt:def_v}
	v = \lim_{n\rightarrow\infty} n \Var(\bar{Z}_n)  = \sum_{k\in\mathbb{Z}} \gamma_k, 
	\qquad \text{where}\qquad
	\gamma_k =\Cov(Z_0, Z_k),  
\end{align}
by using difference statistics.
This article is structured as follows.
Section~\ref{sec:general} defines the proposed class of estimators.
We show that it covers many existing estimators as special cases. 
Section~\ref{sec:invariance} demonstrates its invariance to mean structures. 
Section~\ref{sec:consistency} derives 
the necessary and sufficient conditions for consistency.
Section~\ref{sec:optimal} shows that the proposed estimator is asymptotically optimal. 
Section~\ref{sec:imp_gen} addresses implementation issues and generalization.
Section~\ref{sec:experiment} presents 
simulation experiments, applications, and real-data examples. 
We conclude the paper with a summary of major contributions and possible future work 
in Section~\ref{sec:conclusion}.
All proofs are deferred to a separate supplementary note.
An R-package \texttt{"dlrv"} is available on the author website.

\section{A general framework for variance estimation}\label{sec:general}
\subsection{Difference-based statistics}
Variance estimators usually require \emph{centering}
to achieve mean invariance.   
For example, 
if $\mu_1 =\ldots = \mu_n$, 
one may \emph{globally} center the data as $D_i' = X_i - \bar{X}_n$; 
if $\mu_{1:n}$ are not constant, one may \emph{locally} center each $X_{i}$ by 
the kernel method and the lag-1 difference:  
\begin{equation}\label{eqt:exampleD}
	D_i'' = X_i - \frac{\sum_{j} H(\frac{i-j}{m/2}) X_j }{ \sum_{j'} H(\frac{i-j'}{m/2})}
	\qquad \text{and}\qquad
	D_i''' = X_i - X_{i-1}, 
\end{equation}
where $H(\cdot)$ is a kernel, and $m/2$ is a bandwidth.
The statistics $D_i'$, $D_i''$, and $D_i'''$ are special cases of the following 
class of general difference statistics.

\begin{definition}[Difference statistics]\label{def:DS}
A real-valued sequence $\{d_j\}_{j=0}^{m}$ is said to be an 
$m$th order \emph{difference sequence} if $d_0 + \ldots + d_m =0$. 
If, in addition, $\delta_0 = d_0^2 + \ldots + d_m^2 =1$, then $\{d_j\}$ is said to be \emph{normalized}. 
For $h\in\mathbb{N}$,  
the $m$th order lag-$h$ \emph{difference statistics} are defined as 
\begin{equation}\label{eqt:diffStat}
	D_{i} = \sum_{j=0}^m d_j X_{i-jh},   
	\qquad 
	i=mh+1, \ldots, n.
\end{equation}
The zeroth order difference statistics are  $D_i = X_i - \bar{X}_n$ for $i=1,\ldots,n$.
Also denote $\delta_s = \sum_{j=|s|}^m d_j d_{j-|s|}$ for $|s|\leq m$ and $\delta_s = 0$ for $|s|>m$.
\end{definition}

The condition $\sum_{j=0}^m d_j=0$ is used to ensure that  
$\E(D_i) \approx 0$  
when $\mu_{i}\approx \mu_{i-h}\approx \ldots \approx \mu_{i-mh}$.
This property is important for deriving asymptotic mean invariance of statistics based on $D_i$;
see Section~\ref{sec:invariance} for a precise and rigorous definition of mean invariance. 
The requirement $\delta_0 =1$ 
is used to regularize $D_i$ such that 
$\Var(D_i) = \Var(X_i)$ when $X_{1:n}$ are serially uncorrelated. 
One can easily normalize $d_j$
by $d_j/\sqrt{\delta_0}$ provided that $\delta_0 \neq 0$.
From now on, we assume the difference sequence $\{d_j\}$ is normalized.  
The lag parameter $h$ is used to control how frequent the observations are
used for constructing one difference statistic. 
When the data are independent, $h=1$ works well.
When the data are serially dependent, 
a larger $h$ can be used to reduce the serial dependence among the observations 
that are used in the same difference statistic. 
Some difference sequences are shown in Example~\ref{eg:dj}.

\begin{example}\label{eg:dj}
Some commonly used difference sequences $\{d_j\}_{j=0}^m$ are listed below. 
\begin{itemize}
	\item Binomial differencing:
			$d_j = {m\choose j}(-1)^j / {2m\choose m}^{1/2}$ for $j=0,\ldots, m$. 
			It gives $\delta_k = (-1)^k (m!)^2/\{ (m+k)!(m-k)!\}$ for $k=0,1,\ldots,m$.
	\item Local differencing: 
			$d_0 = \sqrt{m/(m+1)}$ and $d_j= -1/\sqrt{m^2+m}$ for $j=1, \ldots, m$.
			It gives $\delta_0=1$ and $\delta_k = -k/(m^2+m)$ for $k=1,\ldots,m$.
	\item \cite{hall1990}: 
			Define $\{d_j\}_{j=0}^m$ by minimizing $\sum_{k=1}^m \delta_k^2$; 
			see Table~\ref{tab:optimalSeq} for the solution. 
			It gives $\delta_0=1$ and $\delta_k = -1/(2m)$ for $k=1,\ldots,m$. 
			\hfill $\blacksquare$
\end{itemize}
\end{example}

\begin{table}
\begin{center}
\begin{tabular}{|c|ccccc|}
\hline
$m$ & $d_0$ & $d_1$ & $d_2$ & $d_3$ & $d_4$  \\\hline\hline
$1$ & $0.7071$ & $-0.7071$ &-&-&-\\
$2$ & $0.8090$ & $-0.5000$ & $-0.3090$ &-&-\\ 
$3$ & $0.1942$ & $0.2809$ & $0.3832$ & $-0.8582$ &-\\
$4$ & $0.2708$ & $-0.0142$ & $0.6909$ & $-0.4858$ & $-0.4617$\\
\hline
\end{tabular}
\end{center}
\vspace{-0.2cm}
\caption{\cite{hall1990}'s difference sequence $\{d_j\}_{j=0}^m$ for $m=1,\ldots,4$.}\label{tab:optimalSeq}
\vspace{-0.4cm} 
\end{table}

Note that $m=m_n$ is allowed to diverge with $n$. In this case, we need the following assumption 
to regularize the difference sequence. 

\begin{assumption}\label{ass:summability_dj}
The difference sequence $\{d_j\}$ satisfies 
(i) $\sup_{n\in\mathbb{N}}\sum_{j=0}^m|d_j|<\infty$, and
(ii) $\sup_{n\in\mathbb{N}}\sum_{|s|\leq m}|\delta_s|<\infty$.
\end{assumption}

\subsection{Proposed difference-based variance estimator}
Since $D_{mh+1}, \ldots, D_n$ are approximately centered at zero, 
it motivates us to utilize them as building blocks for 
estimating $v$.
We define the
\emph{$m$th order difference-based estimator} of $v$ by 
\begin{align}\label{eqt:diffvarEst}
	\widehat{v}
		= \sum_{|k| < \ell} K\left(\frac{k}{\ell}\right) \widehat{\gamma}_k^D, 
	\qquad \text{where}\qquad
	\widehat{\gamma}_{k}^D
		= \frac{1}{n} \sum_{i=mh+|k|+1}^n D_{i} D_{i-|k|} .
\end{align}
We may write $\widehat{v}$ as $\widehat{v}_{(m)}$ to emphasize the order $m$.
In (\ref{eqt:diffvarEst}), 
\begin{align}\label{eqt:def_m_l_h}
	m=m_n\in\mathbb{N}_0, \qquad \ell=\ell_n\in\{1, \ldots,n\}, \qquad h=h_n\in\{1, \ldots,n\}
\end{align}
are the order of differencing, bandwidth parameter, and lag parameter of the estimator $\widehat{v}$, respectively.
The function 
$K:\mathbb{R}\rightarrow\mathbb{R}$ is called a kernel, which satisfies that 
$K(0)=1$, 
$K(t)=K(-t)$ for all $t$, 
$K(t)=0$ for $|t|\geq1$, and 
$K$ is continuous on $(-1,1)$. 
Popular kernels include the Bartlett kernel $K_{\Bart}(t) = (1-|t|)^+$
and the flat-top truncated kernel $K_{\Flat}(t)=\mathbb{1}_{(|t|< 1)}$.
In (\ref{eqt:diffvarEst}), one may alternatively use 
$\sum_{i=mh+|k|+1}^n (D_{i}-\bar{D}_n)(D_{i-|k|}-\bar{D}_n)/n$
instead of $\widehat{\gamma}_{k}^D$,
where $\bar{D}_n = \sum_{i=mh+1}^n D_i/n$. 
It does not affect the asymptotic results in this article.

The kernel estimator $\widehat{v}$ can be written in a subsampling form.
For each $i=\ell, \ldots,n$, 
define 
the $i$th subsample of size $\ell$ as $\{D_t : t \in \Lambda_i\}$, 
where $\Lambda_i = \{i-\ell+1, \ldots, i\}$.
If $\mathcal{I}\subseteq \{mh+\ell+1, \ldots, n\}$ 
is the set of subsample indices to be used, 
then the \emph{subsampling estimator} of $v$ is defined as  
\begin{align}\label{eqt:Stype_vHat}
	\widehat{v}'
		= \frac{1}{|\mathcal{I}|} \sum_{i\in\mathcal{I}} \widehat{v}'(i),
	\qquad\text{where}\qquad
	\widehat{v}'(i) = \sum_{t,t'\in\Lambda_i} \frac{K(|t-t'|/\ell)}{\ell - |t-t'|} D_t D_{t'},
\end{align}
and $|\mathcal{I}|$ is the total number of subsamples. 
The overlapping subsamples and non-overlapping subsamples utilize 
$\mathcal{I}_1 = \{mh+1+\ell, \ldots, n\}$ and 
\begin{align*}
	\mathcal{I}_0 &= \{mh+1+\ell, 2(mh+1+\ell)  , \ldots, \lfloor n/(mh+1+\ell) \rfloor (mh+1+\ell)\},
\end{align*}
respectively. 
Similar ideas can be found in, e.g., \cite{carlstein86} and \citet{peter1987}.  
The estimator $\widehat{v}'$ can be regarded as a ``bagged'' estimator of $v$ 
by averaging the rough estimators (or weak ``learners'') 
$\{\widehat{v}'(i)\}_{i\in\mathcal{I}}$.
If computational time is a concern, we may use the non-overlapping subsamples. 
However, its statistical efficiency is reduced; see, e.g., \cite{Alexopoulos2011}. 
On the other hand, if the overlapping subsamples are used, 
the estimators $\widehat{v}'$ and $\widehat{v}$ are asymptotically equivalent
in the following sense.

\begin{proposition}[Asymptotic equivalence of $\widehat{v}$ and $\widehat{v}'$]\label{prop:Ktype_Stype_equiv}
Consider $\widehat{v}'$ with the overlapping subsamples
$\mathcal{I} = \mathcal{I}_1$, and the order of differencing $m=m_n$. 
If Assumptions~\ref{ass:weakDep}--\ref{ass:summability_dj} hold, 
$1/\ell + (\ell+mh)/n \rightarrow 0$, and 
$\mathcal{G}\gtrsim \ell+mh$, 
then, for any $\mu(\cdot)$ and $K(\cdot)$, we have
\begin{align}\label{eqt:Ktype_Stype_equiv}
	\|\widehat{v} - \widehat{v}'\| 
		= O\left( \frac{\ell+mh}{n} \right) (1+\| \widehat{v}\|)
			+ r_{\sub} ,
\end{align}
where 
$r_{\sub} = O\{ \mathcal{C}^2(\ell+mh)^4/n^3  + \mathcal{S}^2/n \}$.
\end{proposition}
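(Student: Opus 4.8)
The plan is to reduce $\widehat v'$ to $\widehat v$ by an explicit accounting of overlapping subsamples, splitting the discrepancy into a ``bulk'' part, a stochastic ``boundary'' part, and a deterministic ``boundary'' part. First I would expand $\widehat v'$, interchange the sums over the subsample index $i\in\mathcal I_1$ and the pair $(t,t')$, and obtain
\begin{align}\label{eqt:plan-rearrange}
	\widehat v' \;=\; \frac{1}{|\mathcal I_1|}\sum_{|k|<\ell}\frac{K(k/\ell)}{\ell-|k|}\sum_{t}N(t,t-|k|)\,D_tD_{t-|k|},
	\qquad
	N(t,t') = \#\{i\in\mathcal I_1:t,t'\in\Lambda_i\}.
\end{align}
A direct count gives the combinatorial identity $N(t,t')=\ell-|t-t'|$ whenever $\max(t,t')\ge mh+1+\ell$ and $\min(t,t')\le n-\ell+1$; for these ``interior'' pairs the factor $N/(\ell-|k|)$ is exactly $1$, so the only pairs where $\widehat v'$ and $\frac{n}{|\mathcal I_1|}\widehat v$ differ lie in two blocks of width $O(\ell)$ abutting the endpoints $mh+1$ and $n$. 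This yields
\begin{align}\label{eqt:plan-decomp}
	\widehat v'-\widehat v
	\;=\; \frac{mh+\ell}{|\mathcal I_1|}\,\widehat v
	\;-\; \frac{1}{|\mathcal I_1|}\sum_{|k|<\ell}\frac{K(k/\ell)}{\ell-|k|}\sum_{t\in\mathcal E_k}\Delta_t(k)\,D_tD_{t-|k|}
	\;+\; r_0,
\end{align}
where $\mathcal E_k$ collects the boundary values of $t$, $\Delta_t(k):=(\ell-|k|)-N(t,t-|k|)\ge0$ is the ``defect'', and $r_0=O((\ell+mh)/n)$ in $\mathcal L^2$ absorbs the $O(\ell)$ pairs (e.g.\ those touching $D_{mh+1}$) that $\widehat v$ and $\widehat v'$ do not contain verbatim. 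Since $|\mathcal I_1|=n-mh-\ell\sim n$ under $1/\ell+(\ell+mh)/n\to0$, the first term has $\mathcal L^2$ norm $O((\ell+mh)/n)\|\widehat v\|$; this is the $\|\widehat v\|$ part of the bound.

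The heart of the matter is the boundary term. A crude $\ell^1$ bound fails, because $\sum_{t\in\mathcal E_k}\Delta_t(k)\asymp(\ell-|k|)^2$, which would only give $O(\ell^2/n)$. The remedy is that for $t$ in the left block one has $\Delta_t(k)=\min\{f(t),f(t-|k|)\}$ with $f(s):=mh+\ell+1-s$, so the layer-cake identity $\min(a,b)=\int_0^\infty\mathbb{1}(a>u)\mathbb{1}(b>u)\,du$ telescopes the quadratic form: the left-block contribution equals $-\frac{1}{|\mathcal I_1|}\int_0^{\ell}\widehat v^{[u]}\,du$, where $\widehat v^{[u]}$ is the same kind of quadratic form as $\widehat v'(i)$ but evaluated on the shrinking window $\{mh+1,\dots,mh+\ell-u\}$ (for the Bartlett kernel, $\widehat v^{[u]}=\frac1\ell(\sum_{t=mh+1}^{mh+\ell-u}D_t)^2$), and the right block is symmetric. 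Writing $D_t=D_t^Z+\E D_t$ with $D_t^Z=\sum_j d_jZ_{t-jh}$, the noise part of $\widehat v^{[u]}$ is a bounded ``weak learner'': Assumptions~\ref{ass:weakDep}--\ref{ass:summability_dj} give $\Theta_4(D^Z)\le(\sup_n\sum_j|d_j|)\,\Theta_4<\infty$, so partial sums of $D^Z$ over $\le\ell$ terms have $\mathcal L^4$ norm $O(\sqrt\ell)$ and $\sup_u\|\widehat v^{[u],Z}\|=O(1)$; hence the noise part of the boundary term is $O(\ell/n)$, which supplies the ``$1$'' in $1+\|\widehat v\|$.

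It remains to bound the mean part of the boundary term. Because $\sum_j d_j=0$, $\E D_t=\sum_j d_j\{c(\tfrac{t-jh}{n})-c(\tfrac tn)\}+\sum_j d_j\,s(\tfrac{t-jh}{n})$; the trend piece is $O(\mathcal C\,mh/n)$ uniformly, and telescoping its partial sums over $t$ and integrating in $u$ yields $O(\mathcal C^2(\ell+mh)^4/n^3)$ via the AM--GM bound $(\ell\cdot mh)^2\le(\ell+mh)^4/16$. The step piece of $\E D_t$ is nonzero only on the $mh$-neighbourhood of a discontinuity, so under $\mathcal G\gtrsim\ell+mh$ each boundary block is affected by at most one discontinuity; its contribution — governed, like the step contaminations of $\widehat v$ and of the $\widehat v'(i)$, by Theorem~\ref{thm:robustness} — can be shown to be of the smaller order $O(\mathcal S^2/n)$, and the cross terms between the noise and mean parts are controlled by Cauchy--Schwarz from the two previous estimates. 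Collecting the three groups of terms gives (\ref{eqt:Ktype_Stype_equiv}). I expect the boundary accounting to be the crux: exhibiting the telescoping cancellation that brings the naive $O(\ell^2/n)$ down to $O((\ell+mh)/n)$, and isolating the step contamination at the advertised $O(\mathcal S^2/n)$ rather than a larger power of $\ell+mh$.
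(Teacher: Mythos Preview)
Your decomposition into a bulk term $\frac{mh+\ell}{|\mathcal I_1|}\widehat v$ plus a boundary quadratic form is exactly what the paper does, and your treatment of the deterministic (mean) part --- splitting $\E D_t$ into a trend piece $O(\mathcal C\,mh/n)$ and a step piece supported on $O(mh)$ indices per discontinuity, then using $\mathcal G\gtrsim \ell+mh$ to cap the number of affected boundary indices --- also matches the paper's argument (equations around (\ref{eqt:ED_1})--(\ref{eqt:sumsumEDED})). One caution: you appeal to Theorem~\ref{thm:robustness} for the step piece, but that theorem is proved \emph{using} Proposition~\ref{prop:Ktype_Stype_equiv}, so you should instead do the direct count as the paper does (at most $O(1)$ of the $\varrho_i$ in each boundary block are nonzero, coefficients are $O(1/n)$, giving $O(\mathcal S^2/n)$).

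The main divergence is in the stochastic boundary term. Your layer-cake identity $\Delta_t(k)=\int_0^\infty\mathbb 1(f(t)>u)\mathbb 1(f(t-|k|)>u)\,du$ and the resulting representation as $\frac{1}{|\mathcal I_1|}\int_0^\ell\widehat v^{[u]}\,du$ is a nice structural observation, and the bound $\sup_u\|\widehat v^{[u],Z}\|=O(1)$ is obtainable (though for general $K$ the object $\widehat v^{[u]}$ is not quite a ``weak learner'' $\widehat v'(i)$ on a smaller window --- the denominator is still $\ell-|k|$, not the window-adapted $|W_u|-|k|$ --- so the $\mathcal L^2$ bound needs a separate quadratic-form argument rather than the Bartlett-specific partial-sum trick you sketch). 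The paper bypasses all of this: it simply observes that $B_n\widehat v-\widehat v'$ is a quadratic form in $\{D_t\}$ whose nonzero coefficients are $O(1/n)$ and are supported on two $(\ell+mh)\times(\ell+mh)$ blocks, and invokes Lemma~8 of Xiao and Wu (2012) to get $\|B_n\widehat v-\widehat v'-\E(B_n\widehat v-\widehat v')\|=O((\ell+mh)/n)$ in one stroke. That lemma is precisely the moment inequality that turns the naive $O((\ell+mh)^2/n)$ into $O((\ell+mh)/n)$ without any telescoping. Your route would work after filling in the $\sup_u$ bound for general kernels, but the Xiao--Wu lemma is the cleaner tool here.
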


The proof of Proposition \ref{prop:Ktype_Stype_equiv} 
can be found in Section \ref{sec:proof_Ktype_Stype_equiv} of 
the supplement.
By Minkowski's inequality, 
(\ref{eqt:Ktype_Stype_equiv}) implies 
$\|\widehat{v}-\widehat{v}'\| \leq O\{(\ell+mh)/n\}(1+ v + \|\widehat{v}-v\|) + r_{\sub}$, 
where the root-MSE $\|\widehat{v}-v\|\rightarrow 0$ if $\widehat{v}$ is $\mathcal{L}^2$ consistent for $v$.
So, (\ref{eqt:Ktype_Stype_equiv}) reduces to $\|\widehat{v}-\widehat{v}'\| = O\{(\ell+mh)/n\} + r_{\sub}$
if $\widehat{v}$ is $\mathcal{L}^2$ consistent. 
The remainder term $r_{\sub}$ is negligible if $\ell+mh$ is not too large. 
For example, if $\ell +mh= O(n^{\theta})$ for some $\theta\in(0,1/2]$, 
then $r_{\sub} = O\{(\mathcal{C}^2 + \mathcal{S}^2)/n\}$. 
We emphasize that 
Proposition~\ref{prop:Ktype_Stype_equiv} is true even for 
a possibly divergent $m=m_n$ and 
a possibly non-constant $\mu(\cdot)$
under the regularity conditions in Proposition~\ref{prop:Ktype_Stype_equiv}.

\subsection{Existing variance estimators}
Many popular variance estimators admit the forms 
of (\ref{eqt:diffvarEst}) or (\ref{eqt:Stype_vHat}).
They are presented and categorized in Examples~\ref{eg:uncorrelated}--\ref{eg:general_mInf} according to the values of 
$h$ and $m$.

\begin{example}[Serially uncorrelated case: $h=1$]\label{eg:uncorrelated}
Assume $\gamma_1= \gamma_2 = \ldots =0$. 
Then $v=\sum_{k\in\mathbb{Z}} \gamma_k$ reduces to $v=\gamma_0$.
\cite{hall1990} proposed to estimate $v$ by the $\widehat{v}$ in (\ref{eqt:diffvarEst}) with $h=1$.
Such $\widehat{v}$ is just an estimator of the marginal variance $\gamma_0$ but not $v=\sum_{k\in\mathbb{Z}}\gamma_k$. 
Recently, \cite{TecuapetlaMunk2017} and \citet{LevineTecuapetla2019} extended it to 
estimation of $\gamma_k$ for $M$-dependent time series, i.e., $\gamma_k=0$ for $|k|>M$. 
Their proposal is a special case of $\widehat{\gamma}_k^{D}$ in
(\ref{eqt:diffvarEst}) when $n\rightarrow\infty$.
The assumption of $M$-dependence can be restrictive in real applications.
Most importantly, they did not consider the estimation of $v=\sum_{k\in\mathbb{Z}} \gamma_k$.\hfill $\blacksquare$
\end{example}

\begin{example}[Constant-mean case: $m=0$]\label{eg:constantMean}
Assume $\mu_1 = \ldots = \mu_n$.
Let $D_i' = X_i - \bar{X}_n$ for each $i$.
Then $v$ is estimated by $\widehat{v}$ and $\widehat{v}'$ with $m=0$:
\begin{align}\label{eqt:vHat_m0}
	\widehat{v} = \sum_{|k|\leq \ell} 
			\frac{K( k/\ell)}{n}\sum_{i=1+|k|}^n D_i'D_{i-|k|}',
	\qquad 
	\widehat{v}' = \frac{1}{|\mathcal{I}|} \sum_{i\in\mathcal{I}} \sum_{t,t'\in\Lambda_i} 
		\frac{K(|t-t'|/\ell)}{\ell - |t-t'|} D_t' D_{t'}'.
\end{align}
The kernel estimator $\widehat{v}$ in (\ref{eqt:vHat_m0})
has a long history in statistics, econometrics and operational research; see, e.g., 
\citet{newwy_west_1987}, \citet{andrews1991}, and \citet{politis1999}.
The subsampling estimator $\widehat{v}'$ in (\ref{eqt:vHat_m0}) is studied in, e.g., 
\cite{song1995} and \citet{chanyau2015_hoc}.
If $K=K_{\Bart}$ and $\mathcal{I}=\mathcal{I}_1$, 
then $\widehat{v}' = \sum_{i=\ell}^n (\sum_{j=i-\ell+1}^i D'_j)^2/\{\ell(n-\ell+1) \}$
%which 
is the well-known overlapping batch means (OBM) estimator \citep{OBM1984}. 
Besides, \citet{carlstein86} and \citet*{Alexopoulos2011} studied 
the non-overlapping and partially overlapping subsamples,
however, these schemes are suboptimal in terms of MSE. \hfill $\blacksquare$
\end{example}

\begin{example}[General case: $m=1$, $h\rightarrow\infty$]\label{eg:general_m1}
In the presence of both serial dependence and time-varying means, 
the estimation of $v$ is less well studied. 
Let $\widetilde{v}'=\ell\sum_{i\in \mathcal{I}} (S_i - S_{i-\ell})^2 /\{2 |\mathcal{I}|\}$, 
where 
$S_i = \sum_{j=i-\ell+1}^{i} X_j/\ell$ is the $i$th subsample mean. 
This class of estimators is independently proposed by various authors. 
For example,
\citet{DetteWu2019} used $\mathcal{I} = \mathcal{I}_1$, 
whereas  
\citet{wu2001}, \citet{wu2004CP}, \citet{wu_zhao_2007}, \citet{dette2020}, 
and \citet{ChenWangWu2021} used $\mathcal{I} = \mathcal{I}_0$.
In either case, 
$\widetilde{v}'$ is just a special case of $\widehat{v}'$ with $m=1$, $h=\ell=O(n^{1/3})$ and 
$K=K_{\Bart}$. 
Moreover, none of them provides the optimal value of $\ell$. \hfill $\blacksquare$
\end{example}

\begin{example}[General case: $m\rightarrow\infty$, $h=1$]\label{eg:general_mInf}
\cite{Altissimoa_Corradic_2003} proposed to locally center $X_i$  
by using the $D_i''$ defined in (\ref{eqt:exampleD})
with $H= K_{\Flat}$, $K = K_{\Bart}$, 
and $m\rightarrow\infty$. 
Their estimator is asymptotically equivalent to 
$\widehat{v}$ 
with 
$d_j =  \left\{  \mathbb{1}_{(j=\lfloor m/2\rfloor )} - w_j \right\}/ c$
for $j=0, \ldots, m$, 
where $c,w_0, \ldots, w_m$ are some constants such that $w_0 + \ldots + w_m = 1$ and $\delta_0 = 1$.
A similar proposal can be found in \citet{JuhlXiao2009}.\hfill $\blacksquare$
\end{example}

Some estimators cannot be expressed as 
(\ref{eqt:diffvarEst}) or (\ref{eqt:Stype_vHat}); see Examples~\ref{eg:1CPvEst}--\ref{eg:insuff_diff}.
All of them are suboptimal or require restrictive assumptions.

\begin{example}[Removal of one CP]\label{eg:1CPvEst}
\citet{Crainiceanu2007} proposed to estimate one single potential CP $T_1$ by 
the standard CUSUM-type estimator $\widehat{T}_1$. 
After centering 
$\{X_i\}_{i=1}^{\widehat{T}_1-1}$ and $\{X_i\}_{i=\widehat{T}_1}^n$ by their respective sample means,
one may apply (\ref{eqt:vHat_m0}) to the centered series to estimate $v$.
This method is vulnerable to the one-CP assumption. 
Although it can be extended to handle multiple CPs,
the accumulated errors may ruin the final estimator.  

Recently, 
\cite{DehlingFriedWendler2020} proposed to split $X_{1:n}$ into three disjoint subsamples of 
(approximately) equal length
so that (\ref{eqt:vHat_m0}) can be applied to each of the three subsamples. 
The final estimator is the sample median of the three estimators.
This method incurs a huge loss of efficiency.  
It is remarked that their method is applied to ranks of $X_{1:n}$ instead of $X_{1:n}$,
but their idea is still applicable generally. 
\hfill $\blacksquare$
\end{example}

\begin{example}[Mean and median of absolute deviations]\label{eg:Wu_median}
Apart from the estimator $\widetilde{v}'$ in Example~\ref{eg:general_m1}, 
\citet{wu_zhao_2007} also proposed two other estimators that utilize 
the sample mean and sample median of $\{|S_i-S_{i-\ell}|\}_{i\in\mathcal{I}_1}$, i.e.,  
\[
	\widetilde{v}'' = \frac{\pi}{4( \lfloor n/\ell\rfloor -1)^2}  \sum_{i\in\mathcal{I}_1}|S_i - S_{i-\ell}|
	\qquad \text{and}\qquad
	\widetilde{v}'''
		= \frac{1}{2z_{3/4}^2} \median_{i\in\mathcal{I}_1}|S_i - S_{i-\ell}|,
\] 
where $\median_{k\in\mathcal{K}} x_k$ is the sample median of $\{x_k\}_{k\in\mathcal{K}}$,
and $z_p$ is the $100p\%$ quantile of $\Normal(0,1)$.
They proved that the convergence rates of $\widetilde{v}''$ and $\widetilde{v}'''$ are 
much slower than that of the $\widehat{v}'$ in Example~\ref{eg:general_m1}. 
\hfill $\blacksquare$
\end{example}

\begin{example}[Insufficient differencing]\label{eg:insuff_diff}
\cite{chan2020} proposed an estimator that is asymptotically equivalent to 
$\widecheck{v}=\sum_{|k|\leq \ell} K(k/\ell) \widecheck{\gamma}_k$, 
where $\widecheck{\gamma}_k = \sum_{i=k+\ell+1}^n X_i(X_{i-k}-X_{i-k+\ell})/n$.
It is an incomplete special case of $\widehat{v}$ with $m=1$ and $h=\ell$. 
It is an incomplete version because $\widecheck{\gamma}_k$ 
is constructed by the product of the raw observation $X_i$ and 
the difference statistic $X_{i-k}-X_{i-k+\ell}$, 
whereas our proposed statistic $\widehat{\gamma}_{k}^D$ in (\ref{eqt:diffvarEst})
is constructed by the product of two difference statistics $D_i$ and $D_{i-k}$. 
We prove that $\widehat{v}$ is uniformly better than this 
``insufficient'' difference-based estimator $\widecheck{v}$. 
\hfill $\blacksquare$
\end{example}

We also remark that 
some statistical procedures do not require estimation of the LRV by utilizing self normalization; 
see, e.g., \cite{lobato2001} and \citet{shaoxf2010}. 
However, different specifically designed self-normalizers may be needed for 
handling different types of mean structure; see, e.g., \cite{zhao2011}, \citet{zhang2018}, and \citet{pevsta2020}. 
This alternative approach may also lead to a decrease in power or statistical efficiency.
Nevertheless, they enjoy some added appealing properties.   
We refer interested readers to an excellent review by \citet{shao2015review}.

\subsection{Interpretation and representation}\label{sec:representation}
Recall the definitions of $h$ and $\ell$ in (\ref{eqt:def_m_l_h}).
We parametrize $h = \ell \lambda \in \mathbb{N}$ for some 
$\lambda:=\lambda_n \rightarrow\lambda_{\infty} \in[0, \infty]$. 
The goal of this section is to provide statistical interpretations of $\widehat{v}$
under different values of $\lambda$. 

\begin{proposition}\label{prop:effKernel}
Suppose Assumptions~\ref{ass:weakDep}--\ref{ass:summability_dj} hold, and
$1/\ell + (\ell+mh)/n = o(1)$. 
Let the \emph{differencing kernel} be
\begin{equation}\label{eqt:effKernel}
	K_{\diff}(t) = \sum_{s=\left\lceil -(1+t)/\lambda\right\rceil}^{\left\lfloor (1-t)/\lambda\right\rfloor}
				\delta_{|s|} K\left( t + \lambda s \right), \qquad t\in\mathbb{R}. 
\end{equation}
Define the \emph{differencing kernel estimator} as 
$\widehat{v}_{\diff} = \sum_{|k| \leq \ell+mh} K_{\diff}\left( {k}/{\ell}\right) \widehat{\gamma}_{k}^X$, 
where $\widehat{\gamma}_{k}^X = \sum_{i=|k|+1}^n (X_i - \bar{X}_n)(X_{i-|k|}- \bar{X}_n)/n$. 
\begin{enumerate}
	\item \label{item:repKdiff}(Representation) 
			If $\mu(t)\equiv \mu_0$ for all $t\in[0,1]$, then, as $n \rightarrow\infty$,
			\begin{align}\label{eqt:Error_vHat_vDiffHat}
				\|\widehat{v} - \widehat{v}_{\diff}\| = O\left\{ (\ell+mh)/n \right\}.
			\end{align}
			It remains true if 
			$\widehat{\gamma}^X_k$ is replaced by
			$\widehat{\gamma}^Z_k = \sum_{i=|k|+1}^n Z_iZ_{i-|k|}/n$
			in $\widehat{v}_{\diff}$.
	\item \label{item:KdiffProperty}(Differencing property) 
			If $m=0$, then $K_{\diff} = K$.
			If $m>0$, then $K_{\diff}$ satisfies that 
			$\sum_{|k|\leq \ell+mh} K_{\diff}\left( {k}/{\ell}\right) = 0$.
	\item \label{item:Kdiff0eq1}(Kernel necessity) If $\lambda\geq 1$, then $K_{\diff}(0)=1$. 
	\item \label{item:matching}(Matching property) If $\lambda \geq 2$, then $K_{\diff}(t) = K(t)$
			for all $t\in[-1,1]$.
\end{enumerate}
\end{proposition}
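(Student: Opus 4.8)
The plan is to treat part~\ref{item:repKdiff} as the one carrying real content and to dispatch parts~\ref{item:KdiffProperty}--\ref{item:matching} as deterministic identities about the function $K_{\diff}$. For part~\ref{item:repKdiff} I would first reduce everything to the noise sequence: since $\sum_{j=0}^m d_j=0$, a constant mean $\mu_0$ cancels in every $D_i$, so $D_i=\sum_{j=0}^m d_j Z_{i-jh}$ and $\widehat{\gamma}_k^D$ is a functional of $Z_{1:n}$ alone. Thus it suffices to compare $\widehat{v}$ with the version of $\widehat{v}_{\diff}$ built from $\widehat{\gamma}^Z_k$, and the version built from $\widehat{\gamma}^X_k$ follows because $X_i-\bar{X}_n=Z_i-\bar{Z}_n$ when $\mu(\cdot)\equiv\mu_0$; the resulting discrepancy is governed by $\bar{Z}_n=O_p(n^{-1/2})$, with the $\bar{Z}_n^2$ term annihilated by $\sum_{|k|\le\ell+mh}K_{\diff}(k/\ell)=0$ from part~\ref{item:KdiffProperty}, leaving a remainder of order $(\ell+mh)/n$.

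The core step is then algebraic. Expanding $D_iD_{i-|k|}=\sum_{j,j'}d_jd_{j'}Z_{i-jh}Z_{i-|k|-j'h}$ and reindexing by $a=i-jh$ writes both $\widehat{v}$ and (the $Z$-version of) $\widehat{v}_{\diff}$ as quadratic forms $\sum_{a,b}c_{ab}Z_aZ_b$ and $\sum_{a,b}c'_{ab}Z_aZ_b$ in the $Z$'s. For pairs $(a,b)$ lying more than $\ell+mh$ away from both endpoints $1$ and $n$, the edge truncation in $\widehat{\gamma}_k^D$ is inactive, so $\tfrac1n\sum_i Z_{i-jh}Z_{i-|k|-j'h}=\widehat{\gamma}^Z_{|k-sh|}$ exactly with $s:=j-j'$; grouping the $(j,j')$-sum by $s$ uses $\sum_{j-j'=s}d_jd_{j'}=\delta_{|s|}$, and the substitution $q=k-sh$ (so $k/\ell=q/\ell+\lambda s$ and $K(k/\ell)=K(q/\ell+\lambda s)$) turns the coefficient of $\widehat{\gamma}^Z_q$ into $\sum_s\delta_{|s|}K(q/\ell+\lambda s)=K_{\diff}(q/\ell)$ for $|q|\le\ell+mh$. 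Hence $c_{ab}=c'_{ab}$ off the edges, and $R:=\widehat{v}-\widehat{v}_{\diff}=\sum_{a,b}r_{ab}Z_aZ_b$ is supported on the $O\{(\ell+mh)^2\}$ index pairs with ($a$ or $b$) within $\ell+mh$ of $\{1,n\}$ and $|a-b|\le\ell+mh$, with $|r_{ab}|=O(1/n)$ by Assumption~\ref{ass:summability_dj} (boundedness of $\sum_j|d_j|$ and of $\sum_{|s|\le m}|\delta_s|$ controls $|c_{ab}|$ and $|c'_{ab}|$ pointwise). I would then bound $\|R\|\le|\E R|+(\Var R)^{1/2}$: since $\E R=\sum r_{ab}\gamma_{a-b}$ and $u_0=\sum_k|\gamma_k|<\infty$, $|\E R|\le (C/n)\,O(\ell+mh)\,u_0=O\{(\ell+mh)/n\}$, while a standard moment bound for quadratic forms of physically dependent variables under Assumption~\ref{ass:weakDep} ($\E(Z_1^r)<\infty$ for $r>4$, $\Theta_4<\infty$) gives $\Var R=O\bigl(\sum_{a,b}r_{ab}^2\bigr)=O\{(\ell+mh)^2/n^2\}$; combining yields $\|R\|=O\{(\ell+mh)/n\}$, which is the claim, and the replacement remark is the same calculation.

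Parts~\ref{item:KdiffProperty}--\ref{item:matching} involve no randomness and follow from $K_{\diff}(t)=\sum_s\delta_{|s|}K(t+\lambda s)$ together with $K(0)=1$, $\delta_0=1$, and $K(t)=0$ for $|t|\ge1$. If $m=0$ then $\delta_s=\mathbb{1}_{\{s=0\}}$, so $K_{\diff}=K$; if $m>0$, interchanging sums gives $\sum_{|k|\le\ell+mh}K_{\diff}(k/\ell)=\sum_s\delta_{|s|}\sum_{|k|\le\ell+mh}K\bigl((k+hs)/\ell\bigr)$, and since $|hs|\le mh$ each inner sum equals $\sum_{|k'|<\ell}K(k'/\ell)$ (the shift only adds indices where $K$ vanishes), so the total equals $\bigl(\sum_{|k'|<\ell}K(k'/\ell)\bigr)\sum_{|s|\le m}\delta_s=0$ because $\sum_{|s|\le m}\delta_s=\bigl(\sum_{j=0}^m d_j\bigr)^2=0$. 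At $t=0$ with $\lambda\ge1$, $K(\lambda s)=0$ for $s\ne0$, so $K_{\diff}(0)=\delta_0K(0)=1$; and for $|t|\le1$ with $\lambda\ge2$, $|t+\lambda s|\ge\lambda|s|-|t|\ge\lambda-1\ge1$ for $s\ne0$, so only $s=0$ survives and $K_{\diff}(t)=K(t)$.

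The main obstacle is the bookkeeping in part~\ref{item:repKdiff}: pinning down the support of $r_{ab}$ precisely enough that the count is genuinely $O\{(\ell+mh)^2\}$ rather than carrying a stray factor of $\ell$ or $m$, and invoking the correct variance bound for the quadratic form $R$ in the regime where the differencing order $m=m_n$ is allowed to diverge with $n$.
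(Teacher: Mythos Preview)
Your proposal is correct and follows essentially the same route as the paper. The only structural difference in part~\ref{item:repKdiff} is that the paper inserts the subsampling estimator $\widehat{v}'$ as a waypoint (proving $\|\widehat{v}_{\diff}-\widehat{v}'\|=O\{(\ell+mh)/n\}$ and then invoking Proposition~\ref{prop:Ktype_Stype_equiv} for $\|\widehat{v}-\widehat{v}'\|$), whereas you compare $\widehat{v}$ and $\widehat{v}_{\diff}$ directly; both reduce to the same edge-supported quadratic form. One caution: your variance step ``$\Var R=O\bigl(\sum_{a,b}r_{ab}^2\bigr)$'' is not literally the form the relevant lemma takes for physically dependent data---the paper cites Lemma~8 of \citet{xiaowu2012}, which bounds $\|R-\E R\|$ directly from the support size and $\Theta_4$ rather than via a Frobenius-norm identity---so when you fill in the details, invoke that lemma (or an equivalent) rather than the $\sum r_{ab}^2$ heuristic. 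Parts~\ref{item:KdiffProperty}--\ref{item:matching} match the paper's arguments exactly.
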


The proof can be found in Section~\ref{sec:proof_prop:effKernel} of the supplement. 
Proposition~\ref{prop:effKernel} (\ref{item:repKdiff}) states that  
$\widehat{v}$ smoothes $\{\widehat{\gamma}^X_k\}$ 
by the distorted kernel $K_{\diff}$ instead of 
the intended kernel $K$. 
So, $\widehat{v}$ may not inherit the properties of $K$, 
e.g., the higher-order property of achieving a faster convergence rate of $\widehat{v}$ 
as in \cite{andrews1991}.
Proposition~\ref{prop:effKernel} (\ref{item:KdiffProperty}) states that $\widehat{v}_{\diff}$
is invariant to $\mu(\cdot)$ because the kernel weights must sum to zero.
This property is not achieved by most commonly used kernels. 
Proposition~\ref{prop:effKernel} (\ref{item:Kdiff0eq1}) states that $K_{\diff}$ satisfies 
the minimal requirement as a kernel if $\lambda \geq 1$.
A striking fact conveyed by Proposition~\ref{prop:effKernel} (\ref{item:matching}) is that 
if $\lambda=h/\ell$ is large enough ($\geq 2$), then 
\[
	\widehat{v} 
		\;\equiv\; \sum_{|k|\leq \ell} K\left(\frac{k}{\ell}\right) \widehat{\gamma}^D_k 
		\;=\; \sum_{|k|\leq \ell} K\left(\frac{k}{\ell}\right) \widehat{\gamma}^X_k  
			+ r_{\diff},
\]
where the remainder term $r_{\diff}$ involves only $\{\widehat{\gamma}_k\}_{|k|>\ell}$, 
which are expected to have a negligible contribution to $\widehat{v}$ owing to weak dependence 
(Assumption~\ref{ass:weakDep}). 
In this case, 
$\widehat{v}$ correctly uses the intended kernel $K(\cdot)$ to smooth 
$\{\widehat{\gamma}_k\}_{|k|\leq \ell}$.
However, the existing estimators in Example~\ref{eg:general_m1}, 
which employ $\lambda=h/\ell=1$, 
incorrectly utilize another kernel.  
Example~\ref{eg:plotKeff} below visualizes this fact.

\begin{example}\label{eg:plotKeff}
Consider $m=1$ and $K=K_{\Bart}$. 
Figure~\ref{fig:dTAVC_effectiveKernel} visualizes how $K_{\diff}$ changes with $\lambda$.
When $\lambda <1$, we have $K_{\diff}(0) \neq 1$.
So, $K_{\diff}$ is not even qualified as a kernel for estimating $v$. 
When $1\leq \lambda<2$, we have $K_{\diff}(0)=1$ but $K_{\diff}(t)\not\equiv K(t)$ even for $|t|\leq 1$. 
It implies that $K_{\diff}$ does not share the same properties as $K$. 
When $\lambda \geq 2$, we have $K_{\diff}(t) \equiv K(t)$ for all $|t|\leq 1$.
In this case, $K_{\diff}$ copies most properties of the kernel $K$. \hfill $\blacksquare$

\begin{figure} 
\begin{center}
\includegraphics[width=\textwidth]{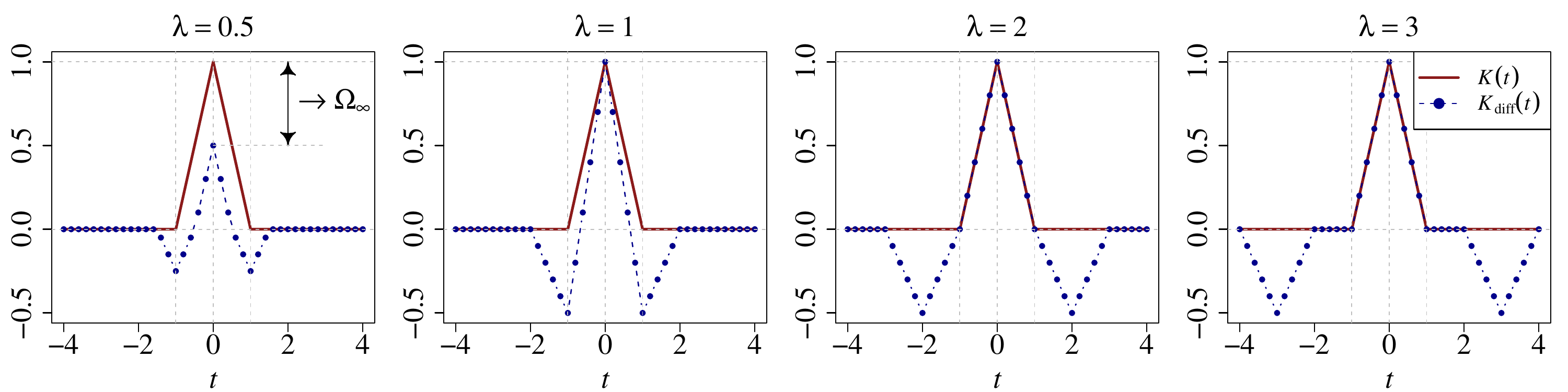} 
\end{center} 
\vspace{-0.6cm}
\caption{Comparisons between $K$ and $K_{\diff}$
when $m=1$, $K=K_{\Bart}$ and $\lambda\in\{1/2, 1,2,3\}$.
The quantity $\Omega_{\infty}$ in the first plot is defined in Assumption~\ref{ass:d_K_orthogonalNot}.}
		\label{fig:dTAVC_effectiveKernel}
\vspace{-0.4cm} 
\end{figure}
\end{example}

\begin{remark}\label{rem:KeffForm}
The differencing kernel $K_{\diff}$ may depend on $n$ when $\lambda=\lambda_n$ 
or $m=m_n$ depends on $n$.
We do not recommend to use $\widehat{v}_{\diff}$ in practice
as it has a larger influence by a non-constant $\mu(\cdot)$ than our proposed $\widehat{v}$ 
and $\widehat{v}'$ in (\ref{eqt:diffvarEst}) and (\ref{eqt:Stype_vHat}).
The representation (\ref{eqt:Error_vHat_vDiffHat}) is true only under the constant $\mu(\cdot)$ assumption. 
If $\mu(\cdot)$ is not a constant, the upper bound for $\|\widehat{v} - \widehat{v}_{\diff}\|$ 
is larger than that in (\ref{eqt:Error_vHat_vDiffHat}). 
It may not be enough to establish asymptotic equivalence. 
Nevertheless, it is informative to use $K_{\diff}$ to 
understand the proposed $\widehat{v}$. 
\end{remark}

\subsection{Dual representations}\label{sec:dual}
There are two types of ambiguous dual representations of $\widehat{v}$. 
First, a high-order but sparse difference sequence can be represented by 
a lower-order sequence with a larger lag $h$, 
e.g., the 4th order sequence $\{-1/\sqrt{2},0,0,0,1/\sqrt{2}\}$ 
with $h=1$ is equivalent to 
the 1st order  sequence $\{ -1/\sqrt{2},1/\sqrt{2}\}$ with $h=4$.
Sparse difference sequences lead to $\delta_s=0$ for all small $s$.
Second, consider $\widehat{v}$ with kernel $K^{\circ}(\cdot)$, bandwidth $\ell^{\circ}$, and lag $h^{\circ}$.
The estimator does not change if we stretch the kernel to $K(\cdot) = K^{\circ}(C\cdot)$
with $\ell = C\ell^{\circ}$ and $h = h^{\circ}$, where $C>1$.
This type of stretched kernel truncates earlier than $\pm1$, i.e., $K(t)=0$ for $|t|\geq 1/C$.
Assumption~\ref{ass:d_K_orthogonalNot} below rules out these ambiguous dual representations.

\begin{assumption}[Unambiguity]\label{ass:d_K_orthogonalNot}
For $\lambda <1$, 
$2\sum_{s=1}^{\lfloor 1/\lambda \rfloor} \delta_s K(\lambda s) \rightarrow \Omega_{\infty} \neq 0$.
\end{assumption}

Most commonly used kernels and difference sequences satisfy Assumption~\ref{ass:d_K_orthogonalNot}.
For example, if $K=K_{\Bart}$,  Assumption~\ref{ass:d_K_orthogonalNot} is satisfied for all
$0\leq m\leq 10$, all $h$, and all $\{d_j\}$ in Example~\ref{eg:dj}.
Indeed, $\Omega_{\infty}$ measures the limiting gap between 1 and $K_{\diff}(0)$, 
i.e., $K_{\diff}(0)-1 \rightarrow \Omega_{\infty}$ as $n\rightarrow\infty$;
see Figure~\ref{fig:dTAVC_effectiveKernel}. 

However, it is possible that Assumption~\ref{ass:d_K_orthogonalNot} is not satisfied when 
$m\rightarrow \infty$.
It may happen when the difference sequence is approximately ``uncorrelated'', 
i.e., $\delta_s \approx 0$ for $s\neq 0$.
We formalize this situation by the following assumption.

\begin{assumption}[Approximately uncorrelated differencing]\label{ass:d_uncorrelatedDiff}
There are $c',c''>0$ such that $-c''/m\leq\delta_s\leq -c'/m$ for all $s=\pm1, \ldots, \pm m$.
\end{assumption}

Note that Assumption~\ref{ass:d_uncorrelatedDiff} is satisfied by the 
local difference sequence and \cite{hall1990}'s difference sequence (see Example~\ref{eg:dj}). 
We will show in Section~\ref{sec:BWsel} that 
the first type of sequence is nearly optimal whereas the second type is asymptotically optimal.

\section{Invariance to time-varying means}\label{sec:invariance}
\subsection{Strength of robustness}
When the true mean function is $\mu(\cdot)$, 
we denote the bias, variance and MSE of $\widehat{v}$ by 
$\Bias_{\mu}(\widehat{v} ; v)$, $\Var_{\mu}(\widehat{v})$ and $\MSE_{\mu}(\widehat{v}; v)$,
respectively. 
If $\mu(t)\equiv 0$, we write them as 
$\Bias_{0}(\widehat{v}; v)$, $\Var_{0}(\widehat{v})$ and $\MSE_{0}(\widehat{v}; v)$
to emphasize that it is the ideal situation as $X_i = Z_i$ for all $i$. 
If the estimand is $v$, we may omit the argument ``$v$'' in the bias and MSE.

\begin{definition}[Robustness against mean functions]\label{def:rob}
Let $\mathcal{M}$ be a family of mean functions. 
An estimator $\widehat{\theta}$ of $\theta$ is said to be 
\emph{strictly robust in $\mathcal{M}$} if 
			$\MSE_{\mu}(\widehat{\theta};\theta) \sim \MSE_{0}(\widehat{\theta};\theta)$ for all $\mu\in\mathcal{M}$; and
\emph{loosely robust in $\mathcal{M}$} if 
			$\MSE_{0}(\widehat{\theta};\theta)\rightarrow 0$ implies 
			$\MSE_{\mu}(\widehat{\theta};\theta)\rightarrow 0$ for all $\mu\in\mathcal{M}$.
\end{definition}

If $\widehat{\theta}$ is strictly robust, its first order $\mathcal{L}^2$ asymptotic properties
are the same  for any $\mu\in\mathcal{M}$. 
It is the most desirable.
If $\widehat{\theta}$ is loosely robust, its $\mathcal{L}^2$ consistency is maintained within $\mathcal{M}$
but the convergence rate and MSE may be different.

\begin{theorem}[Robustness]\label{thm:robustness}
Let $\kappa :=\int_{-1}^1 K(t) \, \dd t \neq 0$.
Suppose Assumptions~\ref{ass:weakDep}--\ref{ass:summability_dj} hold, 
$\ell/n\rightarrow 0$, and
$\mathcal{G} \gtrsim \ell+mh$.
Also let $\ell, h\in\mathbb{N}$ and $m\in\mathbb{N}_0$, which are possibly divergent.
Then
\begin{align*}
	\Bias_{\mu}(\widehat{v}; v)
		&=  \Bias_{0}(\widehat{v}; v)
			+ \left\{ \kappa\ell\mathcal{V} + O\left( \frac{\ell}{n}\right) \right\} \mathbb{1}_{(m=0)}
			+ R_{\bias} ,  \\
	\sqrt{ \Var_{\mu}(\widehat{v}) }
		&= \sqrt{ \Var_{0}(\widehat{v}) }
			+O\left\{ \frac{\ell\left( \mathcal{C} +\mathcal{S} \mathcal{J}  \right)}{\sqrt{n}} \right\} \mathbb{1}_{(m=0)}
			+ R_{\se} ,
\end{align*}
where 
\begin{align*}
	R_{\bias}
		&= O\left[ \frac{\ell}{n} \left\{  (\ell\mathbb{1}_{m=0} +mh ) \mathcal{S}^2\mathcal{J}
			+  \frac{(\ell\mathbb{1}_{m=0} +mh )^2\mathcal{C}^2}{n}\right\} \right], \\
	R_{\se} 
		&= O\left[ \frac{\ell}{\sqrt{n}} \left\{ \frac{mh\mathcal{C}}{n} 
			+ \mathcal{S} \left(\frac{mh\mathcal{J}}{n}\right)^{1/2} \right\}\right]. 
\end{align*}
\end{theorem}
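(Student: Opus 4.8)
The plan is to push the signal--noise split down to the level of the difference statistics. Set $\mu_i^D=\sum_{j=0}^m d_j\mu_{i-jh}$ and $Z_i^D=\sum_{j=0}^m d_jZ_{i-jh}$ when $m\ge1$, and $\mu_i^D=\mu_i-\bar\mu_n$, $Z_i^D=Z_i-\bar Z_n$ when $m=0$, so that $D_i=\mu_i^D+Z_i^D$ and, after expanding $D_iD_{i-|k|}$,
\begin{equation*}
	\widehat v = B_\mu + \mathcal{C}_\mu + \widehat v_Z,
	\qquad
	B_\mu = \sum_{|k|<\ell}K\!\left(\tfrac{k}{\ell}\right)\frac1n\sum_{i}\mu_i^D\,\mu_{i-|k|}^D ,
\end{equation*}
where $\mathcal{C}_\mu$ gathers the two cross products and $\widehat v_Z$ is the \emph{same} functional applied to $Z$ in place of $X$. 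Then $B_\mu$ is deterministic, $\mathcal{C}_\mu$ has mean zero because $\E Z_i^D=0$, and $\widehat v_Z$ equals $\widehat v$ evaluated at $\mu\equiv0$ as a random variable. Hence $\Bias_\mu(\widehat v;v)=\Bias_0(\widehat v;v)+B_\mu$ exactly, and, since $\sqrt{\Var(\cdot)}=\|\cdot-\E(\cdot)\|$ obeys the triangle inequality, $|\sqrt{\Var_\mu(\widehat v)}-\sqrt{\Var_0(\widehat v)}|\le\sqrt{\Var(\mathcal{C}_\mu)}\le\|\mathcal{C}_\mu\|$. So the theorem splits into an exact analysis of the deterministic $B_\mu$ and an $\mathcal{L}^2$ bound on $\mathcal{C}_\mu$.

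Both reduce to a single estimate: a bound on $\sum_i(\mu_i^D)^2$. For $m\ge1$ I would partition the index range into \emph{good} indices, whose stencil $\{i,i-h,\dots,i-mh\}$ contains no discontinuity time, and \emph{bad} indices. On a good index the step part of $\mu$ cancels and $\sum_jd_j=0$ gives $\mu_i^D=\sum_jd_j(c_{i-jh}-c_i)$ with $c_j=c(j/n)$, whence $|\mu_i^D|\lesssim\mathcal{C}mh/n$ by Assumption~\ref{ass:summability_dj}(i), and there are at most $n$ of them. The hypothesis $\mathcal{G}\gtrsim\ell+mh$ forces at most $O(1)$ jumps inside any window of length $mh$, so on a bad index $|\mu_i^D|\lesssim\mathcal{C}mh/n+\mathcal{S}$, and there are at most $mh\mathcal{J}$ of them; adding up (the $\mathcal{C}$-part of the bad indices being dominated because $mh\mathcal{J}\le n$) yields $\sum_i(\mu_i^D)^2\lesssim \mathcal{C}^2(mh)^2/n+mh\mathcal{J}\mathcal{S}^2$. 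For $m=0$ one has instead $\sum_i(\mu_i^D)^2=n\mathcal{V}_n$ with $\mathcal{V}_n=n^{-1}\sum_i(\mu_i-\bar\mu_n)^2=\mathcal{V}+O(1/n)$, and $\mathcal{V}\lesssim(\mathcal{C}+\mathcal{S}\mathcal{J})^2$ by bounding the oscillation of $\mu=c+s$.

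For the bias with $m\ge1$, Cauchy--Schwarz on the inner sum gives $|B_\mu|\le\|K\|_\infty(2\ell/n)\sum_i(\mu_i^D)^2$, which is exactly $R_{\bias}$. For $m=0$ I would polarize, $2\mu_i^D\mu_{i-|k|}^D=(\mu_i^D)^2+(\mu_{i-|k|}^D)^2-(\mu_i-\mu_{i-|k|})^2$: the first two terms, after $O(|k|/n)$ edge corrections, sum to $\mathcal{V}_n\sum_{|k|<\ell}K(k/\ell)=\kappa\ell\mathcal{V}+O(\ell/n)$; the last term is controlled by the same good/bad split of $(\mu_i-\mu_{i-|k|})^2$ ($\le\mathcal{C}^2k^2/n^2$ away from jumps, $\lesssim\mathcal{S}^2$ on the $\le\ell\mathcal{J}$ near-jump indices), and summing against $K(k/\ell)$ over $|k|<\ell$ produces precisely $\ell n^{-1}\{\ell\mathcal{S}^2\mathcal{J}+\ell^2\mathcal{C}^2/n\}$, i.e.\ $R_{\bias}$ at $m=0$. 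The Riemann-sum error $\mathcal{V}_n-\mathcal{V}$ and the boundary terms are absorbed into the $O(\ell/n)$ term.

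For the variance, reindex $\mathcal{C}_\mu=n^{-1}\sum_i b_iZ_i^D$ with $b_i=\sum_{|k|<\ell}K(k/\ell)(\mu_{i+|k|}^D+\mu_{i-|k|}^D)$, so $|b_i|\lesssim\sum_{|j-i|<\ell}|\mu_j^D|$ and $\sum_ib_i^2\lesssim\ell^2\sum_j(\mu_j^D)^2$ by Cauchy--Schwarz. Since $\gamma_k^D:=\E(Z_0^DZ_k^D)=\sum_{|s|\le m}\delta_s\gamma_{k+sh}$ has $\sum_k|\gamma_k^D|\le(\sum_{|s|\le m}|\delta_s|)\,u_0<\infty$ by Assumption~\ref{ass:summability_dj}(ii) (for $m=0$ the non-stationarity of $Z_i-\bar Z_n$ adds only lower-order $O(u_0/n)$ corrections), the quadratic-form bound $\|\mathcal{C}_\mu\|^2\le n^{-2}(\sum_k|\gamma_k^D|)(\sum_ib_i^2)\lesssim n^{-2}\ell^2\sum_j(\mu_j^D)^2$ combines with the estimate of the previous step to give $R_{\se}$ for $m\ge1$ and $O\{\ell(\mathcal{C}+\mathcal{S}\mathcal{J})/\sqrt n\}$ for $m=0$. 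The hard part will be the stencil bound when $m=m_n$ and $\mathcal{J}$ both diverge: one must control $\mu_i^D$ uniformly in $n$ using only Assumption~\ref{ass:summability_dj}, keep the near-jump accounting tight enough that the bad-index contribution is $mh\mathcal{J}\mathcal{S}^2$ rather than something carrying extra powers of $\mathcal{J}$, and verify that the minimal-gap hypothesis really caps the number of jumps per length-$mh$ window by an absolute constant; the remaining edge errors and the index-range mismatch between $\widehat\gamma_k^D$ and its $m=0$ definition are routine.
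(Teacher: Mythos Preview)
Your proposal is correct and matches the paper's proof: the paper uses the identical decomposition $\widehat v=\widehat v^{(\mu\mu)}+\widehat v^{(\mu Z)}+\widehat v^{(Z\mu)}+\widehat v^{(ZZ)}$, bounds the deterministic term $\widehat v^{(\mu\mu)}$ via the same good/bad-stencil split of $\sum_i(D_i^{(\mu)})^2$ (and your polarization for $m=0$ is exactly the identity the paper isolates as a separate proposition), and controls the cross terms in $\mathcal L^2$ by writing them as weighted sums $\sum_i\alpha_iD_i^{(Z)}$ and bounding the $\ell^2$-norm of the weights. The only cosmetic difference is that for $\|\sum_ib_iZ_i^D\|$ you invoke the elementary covariance bound via $\sum_k|\gamma_k^D|<\infty$, whereas the paper cites a moment inequality of Wu (2010) under physical dependence; both routes yield the same $(\sum_ib_i^2)^{1/2}$ control and hence the same $R_{\se}$.
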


The proof
can be found in Section~\ref{sec:proof_robustness} of the supplement.
Theorem~\ref{thm:robustness} states that the bias and variance of $\widehat{v}$
are governed by
(i) the performance of $\widehat{v}$ when $\mu \equiv 0$, 
(ii) the order $m$, and 
(iii) the mean function $\mu(\cdot)$. 

Factor (i) is the idealistic performance.
The estimator $\widehat{v}$ is good if 
$\Bias_{\mu}(\widehat{v}) \sim \Bias_{0}(\widehat{v})$ and 
$\Var_{\mu}(\widehat{v}) \sim \Var_{0}(\widehat{v})$. 
Factor (ii) has the greatest impact on $\widehat{v}$. 
When $m=0$,
the estimator reduces to the classical estimators in Example~\ref{eg:constantMean}. 
It has a divergent bias if the mean is a fixed non-constant function. 
If $\mathcal{V}=o(1/\ell)$, $\widehat{v}$ is still consistent. 
However, in this case, the variability of $\mu(\cdot)$ is diminishing  as $n\rightarrow\infty$. 
This robustness is insufficient for most real applications.  
Similar results have been documented in, e.g., \cite{GoncalvesWhite2002}. 
Factor (iii) depends on $\mu(\cdot)$ only through $\mathcal{C}$, 
$\mathcal{S}$ and $\mathcal{J}$.
When $m>0$, $\mu(\cdot)$ only affects $R_{\bias}$ and $R_{\se}$, 
which are typically negligible; see Section~\ref{sec:wellMean}.  

Besides, Theorem~\ref{thm:robustness} is also applicable to 
the estimators in Examples~\ref{eg:uncorrelated}--\ref{eg:general_mInf}. 
We compare them in Example~\ref{eg:compare_robustness} below. 

\begin{example}[Comparison]\label{eg:compare_robustness}
In the presence of time-varying mean and autocorrelation, 
our proposed estimator $\widehat{v}$ and 
the existing estimators in Examples~\ref{eg:general_m1}--\ref{eg:insuff_diff} can be used for estimating $v$. 
We compare their robustness as follows. 
\begin{itemize}
	\item Our framework in (\ref{eqt:diffvarEst}) covers the proposed $\widehat{v}$ and 
		the estimators in Examples~\ref{eg:general_m1}--\ref{eg:general_mInf}.
		Although they use different values of $m>0$ and $h$, 
		all of them satisfy 
		\[
			\MSE_{\mu}(\widehat{v}) 
				= \left\{ \Bias_{0}(\widehat{v}) + R_{\bias} \right\}^2
					+ \left\{ \sqrt{ \Var_{0}(\widehat{v}) } + R_{\se} \right\}^2
		\]
		according to Theorem~\ref{thm:robustness} with 
		\[
			R_{\bias} = O\left[ \frac{\ell}{n} \left\{  H \mathcal{S}^2\mathcal{J}
							+  \frac{H^2\mathcal{C}^2}{n}\right\} \right]
			\quad \text{and} \quad
			R_{\se} = O\left\{ \frac{\ell}{\sqrt{n}} \left( \frac{H\mathcal{C}}{n} + \mathcal{S} \left(\frac{H\mathcal{J}}{n}\right)^{1/2} \right)\right\},
		\]
		where $H:=mh$. 
		Clearly, as long as $\ell$ and $H$ remain unchanged, 
		the orders of $R_{\bias}$ and $R_{\se}$ do not change with $h$ and $m$. 
		In other words, all these estimators are equally robust against the mean functions asymptotically. 
		It is worth mentioning that we further enhance the finite-sample robustness 
		of our proposed estimator in Section \ref{sec:rcp}. 
	\item Since the estimators in Examples~\ref{eg:1CPvEst}--\ref{eg:insuff_diff} do not fall into our framework, 
		we compare their robustness via simulation in Section~\ref{sec:sim_MSE}. 
		It indicates that our proposed $\widehat{v}$ is more robust against the mean functions
		than all competitors.  
	\hfill $\blacksquare$
\end{itemize}
\end{example}

\subsection{Class of well-behaved mean functions}\label{sec:wellMean}
We will prove in Sections~\ref{sec:consistency}--\ref{sec:optimal} that, 
under the assumption $u_q := \sum_{k\in\mathbb{Z}}|k|^q|\gamma_k|<\infty$,
the estimator $\widehat{v}$ is consistent and rate-optimal with $\MSE_0(\widehat{v})\asymp n^{-2q/(1+2q)}$ when 
$m\in\mathbb{N}$, $h/\ell=\lambda\in(0,\infty)$, and $\ell\asymp n^{1/(1+2q)}$; see (\ref{eqt:optMSE}).
The same optimal MSE is also achieved by the standard estimators; see, e.g., \citet{andrews1991}.
Using the baseline $\MSE_0(\widehat{v})\asymp n^{-2q/(1+2q)}$, 
Theorem~\ref{thm:robustness}
shows that $\widehat{v}$ is strictly robust in 
\begin{align}
	\mathcal{M}_{q} 
		&:= \{\mu(\cdot) : R_{\bias}^2 + R_{\se}^2 = o(\MSE_0(\widehat{v}))\} \nonumber \\
		&= \left\{ \mu(\cdot) :\mathcal{C}^2 = o\left( n^{\frac{3q-1}{1+2q}}\right), 
				\mathcal{S}^2\mathcal{J} = o\left( n^{\frac{q-1}{1+2q}}\right), 
				\mathcal{G} \gtrsim n^{\frac{1}{1+2q}}
			\right\}, \label{eqt:Mq_setMeanFun}
\end{align}
which covers a large class of mean functions. 
A larger class can similarly be derived 
if we only require $\widehat{v}$ to be loosely robust.
Example~\ref{eg:rob} discuss a special case when $q=2$.

\begin{example}[Strict robustness of $\widehat{v}$ with $q=2$]\label{eg:rob}
Suppose $u_2 <\infty$. 
The optimal MSE satisfies $\MSE_0(\widehat{v})=O(n^{-4/5})$.
Then $\mathcal{M}_{2} = \{ \mu(\cdot): \mathcal{C}^2 = o(n), \mathcal{S}^2\mathcal{J} = o(n^{1/5}), 
\mathcal{G} \gtrsim n^{1/5}\}$, 
which includes 
(i) all Lipschitz continuous functions with $o(n^{1/2})$ Lipschitz constants, 
(ii) all step functions with $o(n^{1/5})$ number of finite-jump discontinuities that are separated by 
at least $O(n^{1/5})$, and 
(iii) a sum of (i) and (ii).
For example, all mean functions in Figure~\ref{fig:dTAVC_I006_TSplot} are members of $\mathcal{M}_{2}$.
\hfill $\blacksquare$
\end{example}

\section{Classes of consistent and rate-optimal estimators}\label{sec:consistency}
It is unclear whether $\widehat{v}$ is consistent for $v=\sum_{k\in\mathbb{Z}}\gamma_k$ even 
under the constant mean assumption because the ACVF of $\{D_i\}$ is not equal to $\gamma_k$:
\begin{align}\label{eqt:ACVF_D}
	\gamma_k^D := \Cov(D_i, D_{i+k}) = \sum_{j,j'=0}^m d_j d_{j'} \gamma_{h(j-j')+k}
		= \sum_{|s|\leq m} \delta_{s} \gamma_{hs+k}
		\neq \gamma_k .
\end{align}

In this section, we study the conditions for consistency and rate optimality for $\widehat{v}$
when the mean function $\mu(\cdot)$ is mildly non-constant in the sense that 
$\mathcal{J},\mathcal{S},\mathcal{C}\asymp 1$.
Our asymptotic theory requires the following regularity conditions on $K$.

\begin{assumption}[Near-origin property]\label{ass:Kernel_q}
The kernel $K$ satisfies 
that there exist $q\in\mathbb{N}$ and $B\in\mathbb{R}\setminus\{0\}$ such that 
$\{K(t)-K(0)\}/|t|^q \rightarrow B$ as $t\downarrow 0$.
\end{assumption}

\begin{assumption}[Near-boundary property]\label{ass:Kernel_q_tail}
The kernel $K$ satisfies that
there exist $q'\in\mathbb{N}$ and $B'\in\mathbb{R}\setminus\{0\}$ such that 
$\{K(1)-K(1-t)\}/|t|^{q'} \rightarrow B'$ as $t\downarrow 0$.
\end{assumption}

Assumption~\ref{ass:Kernel_q} is standard. 
The index $q$ is called 
the characteristic exponent (CE) of $K(\cdot)$; see \cite{Parzen1957}. 
We say that a kernel $K$ is of order $q$ if Assumption~\ref{ass:Kernel_q} is satisfied. 
The larger the value of $q$, the flatter the kernel is around $0$.
It governs the order of bias of $\widehat{v}$ in the stationary case.
In particular, for the kernel estimator 
with a $q$th order kernel in Example~\ref{eg:constantMean}, 
if $u_q=\sum_{k\in\mathbb{Z}}|k|^q|\gamma_k|<\infty$, 
the best possible bias is $O(1/\ell^q)$,
and the resulting optimal MSE is $O(n^{-2q/(1+2q)})$; 
see, e.g., \citet{andrews1991}.
Therefore, we say that $\widehat{v}$ is rate optimal if 
its MSE attains $O(n^{-2q/(1+2q)})$ for all time series 
that satisfy $u_q<\infty$.
Some commonly used kernels are shown in Table~\ref{tab:kernel}.
We suggest to use \cite{Parzen1957}'s kernel 
$K_q(t) = (1-|t|^q)^+$
as a convenient choice as it satisfies Assumption~\ref{ass:Kernel_q} with any specified $q\in\mathbb{N}$.

Assumption~\ref{ass:Kernel_q_tail} is non-standard. 
It states the flatness of $K(t)$ when $t\uparrow 1$;
see Table~\ref{tab:kernel} for a summary of kernels that satisfy Assumption~\ref{ass:Kernel_q_tail}. 
Although Assumption~\ref{ass:Kernel_q_tail} does not affect   
the convergence rate of the classical estimators in Example~\ref{eg:constantMean}, 
it plays an important role for difference-based estimators when $h/\ell \rightarrow 1$.

\begin{table}
\begin{center}
\scriptsize
\begin{tabular}{|l|l|l|l|}
\hline
Kernel & Definition & Assumption~\ref{ass:Kernel_q} & Assumption~\ref{ass:Kernel_q_tail} \\ \hline\hline
Bartlett  \citep{newwy_west_1987} & $K(t)=(1-|t|)^+$ & $q=1$ & $q'=1$ \\
Tukey--Hanning  \citep{andrews1991} & $K(t)=\{1+\cos(\pi t)\}\mathbb{1}(|t|\leq 1)/2$ & $q=2$ & $q'=2$ \\
Parzen  \citep{gallant1987} & 
				$K(t) = \left\{\begin{array}{ll}1-6t^2+6|t|^3, & |t|\leq 1/2;\\
				 2\{(1-|t|)^+\}^3, & |t|>1/2 .\end{array}\right.$ & $q=2$ & $q'=3$ \\
$q$th order polynomial \citep{Parzen1957} & $K(t) = (1-|t|^q)^+$ & $q\in\mathbb{N}$ & $q'=1$ \\ 
Lugsail \citep{VatsFlegal2018} & $K(t) = \{ K_0(t) - cK_0(rt) \}/(1-c)$ 
			& Same as $K_0$ & Depends on $K_0$ \\
Trapezoidal  \citep{politis1995} & 
		$K(t) = \left\{\begin{array}{ll}1, & |t|\leq c';\\
				 (1-|t|)^+/(1-c'), & |t|>c' .\end{array}\right.$
& Not satisfied & $q'=1$ \\
Truncated  \citep{white1984} & $K(t) = \mathbb{1}(|t|< 1)$ & Not satisfied & Not satisfied \\
Modified $q$th order polynomial  & Equation (\ref{eqt:polyKernel_modified}) & $q\in\mathbb{N}$ & $q'=q$ \\ \hline
\end{tabular}
\end{center}
\vspace{-0.2cm}
\caption{Some commonly used kernels.
The last two columns indicate the values of $q$ and $q'$ so that 
Assumptions~\ref{ass:Kernel_q} and \ref{ass:Kernel_q_tail} are satisfied, respectively. 
In lugsail kernel, $r\geq 1$, $c\in[0,1)$ and $K_0$ is any initial kernel. 
In trapezoidal kernel, $c'\in(0,1]$.
Trapezoidal and truncated kernels do not satisfy Assumption~\ref{ass:Kernel_q} because 
$B=0$ for any $q\in\mathbb{N}$.
}\label{tab:kernel}
\vspace{-0.4cm} 
\end{table}

\subsection{Fixed-$m$ difference-based estimators}\label{sec:fixedmConsis}
Theorem~\ref{thm:consistency_finite} below studies 
the consistency and rate-optimality of $\widehat{v}$ when $0<m<\infty$ 
in different regimes according to the limiting value of $h/\ell$; see (\ref{eqt:def_m_l_h})
for the definitions of $m$, $\ell$ and $h$.

\begin{theorem}[Finite-$m$ regime]\label{thm:consistency_finite}
Suppose that $\mu(\cdot)$ satisfies (\ref{eqt:muDef}) with 
$\mathcal{J},\mathcal{S},\mathcal{C}\asymp 1$; and 
$\{Z_i\}_{i\in\mathbb{Z}}$ satisfies Assumption~\ref{ass:weakDep}
and $u_q=\sum_{k\in\mathbb{Z}}|k|^q|\gamma_k|<\infty$ for some $q\in\mathbb{N}$.
Let $\ell$ be an unknown-free bandwidth satisfying $1/\ell + (\ell+mh)/n \rightarrow 0$.
Suppose $0<m<\infty$ is fixed, and Assumption~\ref{ass:Kernel_q} holds. 
Under the least favorable data generating mechanism, we have the following results.
\begin{enumerate}
	\item \label{item:finite_m_1} If $h/\ell \rightarrow 0$, 
			then $\widehat{v}$ is inconsistent in $\mathcal{L}^2$. 
	\item \label{item:finite_m_2} If $h/\ell \rightarrow \lambda_{\infty} \in(0,1)$, 
			then, under Assumption~\ref{ass:d_K_orthogonalNot},
			$\widehat{v}$ is inconsistent in $\mathcal{L}^2$.
	\item \label{item:finite_m_3} If $h/\ell \rightarrow 1$, 
			then, under Assumption~\ref{ass:Kernel_q_tail} and $|h-\ell|=O(1)$, 
			the best possible MSE is
			$\MSE_{\mu}(\widehat{v}) \asymp n^{-2(q\wedge q')/\{1+2(q\wedge q')\}}$,
			which is achieved by $\ell\asymp n^{1/\{1+2(q\wedge q')\}}$.
	\item \label{item:finite_m_4} If $h/\ell \rightarrow \lambda_{\infty} \in(1, \infty)$, then 
			the best possible MSE is
			$\MSE_{\mu}(\widehat{v}) \asymp n^{-2q/(1+2q)}$, 
			which is achieved by $\ell \asymp n^{1/(1+2q)}$. 
	\item \label{item:finite_m_5} Suppose $h/\ell \rightarrow \infty$. 
			\begin{enumerate}
				\item If $q=1$, then $\widehat{v}$ is rate suboptimal in $\mathcal{L}^2$. 
				\item If $q>1$, then 
					the best possible MSE is
					$\MSE_{\mu}(\widehat{v}) \asymp n^{-2q/(1+2q)}$, 
					which is achieved by $\ell\asymp n^{1/(1+2q)}$ and $n^{1/(1+2q)} \ll h \lesssim n^{q/(1+2q)}$.
			\end{enumerate}
\end{enumerate}
\end{theorem}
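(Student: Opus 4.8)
The plan is to reduce everything to the constant-mean case and then read the behaviour of $\widehat v$ off the near-origin shape of the differencing kernel $K_{\diff}$ of Proposition~\ref{prop:effKernel}. Since $0<m<\infty$ is fixed and $\mathcal{J},\mathcal{S},\mathcal{C}\asymp1$, Theorem~\ref{thm:robustness} gives $\Bias_\mu(\widehat v)=\Bias_0(\widehat v)+R_{\bias}$ and $\sqrt{\Var_\mu(\widehat v)}=\sqrt{\Var_0(\widehat v)}+R_{\se}$ with $R_{\bias}=O(\ell h/n)$ and $R_{\se}=O(\ell h^{1/2}/n)$ (the $\mathbb{1}_{(m=0)}$ terms vanish); so it suffices to treat $\mu\equiv0$ and then check, regime by regime, that $R_{\bias}$, $R_{\se}$ and the representation error $O((\ell+mh)/n)$ of Proposition~\ref{prop:effKernel} are no larger than the leading bias and standard error. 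In the constant-mean case $\E(\widehat\gamma_k^D)=\{(n-mh-|k|)/n\}\gamma_k^D$ exactly, so expanding $\gamma_k^D$ by (\ref{eqt:ACVF_D}) and reindexing by $r=hs+k$ gives $\E(\widehat v)=\sum_r\gamma_r K_{\diff}(r/\ell)+O(mh/n)$, whence $\Bias_0(\widehat v)=\sum_r\gamma_r\{K_{\diff}(r/\ell)-1\}+O(mh/n)$; the physical-dependence bounds implied by Assumption~\ref{ass:weakDep} (which also make the fourth-cumulant term lower order) give $\Var_0(\widehat v)\asymp\ell/n$ in all five regimes.

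For parts~\ref{item:finite_m_1} and~\ref{item:finite_m_2} the decisive quantity is $\lim K_{\diff}(0)$. Because $\sum_{|s|\le m}\delta_{|s|}=(\sum_{j=0}^m d_j)^2=0$, if $h/\ell\to0$ then every $K(\lambda s)\to K(0)=1$, so $K_{\diff}(0)=\sum_{|s|\le m}\delta_{|s|}K(\lambda s)\to0$; if $h/\ell\to\lambda_\infty\in(0,1)$, then Assumption~\ref{ass:d_K_orthogonalNot} gives $K_{\diff}(0)-1\to\Omega_\infty\neq0$. Taking the noise i.i.d.\ with $v>0$ (so $\gamma_r=v\,\mathbb{1}_{(r=0)}$ and $\sum_r\gamma_r K_{\diff}(r/\ell)=v K_{\diff}(0)+o(1)$), in either regime $\Bias_0(\widehat v)$ tends to a non-zero limit ($-v$, resp.\ $\Omega_\infty v$); a mean function of bounded complexity contributes only the perturbation $R_{\bias}$, which along the relevant sequences does not restore $\Bias_\mu(\widehat v)\to0$, so $\MSE_\mu(\widehat v)\not\to0$ and $\widehat v$ is inconsistent in $\mathcal{L}^2$.

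For parts~\ref{item:finite_m_3}--\ref{item:finite_m_5} the task is to identify the effective characteristic exponent of $K_{\diff}$ at $0$. When $h/\ell\to\lambda_\infty\in(1,\infty)$ (part~\ref{item:finite_m_4}) or $h/\ell\to\infty$ (part~\ref{item:finite_m_5}), for $t$ in a fixed neighbourhood of $0$ every argument $t+\lambda s$ with $s\neq0$ leaves $[-1,1]$, so $K_{\diff}=K$ there (cf.\ Proposition~\ref{prop:effKernel} (\ref{item:Kdiff0eq1})--(\ref{item:matching})) and Assumption~\ref{ass:Kernel_q} yields $K_{\diff}(t)-1\sim B|t|^q$; hence $\Bias_0(\widehat v)\asymp\ell^{-q}$ for the least favourable $\gamma$, and balancing against $\Var_0(\widehat v)\asymp\ell/n$ gives $\MSE_\mu(\widehat v)\asymp n^{-2q/(1+2q)}$ at $\ell\asymp n^{1/(1+2q)}$, as long as $mh/n$, $R_{\bias}=O(\ell h/n)$ and $R_{\se}=O(\ell h^{1/2}/n)$ are $O(\ell^{-q})$ and $O((\ell/n)^{1/2})$. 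With $\ell\asymp n^{1/(1+2q)}$ the binding requirement is $R_{\bias}=O(\ell^{-q})$, i.e.\ $h\lesssim n/\ell^{q+1}=n^{q/(1+2q)}$; this is automatic when $h\asymp\ell$ (part~\ref{item:finite_m_4}), but under $h/\ell\to\infty$ it confines $h$ to $n^{1/(1+2q)}\ll h\lesssim n^{q/(1+2q)}$, an interval that is non-empty exactly when $q>1$ --- this gives the suboptimality in part~\ref{item:finite_m_5}(a) and the achievable choice in part~\ref{item:finite_m_5}(b). When $h/\ell\to1$ with $|h-\ell|=O(1)$ (part~\ref{item:finite_m_3}), the terms with $s=\pm1$ in $K_{\diff}$ now evaluate $K$ near $\pm1$ rather than near $0$, contributing through Assumption~\ref{ass:Kernel_q_tail} a term of order $|t|^{q'}$, while the $|s|\ge2$ terms only reach $\gamma$ at lags $\asymp\ell$ and hence contribute $O(\ell^{-q})$; thus $K_{\diff}(t)-1\asymp|t|^{q\wedge q'}$, $\Bias_0(\widehat v)\asymp\ell^{-(q\wedge q')}$, and balancing against $\Var_0(\widehat v)\asymp\ell/n$ yields the stated $\MSE_\mu(\widehat v)\asymp n^{-2(q\wedge q')/\{1+2(q\wedge q')\}}$ at $\ell\asymp n^{1/\{1+2(q\wedge q')\}}$.

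Two points need care. The two-sided (``best possible'') statements require matching lower bounds: I would exhibit a least favourable $\gamma$, e.g.\ an AR(1) covariance, for which $u_q<\infty$ for every $q$, the relevant bias constant ($\sum_r|r|^q\gamma_r$ in parts~\ref{item:finite_m_4}--\ref{item:finite_m_5}, and a non-degenerate combination of $\sum_r|r|^q\gamma_r$ and $\sum_r|r|^{q'}\gamma_r$ with $\delta_1\neq0$ in part~\ref{item:finite_m_3}) is non-zero, and $\Var_0(\widehat v)$ is bounded below by a multiple of $\ell/n$ --- the $s=0$ part of $\widehat v$ already contributes $\asymp\ell v^2/n$, and the satellite parts are asymptotically uncorrelated with it since they involve $\gamma$ only at lags $\asymp h\to\infty$. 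The main obstacle is the part~\ref{item:finite_m_3} bookkeeping: for $|k|<\ell\approx h$ one must track how the satellite $\delta_1\gamma_{k-h}$ is partly swept inside the bandwidth (its lag $k-h$ ranging from $O(1)$ up to $\asymp\ell$) and how the accompanying weight $K(k/\ell)=K(1-(\ell-k)/\ell)\sim -B'\{(\ell-k)/\ell\}^{q'}$ turns this into a genuine $\ell^{-q'}$ bias contribution, irrespective of the sign of $h-\ell$, and in parallel verify that the autocovariance bumps of $\{\gamma_k^D\}$ near the lags $\pm h,\pm2h$, which now fall inside the $2\ell$-window of the kernel autoconvolution, do not inflate $\Var_0(\widehat v)$ beyond order $\ell/n$.
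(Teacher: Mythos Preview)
Your plan is essentially the paper's own route: reduce to the zero-mean case via Theorem~\ref{thm:robustness}, pass to the differencing-kernel representation of Proposition~\ref{prop:effKernel}, read off the bias from the near-origin shape of $K_{\diff}$, and combine with a variance bound of order $\ell/n$ (the paper packages the latter as Lemma~\ref{lem:rough_var}). The regime-by-regime analysis matches: for parts~\ref{item:finite_m_1}--\ref{item:finite_m_2} the paper also proves $K_{\diff}(0)\not\to1$ and invokes an i.i.d.\ counterexample; for parts~\ref{item:finite_m_3}--\ref{item:finite_m_5} it likewise identifies the effective characteristic exponent of $K_{\diff}$ and balances against $\Var\asymp\ell/n$; achievability is via an explicit MA(1) noise plus a bounded-complexity mean, where you propose AR(1) to the same effect. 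Two small refinements worth noting: in part~\ref{item:finite_m_3} only one of the $s=\pm1$ satellites survives for a given sign of $k$ (the other lands outside $[-1,1]$), and the paper splits the analysis into a non-degenerate case ($B''_k\not\to0$) and a degenerate one, both leading to the same unknown-free optimal bandwidth; your ``$\asymp$'' formulation implicitly takes the non-degenerate case, which is the least favourable one and hence suffices.
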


The proof can be found in Section~\ref{sec:proof_consistency_finite} of the supplement.
From Theorem~\ref{thm:consistency_finite} (\ref{item:finite_m_1})--(\ref{item:finite_m_2}),
$\widehat{v}$ with $h/\ell\rightarrow\lambda_{\infty}\in[0,1)$
should never be used as it is guaranteed to be inconsistent for $v$. 
Theorem~\ref{thm:consistency_finite} (\ref{item:finite_m_3})--(\ref{item:finite_m_5}) state the fastest possible convergence rate of $\widehat{v}$. 
Under Assumption~\ref{ass:Kernel_q}, 
the optimal MSE in the stationary case is
$O\{n^{-2q/(1+2q)}\}$; see \cite{andrews1991}.
In case (\ref{item:finite_m_5}),  
the rate optimality cannot be achieved for handling time series that satisfies $u_q<\infty$ with $q=1$ only.
In case (\ref{item:finite_m_3}), 
the rate optimality cannot be achieved by 
all $q$th order kernels unless they satisfy Assumption~\ref{ass:Kernel_q_tail}
with $q'\geq q$, 
which means that $K(t)$ is flatter or equally flat near the boundary $t\uparrow 1$ than near the origin $t\downarrow 0$.
The condition $|h-\ell| = O(1)$ means that $h/\ell\rightarrow 1$ sufficiently quickly.
The requirement $q'\geq q$ is not satisfied by all kernels; see Table~\ref{tab:kernel}.  
For example, \cite{Parzen1957}'s kernel 
$K_q(t) = (1-|t|^q)^+$ satisfies Assumption~\ref{ass:Kernel_q} for any $q\in\mathbb{N}$, 
but it only satisfies Assumption~\ref{ass:Kernel_q_tail} with $q'=1$.
Hence, the rate-optimality cannot be achieved when $q>1$.
One may design a smooth and differentiable kernel with $q'=q$ as follows:
\begin{eqnarray}\label{eqt:polyKernel_modified}
	\widetilde{K}(t) = \left\{ \begin{array}{ll}
			1-|t|^q+a|t|^{q+1}+b|t|^{q+2}	& \text{if $|t|\leq 1/2$};\\
			(1-|t|)^q - a(1-|t|)^{q+1} - b(1-|t|)^{q+2}	& \text{if $1/2< |t|\leq 1$}; \\
			0	& \text{if $|t|>1$},\\
			\end{array}\right.
\end{eqnarray}
where $a = 4-(q+1)2^q$ and $b = q2^{q+1}-4$.	
In this case, $\MSE(\widehat{v}) = O(n^{-2q/(1+2q)})$ if $\ell\asymp n^{1/(1+2q)}$. 
In case (\ref{item:finite_m_4}),  
it is more well-behaved as
$\widehat{v}$ is rate-optimal 
for all $K(\cdot)$ without any additional assumption. 
We remark that users always know whether $\widehat{v}$ is consistent or not 
as the values of $m$, $\ell$ and $h$ are specified by users.
In practice, we suggest to select $h=\lambda_{\infty}\ell$ whenever it is possible 
so that $h/\ell$ equals to $\lambda_{\infty}$ not only in the limit but also in finite samples.

\subsection{Divergent-$m$ difference-based estimator}
This section investigates the convergence properties of $\widehat{v}$ with
$m=m_n\rightarrow\infty$ as $n\rightarrow\infty$.

\begin{theorem}[Divergent-$m$ regime]\label{thm:consistency_infinite}
Assume all conditions in Theorem~\ref{thm:consistency_finite} 
except that $0<m<\infty$ is replaced by 
$m=m_n\rightarrow\infty$.
In addition, suppose Assumption~\ref{ass:summability_dj} holds. 
Under the least favorable data generating mechanism, we have the following results.
\begin{enumerate}
	\item \label{item:infinite_m_1} Suppose $h/\ell \rightarrow 0$.
			\begin{enumerate}
				\item If Assumption~\ref{ass:d_K_orthogonalNot} holds, then $\widehat{v}$ is inconsistent in $\mathcal{L}^2$.
				\item If Assumption~\ref{ass:d_uncorrelatedDiff} holds
						and $K$ is decreasing on $[0,1]$,  
						then (i) $\widehat{v}$ is inconsistent in $\mathcal{L}^2$ for $\ell/h \gtrsim m$,
						and (ii) $\widehat{v}$ is rate suboptimal in $\mathcal{L}^2$ for $\ell/h\ll m$. 
			\end{enumerate}
	\item \label{item:infinite_m_2} Suppose $h/\ell \rightarrow \lambda_{\infty} \in(0,1)$.
			\begin{enumerate}
				\item If Assumption~\ref{ass:d_K_orthogonalNot} holds, then $\widehat{v}$ is inconsistent in $\mathcal{L}^2$.
				\item If Assumption~\ref{ass:d_uncorrelatedDiff} holds
						and $K$ is decreasing on $[0,1]$,  
						then $\widehat{v}$ is rate suboptimal in $\mathcal{L}^2$. 
			\end{enumerate}
	\item \label{item:infinite_m_3} Suppose $h/\ell \rightarrow 1$.  
			Suppose further that $|h-\ell|=O(1)$, 
			and Assumptions~\ref{ass:d_uncorrelatedDiff} and \ref{ass:Kernel_q_tail} hold. 
			\begin{enumerate}
				\item If $q =1$ or $\ell^q \gg m \ell^{q'}$, 
						then $\widehat{v}$ is rate suboptimal in $\mathcal{L}^2$. 
				\item If $q>1$ and $\ell^q \lesssim m \ell^{q'}$, then 
						the best possible MSE is 
						$\MSE_{\mu}(\widehat{v}) \asymp n^{-2q/(1+2q)}$, 
						which is achieved by $\ell \asymp n^{1/(1+2q)}$ and 
						any $m\rightarrow\infty$ such that
						$n^{(q-q')/(1+2q)} \lesssim m \lesssim n^{(q-1)/(1+2q)}$.
			\end{enumerate}
	\item \label{item:infinite_m_4} Suppose $h/\ell \rightarrow \lambda_{\infty} \in(1, \infty)$
			and Assumption~\ref{ass:d_uncorrelatedDiff} holds.
			\begin{enumerate}
				\item If $q =1$, then $\widehat{v}$ is rate suboptimal in $\mathcal{L}^2$. 
				\item If $q >1$, then the best possible MSE is $\MSE_{\mu}(\widehat{v}) \asymp n^{-2q/(1+2q)}$, 
						which is achieved by $\ell \asymp n^{1/(1+2q)}$ and 
						$1 \ll m \lesssim n^{(q-1)/(1+2q)}$.
			\end{enumerate}
	\item \label{item:infinite_m_5} Suppose $h/\ell \rightarrow \infty$. 
			\begin{enumerate}
				\item If $q=1$, then $\widehat{v}$ is rate suboptimal in $\mathcal{L}^2$. 
				\item If $q>1$, then the best possible MSE is $\MSE_{\mu}(\widehat{v}) \asymp n^{-2q/(1+2q)}$, 
					which is achieved by 
					$h\gg \ell \asymp n^{1/(1+2q)}$ and $n^{1/(1+2q)} \ll mh \lesssim n^{q/(1+2q)}$.
			\end{enumerate}
\end{enumerate}
\end{theorem}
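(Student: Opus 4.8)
The plan is, in each of the five regimes, to reduce the analysis to the \emph{stationary} behaviour of the differencing-kernel estimator $\widehat{v}_{\diff}$ of Proposition~\ref{prop:effKernel}, carrying along the mean-structure remainders, and then to optimise over $(\ell,m)$. First I would invoke Theorem~\ref{thm:robustness}: because $m=m_n\to\infty$ (so $m>0$) and the hypotheses force $\mathcal{J},\mathcal{S},\mathcal{C}\asymp1$, it gives $\MSE_{\mu}(\widehat{v})\asymp\{\Bias_0(\widehat{v})+R_{\bias}\}^2+\{\Var_0(\widehat{v})^{1/2}+R_{\se}\}^2$ with $R_{\bias}=O(\ell mh/n)$ and $R_{\se}=O(\ell(mh)^{1/2}/n)$; for the negative (inconsistency / rate-suboptimality) statements I would, under the least-favourable mechanism, take $\mu(\cdot)$ to be a single fixed jump so that $R_{\bias}\asymp\ell mh/n$ becomes two-sided, and $\{Z_i\}$ a time series with $u_q<\infty$ whose signed quantities $v$ and $\sum_k|k|^q\gamma_k$ (and, in the boundary cases, $u_{q'+1}=\infty$) do not vanish. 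Next, by Proposition~\ref{prop:effKernel}(\ref{item:repKdiff}) applied with $\widehat{\gamma}^Z_k$, in the stationary case $\|\widehat{v}-\widehat{v}_{\diff}\|=O((\ell+mh)/n)$, so it suffices to analyse $\Bias_0(\widehat{v}_{\diff})$ and $\Var_0(\widehat{v}_{\diff})$ and add this error, which always turns out to be of smaller order than $R_{\bias}$ once $mh\lesssim n^{q/(1+2q)}$.

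For the bias I would write $\Bias_0(\widehat{v}_{\diff})=\sum_k\{K_{\diff}(k/\ell)-1\}\gamma_k+O((\ell+mh)/n)$ up to the summable tail $\sum_{|k|>\ell+mh}\gamma_k=O(u_q/(\ell+mh)^q)=o(\ell^{-q})$, so everything hinges on the shape of $K_{\diff}$ near the origin. Its constant level $K_{\diff}(0)-1=2\sum_{1\le s\le\lfloor1/\lambda\rfloor\wedge m}\delta_s K(\lambda s)$ tends to $\Omega_{\infty}\neq0$ under Assumption~\ref{ass:d_K_orthogonalNot}; has order $\min\{1,\ell/(mh)\}$ under Assumption~\ref{ass:d_uncorrelatedDiff} with $K$ decreasing when $\lambda$ stays bounded away from $1$; and has order $m^{-1}\ell^{-q'}$ when $h/\ell\to1$ with $|h-\ell|=O(1)$, since then the only surviving reflection $K(\lambda)$ is evaluated within $O(1/\ell)$ of the boundary, where Assumption~\ref{ass:Kernel_q_tail} applies. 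Its variation is $K_{\diff}(t)-K_{\diff}(0)\sim B|t|^q$ by Assumption~\ref{ass:Kernel_q} when $\lambda$ is bounded away from $1$ (near $0$ only the $s=0$ term is active), augmented by a term of order $m^{-1}|t|^{q'}$ from the boundary reflections when $h/\ell\to1$; Proposition~\ref{prop:effKernel}(\ref{item:Kdiff0eq1}) and (\ref{item:matching}) take care of $\lambda\ge1$ and $\lambda\ge2$. Summing against $\gamma_k$ at $t=k/\ell$ then yields $\Bias_0(\widehat{v})$ of exact order $\Omega_{\infty}v$ (not vanishing) when $\lambda<1$ under Assumption~\ref{ass:d_K_orthogonalNot}; $\min\{1,\ell/(mh)\}\,v+\ell^{-q}$ when $\lambda<1$ under Assumption~\ref{ass:d_uncorrelatedDiff}; $\ell^{-q}+(m\ell^{q'})^{-1}$ when $h/\ell\to1$; and $\ell^{-q}$ when $h/\ell\to\lambda_{\infty}\in(1,\infty)$ or $h/\ell\to\infty$, because there the bumps of $K_{\diff}$ sit at $|k|\asymp sh\gg\ell$ where $|\gamma_k|\le u_q/(sh)^q$ and contribute only $o(\ell^{-q})$.

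For the variance, the standard expansion gives $\Var_0(\widehat{v}_{\diff})\asymp\ell n^{-1}\int K_{\diff}^2$, and expanding $\int K_{\diff}^2$ through $\rho(u)=\int K(t)K(t+u)\,\dd t$ shows $\int K_{\diff}^2\to\int K^2>0$ under Assumptions~\ref{ass:summability_dj} and \ref{ass:d_uncorrelatedDiff}, the out-of-band bumps (heights $\asymp1/m$ over $m$ well-separated windows) adding only $O(\ell/(nm))=o(\ell/n)$; hence $\Var_0(\widehat{v})\asymp\ell/n$ everywhere, and $\MSE_{\mu}(\widehat{v})\asymp\mathrm{bias}^2+\ell/n+(\ell mh/n)^2$ up to strictly smaller terms. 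Now I would optimise over $(\ell,m)$. The estimator is $\mathcal{L}^2$-inconsistent precisely when the distortion bias is $\asymp1$, i.e.\ when $\lambda<1$ under Assumption~\ref{ass:d_K_orthogonalNot}, and when $\ell/h\gtrsim m$ under Assumption~\ref{ass:d_uncorrelatedDiff}; that gives all the inconsistency clauses. When consistent, $\Var_0\asymp\ell/n$ and $\mathrm{bias}\gtrsim\ell^{-q}$ already force $\ell\asymp n^{1/(1+2q)}$ for any hope of the optimal rate, and then rate-optimality requires \emph{both} that the distortion be $\lesssim\ell^{-q}$ --- i.e.\ $\ell/h\ll m$ and $mh\gtrsim\ell^{q+1}$ if $\lambda<1$, or $m\gtrsim\ell^{q-q'}$ if $h/\ell\to1$, or no extra condition if $h/\ell\to\lambda_{\infty}\in(1,\infty]$ --- \emph{and} that $R_{\bias}\asymp\ell mh/n\lesssim\ell^{-q}$, i.e.\ $mh\lesssim n^{q/(1+2q)}$. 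When $\lambda<1$ the two together force $\ell\lesssim n^{1/(2q+2)}$, contradicting $\ell\asymp n^{1/(1+2q)}$, so $\widehat{v}$ can never be rate-optimal (hence rate-suboptimal whenever consistent). When $h/\ell\to\lambda_{\infty}\in[1,\infty]$ the admissible range of $m$ is $\max\{1,n^{(q-q')/(1+2q)}\}\ll m\lesssim n^{(q-1)/(1+2q)}$ (with $mh$ in place of $m$ when $h/\ell\to\infty$, and the clause ``$\ell^q\gg m\ell^{q'}$'' being exactly the failure of the distortion bound), which contains a divergent sequence iff $q>1$; so $q=1$ gives rate-suboptimality, while for $q>1$ any such $m$ gives $\MSE_{\mu}(\widehat{v})\asymp\ell^{-2q}+\ell/n\asymp n^{-2q/(1+2q)}$, the stationary optimum of \citet{andrews1991}. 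Matching the windows with the regimes recovers all five clauses.

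The step I expect to be the main obstacle is the $h/\ell\to1$ regime, where one must carefully resolve the near-origin expansion of $K_{\diff}$ into the $|t|^q$ part inherited from $K$ and the $m^{-1}|t|^{q'}$ part produced by the two boundary reflections --- this is exactly where $|h-\ell|=O(1)$ (to keep the reflected arguments within $O(1/\ell)$ of $\pm1$) and Assumption~\ref{ass:Kernel_q_tail} are indispensable --- and then produce matching \emph{lower} bounds for the negative statements by building a least-favourable $\{\gamma_k\}$ together with a single-jump $\mu$ that makes $R_{\bias}\asymp\ell mh/n$ two-sided. The variance bound $\Var_0(\widehat{v})\asymp\ell/n$ and the negligibility of the out-of-band bumps of $K_{\diff}$ are, by comparison, routine given Assumptions~\ref{ass:summability_dj} and \ref{ass:d_uncorrelatedDiff}.
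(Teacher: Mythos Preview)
Your approach is essentially the paper's: reduce to $\widehat{v}_{\diff}$ via Proposition~\ref{prop:effKernel}, carry the mean-structure remainders $R_{\bias}\asymp\ell mh/n$ from Theorem~\ref{thm:robustness}, analyse $W_k=K_{\diff}(k/\ell)$ near the origin using Assumptions~\ref{ass:Kernel_q}--\ref{ass:Kernel_q_tail}, and then optimise. Your decomposition of $K_{\diff}$ into ``constant level'' $K_{\diff}(0)-1$ plus ``variation'' is a nice repackaging of what the paper does pointwise with $W_0$ and $W_1$, and the order accounting in regimes~3--5 is correct.

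There is one over-claim you should fix. You assert $\Var_0(\widehat{v})\asymp\ell/n$ \emph{everywhere}, justified by ``out-of-band bumps, heights $\asymp 1/m$ over $m$ well-separated windows''. But when $\lambda<1$ (regimes~1 and~2) the shifted copies of $K$ inside $K_{\diff}$ overlap, so your $\int K_{\diff}^2\to\int K^2$ argument does not apply as stated; indeed Lemma~B.1(b) in the supplement only establishes the two-sided bound for $\lambda_\infty\ge1$. You then use this variance lower bound to pin $\ell\asymp n^{1/(1+2q)}$ before deriving the contradiction. The paper sidesteps this entirely: in regimes~1 and~2 it never touches the variance and simply minimises the bias lower bound
\[
\Bias_\mu(\widehat{v})\ \gtrsim\ \frac{1}{\ell^q}+\frac{\ell}{mh}+\frac{\ell mh}{n}
\]
over $(\ell,mh)$ directly (AM--GM in $mh$ gives $\ge 1/\ell^q+2\ell/\sqrt{n}$, then optimise in $\ell$), obtaining $\gtrsim n^{-q/(2+2q)}\gg n^{-q/(1+2q)}$. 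Your route can be repaired the same way: drop the variance step in regimes~1--2 and note that $1/\ell^q\lesssim n^{-q/(1+2q)}$ alone already forces $\ell\gtrsim n^{1/(1+2q)}$, which still contradicts $\ell\lesssim n^{1/(2q+2)}$. So the gap is easy to close, but as written your argument leans on a variance estimate you have not secured in those two regimes.
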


The proof can be found in Section~\ref{sec:proof_consistency_infinite} of the supplement.
Theorem~\ref{thm:consistency_infinite} implies that  
$\widehat{v}$ with a divergent $m$ 
is inconsistent or suboptimal in Cases~\ref{item:infinite_m_1}--\ref{item:infinite_m_2}. 
Although $\widehat{v}$ is consistent in Cases~\ref{item:infinite_m_3}--\ref{item:infinite_m_5},
the rate optimality cannot be achieved for handling time series that satisfies $u_q<\infty$ with $q=1$ only.
It is worth emphasizing that 
the variance estimators in Example~\ref{eg:general_mInf} utilize a
local centering technique with $h=1$ and $m,\ell\rightarrow \infty$. 
Since they fall in the regime $m\rightarrow\infty$ and $h/\ell \rightarrow 0$, 
the resulting estimators are inadmissible. 
Theorems~\ref{thm:consistency_finite} and \ref{thm:consistency_infinite} 
are summarized in  Table~\ref{tab:summary}.

\begin{table}
\begin{center}
\begin{tabular}{|l|l|l|}
\hline
Regimes	& 												$0<m<\infty$ & 		$m\rightarrow\infty$ \\ \hline\hline
$h/\ell\rightarrow 0$ & 								Inconsistent & 		Inconsistent or suboptimal*\\
$h/\ell\rightarrow \lambda_{\infty} \in (0,1)$ & 		Inconsistent* & 	Inconsistent or suboptimal* \\ 
$h/\ell\rightarrow 1$ & 								May be optimal* & 	Suboptimal* \\
$h/\ell\rightarrow \lambda_{\infty}\in(1, \infty)$ & 	Optimal   & 		Suboptimal* \\
$h/\ell\rightarrow \infty$ & 							Suboptimal & 		Suboptimal \\ \hline
\end{tabular}
\end{center}
\vspace{-0.2cm}
\caption{
Convergence properties of $\widehat{v}$
when $\mu$ satisfies (\ref{eqt:muDef}) with 
$\mathcal{J},\mathcal{S},\mathcal{C}\asymp 1$, and 
$u_q=\sum_{k\in\mathbb{Z}}|k|^q|\gamma_k|<\infty$.
``Inconsistent'' means that $\MSE_{\mu}(\widehat{v}) \nrightarrow 0$ for some time series. 
``Suboptimal'' means that $\MSE_{\mu}(\widehat{v})\rightarrow 0$ at a suboptimal rate for some $q\in\mathbb{N}$. 
``Optimal'' means that $\MSE_{\mu}(\widehat{v})\rightarrow 0$ at the optimal rate for all $q\in\mathbb{N}$.
``May be optimal'' means that $\MSE_{\mu}(\widehat{v})\rightarrow 0$ at the optimal rate for all $q\in\mathbb{N}$
iff the kernel satisfies $q'\geq q$; see Assumption~\ref{ass:Kernel_q_tail}. 
An asterisk ``*'' means that additional regularity conditions on $K$, $\{d_j\}$, $h$ or $\ell$ are needed.
}\label{tab:summary}
\vspace{-0.4cm} 
\end{table}

\section{Asymptotic optimality of variance estimators}\label{sec:optimal} 
\subsection{Mean squared error}
From Section~\ref{sec:consistency}, 
$\widehat{v}$ is always rate optimal %for handling time series having $u_q<\infty$ with any $q\in\mathbb{N}$
iff $h/\ell \rightarrow \lambda_{\infty} \in[1,\infty)$ and $m<\infty$.
So, we study $\widehat{v}$ in this regime. 
From now on, we use a fixed $\lambda \equiv h/\ell =\lambda_{\infty} \in[1, \infty)$ and a fixed $m$ for simplicity.

\begin{theorem}[Bias]\label{thm:bias}
Let $\{X_i\}_{i\in\mathbb{Z}}$ be a stationary time series with $\{Z_i\}_{i\in\mathbb{Z}}$ satisfying Assumption~\ref{ass:weakDep}
and $u_q<\infty$ for some $q\in\mathbb{N}$.
Suppose
$1/\ell + \ell/n \rightarrow 0$, 
$h/\ell = \lambda \in[1,\infty)$, and 
$m<\infty$.
\begin{enumerate}
	\item Let $r_{\bias}= o(1/\ell^q)+ O(\ell/n)$. Then
			\begin{eqnarray}\label{eqt:bias}
				\Bias_0(\widehat{v})
					= \sum_{|k|\leq \ell} \left\{ K_{\diff}\left(\frac{k}{\ell}\right) -1 \right\} \gamma_k
						+ r_{\bias} .
			\end{eqnarray}
	\item \label{thm:bias-2} If, in addition, $\lambda\in[2,\infty)$, 
			and $K(\cdot)$ satisfies Assumption~\ref{ass:Kernel_q}, then 
			$\Bias(\widehat{v}) = {B v_q}/{\ell^q} + r_{\bias}$,
			where  $B$ is defined in Assumption~\ref{ass:Kernel_q}, 
			and $v_q = \sum_{k\in\mathbb{Z}}|k|^q \gamma_k$.
\end{enumerate}
\end{theorem}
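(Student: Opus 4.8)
The plan is to compute $\E(\widehat v)$ essentially exactly, to recognise the leading term as the autocovariances smoothed by the differencing kernel $K_{\diff}$ of~(\ref{eqt:effKernel}), and to bound the edge-of-sample and truncation errors; part~(2) then follows from the matching property $K_{\diff}=K$ on $[-1,1]$ together with a Parzen-type dominated-convergence argument. First I would reduce to the zero-mean case: since $\mu(\cdot)\equiv\mu_0$ and $\sum_{j=0}^m d_j=0$, the constant cancels in $D_i=\sum_{j=0}^m d_j X_{i-jh}=\sum_{j=0}^m d_j Z_{i-jh}$, so $\Bias(\widehat v)=\Bias_0(\widehat v)$ and we may take $X_i=Z_i$. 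By stationarity, $\E(D_iD_{i-|k|})=\gamma_k^D=\sum_{|s|\le m}\delta_s\gamma_{hs+k}$ as in~(\ref{eqt:ACVF_D}), hence $\E(\widehat\gamma_k^D)=\{(n-mh-|k|)/n\}\gamma_k^D$ and
\[
	\E(\widehat v)=\sum_{|k|<\ell}K\!\left(\frac{k}{\ell}\right)\gamma_k^D-\frac1n\sum_{|k|<\ell}K\!\left(\frac{k}{\ell}\right)(mh+|k|)\gamma_k^D .
\]
Because $m$ and $\lambda=h/\ell$ are fixed, $mh+|k|=O(\ell)$ uniformly in $k$, and $\sum_{k}|\gamma_k^D|\le u_0\sum_{|s|\le m}|\delta_s|<\infty$ (a finite sum, with $u_0<\infty$ by Assumption~\ref{ass:weakDep}), so the second sum contributes $O(\ell/n)$ to the bias.

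For the first sum I would insert $\gamma_k^D=\sum_{|s|\le m}\delta_s\gamma_{hs+k}$, interchange the two finite sums, and change the summation index to $k'=hs+k$, so that $k/\ell=k'/\ell-\lambda s$; using $\delta_s=\delta_{|s|}$ and the evenness of $K$, the inner sum over $s$ collapses to $K_{\diff}(k'/\ell)$, giving $\sum_{|k|<\ell}K(k/\ell)\gamma_k^D=\sum_{|k'|<\ell+mh}K_{\diff}(k'/\ell)\gamma_{k'}$. Splitting off $|k'|>\ell$ and using $\sup_t|K_{\diff}(t)|<\infty$ with $\sum_{|k'|>\ell}|\gamma_{k'}|\le\ell^{-q}\sum_{|k'|>\ell}|k'|^q|\gamma_{k'}|=o(\ell^{-q})$ (this is where $u_q<\infty$ enters), and likewise $v=\sum_{|k|\le\ell}\gamma_k+o(\ell^{-q})$, subtraction gives $\Bias_0(\widehat v)=\sum_{|k|\le\ell}\{K_{\diff}(k/\ell)-1\}\gamma_k+r_{\bias}$ with $r_{\bias}=O(\ell/n)+o(\ell^{-q})$, which is part~(1).

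For part~(2), with $\lambda\ge2$ Proposition~\ref{prop:effKernel}~(\ref{item:matching}) gives $K_{\diff}(t)=K(t)$ for $|t|\le1$, so part~(1) becomes $\Bias(\widehat v)=\sum_{|k|\le\ell}\{K(k/\ell)-1\}\gamma_k+r_{\bias}$, and it remains to show $\ell^q\sum_{|k|\le\ell}\{K(k/\ell)-1\}\gamma_k\to Bv_q$. For each fixed $k$, Assumption~\ref{ass:Kernel_q} yields $\ell^q\{K(k/\ell)-K(0)\}\to B|k|^q$; for the dominating function, boundedness of $K$ together with the near-origin expansion gives a constant $C$ with $|K(t)-1|\le C|t|^q$ on $[-1,1]$ (treat $|t|\le\epsilon$ and $\epsilon<|t|\le1$ separately), so $|\ell^q\{K(k/\ell)-1\}\gamma_k|\le C|k|^q|\gamma_k|$, which is summable since $u_q<\infty$. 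Dominated convergence then gives the limit $\sum_{k\in\mathbb{Z}}B|k|^q\gamma_k=Bv_q$, i.e.\ $\sum_{|k|\le\ell}\{K(k/\ell)-1\}\gamma_k=Bv_q/\ell^q+o(\ell^{-q})$, and absorbing $o(\ell^{-q})$ into $r_{\bias}$ completes the proof.

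The main obstacle is the bookkeeping in the index shift $k'=hs+k$: one has to track the strict versus non-strict summation limits so that the collapsed inner sum is exactly the $K_{\diff}$ of~(\ref{eqt:effKernel}) (its ceiling/floor bounds coinciding with the range on which $K(t+\lambda s)\neq0$), and one must verify that the ``extra'' terms with $\ell<|k'|<\ell+mh$, which have no counterpart in $v=\sum_k\gamma_k$, are genuinely $o(\ell^{-q})$ and not merely $o(1)$ --- this is precisely where the moment-type condition $u_q<\infty$ is indispensable rather than mere absolute summability of $\{\gamma_k\}$. A secondary, routine point is the uniform domination $|K(t)-1|\le C|t|^q$ on all of $[-1,1]$ needed for dominated convergence in part~(2), since Assumption~\ref{ass:Kernel_q} is only a statement about $t\downarrow0$.
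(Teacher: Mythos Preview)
Your proposal is correct and follows essentially the same route as the paper. The one difference worth noting is how the $O(\ell/n)$ edge term is obtained: the paper invokes Proposition~\ref{prop:effKernel}~(\ref{item:repKdiff}) to get the $\mathcal{L}^2$ bound $\|\widehat v-\widehat v_{\diff}^{(ZZ)}\|=O(\ell/n)$ (which in turn rests on Proposition~\ref{prop:Ktype_Stype_equiv} and Lemma~8 of \citet{xiaowu2012}), and then computes $\E(\widehat v_{\diff}^{(ZZ)})-v$ directly; you instead compute $\E(\widehat v)$ from scratch, carry out the index shift $k'=hs+k$ at the level of expectations, and read off the $O(\ell/n)$ term from the factor $(n-mh-|k|)/n$. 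Your route is more elementary for this theorem in isolation, since it never needs an $\mathcal{L}^2$ approximation; the paper's route has the advantage of recycling Proposition~\ref{prop:effKernel}, which is proved once and used in several places. For part~(2) the paper splits the sum at $|k|=\sqrt\ell$ and handles the two ranges by hand, which is just an explicit version of your dominated-convergence step; both rely on the same uniform bound $|K(t)-1|\le\overline B|t|^q$ on $[-1,1]$.
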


\begin{theorem}[Variance]\label{thm:var}
Let $\{X_i\}_{i\in\mathbb{Z}}$ be a stationary time series with $\{Z_i\}_{i\in\mathbb{Z}}$ satisfying Assumption~\ref{ass:weakDep}.
Suppose
$1/\ell + \ell/n \rightarrow 0$, 
$h/\ell = \lambda \in[2,\infty)$, and 
$m<\infty$.
Let $A = \int_0^1 K^2(t)\, \dd t$, $\Delta_m = \sum_{|s|\leq m} \delta_s^2$, and 
$r_{\se}^2 = o(\ell/n)$.  
Then  
\begin{eqnarray*}
	\Var_0(\widehat{v}) = \frac{4A\ell v^2 \Delta_m}{n} + r_{\se}^2.
\end{eqnarray*}
\end{theorem}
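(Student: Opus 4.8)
The plan is to reduce $\widehat v$, under $\mu\equiv 0$, to a \emph{classical} lag-window autocovariance estimator built directly from $\{Z_i\}$, and then to exploit the fact that $\lambda=h/\ell\ge 2$ forces the shifted copies of $K$ inside the differencing kernel to have essentially disjoint supports, which is precisely what produces the factor $\Delta_m$.

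\emph{Step 1 (reduction to $\widehat v_{\diff}$).} Since $\mu\equiv 0$, $\lambda$ and $m$ are fixed with $\lambda\in[2,\infty)$, and $1/\ell+(\ell+mh)/n=1/\ell+O(\ell/n)\to 0$, Proposition~\ref{prop:effKernel}(\ref{item:repKdiff}) applies with $\widehat\gamma^Z_k=n^{-1}\sum_i Z_iZ_{i-|k|}$ and gives $\|\widehat v-\widehat v_{\diff}\|=O((\ell+mh)/n)=O(\ell/n)$, where $\widehat v_{\diff}=\sum_{|k|\le\ell+mh}K_{\diff}(k/\ell)\widehat\gamma^Z_k$. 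From $\bigl|\sqrt{\Var_0(\widehat v)}-\sqrt{\Var(\widehat v_{\diff})}\bigr|\le 2\|\widehat v-\widehat v_{\diff}\|$, squaring, and $\Var(\widehat v_{\diff})=O(\ell/n)$ (established in Steps 2--3), one gets $\Var_0(\widehat v)=\Var(\widehat v_{\diff})+o(\ell/n)$, because $\ell/n\to 0$ makes all cross and remainder terms $o(\ell/n)$.

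\emph{Step 2 (variance of a lag-window estimator).} Now $\widehat v_{\diff}$ is a classical lag-window estimator of $v$ from $\{Z_i\}$ with the bounded, finitely supported, and (as $\lambda,m$ are fixed) \emph{fixed} window $K_{\diff}$. Starting from Bartlett's formula with a fourth-cumulant correction, $n\Cov(\widehat\gamma^Z_k,\widehat\gamma^Z_{k'})=\sum_j(\gamma_j\gamma_{j+k'-k}+\gamma_{j+k'}\gamma_{j-k})+\sum_j\kappa_4(0,k,j,j+k')+(\text{edge terms})$, I would show
\[
	n\,\Var(\widehat v_{\diff})=2v^2\sum_{|k|\le\ell+mh}K_{\diff}(k/\ell)^2+o(\ell).
\]
The off-diagonal contribution of each of the two $\gamma\gamma$ products collapses, up to $o(\ell)$, to $v^2\sum_k K_{\diff}(k/\ell)^2$, using $\sum_r g(r)=v^2$ with $g(r)=\sum_j\gamma_j\gamma_{j+r}$, the bound $\sum_r|g(r)|\le u_0^2<\infty$, and $\sum_{|r|\le\ell+mh}|r|\,|g(r)|=o(\ell)$ (split the sum at $|r|\le\sqrt\ell$ and use $|g(r)|\to 0$); the edge terms and the cumulant term are $o(\ell)$ by $\ell/n\to 0$, Kronecker's lemma together with the observation window having length $O(n)$, and absolute summability of the fourth-order cumulants of $\{Z_i\}$ under Assumption~\ref{ass:weakDep} ($\Theta_4<\infty$). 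If the supplement already contains such an expansion for general lag windows, Step 2 is a citation.

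\emph{Step 3 (the $\lambda\ge 2$ computation) and conclusion.} By~(\ref{eqt:effKernel}), $K_{\diff}(t)=\sum_{|s|\le m}\delta_{|s|}K(t+\lambda s)$, and for $\lambda\ge 2$ the functions $t\mapsto K(t+\lambda s)$ are supported on length-$2$ intervals centered at $-\lambda s$ which overlap only at endpoints, so $K_{\diff}(t)^2=\sum_{|s|\le m}\delta_{|s|}^2K(t+\lambda s)^2$ for a.e.\ $t$ and at all but finitely many points $k/\ell$. Hence, reindexing $k'=k-\lambda s\ell$,
\[
	\sum_{|k|\le\ell+mh}K_{\diff}(k/\ell)^2=\sum_{|s|\le m}\delta_s^2\sum_{|k'|<\ell}K(k'/\ell)^2+O(1)=\Delta_m\bigl(2A\ell+o(\ell)\bigr),
\]
using the Riemann-sum estimate $\sum_{|k'|<\ell}K(k'/\ell)^2=\ell\int_{-1}^1K^2+o(\ell)=2A\ell+o(\ell)$ and $\sum_{|s|\le m}\delta_s^2=\Delta_m$. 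Combining with Step 2 gives $n\Var(\widehat v_{\diff})=4A\ell v^2\Delta_m+o(\ell)$, and Step 1 then yields $\Var_0(\widehat v)=4A\ell v^2\Delta_m/n+r_{\se}^2$ with $r_{\se}^2=o(\ell/n)$.

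\emph{Main obstacle.} The delicate part is Step 2: pushing the variance expansion for the lag-window estimator uniformly as the bandwidth $\ell+mh$ diverges and controlling the edge effects and the fourth-cumulant term under the minimal hypotheses of Assumption~\ref{ass:weakDep} (in particular without $u_1<\infty$). The genuinely new ingredient --- and the reason the inflation factor is exactly $\Delta_m=\sum_{|s|\le m}\delta_s^2$ rather than a general kernel-dependent constant --- is the disjoint-support phenomenon in Step 3, equivalently the statement that $K\star K$ is supported on $[-2,2]$, so that for $\lambda\ge 2$ only the ``constant'' component $\Delta_m$ of the squared power transfer function of the difference filter contributes.
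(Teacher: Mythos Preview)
Your approach is correct but genuinely different from the paper's. The paper does \emph{not} pass through $\widehat v_{\diff}$ at all for the variance; instead it transfers to the subsampling form $\widehat v'$ via Proposition~\ref{prop:Ktype_Stype_equiv}, then proves Lemma~\ref{lem:blocking_vHatp} by a blocking argument combined with the invariance principle: the blocks $\widehat I_b$ are approximated by double Wiener--It\^o integrals (Lemma~\ref{lem:momentsA}), and the variance is read off from It\^o isometry applied to the function $\psi(t,t')=\sum_{j,j'}d_jd_{j'}K(t-t'+\lambda(j-j'))$ after a careful Fubini-style computation of $\Psi_1+\Psi_2$. The disjoint-support fact you exploit in Step~3 is exactly what makes the cross terms in $\psi^2$ vanish in the paper's computation, so the two routes agree on the key mechanism. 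Your route is more elementary and makes the origin of $\Delta_m$ completely transparent: once you know the classical lag-window variance formula $n\Var\sim 2v^2\sum_k W_k^2$, everything is a one-line Riemann sum. The paper's route avoids having to justify that Bartlett-type expansion under Assumption~\ref{ass:weakDep} (your Step~2), trading it for the FCLT and It\^o calculus; this also generalizes cleanly to the multivariate Corollary~\ref{coro:multi}, where the cross term in the It\^o isometry naturally produces $\rho^{[r,s]}$. Your acknowledged obstacle in Step~2 is real but surmountable: absolute summability of the fourth cumulants does follow from $\Theta_4<\infty$ in Wu's framework, though you would need to cite or reprove it, and the edge terms in the Bartlett formula require the same kind of care the paper expends in Lemma~\ref{lem:rough_var}.
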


The proofs 
of Theorems~\ref{thm:bias} and \ref{thm:var}
can be found in Sections~\ref{sec:proof_bias} and \ref{sec:proof_var} of the supplement, 
respectively.
Note that  
if $\ell = o\{n^{1/(1+q)}\}$, then $r_{\bias}^2 + r_{\se}^2 = o\{\MSE_0(\widehat{v})\}$.
In this case, Theorem~\ref{thm:bias} (\ref{thm:bias-2}) and Theorem~\ref{thm:var} imply that 
\[
	\MSE_0(\widehat{v}) \sim \frac{B^2 v_q^2}{\ell^{2q}} + \frac{4A\ell v^2 \Delta_m}{n} , 
\]
whose magnitude is controlled by the following factors. 
\begin{itemize}
	\item (Kernel $K$) 
			The constants $B^2$ and $A$ are determined by the user-specified kernel $K(\cdot)$.
			In practice, different users may use different kernels. 
			Our theory is flexible enough to support general kernels, 
			but the existing estimators in Example~\ref{eg:general_m1}
			only support the Bartlett kernel.
	\item (Serial dependence $\{\gamma_k\}$)
			The process-dependent constants $v$ and $v_q$ are functions of 
			$\{\gamma_k \}_{k\in\mathbb{Z}}$.
			They govern the magnitudes of variance and squared-bias, respectively. 
			Although users cannot control them,
			it is possible to select the best $\ell$ to adapt to the observed dependence structure of $X_{1:n}$; 
			see Section~\ref{sec:BWsel}.
	\item (Lag $h$) 
			The value of $h = \lambda \ell$
			has a great impact on the bias. 
			When $\lambda$ is small ($\lambda\in[1,2)$), 
			the bias (\ref{eqt:bias}) admits no simple form.
			When $\lambda$ is large enough ($\lambda\in[2,\infty)$), 
			the bias is asymptotically unaffected by the differencing operation
			because of  
			the matching property in 
			Proposition~\ref{prop:effKernel} (\ref{item:matching}). 
			We recommend $\lambda=2$ due to finite-sample consideration.
	\item (Difference sequence $\{d_j\}$)
			In the regime $\lambda\in[2,\infty)$,  
			the variance neatly depends on $m$ and $\{d_j\}_{j=0}^m$
			only through $\Delta_m$.
			In practice, we should pick the $\{d_j\}$
			to minimize the MSE; see Section~\ref{sec:BWsel}. 
	\item (Bandwidth $\ell$ and sample size $n$)
			The bandwidth $\ell$ affects the squared-bias and variance in an opposite direction. 
			A large $\ell$ leads to a large variance and a small squared-bias.
			So, the well-known bias-variance tradeoff occurs. 
			Section~\ref{sec:BWsel} discusses the selection of $\ell$.
	\item (Reminder term $r_{\bias}$)
			It has two parts: 
			$o(1/\ell^q)$ and $O(\ell/n)$.  
			The term $o(1/\ell^q)$ comes from approximating $\sum_{|k|\leq \ell} |k/\ell|^q \gamma_k$
			by $v_q/\ell^q$.
			The other one comes 
			from approximating $\widehat{v}$ by $\widehat{v}_{\diff}$; 
			see Proposition~\ref{prop:effKernel}. 
	\item (Reminder term $r_{\se}$)
			It appears when we apply the invariance principle 
			(Theorem 3 of \cite{wu2011}) to 
			approximate the moments of $\widehat{v}$ by the 
			moments of a Brownian motion. 
\end{itemize}

\subsection{Optimal parameters selection}\label{sec:BWsel}
From now on, we use $\lambda=2$ as it has the best properties. 
The optimal $\ell$ and $\{d_j\}$ are derived for each $m$ below. 

\begin{corollary}\label{coro:optim}
Suppose the conditions stated in Theorem~\ref{thm:bias} (\ref{thm:bias-2}) and Theorem~\ref{thm:var} hold.
In addition, assume $v_q\neq 0$, $\ell = o\{n^{1/(1+q)}\}$ and $\lambda= 2$.
\begin{enumerate}
	\item \label{coro:optim-1} The MSE-optimal $\ell$ is given by 
			\begin{eqnarray}\label{eqt:optEll}
				\ell^{\star} = \left\{ \frac{q (v_q/v)^2 B^2n}{2A\Delta_m} \right\}^{1/(1+2q)}.
			\end{eqnarray}
	\item \label{coro:optim-2} For each $m$, the optimal $\{d_j\}_{j=0}^m$, 
			denoted by $\{d_j^{\star}\}_{j=0}^m$, satisfies 
			$\delta_1 = \ldots = \delta_m = -1/(2m)$, 
			which implies $\Delta_m = 1+1/(2m)$.
			The solution $\{d_j^{\star}\}_{j=0}^m$ is a sequence of 
			universal constants that depends on $m$ only. 
\end{enumerate}
\end{corollary}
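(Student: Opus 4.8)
The plan is to treat the two parts of Corollary~\ref{coro:optim} separately, since part~(\ref{coro:optim-1}) is a routine optimization and part~(\ref{coro:optim-2}) is where the real content lies. For part~(\ref{coro:optim-1}), I would start from the asymptotic MSE expansion $\MSE_0(\widehat{v}) \sim B^2 v_q^2/\ell^{2q} + 4A\ell v^2 \Delta_m/n$, which is valid under the stated conditions by Theorem~\ref{thm:bias}~(\ref{thm:bias-2}) and Theorem~\ref{thm:var} (the side condition $\ell = o\{n^{1/(1+q)}\}$ guarantees $r_{\bias}^2 + r_{\se}^2 = o\{\MSE_0(\widehat{v})\}$, as already noted after Theorem~\ref{thm:var}). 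Then I would differentiate the leading term with respect to $\ell$, set the derivative to zero, i.e. solve $-2q B^2 v_q^2/\ell^{2q+1} + 4A v^2 \Delta_m/n = 0$, and solve for $\ell$ to obtain $\ell^\star = \{q (v_q/v)^2 B^2 n /(2A\Delta_m)\}^{1/(1+2q)}$. A quick check that the second derivative is positive (or that this is the unique stationary point of a convex-like tradeoff) confirms it is a minimizer, and one should verify $\ell^\star = O\{n^{1/(1+2q)}\} = o\{n^{1/(1+q)}\}$ so the expansion remains valid at the optimum; this uses $v_q \neq 0$.

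For part~(\ref{coro:optim-2}), the observation is that among all the quantities entering the leading MSE, the difference sequence $\{d_j\}$ appears only through $\Delta_m = \sum_{|s|\le m}\delta_s^2 = 1 + 2\sum_{s=1}^m \delta_s^2$ (recall $\delta_0 = 1$ after normalization). Plugging $\ell^\star$ back into the MSE shows $\MSE_0(\widehat{v})$ at the optimal bandwidth is proportional to $\Delta_m^{2q/(1+2q)}$, an increasing function of $\Delta_m$; hence minimizing the MSE over $\{d_j\}$ is equivalent to minimizing $\sum_{s=1}^m \delta_s^2$ subject to the normalization constraint $\delta_0 = \sum_{j=0}^m d_j^2 = 1$ and the difference constraint $\sum_{j=0}^m d_j = 0$. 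I would then invoke the fact (already recorded in Example~\ref{eg:dj}, attributed to \citet{hall1990}, with the explicit solutions in Table~\ref{tab:optimalSeq}) that the minimizer of $\sum_{k=1}^m \delta_k^2$ over normalized difference sequences of order $m$ gives $\delta_k = -1/(2m)$ for $k=1,\ldots,m$, so that $\Delta_m = 1 + m\cdot 2/(2m)^2 \cdot 2 = \dots$; carefully, $\Delta_m = 1 + 2\sum_{k=1}^m (1/(2m))^2 = 1 + 2m/(4m^2) = 1 + 1/(2m)$. Since $\{d_j^\star\}$ is determined by a quadratic minimization with constraints not depending on $n$ or the process, it is a sequence of universal constants depending only on $m$.

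The main obstacle is establishing rigorously that minimizing the asymptotic MSE over $\{d_j\}$ really does reduce exactly to the Hall--Kay--Titterington least-squares problem $\min \sum_{s=1}^m \delta_s^2$, and that this problem has the claimed solution $\delta_s \equiv -1/(2m)$. The reduction itself is clean once one has the MSE-at-$\ell^\star$ formula and notes monotonicity in $\Delta_m$, but one should be a little careful that optimizing $\ell$ and $\{d_j\}$ jointly is legitimate: since for each fixed $\{d_j\}$ the optimal $\ell = \ell^\star(\Delta_m)$ is available in closed form, the joint optimum is obtained by substituting and then minimizing the resulting single-variable-in-$\Delta_m$ expression, which is monotone — so there is no order-of-optimization subtlety. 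For the structure of the optimal $\{d_j\}$, I would either cite the variational characterization from \citet{hall1990} directly or, if a self-contained argument is wanted, set up the Lagrangian for $\min \sum_{s=1}^m \delta_s^2$ subject to $\sum_j d_j = 0$, $\sum_j d_j^2 = 1$ and argue that the first-order conditions force $\delta_1 = \cdots = \delta_m$; combined with a Cauchy--Schwarz / Parseval-type identity relating $\sum_{s=-m}^m \delta_s$ to $(\sum_j d_j)^2 = 0$, i.e. $\sum_{|s|\le m}\delta_s = 0$, one gets $1 + 2m\delta_1 = 0$, hence $\delta_s = -1/(2m)$ and $\Delta_m = 1 + 1/(2m)$. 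The uniqueness and universality then follow since the feasible set and objective are fixed.
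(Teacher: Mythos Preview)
Your approach is correct and essentially matches the paper's proof. For part~(\ref{coro:optim-1}) the paper does the same calculus optimization (written as $\ell = \phi n^{1/(1+2q)}$ and optimizing over $\phi$, which is equivalent to your direct differentiation). For part~(\ref{coro:optim-2}) the paper also reduces to minimizing $\Delta_m$, uses the identity $\sum_{|s|\le m}\delta_s = 0$ to get $\sum_{s=1}^m \delta_s = -1/2$, and then minimizes $\sum_{s=1}^m \delta_s^2$ under this linear constraint directly in the $\delta_s$ variables---this is cleaner than your Lagrangian-in-$d_j$ alternative, since the constrained quadratic in $(\delta_1,\ldots,\delta_m)$ is elementary (substitute $\delta_m = -1/2 - \sum_{s<m}\delta_s$ and differentiate). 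The one step you leave implicit is realizability: having found the optimal $\delta_s = -1/(2m)$, one must check that some difference sequence $\{d_j\}$ actually achieves these autocorrelations; the paper handles this by invoking the innovation algorithm (Definition~8.3.1 of \citet{brockwellDavis1991}), whereas your citation of \citet{hall1990} would also suffice.
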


The proof
can be found in Section~\ref{sec:proof_optim} of the supplement.
From Corollary~\ref{coro:optim},
if the optimal $\ell^{\star}$ is used, the optimal MSE satisfies
\begin{align}\label{eqt:optMSE}
	n^{2q/(1+2q)} \MSE_0(\widehat{v})/ v^2
		\rightarrow (1+2q) \left\{ B^2 \left( \frac{2A\Delta_m}{q} \right)^{2q} (v_q/v)^2 \right\}^{1/(1+2q)} =: M_{(m)},
\end{align}
which can be generalized to $n^{2q/(1+2q)} \MSE_{\mu}(\widehat{v})/ v^2 \rightarrow M_{(m)}$ if 
$\mu \in \mathcal{M}_q$; see (\ref{eqt:Mq_setMeanFun}) for the definition of $\mathcal{M}_q$.
So, the $\mathcal{L}^2$ convergence rate of $\widehat{v}$ is $n^{q/(1+2q)}$, which is the best possible.

If, in addition, the optimal $\{d_j^{\star}\}_{j=0}^m$ is used, we have 
$M_{(1)}>M_{(2)}>\ldots$. 
Indeed, for any $\epsilon>0$, there is $m\in\mathbb{N}$ such that $|M_{(m)}-M_{(0)}|<\epsilon$, 
where $M_{(0)}$ is the best possible MSE 
achieved by the classical (non-robust) estimator $\widehat{v}_{(0)}$.  
Hence, the proposed framework covers the optimal estimator asymptotically. 
The optimal $\{d_j^{\star}\}_{j=0}^m$ can be obtained numerically by the innovation algorithm
(Definition 8.3.1 of \citet{brockwellDavis1991}); 
see Table~\ref{tab:optimalSeq} for the solution. 
It is worth mentioning that the $m$th order local differencing in Example \ref{eg:dj} is nearly optimal
in the sense that $\Delta_m = 1+(2m+1)/(3m^2+3m) \approx 1$ when $m$ is large.
It can be a good and convenient choice in practice.

For reference, we compare
our proposed estimator $\widehat{v}_{(3)}$ (i.e., using $m=3$) with 
the best existing robust estimator $\widehat{v}_{(\Chan)}$ in \cite{chan2020} (see Example~\ref{eg:insuff_diff}) 
and the best proposal $\widehat{v}_{(\Wu)}$ in \cite{wu_zhao_2007} (see Example~\ref{eg:general_m1}).
If $K=K_{\Bart}$, then 
$\widehat{v}_{(3)}$ uniformly dominates $\widehat{v}_{(\Chan)}$ and $\widehat{v}_{(\Wu)}$ in 
the sense that  
\begin{align}\label{eqt:compareMSEexisting}
	\frac{\MSE_0\{\widehat{v}_{(\Chan)}\}}{\MSE_0\{\widehat{v}_{(3)}\}} 
		\rightarrow \left(\frac{12}{7}\right)^{2/3} \approx 1.43
	\qquad \text{and} \qquad
	\frac{\MSE_0\{\widehat{v}_{(\Wu)}\}}{\MSE_0\{\widehat{v}_{(3)}\}} 
		\rightarrow \left(\frac{3}{2} \right)^{4/3} \approx 1.71 ;
\end{align}
see Section~\ref{sec:derivationOfCompareMSEexisting} of the supplement 
for a detailed derivation.

Since $\ell^{\star}$ depends on the unknowns $v_q$ and $v_0$,
we need also to estimate $v_q$.
Similar to (\ref{eqt:diffvarEst}),
we propose to estimate $v_p$ ($p\in\mathbb{N}_0$) by
\begin{equation*}
	\widehat{v}_p
		= \sum_{|k| < \ell} |k|^p K\left(\frac{k}{\ell}\right) \widehat{\gamma}_k^D.
\end{equation*}
We may write $\widehat{v}_p$ as $\widehat{v}_{p,(m)}$ to emphasize the order $m$.
The following corollaries show that $\widehat{v}_{p}$ is a consistent and robust estimator of $v_p$.

\begin{corollary}[Consistency]\label{coro:vpHat} 
Let $p\in\mathbb{N}_0$ and $\{X_i\}_{i\in\mathbb{Z}}$ be a stationary time series 
with $\{Z_i\}_{i\in\mathbb{Z}}$ satisfying Assumption~\ref{ass:weakDep}
and $u_{p+q}<\infty$ for some $q\in\mathbb{N}$.
Suppose
$K(\cdot)$ satisfies Assumption~\ref{ass:Kernel_q}, 
$1/\ell + \ell^{1+2p}/n \rightarrow 0$, 
$h/\ell = \lambda \in [2,\infty)$, and
$m<\infty$.
Then 
\begin{align*}
	\MSE_0(\widehat{v}_p; v_p)
		= \left( \frac{B v_{p+q}}{\ell^q} + r_{p,\bias} \right)^2
			+ \left(\frac{4A_p\ell^{1+2p} v^2 \Delta_m}{n} + r_{p,\se}^2\right),
\end{align*}
where $r_{p,\bias}=o(1/\ell^q)+ O(\ell^{1+p}/n)$;
$r_{p,\se}^2 = o(\ell^{1+2p}/n)$;
the constant $B$ is defined in Assumption~\ref{ass:Kernel_q}; and
$A_p = \int_0^1 |t|^{2p}K^2(t)\, \dd t$. 
\end{corollary}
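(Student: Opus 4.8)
The plan is to recognize $\widehat{v}_p$ as an instance of the kernel--difference family already analysed in Theorems~\ref{thm:bias} and~\ref{thm:var}, but with the kernel $K(\cdot)$ replaced by the (generally unnormalized) weight function $K^{(p)}(t):=|t|^pK(t)$. Since $|k|^p=\ell^p|k/\ell|^p$, we have $\widehat{v}_p=\ell^p\widehat{w}$, where $\widehat{w}=\sum_{|k|<\ell}K^{(p)}(k/\ell)\,\widehat{\gamma}_k^D$ has exactly the form of $\widehat{v}$ in~(\ref{eqt:diffvarEst}) with the same $m,h,\ell$ and with $K$ replaced by $K^{(p)}$. The function $K^{(p)}$ is still even, continuous on $(-1,1)$, and supported in $(-1,1)$; only the normalization $K^{(p)}(0)=1$ is lost when $p\geq1$. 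Since this normalization is used only in the bias bookkeeping---through the value and near-origin behaviour of the weight---and never in the variance computation, the arguments behind Theorems~\ref{thm:bias} and~\ref{thm:var} go through after this substitution, and the conclusions for $\widehat{v}_p$ follow upon multiplying by the scale factor $\ell^{2p}$. The identity $\MSE_0(\widehat{v}_p;v_p)=\Bias_0(\widehat{v}_p;v_p)^2+\Var_0(\widehat{v}_p)$ then reduces the proof to the two pieces below.

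For the bias I would first invoke Proposition~\ref{prop:effKernel}: because $\lambda=h/\ell\in[2,\infty)$, its matching property gives $K^{(p)}_{\diff}\equiv K^{(p)}$ on $[-1,1]$, so~(\ref{eqt:Error_vHat_vDiffHat}) applied to $\widehat{w}$ and scaled by $\ell^p$ shows that $\widehat{v}_p$ agrees with $\sum_{|k|\leq\ell}|k|^pK(k/\ell)\,\widehat{\gamma}_k^X$, up to the ``overhang'' over $\ell<|k|\leq\ell+mh$ and an $\mathcal{L}^2$ (hence bias) error $O(\ell^{1+p}/n)$; the overhang is bounded in expectation by a constant times $\ell^p\sum_{|k|>\ell}|\gamma_k|\leq\ell^p\ell^{-(p+q)}\sum_{|k|>\ell}|k|^{p+q}|\gamma_k|=o(\ell^{-q})$, using $u_{p+q}<\infty$ and $m$ fixed. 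Taking expectations of $\widehat{\gamma}_k^X$ costs another $O(\ell^{1+p}/n)$ by the usual edge-effect bound, leaving $\sum_{|k|\leq\ell}|k|^p\{K(k/\ell)-1\}\gamma_k-\sum_{|k|>\ell}|k|^p\gamma_k$. The second sum is $o(\ell^{-q})$ by the same tail estimate; for the first, Assumption~\ref{ass:Kernel_q} gives $K(t)-1=B|t|^q+|t|^q\rho(t)$ with $\rho(t)\to0$ as $t\downarrow0$, so the leading part is $B\ell^{-q}\sum_{|k|\leq\ell}|k|^{p+q}\gamma_k=Bv_{p+q}/\ell^q+o(\ell^{-q})$, while the $\rho$-part is $o(\ell^{-q})$ by the standard split at $|k|\leq\eta\ell$ versus $\eta\ell<|k|\leq\ell$ together with $\sup_{|t|\leq\eta}|\rho(t)|\to0$ and the convergent tail of $u_{p+q}$. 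Collecting, $\Bias_0(\widehat{v}_p;v_p)=Bv_{p+q}/\ell^q+r_{p,\bias}$ with $r_{p,\bias}=o(1/\ell^q)+O(\ell^{1+p}/n)$.

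For the variance I would re-run the proof of Theorem~\ref{thm:var} with $K^{(p)}$ in place of $K$: the subsampling representation~(\ref{eqt:Stype_vHat}), together with the invariance principle of \cite{wu2011}, reduces $\Var_0(\widehat{w})$ to a quadratic functional of a Brownian motion, which evaluates to $4\ell v^2\Delta_m n^{-1}\int_0^1\{K^{(p)}(t)\}^2\,\dd t+o(\ell/n)$; since $\int_0^1\{|t|^pK(t)\}^2\,\dd t=A_p$ and $1/\ell+\ell^{1+2p}/n\to0$, multiplying by $\ell^{2p}$ gives $\Var_0(\widehat{v}_p)=4A_p\ell^{1+2p}v^2\Delta_m/n+r_{p,\se}^2$ with $r_{p,\se}^2=o(\ell^{1+2p}/n)$. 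The step I expect to demand the most care is precisely this transfer: one must check that none of the lag-by-lag cancellations in the proof of Theorem~\ref{thm:var} tacitly used $K(0)=1$ or monotonicity of $K$, and that inserting the bounded polynomial factor $|t|^{2p}$ preserves the moment and uniform-integrability bounds needed for the Gaussian approximation. Once that is confirmed, the rest is routine bookkeeping with the $\ell^p$ scale factor and the substitutions $A\mapsto A_p$ and $v_q\mapsto v_{p+q}$.
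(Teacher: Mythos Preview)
Your proposal is correct and follows essentially the same route as the paper: write $\widehat{v}_p=\ell^p\widehat{v}(K_p)$ with $K_p(t)=|t|^pK(t)$, then rerun the proofs of Theorems~\ref{thm:bias} and~\ref{thm:var} with $K$ replaced by $K_p$, using the matching property of Proposition~\ref{prop:effKernel} for $\lambda\geq 2$ to handle the bias and the Brownian-motion approximation behind Lemma~\ref{lem:blocking_vHatp} for the variance, before rescaling by $\ell^{2p}$. Your explicit concern about whether the variance argument silently uses $K(0)=1$ is well placed but harmless here: the computation in Lemma~\ref{lem:blocking_vHatp} only needs $K$ bounded, symmetric, and supported on $[-1,1]$, all of which $K_p$ inherits.
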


\begin{corollary}[Robustness]\label{coro:robustness_general}
Assume the conditions in Theorem~\ref{thm:robustness}.
Let $p\in\mathbb{N}_0$, $R_{p,\bias}=O(\ell^p R_{\bias})$ and $R_{p,\se} = O(\ell^p R_{\se})$. 
Then
\begin{align*}
	\Bias_{\mu}(\widehat{v}_p; v_p)
		&=  \Bias_{0}(\widehat{v}_p; v_p)
			+ \left\{ \kappa\ell^{1+p}\mathcal{V} + O\left( \frac{\ell^{1+p}}{n}\right) \right\} \mathbb{1}_{(m=0)}
			+ R_{p,\bias} ,  \\
	\sqrt{ \Var_{\mu}(\widehat{v}_p) }
		&= \sqrt{ \Var_{0}(\widehat{v}_p) }
			+O\left\{ \frac{\ell^{1+p}\left( \mathcal{C} +\mathcal{S} \mathcal{J}  \right)}{\sqrt{n}} \right\} 
			\mathbb{1}_{(m=0)}
			+ R_{p,\se} ,
\end{align*}
\end{corollary}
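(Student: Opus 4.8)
The plan is to reduce Corollary~\ref{coro:robustness_general} to Theorem~\ref{thm:robustness} by absorbing the extra factor $|k|^p$ into the kernel. Since $|k|^p=\ell^p|k/\ell|^p$, one may write $\widehat{v}_p=\ell^p\sum_{|k|<\ell}K_{(p)}(k/\ell)\,\widehat{\gamma}^D_k$ with $K_{(p)}(t):=|t|^pK(t)$. The function $K_{(p)}$ is bounded, even, continuous on $(-1,1)$, and supported on $[-1,1]$, which is all that is used of the kernel in the proof of Theorem~\ref{thm:robustness}; the normalization $K(0)=1$ is never invoked there, and the only new ingredient, $\int_{-1}^1|t|^p|K(t)|\,\dd t<\infty$, is automatic. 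So first I would rerun that proof verbatim with $K$ replaced by $K_{(p)}$, then multiply every bound by $\ell^p$. Since $\widehat{v}_p$ at $\mu\equiv0$ is exactly $\ell^p$ times the $\mu\equiv0$ version of the $K_{(p)}$-estimator, this already accounts for the $\Bias_0(\widehat{v}_p;v_p)$ and $\sqrt{\Var_0(\widehat{v}_p)}$ terms on the right-hand sides.

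For the bias, I would decompose $D_i=\mu_i^D+Z_i^D$ with $\mu_i^D=\sum_j d_j\mu_{i-jh}$, $Z_i^D=\sum_j d_j Z_{i-jh}$ (with $\mu_i,Z_i$ replaced by $\mu_i-\bar{\mu},Z_i-\bar{Z}_n$ when $m=0$), so that $D_iD_{i-k}$ splits into a noise-quadratic term, a deterministic term $\mu_i^D\mu_{i-k}^D$, and two cross terms; the cross terms vanish in expectation because $\E Z_i^D=0$. Hence $\Bias_\mu(\widehat{v}_p)-\Bias_0(\widehat{v}_p)=\sum_{|k|<\ell}|k|^pK(k/\ell)\,\frac{1}{n}\sum_i\mu_i^D\mu_{i-k}^D$ up to $O(mh/n)$-per-lag boundary corrections. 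For $m\ge1$ the difference sequence annihilates constants, so $|\mu_i^D|\lesssim\mathcal{C}\,mh/n+\mathcal{S}\,\mathbb{1}(\text{a discontinuity lies in }\{i-mh,\ldots,i\})$; since at most $O(mh\mathcal{J})$ indices meet a discontinuity and $\sum_{|k|<\ell}|k|^p|K(k/\ell)|\lesssim\ell^{1+p}$, the computation in Theorem~\ref{thm:robustness} gives a contribution of order $\ell^pR_{\bias}=R_{p,\bias}$. For $m=0$, $\mu_i^D=\mu_i-\bar{\mu}$, $\frac{1}{n}\sum_i(\mu_i-\bar\mu)(\mu_{i-k}-\bar\mu)\to\mathcal{V}$ uniformly in $|k|\le\ell=o(n)$, and $\sum_{|k|<\ell}|k|^pK(k/\ell)\sim\ell^{1+p}\int_{-1}^1|t|^pK(t)\,\dd t$, producing the displayed $\kappa\ell^{1+p}\mathcal{V}\,\mathbb{1}_{(m=0)}$ (the leading constant being the order-$p$ moment of $K$, which equals $\kappa$ when $p=0$) plus the $O(\ell^{1+p}/n)$ correction from replacing the Ces\`aro average by $\mathcal{V}$.

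For the variance only the random parts contribute, so $\Var_\mu(\widehat{v}_p)$ equals the variance of the noise-quadratic part (which reproduces $\Var_0(\widehat{v}_p)$ after the $\ell^p$ rescaling) plus that of the cross part. The cross part is a linear functional $\sum_t a_t Z_t$ of the noise with $|a_t|\lesssim\ell^{1+p}(\mathcal{C}+\mathcal{S}\mathcal{J})/n$ when $m=0$ and analogous coefficient bounds reproducing $R_{\se}$ when $m\ge1$; applying $\Var(\sum_t a_t Z_t)\le\|a\|_2^2\,u_0$ with $u_0=\sum_k|\gamma_k|<\infty$ from Assumption~\ref{ass:weakDep} gives $\sqrt{\Var(\text{cross})}=O(\ell^{1+p}(\mathcal{C}+\mathcal{S}\mathcal{J})/\sqrt n)\,\mathbb{1}_{(m=0)}+R_{p,\se}$, after which Minkowski's inequality yields the stated expansion of $\sqrt{\Var_\mu(\widehat{v}_p)}$.

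The main obstacle is uniform bookkeeping rather than a new idea: one must control $\mu_i^D$ and count the windows straddling a change point while $m=m_n,\mathcal{J},\mathcal{S},\mathcal{C}$ are all permitted to diverge, and verify that the factor $\ell^p$ enters each remainder exactly once, so that $R_{p,\bias}=O(\ell^pR_{\bias})$ and $R_{p,\se}=O(\ell^pR_{\se})$ with no hidden $\ell^{2p}$ inflation---which is precisely why one works with $\sqrt{\Var}$ rather than $\Var$ throughout. Since this machinery is already developed in the proof of Theorem~\ref{thm:robustness}, the argument is essentially a rescaled rerun, the only genuinely new verification being that $K(0)=1$ is nowhere required once $K_{(p)}$ vanishes at the origin.
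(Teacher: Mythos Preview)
Your proposal is correct and follows exactly the paper's approach: write $\widehat{v}_p=\ell^p\,\widehat{v}(K_{(p)})$ with $K_{(p)}(t)=|t|^pK(t)$, rerun the proof of Theorem~\ref{thm:robustness} with this modified kernel, and multiply through by $\ell^p$. You are in fact more careful than the paper's one-line sketch in noting that the $m=0$ leading constant should be $\int_{-1}^1|t|^pK(t)\,\dd t$ (reducing to $\kappa$ only when $p=0$) and in explicitly checking that $K(0)=1$ is never invoked in that proof.
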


The proofs of Corollaries~\ref{coro:vpHat} and \ref{coro:robustness_general}
can be found in Sections~\ref{sec:proof_vpHat} and \ref{sec:proof_robustness_general} of the supplement, 
respectively.
By Corollary~\ref{coro:vpHat}, we know that, in the constant mean case, 
$\widehat{v}_p$ converges in $\mathcal{L}^2$ optimally, i.e., 
$\MSE_0(\widehat{v}_p; v_p)  = O(n^{-2q/(1+2p+2q)})$, if $\ell \asymp n^{1/(1+2p+2q)}$.
By Corollary~\ref{coro:robustness_general}, 
if $m\in\mathbb{N}$, 
$h/\ell=\lambda\in(0,\infty)$, and 
$\ell\asymp n^{1/(1+2p+2q)}$ are used for $\widehat{v}_p$, then
$\widehat{v}_p$ is strictly robust in 
\begin{align*}
	\mathcal{M}_{p,q} 
		&:= \{\mu(\cdot) : R_{p,\bias}^2 + R_{p,\se}^2 = o(\MSE_0(\widehat{v}_{p};v_p))\}\\
		&= \left\{ \mu(\cdot) :\mathcal{C}^2 = o\left( n^{\frac{3p+3q-1}{1+2p+2q}}\right), 
				\mathcal{S}^2\mathcal{J} = o\left( n^{\frac{p+q-1}{1+2p+2q}}\right), 
				\mathcal{G} \gtrsim n^{\frac{1}{1+2p+2q}}
			\right\}.
\end{align*}
So, under $u_{p+q}<\infty$, we have $\MSE_{\mu}(\widehat{v}_p; v_p) \sim \MSE_{0}(\widehat{v}_p; v_p)$ for $\mu \in \mathcal{M}_{p,q}$.

\section{Implementation issue and generalization}\label{sec:imp_gen}
\subsection{Rough centering procedure}\label{sec:rcp}
If there are obvious jumps and trends,
one may \emph{roughly} remove them. 
It improves finite-sample performance. 
We only \emph{roughly} center $X_{1:n}$ because consistent centering
either distorts the autocovariance structure or deteriorates the convergence rate of $\widehat{v}$; 
see Theorem~\ref{thm:consistency_infinite}.
We propose a two-step \emph{rough centering procedure} (RCP).
The first and second steps remove obvious jumps and trends, respectively.

In step 1, we locate the $N$ most obvious CP times $t_1, \ldots , t_N$,
where $N\leq N'$ for some $N'<\infty$, e.g., $N'=10$.
We initialize $X_i^{(1)}=X_i$ for each $i$, and 
iterate the following steps for $k=1, 2, \ldots$.
Let $\xi^{(k)}_i = \sum_{i'=i}^{i+b-1}X_{i'}^{(k)}/b - \sum_{i'=i-b+1}^i X_{i'}^{(k)}/b$
be the local batch-mean difference at time $i$
for $i=b, \ldots, n-b+1$, where $b = \lfloor n^{1/3} \rfloor$.
An unusually large $\xi^{(k)}_i$ indicates that $i$ is a potential CP.
Denote the distance of $\xi_i^{(k)}$ from Tukey's fences by 
$O_i^{(k)} = \max\{
				0,
				\xi_i^{(k)}-4Q^{(k)}_1+3Q^{(k)}_3, 
				4Q^{(k)}_3-3Q^{(k)}_1-\xi_i^{(k)}\}$, where 
$Q^{(k)}_1$ and $Q^{(k)}_3$ are 
the lower and upper quartiles of $\{ \xi^{(k)}_i \}_{i=b}^{n-b+1}$, respectively. 
By Tukey's rule,
if 
\[
	I^{(k)} := \{i\in\{b,\ldots,n-b+1\}\setminus\{t_1, \ldots, t_{k-1}\}:O_i^{(k)}>0\} \neq \emptyset,
\]
we define the $k$th most obvious CP as $t_k = \argmax_{i\in I^{(k)}} O_i^{(k)}$, and  
let $X^{(k+1)}_i = X^{(k)}_i -  [ X_{t_k}^{(k)}-X_{t_k-1}^{(k)} ]_{-M}^{M}\mathbb{1}(i\geq t_k)$,
where $[\cdot]_{-M}^{M} = \max\{-M,\min(\cdot,M)\}$ 
and $M = M' \{\sum_{i=2}^n (X_i-X_{i-1})^2/(2n)\}^{1/2}$
for some $M'\in\mathbb{R}^+$, e.g., $M'=100$.
The iteration stops if $I^{(k)}=\emptyset$ or $k=N'$.
So, the jump-removed series is 
\begin{align}
	X_i^{\dag} 
		&= X_i - \sum_{\substack{j\in\{1,\ldots, N\} : t_j\leq i}}  
			[ X_{t_j}-X_{t_j-1} ]_{-M}^{M}\;.\label{eqt:removeJump}
\end{align}

In step 2, we run a segmented linear regression on 
$X_{t_j}^{\dag} , \ldots, X_{t_{j+1}-1}^{\dag}$ against  
$t_j, \ldots, t_{j}-1$ for each $j=0, \ldots, N$, where $t_0 = 1$ and $t_{N+1}=n+1$. 
After shifting the segmented regression lines to ensure continuity, we obtain 
\begin{align}
	X_{i}^{\ddag}
		&= X_i^{\dag} - \sum_{j=0}^{N} 
			\left\{ \widehat{\beta}_{j,0} + \widehat{\beta}_{j,1} (i-t_j) \right\}
			\mathbb{1}(t_{j}\leq i \leq t_{j+1}-1), \label{eqt:removeTrend}
\end{align}
where, for each $j=0, \ldots, N$,   
$\widehat{\beta}_{j,1} = \widehat{\alpha}_{j,1}$, 
$\widehat{\beta}_{j,0} = \sum_{j'=0}^{j-1} \widehat{\alpha}_{j',1}(t_{j'+1}-1-t_{j'})$, 
$(\widehat{\alpha}_{j,0},\widehat{\alpha}_{j,1})^{\intercal}
	= (\mathbb{Z}_j^{\intercal}\mathbb{Z}_j)^{-1}\mathbb{Z}_j^{\intercal}\mathbb{X}_j$; and 
$\mathbb{X}_j = (X_{t_j}^{\dag}, \ldots, X_{t_{j+1}-1}^{\dag})^{\intercal}$, 
$\mathbb{Z}_j$ is a $(t_{j+1}-t_j)\times 2$ matrix whose 
first column is a vector of $1$ and the second column is 
$[0, \ldots, t_{j+1}-t_j-1]^{\intercal}$.
Applying $\widehat{v}_p$ on the 
roughly centered time series $\{X_i^{\ddag}\}$, we obtain a finite-sample-adjusted estimator
$\widehat{v}^{\ddag}_p$.
The following corollary ensures that 
the RCP does not inflate the asymptotic MSE.

\begin{corollary}\label{coro:prewhite}
Assume the conditions in Corollary~\ref{coro:vpHat}. 
If $p\in\mathbb{N}_0$, $q\in\mathbb{N}$, $m\in\mathbb{N}$ and $\ell=O(n^{1/(1+2p+2q)})$,
then $\MSE_{\mu}(\widehat{v}^{\ddag}_p ; v_p) \sim \MSE_{\mu}(\widehat{v}_p ; v_p)$ 
for any $\mu\in\mathcal{M}_{p,q}$.
\end{corollary}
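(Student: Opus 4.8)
The plan is to show that passing from $X_{1:n}$ to the roughly centered series $X_{1:n}^{\ddag}$ perturbs the estimator only by a term of strictly smaller order than its root mean squared error, i.e.\ $\|\widehat v^{\ddag}_p - \widehat v_p\| = o\bigl(\sqrt{\MSE_\mu(\widehat v_p;v_p)}\bigr)$; since $\mu\in\mathcal M_{p,q}$ gives $\MSE_\mu(\widehat v_p;v_p)\asymp\MSE_0(\widehat v_p;v_p)\asymp \ell^{1+2p}/n$ by Corollaries~\ref{coro:vpHat} and \ref{coro:robustness_general}, the conclusion then follows from Minkowski's inequality. Write $X^{\ddag}_i = X_i - \widehat f_i$ with $\widehat f_i = J_i + L_i$, where $J_i$ is the removed jump function of (\ref{eqt:removeJump}) and $L_i$ the removed continuous piecewise-linear function of (\ref{eqt:removeTrend}). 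By linearity of the difference operator, $\Delta D_i := D^{\ddag}_i - D_i = -\sum_{j'=0}^m d_{j'}\widehat f_{i-j'h}$, and the identity $D^{\ddag}_i D^{\ddag}_{i-|k|} - D_i D_{i-|k|} = \Delta D_i D_{i-|k|} + D_i\Delta D_{i-|k|} + \Delta D_i\Delta D_{i-|k|}$ reduces the whole problem to controlling, for $|k|<\ell$, the kernel-weighted averages of these three terms.

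I would first record two structural facts that make $\Delta D_i$ small. The jump part $J$ is a step function with at most $N\le N'=O(1)$ steps, each of magnitude at most $M = M'\{\sum_{i\ge2}(X_i-X_{i-1})^2/(2n)\}^{1/2}$, and a direct computation shows $M = O_p(1)$ for every $\mu\in\mathcal M_{p,q}$ under Assumption~\ref{ass:weakDep}; since $m\ge1$ implies $\sum_{j'}d_{j'}=0$, an $m$th order difference annihilates $J$ except on the $O(\ell)$ indices whose lag-$h$ window of width $mh=O(\ell)$ straddles a detected jump, where $|\Delta D_i|=O_p(1)$ by Assumption~\ref{ass:summability_dj}. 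The trend part $L$ is continuous and piecewise linear with at most $N$ kinks, so on the interior of each linear piece $D_i$ collapses it to the single constant $\widehat\beta_{j,1}h\eta$, with $\eta:=\sum_{j'}j'd_{j'}$, while over the $O(\ell)$ indices around each kink it merely interpolates between two such constants; hence $\|\Delta D\|_\infty = O_p\bigl(1 + h\max_j|\widehat\beta_{j,1}|\bigr)$ and $\Delta D$ is locally constant off a set of $O(\ell)$ indices. Finally, the $O(1)$ ordinary-least-squares slopes obey $\max_j|\widehat\beta_{j,1}| = O_p\bigl(\mathcal C/n + \mathcal S/\mathcal G + n^{-3/2}\bigr)$ — the $\mathcal S/\mathcal G$ bound using that a segment between two detected change points containing $r$ true change points has length $\gtrsim r\mathcal G$, together with the fact that obvious jumps are removed by the first step of the RCP — so that, with $\mathcal G\gtrsim\ell$, $h\asymp\ell=O(n^{1/(1+2p+2q)})$ and the exact exponents in the definition of $\mathcal M_{p,q}$, one verifies $\ell^{3/2}\max_j|\widehat\beta_{j,1}| = o_p(1)$.

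The crucial point in estimating $\widehat v^{\ddag}_p - \widehat v_p = n^{-1}\sum_i\sum_{|k|<\ell}|k|^pK(k/\ell)\bigl(D^{\ddag}_iD^{\ddag}_{i-|k|}-D_iD_{i-|k|}\bigr)$ is that the cross term $n^{-1}\sum_i\Delta D_iD_{i-|k|}$ is \emph{essentially centered}, because $D_{i-|k|}$ is mean-zero noise with summable dependence; treating $\Delta D$ as if fixed, its second moment is of order $n^{-2}\sum_i(\Delta D_i)^2\sum_s|\gamma^D_s| = O\bigl(\ell/n^2 + \|\Delta D\|_\infty^2/n\bigr)$, a factor $\ell$ smaller than a naive Cauchy--Schwarz bound. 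Summing the $O(\ell)$ relevant lags with weights $|k|^p\asymp\ell^p$ then gives a contribution of order $\ell^{(3/2)+p}/n + \ell^{1+p}h\max_j|\widehat\beta_{j,1}|/\sqrt n$, which is $o(\ell^{(1+2p)/2}/\sqrt n)$ by the bounds above together with $\ell=o(\sqrt n)$; the locally constant part of $\Delta D_i$, multiplied by a segment-restricted kernel-weighted sum of the $D_{i-|k|}$'s, is handled with the same block-sum and weak-dependence bookkeeping as in the proof of Theorem~\ref{thm:var} (using $\sum_{|s|\le m}\delta_s=0$ and $\lambda\ge2$), and the quadratic term $n^{-1}\sum_i\Delta D_i\Delta D_{i-|k|}$ is crudely $O_p(\ell\|\Delta D\|_\infty^2/n)$, again of lower order. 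Collecting the three estimates yields $\|\widehat v^{\ddag}_p-\widehat v_p\| = o\bigl(\sqrt{\MSE_\mu(\widehat v_p;v_p)}\bigr)$.

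The main obstacle is that $\widehat f$ is \emph{data-dependent}: the detected change-point times $t_1,\dots,t_N$, their truncated jump estimates and the slopes $\widehat\beta_{j,1}$ are all functions of $Z_{1:n}$, so $\Delta D$ and $D$ are not independent and the heuristic that the cross term is essentially centered must be justified rigorously. The remedy I would use is to estimate the fourth-order quantities $\E[\Delta D_iD_{i-|k|}\Delta D_{i'}D_{i'-|k|}]$ through the physical-dependence machinery already used throughout the paper, after conditioning on the coarse functionals $M$ and $\{t_1,\dots,t_N\}$ and bounding their influence uniformly over the at most $O(1)$ change-point configurations that Tukey's rule can return. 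The genuinely delicate bookkeeping is twofold: showing that undetected small change points contribute only $O_p(\mathcal S/\mathcal G)$ (not $O_p(\mathcal S\mathcal J/\mathcal G)$) to any single OLS slope, which hinges on a long RCP segment necessarily containing many true change points; and verifying that, once all orders are assembled, they are strictly dominated by $\sqrt{\MSE_\mu(\widehat v_p;v_p)}$ precisely because of the exponents defining $\mathcal M_{p,q}$ — the available margin being zero, for instance, when $p=0$ and $q=1$, so the accounting has to be carried out with care.
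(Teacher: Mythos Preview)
Your overall plan is the same as the paper's: show $\|\widehat v_p^{\ddag}-\widehat v_p\| = o\bigl(\sqrt{\MSE_\mu(\widehat v_p;v_p)}\bigr)$ by splitting into cross terms and a quadratic term, then conclude via Minkowski together with strict robustness. The paper organizes the decomposition slightly differently --- writing $X_i^{\ddag}=Z_i+\eta_i$ with $\eta_i=\mu_i-s_i^{\ddag}-c_i^{\ddag}$, so that $\widehat v_p^{\ddag}=\widehat v_p^{(ZZ)}+2\widehat v_p^{(Z\eta)}+\widehat v_p^{(\eta\eta)}$, and bounding $\|\widehat v_p^{(Z\eta)}\|$, $\|\widehat v_p^{(\eta\eta)}\|$ after further splitting $\eta$ into $\mu$, $c^{\ddag}$, $s^{\ddag}$ --- but the essential structure is the same.

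The substantive difference is how the data-dependence of the RCP is handled. You propose to condition on $M$ and $\{t_1,\dots,t_N\}$ and bound uniformly over the finitely many configurations Tukey's rule can return; this could in principle be made to work, but it is exactly the ``genuinely delicate bookkeeping'' you flag. The paper bypasses all of it by working in $\mathcal L^4$: one shows $\|M\|_4=O(1+\mathcal S+\mathcal C/n)$ directly, whence $\sup_i\|D_i^{(s^{\ddag})}\|_4=O(1+\mathcal S+\mathcal C/n)$ since $N\le N'<\infty$, and obtains $|\widehat\beta_{j,1}|\le\{O_{\mathcal L^4}(1)+O(\mathcal C+\mathcal J\mathcal S+M)\}/n_j$ via a crude Cauchy--Schwarz bound on the least-squares ratio. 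A single H\"older pairing of $\|D_i^{(Z)}\|_4$ with $\|D_i^{(s^{\ddag})}\|_4$ or $\|D_i^{(c^{\ddag})}\|_4$ then delivers the required $o(n^{-q/(1+2p+2q)})$ bounds with no conditioning and no uniformity argument. This is much shorter, and it avoids two places where your sketch is incomplete: the assertion that ``obvious jumps are removed by the first step of the RCP'' (the paper neither proves nor uses this), and the slope bound $\max_j|\widehat\beta_{j,1}|=O_p(\mathcal C/n+\mathcal S/\mathcal G+n^{-3/2})$, which presupposes control over the lengths of RCP-detected segments that is not available. Finally, note that in your cross term $\Delta D_i\cdot D_{i-|k|}$ the factor $D_{i-|k|}=D_{i-|k|}^{(Z)}+D_{i-|k|}^{(\mu)}$ is \emph{not} mean-zero; the paper's $Z/\eta$ split separates noise from deterministic residual at the outset and so never runs into this issue.
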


The proof
can be found in Section~\ref{sec:proof_prewhite} of the supplement.
For clarity, denote $\widehat{v}_p = \widehat{v}_p(X_{1:n}; \ell, K, m, d)$
if data $X_{1:n}$, bandwidth $\lceil \ell \rceil$, kernel $K(\cdot)$, order $m$, 
difference sequence $\{d_j\}_{j=0}^m$, 
and $\lambda=h/\ell =2$ are used.  
In practice, we always use the (universally) optimal difference sequence $\{d_j^{\star}\}$
in Table~\ref{tab:optimalSeq}.
The suggested estimator of the LRV $v$ is 
\begin{align}\label{eqt:suggestedEst}
	\widehat{v}^{\star} = \widehat{v}_0(X_{1:n}^{\ddag};\widehat{\ell}^{\star}, K, m, d^{\star}), 
	\quad \text{where} \quad
	\widehat{\ell}^{\star} = 
		\left\{ \frac{q (\widehat{v}_{q}^{\sharp}/\widehat{v}^{\sharp})^2 B^2n}{2A\Delta_m} \right\}^{1/(1+2q)},
\end{align}
where 
$\widehat{v}_{q}^{\sharp} = \widehat{v}_q(X_{1:n}^{\ddag}; 2n^{1/(5+2q)}, K_2, m, d^{\star})$ and 
$\widehat{v}^{\sharp} = \widehat{v}_0(X_{1:n}^{\ddag}; 2n^{1/5}, K_2, m, d^{\star})$
are pilot estimators of $v_q$ and $v$, respectively.
We recommend 
$m=3$ and \cite{Parzen1957}'s kernel $K_q(t) = (1-|t|^q)^+$ with $q=2$;
see Section~\ref{sec:experiment} for some simulation evidence.

\subsection{Multivariate time series}
We generalize (\ref{eqt:modelX}) to the multivariate setting  
such that $\{Z_i\}_{i\in\mathbb{Z}}$ is a sequence of 
$S$-dimensional zero-mean stationary noises with $\gamma_k = \E(Z_0 Z_k^{\intercal})$; and 
$\mu_i = \mu(i/n) \in\mathbb{R}^S$ are $S$-dimensional signals. 
We also use the decomposition in (\ref{eqt:muDef}) but $\xi_0, \ldots, \xi_{\mathcal{J}}\in\mathbb{R}^{S}$.
Denote the $s$th element of a vector $e$ by $e^{[s]}$, and 
the $(r,s)$th element of a matrix $E$ by $E^{[r,s]}$.

We assume $Z_i = g(\ldots,\varepsilon_{i-1},\varepsilon_i)$, where 
$\{\varepsilon_i\}_{i\in\mathbb{Z}}$
are i.i.d. $\mathbb{R}^{S'}$-dimensional innovations.
Generalize (\ref{eqt:depMea}) to 
$\theta_{p,i}^{[s]}:=\| Z_i^{[s]} - Z_{i,\{0\}}^{[s]} \|_p$   
and
$\Theta_p^{[s]} := \sum_{i=0}^{\infty} \theta_{p,i}^{[s]}$, 
for $s=1, \ldots, S$.
Assumption~\ref{ass:weakDep} is generalized as follows.

\begin{assumptionCustom}{\ref*{ass:weakDep}*}[Weak dependence]\label{ass:weakDepGen}
The $S$-dimensional zero-mean strictly stationary $\{Z_i\}$ 
satisfies $\E(Z_1^{[s]})^4<\infty$ and $\Theta_4^{[s]} <\infty$ for all $s=1, \ldots, S$. 
\end{assumptionCustom}

The quantity of interest is the long-run variance (-covariance matrix)
$v = \lim_{n\rightarrow\infty} n\E(\bar{Z}_n\bar{Z}_n^{\intercal}) = \sum_{k\in\mathbb{Z}} \gamma_k 
\in\mathbb{R}^{S\times S}$.
Our proposed estimator of $v$ admits the same form as (\ref{eqt:diffvarEst})
with $\widehat{\gamma}^D_k = \sum_{i=mh+|k|+1}^n D_i D_{i-|k|}^{\intercal}/n$.
We define $v_p = \sum_{k\in\mathbb{Z}} |k|^p \gamma_k$ and 
$u_p = \sum_{k\in\mathbb{Z}} |k|^p |\gamma_k|$, where $|\gamma_k|$ is the entry-wise absolute value of 
$\gamma_k$. 
Let $w$ be a $S\times S$ matrix whose $(r,s)$th element is 
\[
	w^{[r,s]} = (v^{[r,r]}v^{[s,s]}+v^{[r,s]}v^{[r,s]})/2, 
	\qquad r,s\in\{1, \ldots, S\}. 
\]

\begin{corollary}\label{coro:multi}
The results in Corollaries~\ref{coro:vpHat} and \ref{coro:robustness_general} remain valid 
if $(\widehat{v}_p , v_p , v_{p+q}, v^2)$ is replaced by 
$(\widehat{v}_p^{[r,s]} , v_p^{[r,s]} , v_{p+q}^{[r,s]}, w^{[r,s]})$ 
for each $r,s\in\{1, \ldots, S\}$, provided that 
Assumption~\ref{ass:weakDep} is replaced by Assumption~\ref{ass:weakDepGen}, and 
$u_{p+q}<\infty$ is replaced by $u_{p+q}^{[r,s]}<\infty$ for all $r,s$. 
\end{corollary}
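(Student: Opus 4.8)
The plan is to reduce the multivariate claim to the univariate Corollaries~\ref{coro:vpHat} and \ref{coro:robustness_general} by a coordinatewise argument, the only genuinely new point being the identification of the limiting variance constant $w^{[r,s]}$. Fix $r,s\in\{1,\ldots,S\}$ and note that
\[
	\widehat{v}_p^{[r,s]} = \sum_{|k|<\ell} |k|^p K(k/\ell)\, \widehat{\gamma}_k^{D,[r,s]},
	\qquad
	\widehat{\gamma}_k^{D,[r,s]} = \frac{1}{n}\sum_{i=mh+|k|+1}^n D_i^{[r]} D_{i-|k|}^{[s]},
\]
where $D_i^{[r]} = \sum_{j=0}^m d_j X_{i-jh}^{[r]}$. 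Thus $\widehat{v}_p^{[r,s]}$ has exactly the shape of the univariate difference-based estimator, except that it is a \emph{cross} statistic formed from two coordinate processes. Under Assumption~\ref{ass:weakDepGen}, each marginal $\{Z_i^{[r]}\}$ satisfies Assumption~\ref{ass:weakDep}, the physical dependence measures factor coordinatewise, and $u_{p+q}^{[r,s]}<\infty$ plays the role of $u_{p+q}<\infty$.

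For the bias I would re-run the proof of Corollary~\ref{coro:vpHat} verbatim with $\gamma_k$ replaced by the cross-autocovariance $\gamma_k^{[r,s]}$: expectation is linear, $\E(\widehat{\gamma}_k^{D,[r,s]})$ is the same finite $\delta_{s'}$-weighted combination of $\{\gamma_{hs'+k}^{[r,s]}\}$ as in~(\ref{eqt:ACVF_D}), the differencing-kernel representation and the matching property of Proposition~\ref{prop:effKernel} hold coordinatewise for $\lambda\geq 2$, and the near-origin Taylor expansion of $K$ yields $\Bias_0(\widehat{v}_p^{[r,s]};v_p^{[r,s]}) = B v_{p+q}^{[r,s]}/\ell^q + r_{p,\bias}$ with the same remainder order. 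The mean-perturbation bounds of Theorem~\ref{thm:robustness} are obtained entry by entry using the vector versions of $\mathcal{C},\mathcal{S},\mathcal{J}$, so the bias and standard-deviation decompositions of Corollary~\ref{coro:robustness_general}, together with the robustness class $\mathcal{M}_{p,q}$, transfer with $R_{p,\bias},R_{p,\se}$ unchanged.

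The substantive step is to show $\Var_0(\widehat{v}_p^{[r,s]}) = 4A_p\ell^{1+2p} w^{[r,s]}\Delta_m/n + r_{p,\se}^2$. Following the proof of Theorem~\ref{thm:var}, I would approximate the partial sums of $(\{Z_i^{[r]}\},\{Z_i^{[s]}\})$ jointly by a two-dimensional Brownian motion with covariance matrix $(v^{[a,b]})_{a,b\in\{r,s\}}$, invoking the vector analogue of Theorem~3 of \citet{wu2011}, which is available under $\Theta_4^{[a]}<\infty$. Then $\widehat{v}_p^{[r,s]}$ is, up to the error $r_{p,\se}$, a quadratic functional of this bivariate Gaussian field, and its variance is evaluated by Isserlis' formula: the generic term $\Cov(D_i^{[r]}D_j^{[s]},\, D_k^{[r]}D_l^{[s]})$ splits into the two Wick contractions $\Cov(D_i^{[r]},D_k^{[r]})\Cov(D_j^{[s]},D_l^{[s]})$ and $\Cov(D_i^{[r]},D_l^{[s]})\Cov(D_j^{[s]},D_k^{[r]})$. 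In the univariate case these two contractions are equal and sum to $2v^2$; in the bivariate cross case their long-run limits are $v^{[r,r]}v^{[s,s]}$ and $v^{[r,s]}v^{[s,r]}$, and since $v$ is symmetric ($\gamma_k^{\intercal}=\gamma_{-k}$) the latter equals $(v^{[r,s]})^2$. Hence the constant $2v^2$ in Theorem~\ref{thm:var} is replaced by $v^{[r,r]}v^{[s,s]}+(v^{[r,s]})^2 = 2w^{[r,s]}$, while the remaining bookkeeping (the kernel integral $A_p$, the $\ell^{1+2p}/n$ rate, and the factor $\Delta_m=\sum_{|s'|\leq m}\delta_{s'}^2$) is identical. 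Collecting the bias and variance pieces gives Corollary~\ref{coro:vpHat} with $(v_p,v_{p+q},v^2)$ replaced by $(v_p^{[r,s]},v_{p+q}^{[r,s]},w^{[r,s]})$, and the robustness statement follows in the same way.

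The main obstacle I anticipate is precisely this variance calculation: one must either cite or establish a \emph{joint} invariance principle for the pair of coordinate processes under the physical-dependence condition, and then keep careful track of the two distinct fourth-moment contractions so that the correct symmetrized constant $w^{[r,s]}$ emerges rather than $v^{[r,r]}v^{[s,s]}$ or $(v^{[r,s]})^2$ alone. The bias, robustness, and bandwidth-selection parts are routine coordinatewise repetitions of the univariate proofs.
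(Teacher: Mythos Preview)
Your proposal is correct and follows essentially the same route as the paper. The paper likewise treats the bias and robustness parts by a coordinatewise replacement $\gamma_k\to\gamma_k^{[r,s]}$, and for the variance it also passes to a joint Gaussian limit: it explicitly generalizes the stochastic-integral Lemma~\ref{lem:momentsA} to an $S$-dimensional Brownian motion with $\dd\mathbb{B}_t^{[r]}\,\dd\mathbb{B}_t^{[s]}=\rho^{[r,s]}\,\dd t$, so that the ``squared'' terms $\E(I_+^{[r,s]})^2+\E(I_-^{[r,s]})^2$ contribute the factor $v^{[r,r]}v^{[s,s]}$ and the cross term $\E(I_+^{[r,s]}I_-^{[r,s]})$ contributes $(v^{[r,s]})^2$, yielding $2w^{[r,s]}$. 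This It\^o-isometry bookkeeping is exactly the rigorous implementation of your two Wick contractions, so the approaches coincide.
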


The proof
can be found in Section~\ref{sec:proof_multi} of the supplement.

\section{Experiments, applications, and real-data examples}\label{sec:experiment}
We consider a non-linear time series model for all simulation studies in this section. 
Let $\{Z_i'\}_{i=1}^n$ be generated from a threshold autoregressive (TAR) model:
\begin{align}\label{eqt:def_timeseriesZp}
	Z_i' = \left\{ 
		\begin{array}{ll}
		\theta_1 Z'_{i-1} + \varepsilon_i & \text{if $Z'_{i-1}\geq0$;}\\
		\theta_2 Z'_{i-1} + \varepsilon_i & \text{if $Z'_{i-1}<0$,}
		\end{array}
		\right.
\end{align}
where $\theta_1, \theta_2$ are the AR parameters in regimes 1 and 2, respectively, 
and $\varepsilon_i$ follow $\Normal(0,1)$ independently. 
We use $\theta_1 \in \{0.1, \ldots, 0.9\}$ and $\theta_2 = 0.5$.
Let $Z_i = Z_i'/\sqrt{v}$, where $v=v_{\theta_1, \theta_2}$ is 
the LRV for the time series (\ref{eqt:def_timeseriesZp}). 
So, $\{Z_i\}$ is stationary and satisfies Assumption~\ref{ass:weakDep}; 
see \cite{wu2011}.

\subsection{Efficiency and robustness}\label{sec:sim_MSE}
Let $\mu(t) = \Xi\{e^{t}+ \mathbb{1}(t>0.3)+2\mathbb{1}(t>0.6)+4\mathbb{1}(t>0.8)\}$, 
where $\Xi\in \{0,1,\ldots,4\}$.  
When $\Xi\neq 0$, it contains three jumps and an exponentially increasing trend. 
We study the following estimators of $v$:
(i) $\widehat{v}_{(\Wu)}$ the best proposal in \cite{wu_zhao_2007} (see Example~\ref{eg:general_m1});
(ii)  $\widehat{v}_{(\DFW)}$ the OBM estimator with \cite{DehlingFriedWendler2020}'s adjustment (see Example~\ref{eg:1CPvEst});
(iii)  $\widehat{v}_{(\Chan)}$ the suggested estimator in \cite{chan2020} (see Example~\ref{eg:insuff_diff});
(iv) $\widehat{v}_{(0)}^{\star}$ a classical estimator (see Example~\ref{eg:constantMean}); and 
(v) $\widehat{v}_{(m)}^{\star}$ the proposed estimator with $m=1,2,3$. 
We use $K(t)= (1-t^2)^+$ in $\widehat{v}_{(0)}^{\star},\ldots, \widehat{v}_{(3)}^{\star}$.
We compare their (a) efficiency and (b) robustness via $10^4$ replications. 

For (a), the values of $\MSE_0(\cdot)$  
are reported for different values of $\theta_1$. 
For (b), the values of $\MSE_\mu(\cdot)$ are reported for different values of $\Xi$ when $\theta_1 = 0.4$. 
The results when $n=200$ are plotted in Figure~\ref{fig:dTAVC_A032_n200}. 
Our worst proposal $\widehat{v}_{(1)}^{\star}$ is already considerably more efficient than the 
existing estimators. 
The improvement of $\widehat{v}_{(2)}^{\star}$ over $\widehat{v}_{(1)}^{\star}$ is substantial. 
Although the improvement of $\widehat{v}_{(3)}^{\star}$ over $\widehat{v}_{(2)}^{\star}$ looks incremental, 
the advantage of $\widehat{v}_{(3)}^{\star}$ becomes obvious when $n$ increases;
see Figure~\ref{fig:dTAVC_A032_n400} of the supplement 
for the results when $n=400$. 
When $\Xi\neq0$, the existing estimators still show a certain degree of robustness 
relative to the non-robust $\widehat{v}_{(0)}^{\star}$. 
But their MSEs are substantially affected. 
All of our proposed estimators are of nearly constant risk for all $\Xi$.
It is remarked that the asymptotic relative efficiency does not improve significantly when $m\geq 4$ because 
\[
	\frac{\MSE(\widehat{v}_{(2)})}{\MSE(\widehat{v}_{(1)})} - 1 \approx -13.6\%, \qquad
	\frac{\MSE(\widehat{v}_{(3)})}{\MSE(\widehat{v}_{(2)})} - 1 \approx -5.4\%, \qquad
	\frac{\MSE(\widehat{v}_{(4)})}{\MSE(\widehat{v}_{(3)})} - 1 \approx -2.9\%.
\]
A similar finding is also documented in \citet{hall1990}.
Moreover, an excessively large $m$ may affect the robustness of $\widehat{v}_{(m)}$ in finite samples. 
Hence, in practice, we suggest using $m=3$.

For reference, we also compute the following estimators:
(vi) the estimators in \citet{wu_zhao_2007} that use the sum of absolute (SA) differences 
and the median of absolute (MA) differences (see Example~\ref{eg:Wu_median});
(vii) \cite{Altissimoa_Corradic_2003}'s estimator (see Example~\ref{eg:general_mInf}); 
(viii) \citet{Crainiceanu2007}'s estimator (see Example~\ref{eg:1CPvEst}); and 
(ix) \citet{JuhlXiao2009}'s estimator (see Example~\ref{eg:general_mInf}).
Their performances are obviously worse than other estimators either in terms of 
efficiency or robustness;
see Figure~\ref{fig:dTAVC_A032_n200_full} of the supplement.  
Additional simulation experiments that further investigate the robustness (in terms of $\mathcal{S}$ and $\mathcal{C}$) 
can be found in Section~\ref{sec:robustness_experiment_additional} of the supplement.

\begin{figure} 
\begin{center}
\includegraphics[width=\textwidth]{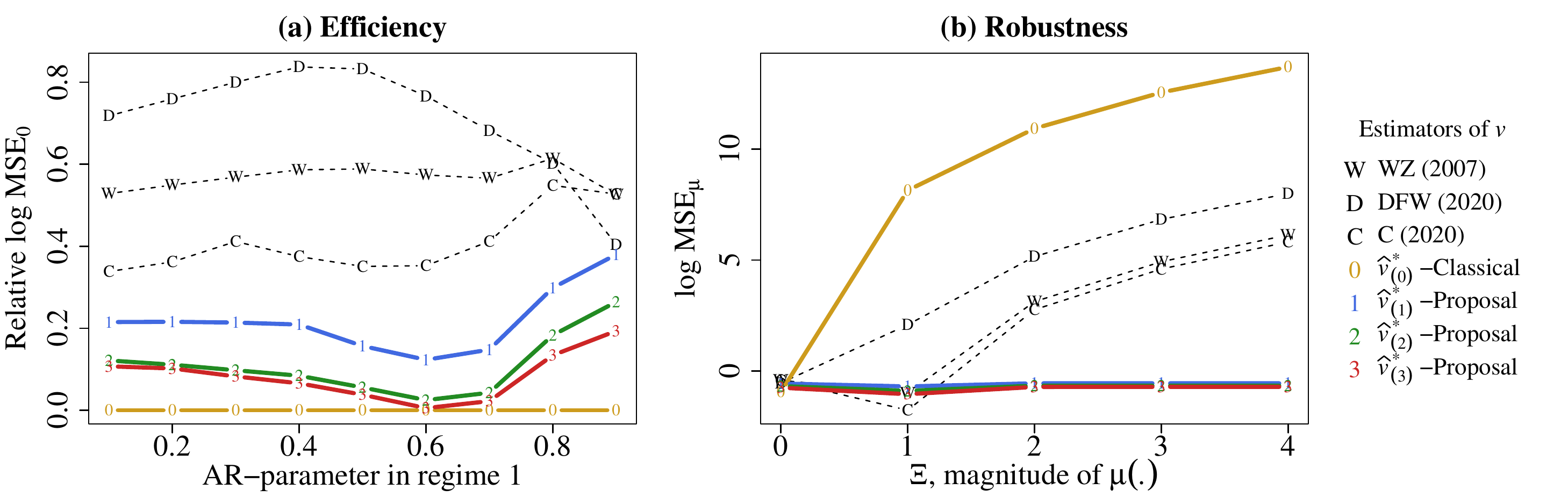} 
\end{center} 
\vspace{-0.6cm}
\caption{ 
		(a) The values of $\log\{\MSE_0(\cdot)/\MSE_0(\widehat{v}_{(0)}^{\star})\}$  
			when $\mu_1=\ldots = \mu_n=0$.
		(b) The values of $\log \MSE_\mu(\cdot)$ against $\Xi$ (i.e., the magnitude of the mean function).}
		\label{fig:dTAVC_A032_n200}
\vspace{-0.4cm} 
\end{figure}

\subsection{Hypothesis tests for structural breaks}
An estimator of $v$ is needed in many hypothesis testing problems.
We present two examples here:
(a) the KS CP test, and 
(b) a structural break test in the presence of trends \citep{wu_zhao_2007}.
The test statistic (a) is defined in (\ref{eqt:KS}).
The test statistic (b) is defined as 
\begin{align}\label{eqt:WZ2007test}
	T_n(v) = \frac{1}{k_n \sqrt{v}} \max_{k_n\leq i\leq n-k_n} \left\vert \sum_{j=i+1}^{k_n+i} X_j - \sum_{j=i-k_n+1}^i X_j \right\vert ,
\end{align}
where $k_n = n^{\beta}$ and $1/2<\beta< 2/3$.
The estimators (i)--(v) with $m=3$ described in Section~\ref{sec:sim_MSE} are used to estimate 
the $v$ in (\ref{eqt:KS}) and (\ref{eqt:WZ2007test}).

In reality, when a structural break occurs, 
the mean may suddenly jump to a high level  
but return to a lower level after that. 
So, we consider 
$\mu(t) = \Xi\{10\mathbb{1}(t>0.3)-9\mathbb{1}(t>0.35)\}$.
The mean jumps from 0 to $10\Xi$ at $t=0.3$ and drops to $\Xi$ at $t=0.35$.
Figure~\ref{fig:dTAVC_B020_case3} shows the powers and size-adjusted powers of the tests (a) and (b) 
when $n=200$ and the nominal size is 5\%.

\begin{figure} 
\begin{center}
\includegraphics[width=\textwidth]{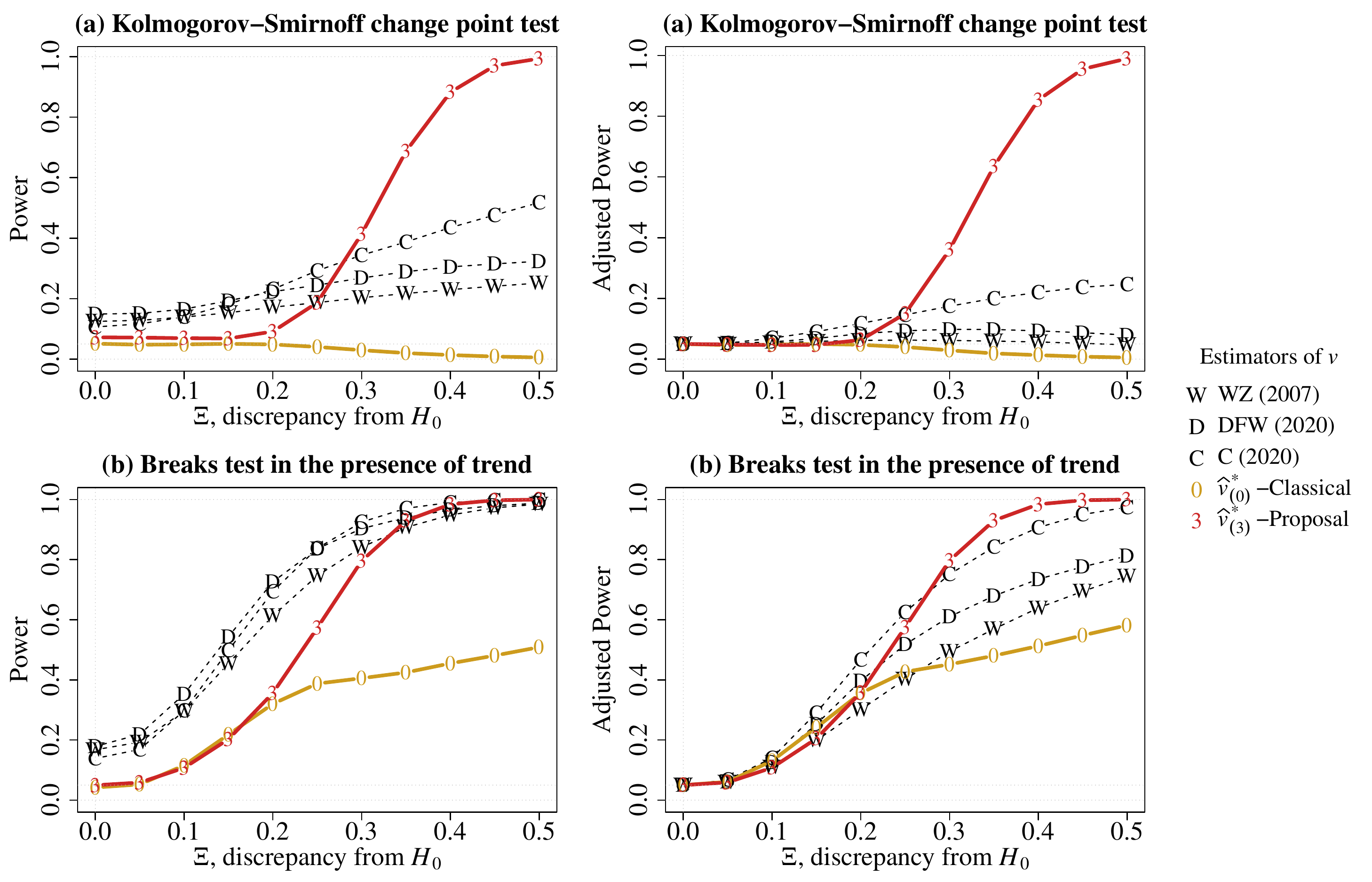} 
\end{center} 
\vspace{-0.6cm}
\caption{The powers and adjusted powers of different CP tests with various estimators of $v$.}
\label{fig:dTAVC_B020_case3}
\vspace{-0.4cm}
\end{figure}

First, the sizes (type-I error rates) of the tests with
$\widehat{v}_{(\Wu)}$, $\widehat{v}_{(\DFW)}$ or $\widehat{v}_{(\Chan)}$ are not well-controlled at the 
nominal value 
because these estimators of $v$ are not accurate under the null hypothesis $H_0:\mu(t)\equiv0$.
Second, the powers of the tests with 
$\widehat{v}_{(\Wu)}$, $\widehat{v}_{(\DFW)}$ and $\widehat{v}_{(\Chan)}$ are low
because these estimators are not robust against $\mu$ under the alternative hypothesis $H_1$. 
For test (a) with $\widehat{v}_{(\Wu)}$ and $\widehat{v}_{(\DFW)}$, 
it even fails to be monotonically powerful with respect to  $\Xi$.
It means that it is harder to reject a more obviously wrong $H_1$. 
Clearly, the tests (a) and (b) with our proposed estimator $\widehat{v}_{(3)}^{\star}$ 
control the size well and 
are monotonically powerful.

\subsection{Simultaneous confidence bands for trends}
An estimator of $v$ is an essential component in many automatic procedures, 
e.g., bandwidth selection for nonparametric regression, and construction of SCBs, etc.
In this section, we present the local linear regression estimator \citep{wu_zhao_2007}:
\begin{align}\label{eqt:localLinearReg}
	\widehat{\mu}_{b}(t) = 2\bar{\mu}_{b}(t) - \bar{\mu}_{b\sqrt{2}}(t),
	\quad \text{where}\quad
	\bar{\mu}_{b}(t) = \sum_{i=1}^n \frac{H\{(t-i/n)/b\}}{nb} X_i,
\end{align}
and $H(\cdot)$ is a kernel, e.g., the Gaussian kernel, and $b$ is a bandwidth.
They suggest that $b$ could be selected as $b_n = 2 (\widehat{v}/\widehat{\gamma}_0)^{1/5} b^*$, 
where $\widehat{\gamma}_0 = \sum_{i=1}^n\{X_i - \widehat{\mu}_{b^*}(i/n)\}^2/n$, 
$b^*$ is the optimal bandwidth under the i.i.d. assumption \citep{RuppertSheatherWand95}, 
and $\widehat{v}$ is an estimator of $v$.
Since $b_n$ is crucial to the performance of $\widehat{\mu}_{b_n}(\cdot)$, 
an efficient estimator of $v$ is important. 

SCBs for $\mu(\cdot)$ directly depend on $\widehat{v}$. 
In particular, 95\% SCBs are given by $\widehat{\mu}_{b_n}(t) \pm \sqrt{\widehat{v}}{q}_{0.95}$, 
where ${q}_{0.95}$ is the 95\% quantile 
of $\sup_{0\leq t\leq 1}|\widehat{\mu}_{b_n^{\circ}}^{\circ}(t)|$
with $\widehat{\mu}_{b_n^{\circ}}^{\circ}(t)$ and $b_n^{\circ}$ computed on i.i.d. data 
$X_1^{\circ}, \ldots, X_n^{\circ}\sim \Normal(0,1)$.
The quantile $q_{0.95}$ can be easily obtained from simulation;
see Table 2 of \cite{wu_zhao_2007}.
A simulation experiment is performed to compare the coverage probability and the expected half-width of the 
95\% SCBs when $n=200$ and the true mean function is $\mu(t) = \cos(2\pi t)$.
We try different bandwidth $0.05\leq b \leq 0.1$.
The results are shown in Table~\ref{table:dTAVC_C}. 
The SCBs with $\widehat{v}_{(0)}^{\star}$ or $\widehat{v}_{(\DFW)}$ are over-covered, 
and their expected half-widths are large.
The SCBs with $\widehat{v}_{(\Wu)}$ or $\widehat{v}_{(\Chan)}$ are under-covered. 
The SCBs with our proposed $\widehat{v}_{(3)}^{\star}$ have a quite accurate coverage rate and
a reasonable expected width.

\begin{table*}[t]
\centering
\begin{tabular}{|c|ccccc|ccccc|}
\hline
$b$	 & WZ (2007) & DFW (2020) & C (2020) & $\widehat{v}_{(0)}^{\star}$ (Classical) & $\widehat{v}_{(3)}^{\star}$  (Proposal)  \\\hline\hline
	$0.05$	&	$ 92.0\%\,(  1.4)$	&	$ 99.4\%\,(  2.1)$	&	$ 91.8\%\,(  1.4)$	&	$100\%\,(  3.1)$	&	$ 94.9\%\,(  1.5)$	\\[0.5ex]
	$0.06$	&	$ 91.7\%\,(  1.2)$	&	$ 99.2\%\,(  1.9)$	&	$ 91.7\%\,(  1.2)$	&	$100\%\,(  2.8)$	&	$ 94.7\%\,(  1.3)$	\\[0.5ex]
	$0.07$	&	$ 91.4\%\,(  1.2)$	&	$ 99.1\%\,(  1.8)$	&	$ 91.3\%\,(  1.1)$	&	$100\%\,(  2.6)$	&	$ 94.4\%\,(  1.2)$	\\[0.5ex]
	$0.08$	&	$ 91.4\%\,(  1.1)$	&	$ 99.1\%\,(  1.6)$	&	$ 91.3\%\,(  1.1)$	&	$100\%\,(  2.4)$	&	$ 94.4\%\,(  1.2)$	\\[0.5ex]
	$0.09$	&	$ 91.3\%\,(  1.0)$	&	$ 99.1\%\,(  1.5)$	&	$ 91.2\%\,(  1.0)$	&	$100\%\,(  2.3)$	&	$ 94.3\%\,(  1.1)$	\\[0.5ex]
	$0.1$	&	$ 90.9\%\,(  1.0)$	&	$ 99.0\%\,(  1.4)$	&	$ 91.1\%\,(  1.0)$	&	$100\%\,(  2.2)$	&	$ 94.2\%\,(  1.0)$	\\[0.5ex]
\hline
\end{tabular}
\caption{The coverage probabilities of  95\% SCBs for $\mu(\cdot)$ under different 
		bandwidth $b$ and different estimators of $v$.
		The numbers inside parentheses are the expected half-widths.}  
\label{table:dTAVC_C}
\vspace{-0.4cm} 
\end{table*}

\subsection{Southern hemispheric land and ocean temperature}
The earth surface temperature has been an actively discussed topic in various fields. 
This section studies the southern hemispheric land and ocean monthly temperature
from 1880 to 2018 ($n=139\times 12$).  
The dataset is freely accessible from the website of NOAA's National Centers for Environmental Information.

Since land temperature changes more rapidly than ocean temperature, 
the land temperature is expected to be more volatile. 
The LRV $v$ is a measure of the stochastic variability of the average. 
Using $\widehat{v}_{(3)}^{\star}$, 
the long-run standard deviations ($\sqrt{v}$) 
for the land and ocean series are about $0.525$ and $0.313$, respectively. 
We can also compute the long-run correlation between the land and ocean average temperature. 
It is about 0.651, which indicates a moderately strong correlation. 
If the non-robust $\widehat{v}_{(0)}^{\star}$ is used, 
it is inflated to 0.966
as $\widehat{v}_{(0)}^{\star}$ mistakenly regards co-movement of trends as correlation.

\begin{figure} 
\begin{center}
\includegraphics[width=\textwidth]{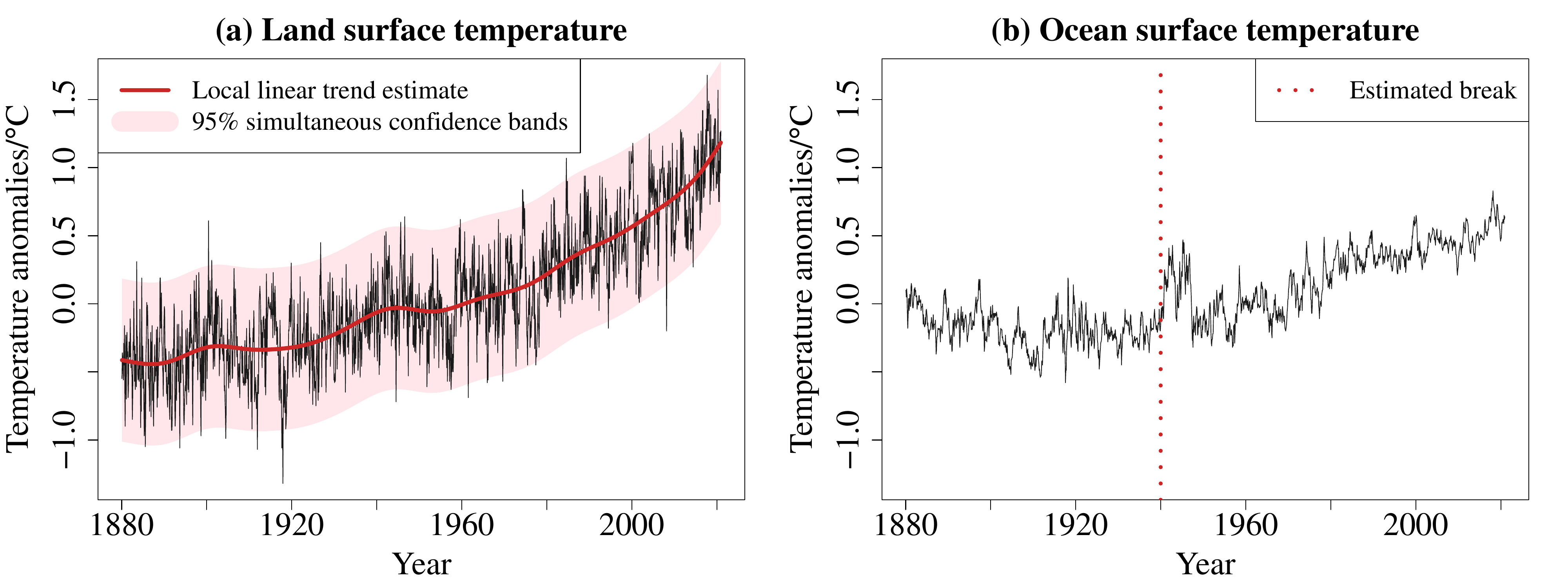} 
\end{center} 
\vspace{-0.6cm}
\caption{(a) Trend estimate and 95\% SCBs for the mean of the land surface temperature.
		(b) Estimated break location for the mean of the ocean surface temperature.}
		\label{fig:dTAVC_F001}
\vspace{-0.45cm} 
\end{figure}

Next, we test whether the global temperature has a structural break
even it has a possibly increasing trend. 
The test in \cite{wu_zhao_2007} is used. 
The $p$-values for the land series and ocean series are 15.4\% and $<10^{-5}$, 
respectively.  
Since the land mean temperature is smooth, we produce a trend estimate together with 95\%
SCBs in Figure~\ref{fig:dTAVC_F001} (a). 
The bandwidth selected for $\widehat{\mu}_{b_n}(\cdot)$ in (\ref{eqt:localLinearReg}) 
is $b_n = 2(0.276/0.062)^{1/5}(0.017) \approx 0.046$.
For the ocean temperature, the structural break is detected at around 1940 using the break location 
estimator proposed in \cite{wu_zhao_2007}; 
see Figure~\ref{fig:dTAVC_F001} (b).  
We suspect that 
it is easier to detect a structural break in the ocean series 
because the ocean temperature has a smaller LRV relative to the variation of $\mu(\cdot)$.

\section{Summary, discussion, and future work}\label{sec:conclusion}
This article presents a general class of difference-based estimators $\widehat{v}$ (\ref{eqt:diffvarEst}) 
for the long-run variance (\ref{eqt:def_v}).
Many existing estimators are special cases.
We derive the regimes in which $\widehat{v}$ is consistent and rate-optimal; see Table~\ref{tab:summary}. 
In particular, the intuitive estimator with locally centered $X_{1:n}$ is inadmissible.
We also derive detailed $\mathcal{L}^2$ properties of $\widehat{v}$.
It is proven to be asymptotically optimal even 
in the presence of trends and a possibly divergent number of change points. 
The suggested estimator is stated in (\ref{eqt:suggestedEst}).
We list some possible future work below.

\begin{itemize}
\item From Theorem~\ref{thm:robustness}, a possible estimator of 
$\mathcal{V} = \int_0^1 \left\{ \mu(t) - \bar{\mu} \right\}^2 \, \dd t$ is 
$\widehat{\mathcal{V}}_{(m)} = \{\widehat{v}_{(0)}-\widehat{v}_{(m)}\}/\{\ell\int_{-1}^1 K(t) \, \dd t\}$
for some $m>0$. 
It can be used to test constancy of $\mu(\cdot)$, i.e., $H_0:\mathcal{V}=0$.
The resulting test is expected to perform well because the estimator $\widehat{v}_{(m)}$ has a small MSE. 
\item To test for a variance change point in the presence of a non-constant mean, 
a possible test statistic can be constructed by using $\widehat{Q}_{(m)} = \sum_{i} D_i^2/n$ with parameters chosen to minimize 
the long-run variance of $\log\widehat{Q}_{(m)}$,
where $D_i$ is our proposed $m$th order difference statistics.
This test could be used as an alternative to the recent work by \cite{GSDR2019}. 
\end{itemize}
All these directions rely on the optimal framework proposed in this article.

%%%%%%%%%%%%%%%%%%%%%%%%%%%%%%%%%%%%%%%%%%%%%%%
%%% Example with single Appendix:            %%
%%%%%%%%%%%%%%%%%%%%%%%%%%%%%%%%%%%%%%%%%%%%%%%
%\begin{appendix}
%\section*{Title}\label{appn} %% if no title is needed, leave empty \section*{}.
%Appendices should be provided in \verb|{appendix}| environment,
%before Acknowledgements.
%
%If there is only one appendix,
%then please refer to it in text as \ldots\ in the \hyperref[appn]{Appendix}.
%\end{appendix}
%%%%%%%%%%%%%%%%%%%%%%%%%%%%%%%%%%%%%%%%%%%%%%%
%%% Example with multiple Appendixes:        %%
%%%%%%%%%%%%%%%%%%%%%%%%%%%%%%%%%%%%%%%%%%%%%%%
%\begin{appendix}
%\section{Title of the first appendix}\label{appA}
%If there are more than one appendix, then please refer to it
%as \ldots\ in Appendix \ref{appA}, Appendix \ref{appB}, etc.
%
%\section{Title of the second appendix}\label{appB}
%\subsection{First subsection of Appendix \protect\ref{appB}}
%
%Use the standard \LaTeX\ commands for headings in \verb|{appendix}|.
%Headings and other objects will be numbered automatically.
%\begin{equation}
%\mathcal{P}=(j_{k,1},j_{k,2},\dots,j_{k,m(k)}). \label{path}
%\end{equation}
%
%Sample of cross-reference to the formula (\ref{path}) in Appendix \ref{appB}.
%\end{appendix}

%%%%%%%%%%%%%%%%%%%%%%%%%%%%%%%%%%%%%%%%%%%%%%
%% Support information, if any,             %%
%% should be provided in the                %%
%% Acknowledgements section.                %%
%%%%%%%%%%%%%%%%%%%%%%%%%%%%%%%%%%%%%%%%%%%%%%
\begin{acks}[Acknowledgments]
The authors would like to thank the anonymous referees, an Associate
Editor and the Editor for their constructive comments that improved the
scope and presentation of the paper. 
\end{acks}

%%%%%%%%%%%%%%%%%%%%%%%%%%%%%%%%%%%%%%%%%%%%%%
%% Funding information, if any,             %%
%% should be provided in the                %%
%% funding section.                         %%
%%%%%%%%%%%%%%%%%%%%%%%%%%%%%%%%%%%%%%%%%%%%%%
\begin{funding}
This research was partially supported by grants GRF-2130730 and GRF-2130788 provided by Research Grants Council of HKSAR. 
\end{funding}

%%%%%%%%%%%%%%%%%%%%%%%%%%%%%%%%%%%%%%%%%%%%%%
%% Supplementary Material, if any, should   %%
%% be provided in {supplement} environment  %%
%% with title and short description.        %%
%%%%%%%%%%%%%%%%%%%%%%%%%%%%%%%%%%%%%%%%%%%%%%
\begin{supplement}
\stitle{Appendix \ref{sec:proof_main}: Proofs of Main Results.}
\sdescription{The proofs of 
Propositions~\ref{prop:Ktype_Stype_equiv}, \ref{prop:effKernel}, 
Theorems~\ref{thm:robustness}, \ref{thm:consistency_finite}, \ref{thm:consistency_infinite}, \ref{thm:bias}, \ref{thm:var}, 
Corollaries~\ref{coro:optim}, \ref{coro:vpHat}, and 
Corollaries~\ref{coro:prewhite}, \ref{coro:multi}
are placed in Sections \ref{sec:proof_Ktype_Stype_equiv}--\ref{sec:proof_multi}, 
respectively. 
}
\end{supplement}
\begin{supplement}
\stitle{Appendix \ref{sec:auxillary}: Auxiliary results.}
\sdescription{
Technical results of independent interest are stated in 
Sections \ref{sec:proof_rough_var}--\ref{sec:proof_blocking_vHatp}. 
The derivation of (\ref{eqt:compareMSEexisting}) is stated in Section \ref{sec:derivationOfCompareMSEexisting}. 
}
\end{supplement}
\begin{supplement}
\stitle{Appendix \ref{sec:additional_plots}: Additional plots.}
\sdescription{
It contains additional simulation results for Section~\ref{sec:sim_MSE}. 
}
\end{supplement}
\begin{supplement}
\stitle{Appendix \ref{sec:robustness_experiment_additional}: Additional simulation experiments.}
\sdescription{It contains additional simulation experiments about 
the robustness against mean functions. 
}
\end{supplement}

%%%%%%%%%%%%%%%%%%%%%%%%%%%%%%%%%%%%%%%%%%%%%%%%%%%%%%%%%%%%%
%%                  The Bibliography                       %%
%%                                                         %%
%%  imsart-???.bst  will be used to                        %%
%%  create a .BBL file for submission.                     %%
%%                                                         %%
%%  Note that the displayed Bibliography will not          %%
%%  necessarily be rendered by Latex exactly as specified  %%
%%  in the online Instructions for Authors.                %%
%%                                                         %%
%%  MR numbers will be added by VTeX.                      %%
%%                                                         %%
%%  Use \cite{...} to cite references in text.             %%
%%                                                         %%
%%%%%%%%%%%%%%%%%%%%%%%%%%%%%%%%%%%%%%%%%%%%%%%%%%%%%%%%%%%%%

%% if your bibliography is in bibtex format, uncomment commands:
%\bibliographystyle{imsart-number} % Style BST file (imsart-number.bst or imsart-nameyear.bst)
%\bibliography{bibliography}       % Bibliography file (usually '*.bib')

\bibliographystyle{imsart-nameyear} 
\bibliography{myRef.bib}

\begin{thebibliography}{58}
% BibTex style file: imsart-nameyear.bst, 2017-11-03
% Default style options (sort=1,type=nameyear).
% Used options (sort=1,type=nameyear).

\bibitem[\protect\citeauthoryear{Alexopoulos, Goldsman and
  Wilson}{2011}]{Alexopoulos2011}
\begin{barticle}[author]
\bauthor{\bsnm{Alexopoulos},~\bfnm{C.}\binits{C.}},
  \bauthor{\bsnm{Goldsman},~\bfnm{D.}\binits{D.}} \AND
  \bauthor{\bsnm{Wilson},~\bfnm{J.~R.}\binits{J.~R.}}
(\byear{2011}).
\btitle{Overlapping Batch Means: Something more for Nothing?}
\bjournal{Proceedings of the Winter Simulation Conference}
\bpages{401--411}.
\end{barticle}
\endbibitem

\bibitem[\protect\citeauthoryear{Altissimoa and
  Corradic}{2003}]{Altissimoa_Corradic_2003}
\begin{barticle}[author]
\bauthor{\bsnm{Altissimoa},~\bfnm{F.}\binits{F.}} \AND
  \bauthor{\bsnm{Corradic},~\bfnm{V.}\binits{V.}}
(\byear{2003}).
\btitle{Strong rules for detecting the number of breaks in a time series}.
\bjournal{J. Econometrics}
\bvolume{117}
\bpages{207--244}.
\end{barticle}
\endbibitem

\bibitem[\protect\citeauthoryear{Anderson}{1971}]{anderson1971}
\begin{bbook}[author]
\bauthor{\bsnm{Anderson},~\bfnm{T.~W.}\binits{T.~W.}}
(\byear{1971}).
\btitle{The Statistical Analysis of Time Series}.
\bpublisher{Wiley, New York}.
\end{bbook}
\endbibitem

\bibitem[\protect\citeauthoryear{Andrews}{1991}]{andrews1991}
\begin{barticle}[author]
\bauthor{\bsnm{Andrews},~\bfnm{D.~W.~K.}\binits{D.~W.~K.}}
(\byear{1991}).
\btitle{Heteroskedasticity and Autocorrelation Consistent Covariance Matrix
  Estimation}.
\bjournal{Econometrica}
\bvolume{59}
\bpages{817--858}.
\end{barticle}
\endbibitem

\bibitem[\protect\citeauthoryear{Bradley}{2005}]{bradley2005}
\begin{barticle}[author]
\bauthor{\bsnm{Bradley},~\bfnm{R.~C.}\binits{R.~C.}}
(\byear{2005}).
\btitle{Basic properties of strong mixing conditions. A survey and some open
  questions}.
\bjournal{Probability surveys}
\bvolume{2}
\bpages{107--144}.
\end{barticle}
\endbibitem

\bibitem[\protect\citeauthoryear{Brockwell and
  Davis}{1991}]{brockwellDavis1991}
\begin{bbook}[author]
\bauthor{\bsnm{Brockwell},~\bfnm{P.~J.}\binits{P.~J.}} \AND
  \bauthor{\bsnm{Davis},~\bfnm{R.~A.}\binits{R.~A.}}
(\byear{1991}).
\btitle{Time Series: Theory and Methods}.
\bpublisher{Springer, New York}.
\end{bbook}
\endbibitem

\bibitem[\protect\citeauthoryear{Carlstein}{1986}]{carlstein86}
\begin{barticle}[author]
\bauthor{\bsnm{Carlstein},~\bfnm{E}\binits{E.}}
(\byear{1986}).
\btitle{The Use of Subseries Values for Estimating the Variance of a General
  Statistic from a Stationary Sequence}.
\bjournal{Ann. Statist.}
\bvolume{14}
\bpages{1171--1179}.
\end{barticle}
\endbibitem

\bibitem[\protect\citeauthoryear{Chan}{2020}]{chan2020}
\begin{barticle}[author]
\bauthor{\bsnm{Chan},~\bfnm{K.~W.}\binits{K.~W.}}
(\byear{2020}).
\btitle{Mean-Structure and Autocorrelation Consistent Covariance Matrix
  Estimation}.
\bjournal{To appear in Journal of Business \& Economic Statistics}.
\end{barticle}
\endbibitem

\bibitem[\protect\citeauthoryear{Chan and Yau}{2016}]{chanyau2013}
\begin{barticle}[author]
\bauthor{\bsnm{Chan},~\bfnm{K.~W.}\binits{K.~W.}} \AND
  \bauthor{\bsnm{Yau},~\bfnm{C.~Y.}\binits{C.~Y.}}
(\byear{2016}).
\btitle{New Recursive Estimators of the Time-Average Variance Constant}.
\bjournal{Stat. Comput.}
\bvolume{26}
\bpages{609--627}.
\end{barticle}
\endbibitem

\bibitem[\protect\citeauthoryear{Chan and Yau}{2017a}]{chanyau2014_rTACM}
\begin{barticle}[author]
\bauthor{\bsnm{Chan},~\bfnm{K.~W.}\binits{K.~W.}} \AND
  \bauthor{\bsnm{Yau},~\bfnm{C.~Y.}\binits{C.~Y.}}
(\byear{2017}a).
\btitle{Automatic Optimal Batch Size Selection for Recursive Estimators of
  Time-average Covariance Matrix}.
\bjournal{J. Amer. Statist. Assoc.}
\bvolume{112}
\bpages{1076-1089}.
\end{barticle}
\endbibitem

\bibitem[\protect\citeauthoryear{Chan and Yau}{2017b}]{chanyau2015_hoc}
\begin{barticle}[author]
\bauthor{\bsnm{Chan},~\bfnm{K.~W.}\binits{K.~W.}} \AND
  \bauthor{\bsnm{Yau},~\bfnm{C.~Y.}\binits{C.~Y.}}
(\byear{2017}b).
\btitle{High Order Corrected Estimator of Asymptotic Variance with Optimal
  Bandwidth}.
\bjournal{Scand. J. Statist.}
\bvolume{44}
\bpages{866--898}.
\end{barticle}
\endbibitem

\bibitem[\protect\citeauthoryear{Chen, Wang and Wu}{2021}]{ChenWangWu2021}
\begin{barticle}[author]
\bauthor{\bsnm{Chen},~\bfnm{L.}\binits{L.}},
  \bauthor{\bsnm{Wang},~\bfnm{W.}\binits{W.}} \AND
  \bauthor{\bsnm{Wu},~\bfnm{W.~B.}\binits{W.~B.}}
(\byear{2021}).
\btitle{Inference of breakpoints in high-dimensional time series}.
\bjournal{To appear in J. Amer. Statist. Assoc.}
\end{barticle}
\endbibitem

\bibitem[\protect\citeauthoryear{Crainiceanu and
  Vogelsang}{2007}]{Crainiceanu2007}
\begin{barticle}[author]
\bauthor{\bsnm{Crainiceanu},~\bfnm{C.~M.}\binits{C.~M.}} \AND
  \bauthor{\bsnm{Vogelsang},~\bfnm{T.~J.}\binits{T.~J.}}
(\byear{2007}).
\btitle{Nonmonotonic Power for Tests of a Mean Shift in a Time Series}.
\bjournal{J. Stat. Comput. Simul.}
\bvolume{77}
\bpages{457--476}.
\end{barticle}
\endbibitem

\bibitem[\protect\citeauthoryear{Cs\"{o}rg\"{o} and
  Horv\'{a}th}{1997}]{csorgo1997}
\begin{bbook}[author]
\bauthor{\bsnm{Cs\"{o}rg\"{o}},~\bfnm{M.}\binits{M.}} \AND
  \bauthor{\bsnm{Horv\'{a}th},~\bfnm{L.}\binits{L.}}
(\byear{1997}).
\btitle{Limit Theorems in Change-Point Analysis}.
\bpublisher{Wiley, New York}.
\end{bbook}
\endbibitem

\bibitem[\protect\citeauthoryear{Dalla, Giraitis and
  Phillips}{2015}]{DallaGiraitisPhillips2015}
\begin{bbook}[author]
\bauthor{\bsnm{Dalla},~\bfnm{V.}\binits{V.}},
  \bauthor{\bsnm{Giraitis},~\bfnm{L.}\binits{L.}} \AND
  \bauthor{\bsnm{Phillips},~\bfnm{P.~C.~B.}\binits{P.~C.~B.}}
(\byear{2015}).
\btitle{Testing Mean Stability of Heteroskedastic Time Series}.
\bpublisher{Manuscript}.
\end{bbook}
\endbibitem

\bibitem[\protect\citeauthoryear{Dehling, Fried and
  Wendler}{2020}]{DehlingFriedWendler2020}
\begin{barticle}[author]
\bauthor{\bsnm{Dehling},~\bfnm{H.}\binits{H.}},
  \bauthor{\bsnm{Fried},~\bfnm{R.}\binits{R.}} \AND
  \bauthor{\bsnm{Wendler},~\bfnm{M.}\binits{M.}}
(\byear{2020}).
\btitle{A robust method for shift detection in time series}.
\bjournal{Biometrika}
\bvolume{107}
\bpages{647--660}.
\end{barticle}
\endbibitem

\bibitem[\protect\citeauthoryear{Dette, Eckle and Vetter}{2020}]{dette2020}
\begin{barticle}[author]
\bauthor{\bsnm{Dette},~\bfnm{H.}\binits{H.}},
  \bauthor{\bsnm{Eckle},~\bfnm{T.}\binits{T.}} \AND
  \bauthor{\bsnm{Vetter},~\bfnm{M.}\binits{M.}}
(\byear{2020}).
\btitle{Multiscale change point detection for dependent data}.
\bjournal{Scand. J. Statist.}
\bvolume{47}
\bpages{1243--1274}.
\end{barticle}
\endbibitem

\bibitem[\protect\citeauthoryear{Dette and Wu}{2019}]{DetteWu2019}
\begin{barticle}[author]
\bauthor{\bsnm{Dette},~\bfnm{H.}\binits{H.}} \AND
  \bauthor{\bsnm{Wu},~\bfnm{W.}\binits{W.}}
(\byear{2019}).
\btitle{Change point analysis in non-stationary processes -- a mass excess
  approach}.
\bjournal{To appear in Ann. Statist.}
\bvolume{47}
\bpages{3578--3608}.
\end{barticle}
\endbibitem

\bibitem[\protect\citeauthoryear{Gallant}{1987}]{gallant1987}
\begin{bbook}[author]
\bauthor{\bsnm{Gallant},~\bfnm{A.~R.}\binits{A.~R.}}
(\byear{1987}).
\btitle{Nonlinear Statistical Models}.
\bpublisher{John Wiley \& Sons}.
\end{bbook}
\endbibitem

\bibitem[\protect\citeauthoryear{Gao et~al.}{2019}]{GSDR2019}
\begin{barticle}[author]
\bauthor{\bsnm{Gao},~\bfnm{Z.}\binits{Z.}},
  \bauthor{\bsnm{Shang},~\bfnm{Z.}\binits{Z.}},
  \bauthor{\bsnm{Du},~\bfnm{P.}\binits{P.}} \AND
  \bauthor{\bsnm{Robertson},~\bfnm{J.~L.}\binits{J.~L.}}
(\byear{2019}).
\btitle{Variance Change Point Detection Under a Smoothly-Changing Mean Trend
  with Application to Liver Procurement}.
\bjournal{J. Amer. Statist. Assoc.}
\bvolume{114}
\bpages{773--781}.
\end{barticle}
\endbibitem

\bibitem[\protect\citeauthoryear{Gon\c{c}alves and
  White}{2002}]{GoncalvesWhite2002}
\begin{barticle}[author]
\bauthor{\bsnm{Gon\c{c}alves},~\bfnm{S.}\binits{S.}} \AND
  \bauthor{\bsnm{White},~\bfnm{H.}\binits{H.}}
(\byear{2002}).
\btitle{The Bootstrap of the Mean for Dependent Heterogeneous Arrays}.
\bjournal{Econometric Theory}
\bvolume{18}
\bpages{1367--1384}.
\end{barticle}
\endbibitem

\bibitem[\protect\citeauthoryear{G{\'o}reckia, Horv{\'a}thb and
  Kokoszka}{2018}]{GHK2018}
\begin{barticle}[author]
\bauthor{\bsnm{G{\'o}reckia},~\bfnm{T}\binits{T.}},
  \bauthor{\bsnm{Horv{\'a}thb},~\bfnm{L.}\binits{L.}} \AND
  \bauthor{\bsnm{Kokoszka},~\bfnm{P.}\binits{P.}}
(\byear{2018}).
\btitle{Change point detection in heteroscedastic time series}.
\bjournal{Econometrics and Statistics}
\bvolume{7}
\bpages{63--88}.
\end{barticle}
\endbibitem

\bibitem[\protect\citeauthoryear{Hall, Kay and Titterinton}{1990}]{hall1990}
\begin{barticle}[author]
\bauthor{\bsnm{Hall},~\bfnm{P.}\binits{P.}},
  \bauthor{\bsnm{Kay},~\bfnm{J.~W.}\binits{J.~W.}} \AND
  \bauthor{\bsnm{Titterinton},~\bfnm{D.~M.}\binits{D.~M.}}
(\byear{1990}).
\btitle{Asymptotically optimal difference-based estimation of variance in
  nonparametric regression}.
\bjournal{Biometrika}
\bvolume{77}
\bpages{521--528}.
\end{barticle}
\endbibitem

\bibitem[\protect\citeauthoryear{Horv\'{a}th, Kokoszka and
  Steinebach}{1999}]{horvath1999}
\begin{barticle}[author]
\bauthor{\bsnm{Horv\'{a}th},~\bfnm{L.}\binits{L.}},
  \bauthor{\bsnm{Kokoszka},~\bfnm{P.}\binits{P.}} \AND
  \bauthor{\bsnm{Steinebach},~\bfnm{J.}\binits{J.}}
(\byear{1999}).
\btitle{Testing for Changes in Multivariate Dependent Observations with an
  Application to Temperature Changes}.
\bjournal{J. Multivariate Anal.}
\bvolume{68}
\bpages{96--119}.
\end{barticle}
\endbibitem

\bibitem[\protect\citeauthoryear{Ibragimov}{1962}]{ibragimov1962}
\begin{barticle}[author]
\bauthor{\bsnm{Ibragimov},~\bfnm{I.~A.}\binits{I.~A.}}
(\byear{1962}).
\btitle{Some limit theorems for stationary processes}.
\bjournal{Theory of Probability \& Its Applications}
\bvolume{7}
\bpages{349--382}.
\end{barticle}
\endbibitem

\bibitem[\protect\citeauthoryear{Juhl and Xiao}{2009}]{JuhlXiao2009}
\begin{barticle}[author]
\bauthor{\bsnm{Juhl},~\bfnm{T.}\binits{T.}} \AND
  \bauthor{\bsnm{Xiao},~\bfnm{Z.}\binits{Z.}}
(\byear{2009}).
\btitle{Tests for changing mean with monotonic power}.
\bjournal{J. Econometrics}
\bvolume{148}
\bpages{12-24}.
\end{barticle}
\endbibitem

\bibitem[\protect\citeauthoryear{K\"{u}nsch}{1989}]{kunsch89}
\begin{barticle}[author]
\bauthor{\bsnm{K\"{u}nsch},~\bfnm{H.~R.}\binits{H.~R.}}
(\byear{1989}).
\btitle{The Jackknife and the Bootstrap for General Stationary Observations}.
\bjournal{Ann. Statist.}
\bvolume{17}
\bpages{1217--1241}.
\end{barticle}
\endbibitem

\bibitem[\protect\citeauthoryear{Levine and
  Tecuapetla-G{\'o}mez}{2019}]{LevineTecuapetla2019}
\begin{barticle}[author]
\bauthor{\bsnm{Levine},~\bfnm{M}\binits{M.}} \AND
  \bauthor{\bsnm{Tecuapetla-G{\'o}mez},~\bfnm{I.}\binits{I.}}
(\byear{2019}).
\btitle{{ACF} estimation via difference schemes for a semiparametric model with
  $m$-dependent errors}.
\bjournal{arXiv preprint arXiv:1905.04578}.
\end{barticle}
\endbibitem

\bibitem[\protect\citeauthoryear{Liu and Wu}{2010}]{wu2010}
\begin{barticle}[author]
\bauthor{\bsnm{Liu},~\bfnm{W.}\binits{W.}} \AND
  \bauthor{\bsnm{Wu},~\bfnm{W.~B.}\binits{W.~B.}}
(\byear{2010}).
\btitle{Asymptomatic of Spectral Density Estimates}.
\bjournal{Econometric Theory}
\bvolume{26}
\bpages{1218--1245}.
\end{barticle}
\endbibitem

\bibitem[\protect\citeauthoryear{Lobato}{2001}]{lobato2001}
\begin{barticle}[author]
\bauthor{\bsnm{Lobato},~\bfnm{I.~N.}\binits{I.~N.}}
(\byear{2001}).
\btitle{Testing That a Dependent Process Is Uncorrelated}.
\bjournal{J. Amer. Statist. Assoc.}
\bvolume{96}
\bpages{1066--1076}.
\end{barticle}
\endbibitem

\bibitem[\protect\citeauthoryear{Meketon and Schmeiser}{1984}]{OBM1984}
\begin{barticle}[author]
\bauthor{\bsnm{Meketon},~\bfnm{M.~S.}\binits{M.~S.}} \AND
  \bauthor{\bsnm{Schmeiser},~\bfnm{B.}\binits{B.}}
(\byear{1984}).
\btitle{Overlapping batch means: something for nothing?}
\bjournal{Proceedings of the 16th Conference on Winter Simulation}
\bpages{226--230}.
\end{barticle}
\endbibitem

\bibitem[\protect\citeauthoryear{Newey and West}{1987}]{newwy_west_1987}
\begin{barticle}[author]
\bauthor{\bsnm{Newey},~\bfnm{W.~K.}\binits{W.~K.}} \AND
  \bauthor{\bsnm{West},~\bfnm{K.~D.}\binits{K.~D.}}
(\byear{1987}).
\btitle{A simple, positive semi-definite, heteroskedasticity and
  autocorrelation consistent covariance matrix}.
\bjournal{Econometrica}
\bvolume{55}
\bpages{703--708}.
\end{barticle}
\endbibitem

\bibitem[\protect\citeauthoryear{Parzen}{1957}]{Parzen1957}
\begin{barticle}[author]
\bauthor{\bsnm{Parzen},~\bfnm{E}\binits{E.}}
(\byear{1957}).
\btitle{On Consistent Estimates of the Spectrum of a Stationary Time Series}.
\bjournal{The Annals of Mathematical Statistics}
\bvolume{28}
\bpages{329--348}.
\end{barticle}
\endbibitem

\bibitem[\protect\citeauthoryear{Pe{\v{s}}ta and Wendler}{2020}]{pevsta2020}
\begin{barticle}[author]
\bauthor{\bsnm{Pe{\v{s}}ta},~\bfnm{M.}\binits{M.}} \AND
  \bauthor{\bsnm{Wendler},~\bfnm{M.}\binits{M.}}
(\byear{2020}).
\btitle{Nuisance-parameter-free changepoint detection in non-stationary
  series}.
\bjournal{Test}
\bvolume{29}
\bpages{379--408}.
\end{barticle}
\endbibitem

\bibitem[\protect\citeauthoryear{Politis}{2011}]{politis2011}
\begin{barticle}[author]
\bauthor{\bsnm{Politis},~\bfnm{D.~N.}\binits{D.~N.}}
(\byear{2011}).
\btitle{Higher-order accurate, positive semidefinite estimation of large-sample
  covariance and spectral density matrices}.
\bjournal{Econometric Theory}
\bvolume{27}
\bpages{703--744}.
\end{barticle}
\endbibitem

\bibitem[\protect\citeauthoryear{Politis and Romano}{1995}]{politis1995}
\begin{barticle}[author]
\bauthor{\bsnm{Politis},~\bfnm{D.~N.}\binits{D.~N.}} \AND
  \bauthor{\bsnm{Romano},~\bfnm{J.~P.}\binits{J.~P.}}
(\byear{1995}).
\btitle{Bias Corrected Nonparametric Spectral Estimation}.
\bjournal{J. Time Series Anal.}
\end{barticle}
\endbibitem

\bibitem[\protect\citeauthoryear{Politis, Romano and Wolf}{1999}]{politis1999}
\begin{bbook}[author]
\bauthor{\bsnm{Politis},~\bfnm{D.~N.}\binits{D.~N.}},
  \bauthor{\bsnm{Romano},~\bfnm{J.~P.}\binits{J.~P.}} \AND
  \bauthor{\bsnm{Wolf},~\bfnm{M.}\binits{M.}}
(\byear{1999}).
\btitle{Subsampling.}
\bpublisher{Springer, New York}.
\end{bbook}
\endbibitem

\bibitem[\protect\citeauthoryear{Rice}{1984}]{rice1984}
\begin{barticle}[author]
\bauthor{\bsnm{Rice},~\bfnm{J.}\binits{J.}}
(\byear{1984}).
\btitle{Bandwidth Choice for Nonparametric Regression}.
\bjournal{Ann. Statist.}
\bvolume{12}
\bpages{1215--1230}.
\end{barticle}
\endbibitem

\bibitem[\protect\citeauthoryear{Rosenblatt}{1956}]{Rosenblatt56}
\begin{barticle}[author]
\bauthor{\bsnm{Rosenblatt},~\bfnm{M}\binits{M.}}
(\byear{1956}).
\btitle{A central limit theorem and a strong mixing condition}.
\bjournal{Proc. Natl. Acad. Sci. USA}
\bvolume{42}
\bpages{43--47}.
\end{barticle}
\endbibitem

\bibitem[\protect\citeauthoryear{Ruppert, Sheather and
  Wand}{1995}]{RuppertSheatherWand95}
\begin{barticle}[author]
\bauthor{\bsnm{Ruppert},~\bfnm{D.}\binits{D.}},
  \bauthor{\bsnm{Sheather},~\bfnm{S.~J.}\binits{S.~J.}} \AND
  \bauthor{\bsnm{Wand},~\bfnm{M.~P.}\binits{M.~P.}}
(\byear{1995}).
\btitle{An Effective Bandwidth Selector for Local Least Squares Regression}.
\bjournal{J. Amer. Statist. Assoc.}
\bvolume{90}
\bpages{1257--1270}.
\end{barticle}
\endbibitem

\bibitem[\protect\citeauthoryear{Shao}{2010}]{shaoxf2010}
\begin{barticle}[author]
\bauthor{\bsnm{Shao},~\bfnm{Xiaofeng}\binits{X.}}
(\byear{2010}).
\btitle{A self-normalized approach to confidence interval construction in time
  series}.
\bjournal{J. R. Statist. Soc. B}
\bvolume{72}
\bpages{343--366}.
\end{barticle}
\endbibitem

\bibitem[\protect\citeauthoryear{Shao}{2015}]{shao2015review}
\begin{barticle}[author]
\bauthor{\bsnm{Shao},~\bfnm{X.}\binits{X.}}
(\byear{2015}).
\btitle{Self-Normalization for Time Series: A Review of Recent Developments}.
\bjournal{J. Amer. Statist. Assoc.}
\bvolume{110}
\bpages{1797--1817}.
\end{barticle}
\endbibitem

\bibitem[\protect\citeauthoryear{Song and Schmeiser}{1995}]{song1995}
\begin{barticle}[author]
\bauthor{\bsnm{Song},~\bfnm{W.~T.}\binits{W.~T.}} \AND
  \bauthor{\bsnm{Schmeiser},~\bfnm{B.~W.}\binits{B.~W.}}
(\byear{1995}).
\btitle{Optimal Mean-Squared-Error Batch Sizes}.
\bjournal{Manage. Sci.}
\bvolume{41}
\bpages{110--123}.
\end{barticle}
\endbibitem

\bibitem[\protect\citeauthoryear{Taqqu and Eberlein}{1986}]{TaqquEberlein86}
\begin{bbook}[author]
\bauthor{\bsnm{Taqqu},~\bfnm{M.}\binits{M.}} \AND \bauthor{\bsnm{Eberlein}}
(\byear{1986}).
\btitle{Dependence in Probability and Statistics}.
\bpublisher{Birkh\"{a}user Basel}.
\end{bbook}
\endbibitem

\bibitem[\protect\citeauthoryear{Tecuapetla‐G{\'o}mez and
  Munk}{2017}]{TecuapetlaMunk2017}
\begin{barticle}[author]
\bauthor{\bsnm{Tecuapetla‐G{\'o}mez},~\bfnm{I}\binits{I.}} \AND
  \bauthor{\bsnm{Munk},~\bfnm{A.}\binits{A.}}
(\byear{2017}).
\btitle{Autocovariance Estimation in Regression with a Discontinuous Signal and
  $m$-Dependent Errors: A Difference-Based Approach}.
\bjournal{Scand. J. Statist.}
\bvolume{44}
\bpages{346--368}.
\end{barticle}
\endbibitem

\bibitem[\protect\citeauthoryear{Vats and Flegal}{2021}]{VatsFlegal2018}
\begin{barticle}[author]
\bauthor{\bsnm{Vats},~\bfnm{D.}\binits{D.}} \AND
  \bauthor{\bsnm{Flegal},~\bfnm{J.~M.}\binits{J.~M.}}
(\byear{2021}).
\btitle{Lugsail lag windows for estimating time-average covariance matrices}.
\bjournal{To appear in Biometrika}.
\end{barticle}
\endbibitem

\bibitem[\protect\citeauthoryear{Welch}{1987}]{peter1987}
\begin{barticle}[author]
\bauthor{\bsnm{Welch},~\bfnm{P.~D.}\binits{P.~D.}}
(\byear{1987}).
\btitle{On the relationship between batch means, overlapping batch means and
  spectral estimation}.
\bjournal{Proceedings of the Winter Simulation Conference}
\bpages{320--323}.
\end{barticle}
\endbibitem

\bibitem[\protect\citeauthoryear{White}{1984}]{white1984}
\begin{bbook}[author]
\bauthor{\bsnm{White},~\bfnm{H.}\binits{H.}}
(\byear{1984}).
\btitle{Asymptotic theory for econometricians}.
\bpublisher{Academic Press, New York}.
\end{bbook}
\endbibitem

\bibitem[\protect\citeauthoryear{Wolkonski and Rozanov}{1959}]{wolkonski1959}
\begin{barticle}[author]
\bauthor{\bsnm{Wolkonski},~\bfnm{V.~A.}\binits{V.~A.}} \AND
  \bauthor{\bsnm{Rozanov},~\bfnm{Y.~A.}\binits{Y.~A.}}
(\byear{1959}).
\btitle{Some limit theorems for random functions, Part I}.
\bjournal{Theory Probab. Appl}
\bvolume{4}
\bpages{178--197}.
\end{barticle}
\endbibitem

\bibitem[\protect\citeauthoryear{Wu}{2004}]{wu2004CP}
\begin{bincollection}[author]
\bauthor{\bsnm{Wu},~\bfnm{W.~B.}\binits{W.~B.}}
(\byear{2004}).
\btitle{A test for detecting changes in mean}.
In \bbooktitle{Time Series Analysis and Applications to Geophysical Systems},
(\beditor{\bfnm{D.~R.}\binits{D.~R.}~\bsnm{Brillinger}},
  \beditor{\bfnm{E.~A.}\binits{E.~A.}~\bsnm{Robinson}} \AND
  \beditor{\bfnm{F.}\binits{F.}~\bsnm{Schoenberg.}}, eds.)
\bvolume{139}
\bpages{105--122}.
\bpublisher{Springer-Verlag New York}.
\end{bincollection}
\endbibitem

\bibitem[\protect\citeauthoryear{Wu}{2005}]{wu2005}
\begin{barticle}[author]
\bauthor{\bsnm{Wu},~\bfnm{W.~B.}\binits{W.~B.}}
(\byear{2005}).
\btitle{Nonlinear system theory: Another look at dependence}.
\bjournal{Proc. Natl. Acad. Sci. USA}
\bvolume{102}
\bpages{14150--14154}.
\end{barticle}
\endbibitem

\bibitem[\protect\citeauthoryear{Wu}{2007}]{wu2007}
\begin{barticle}[author]
\bauthor{\bsnm{Wu},~\bfnm{W.~B.}\binits{W.~B.}}
(\byear{2007}).
\btitle{Strong invariance principles for dependent random variables}.
\bjournal{Ann. Probab.}
\bvolume{35}
\bpages{2294--2320}.
\end{barticle}
\endbibitem

\bibitem[\protect\citeauthoryear{Wu}{2011}]{wu2011}
\begin{barticle}[author]
\bauthor{\bsnm{Wu},~\bfnm{W.~B.}\binits{W.~B.}}
(\byear{2011}).
\btitle{Asymptotic theory for stationary processes}.
\bjournal{Stat. Interface}
\bvolume{4}
\bpages{207--226}.
\end{barticle}
\endbibitem

\bibitem[\protect\citeauthoryear{Wu, Woodroofe and Mentz}{2001}]{wu2001}
\begin{barticle}[author]
\bauthor{\bsnm{Wu},~\bfnm{W.~B.}\binits{W.~B.}},
  \bauthor{\bsnm{Woodroofe},~\bfnm{M.}\binits{M.}} \AND
  \bauthor{\bsnm{Mentz},~\bfnm{G.}\binits{G.}}
(\byear{2001}).
\btitle{Isotonic regression: another look at the change point problem}.
\bjournal{Biometrika}
\bvolume{88}
\bpages{793--804}.
\end{barticle}
\endbibitem

\bibitem[\protect\citeauthoryear{Wu and Zhao}{2007}]{wu_zhao_2007}
\begin{barticle}[author]
\bauthor{\bsnm{Wu},~\bfnm{W.~B.}\binits{W.~B.}} \AND
  \bauthor{\bsnm{Zhao},~\bfnm{Z.}\binits{Z.}}
(\byear{2007}).
\btitle{Inference of trends in time series}.
\bjournal{J. R. Statist. Soc. B}
\bvolume{69}
\bpages{391--410}.
\end{barticle}
\endbibitem

\bibitem[\protect\citeauthoryear{Xiao and Wu}{2012}]{xiaowu2012}
\begin{barticle}[author]
\bauthor{\bsnm{Xiao},~\bfnm{H.}\binits{H.}} \AND
  \bauthor{\bsnm{Wu},~\bfnm{W.~B.}\binits{W.~B.}}
(\byear{2012}).
\btitle{Covariance matrix estimation for stationary time series}.
\bjournal{Ann. Statist.}
\bvolume{40}
\bpages{466--493}.
\end{barticle}
\endbibitem

\bibitem[\protect\citeauthoryear{Zhang and Lavitas}{2018}]{zhang2018}
\begin{barticle}[author]
\bauthor{\bsnm{Zhang},~\bfnm{T.}\binits{T.}} \AND
  \bauthor{\bsnm{Lavitas},~\bfnm{L.}\binits{L.}}
(\byear{2018}).
\btitle{Unsupervised self-normalized change-point testing for time series}.
\bjournal{J. Amer. Statist. Assoc.}
\bvolume{113}
\bpages{637--648}.
\end{barticle}
\endbibitem

\bibitem[\protect\citeauthoryear{Zhao}{2011}]{zhao2011}
\begin{barticle}[author]
\bauthor{\bsnm{Zhao},~\bfnm{Z.}\binits{Z.}}
(\byear{2011}).
\btitle{A Self-Normalized Confidence Interval for the Mean of a Class of
  Nonstationary Processes}.
\bjournal{Biometrika}
\bvolume{98}
\bpages{81--90}.
\end{barticle}
\endbibitem

\end{thebibliography}

\newpage
\begin{center}
\bfseries\MakeUppercase{Supplementary Note to ``Optimal Difference-based Variance Estimators in Time Series: A General Framework''}
\end{center}
\appendix

\appendix
\section{Proofs of Main Results}\label{sec:proof_main}
Throughout the proof, we may drop the subscript ``$\mu$'' in $\MSE_{\mu}$, $\Var_{\mu}$ and $\Bias_{\mu}$
in the general-mean case when there is no confusion.
Similarly, we may drop the subscript ``$0$'' in $\MSE_{0}$, $\Var_{0}$ and $\Bias_{0}$ in the zero-mean case
when there is no confusion.

\subsection{Proof of Proposition~\ref{prop:Ktype_Stype_equiv}}\label{sec:proof_Ktype_Stype_equiv}
Rewritting $\widehat{v}$ and $\widehat{v}'$ in quadratic forms, we have
\begin{align}
	\widehat{v}
		&= \sum_{i=1}^n\sum_{i'=1}^n
			\left\{ \frac{1}{n} K\left( \frac{i-i'}{\ell} \right) \mathbb{1}_{(\min(i,i')\geq mh+1)} D_i D_{i'} \right\} , \label{eqt:vHat_quad}\\
	\widehat{v}' 
		&= \sum_{i=1}^n\sum_{i'=1}^n
			\left\{ \frac{1}{n} K\left( \frac{i-i'}{\ell} \right) \mathbb{1}_{(\min(i,i')\geq mh+1)} D_i D_{i'}  \right\}
			 A_{i,i'} B_n ,\label{eqt:vHatp_quad} \nonumber
\end{align}
where 
\begin{align*}
	A_{i,i'}&=  
		 \min\left\{ \frac{\min(i,i')-mh-1}{\ell-|i-i'|}, \frac{n-\max(i,i')+1}{\ell-|i-i'|} , 1\right\}, \\
	B_n &= \frac{n}{n-(mh+\ell+1)+1}.
\end{align*}
Hence, by Minkowski inequality, we have 
\begin{eqnarray}\label{eqt:diff_K_S}
	\|\widehat{v}-\widehat{v}'\|
	&\leq& \|B_n\widehat{v}-\widehat{v}'\| + (1-B_n)\|\widehat{v}\| \nonumber \\
	&\leq& \|B_n\widehat{v}-\widehat{v}' - \E(B_n\widehat{v}-\widehat{v}')\| 
				+ |\E(B_n\widehat{v}-\widehat{v}')|
				+ \frac{\ell+mh}{n} \|\widehat{v}\|. 
\end{eqnarray}
Note that $A_{i,i'}=1$ when $\max(i,i') \geq \ell + mh+1$ and $\min(i,i') \leq n-\ell+1$; 
and $|A_{i,i'}|\leq 1$ for all $i,i'$. 
Hence, among $n^2$ summands in $B_n\widehat{v}$ and $\widehat{v}'$, 
at most $O(\ell+mh)^2$ of them are not identical, 
and the coefficients of $D_i D_{i'}$ are $O(1/n)$.
Since $\Theta_2\leq \Theta_4<\infty$ and $\{|d_j|\}$ is uniformly summable, we have, 
by Minkowski's inequality, that  
\begin{eqnarray}
	\sum_{t=0}^{\infty} \|D_t - D_{t,\{0\}}\|_2
		&\leq& \sum_{t=0}^{\infty} \sum_{j=0}^m |d_j| \|Z_{t-hj} - Z_{t-hj,\{0\}}\|_2 \nonumber \\
		&=& \sum_{j=0}^m |d_j| \Theta_2 
		<\infty, \label{eqt:stableD}
\end{eqnarray}
where $D_{t,\{0\}}$ is a coupled version of $D_t$
with $\varepsilon_0$ being replaced by an i.i.d. copy. 
So, by Lemma 8 of \citet{xiaowu2012}, 
we have 
\begin{eqnarray}\label{eqt:diffvHat_inter1}
	\left\|B_n\widehat{v}-\widehat{v}' - \E(B_n\widehat{v}-\widehat{v}')\right\| 
	\leq O\left( \frac{\ell+mh}{n} \right).
\end{eqnarray}

Recall that $\mu_i = c_i + s_i$, where  
$c_i := c(i/n)$ and 
$s_i := s(i/n) = \sum_{j=0}^{\mathcal{J}} \xi_{j} \mathbb{1}(T_j \leq i < T_{j+1})$ for each $i$. 
Let $\varrho_i = \mathbb{1}(\{i-mh,\ldots, i\}\cap\{T_1, \ldots, T_{\mathcal{J}}\} \neq \emptyset)$
for $i=mh=1, \ldots,n$. 
The quantity $\varrho_i$ indicates whether changes point(s) exist at times $i-mh, \ldots, i$. 
Then, note that 
\begin{eqnarray}
	\E(D_i) = \sum_{j=0}^m d_j \mu_{i-hj}
		= \left( \sum_{j=0}^m d_j c_{i-hj} \right) + \left( \sum_{j=0}^m d_j s_{i-hj} \right) .\label{eqt:ED_1}
\end{eqnarray}
By the definition of $\mathcal{C}$, we have 
\begin{eqnarray}
	\left\vert\sum_{j=0}^m d_j c_{i-hj} \right\vert 
		= \left\vert\sum_{j=0}^m d_j (c_{i-hj} - c_i)\right\vert  
		\leq \sum_{j=0}^m |d_j| O(\mathcal{C}mh/n)
		= O(\mathcal{C}mh/n). \label{eqt:ED_2}
\end{eqnarray} 
Since $\mathcal{G} \gtrsim \ell+mh$, 
we know that, for each $i$, 
there are at most $O(1)$ jumps in the time period $\{i-mh, \ldots, i\}$ for large enough $n$. 
Consequently, $|s_{i-hj}- s_i| = O(\mathcal{S})$ for all $j=0, \ldots,m$. So, 
\begin{eqnarray}
	\left\vert\sum_{j=0}^m d_j s_{i-hj}\right\vert  
		= \left\vert\sum_{j=0}^m d_j (s_{i-hj}- s_i)\right\vert 
		\leq  \varrho_i\sum_{j=0}^m |d_j| O(\mathcal{S})  
		= O(\mathcal{S}\varrho_i).  \label{eqt:ED_3}
\end{eqnarray}
Putting (\ref{eqt:ED_2}) and (\ref{eqt:ED_3}) into (\ref{eqt:ED_1}), we have 
$|\E(D_i)| = O(\mathcal{C}mh/n) + O(\mathcal{S}\varrho_i)$. 
Since at most $O(1)$ of $\{ \varrho_i : i\in\{1+mh, \ldots, \ell+mh,n-\ell, \ldots, n\} \}$ are non-zero 
owing to the assumption $\mathcal{G} \gtrsim \ell+mh$, we obtain 
\begin{eqnarray}\label{eqt:sumsumEDED}
	&&\left\vert \sum_{i=1}^n\sum_{i'=1}^n \frac{1}{n} 
		K\left( \frac{i-i'}{\ell} \right) \mathbb{1}_{(\min(i,i')\geq mh+1)}
		\E(D_i)\E(D_{i'})(1-A_{i,i'}) \right\vert \nonumber \\
	&&\qquad = O\left( \frac{\mathcal{C}^2m^2h^2(\ell+mh)^2}{n^3} + \frac{\mathcal{S}^2}{n}\right) \equiv r_{\sub}.
\end{eqnarray}
Using (\ref{eqt:ACVF_D}) and (\ref{eqt:sumsumEDED}), we have
\begin{eqnarray}
	&&|\E(B_n\widehat{v}-\widehat{v}')| \nonumber \\
	&&\quad\leq B_n\Bigg\vert \sum_{i=1}^n\sum_{i'=1}^n
			\frac{1}{n} K\left( \frac{i-i'}{\ell} \right) \mathbb{1}_{(\min(i,i')\geq mh+1)}
			\nonumber \\
	&&  \qquad\qquad
			\times\left\{ \sum_{|s|\leq m} \delta_{s} \gamma_{hs+i-i'} + \E(D_i)\E(D_{i'}) \right\}
			(1-A_{i,i'})
			\Bigg\vert \nonumber\\
	&&\quad\leq r_{\sub}
			+ O\left(\frac{1}{n}\right)\sum_{|s|\leq m}|\delta_s|   
			\left\{
			\left(\mathop{\sum\sum}_{i, i' \leq mh} |\gamma_{hs+i-i'}| \right) + 
			\left(\mathop{\sum\sum}_{i,i'\geq n-\ell} |\gamma_{hs+i-i'}| \right)
			\right\} \nonumber \\
	&&\quad\leq r_{\sub}
			+ O\left(\frac{1}{n}\right)\sum_{|s|\leq m}|\delta_s|   
			\left[
			\left\{\sum_{|k|\leq mh} ( mh- |k|)|\gamma_{hs+k}|\right\} 
			+\left\{\sum_{|k|\leq \ell} ( \ell- |k|)|\gamma_{hs+k}| \right\}
			\right] \nonumber\\
	&&\quad\leq
		r_{\sub} +
		O\left(\frac{1}{n}\right) \sum_{|s|\leq m}|\delta_s|   (mh +\ell) u_0 \nonumber \label{eqt:kLemma_diffvHat}\\
	&&\quad= r_{\sub}
		+O\left(\frac{\ell+mh}{n}\right) , \nonumber \\
		\label{eqt:diffvHat_inter2}
\end{eqnarray}		
where the last line follows from 
$\sup_{n\in\mathbb{N}}\sum_{|s|\leq m}|\delta_s|<\infty$ and 
$u_0 <\infty$ implied by Assumption~\ref{ass:weakDep}.
Putting (\ref{eqt:diffvHat_inter1}) and (\ref{eqt:diffvHat_inter2}) into (\ref{eqt:diff_K_S}), 
we conclude that 
\[
	\|\widehat{v}-\widehat{v}'\| \leq O\left(\frac{\ell+mh}{n}\right) (1+\|\widehat{v}\|) 
			+  r_{\sub} . 
\]

\subsection{Proof of Proposition~\ref{prop:effKernel}}\label{sec:proof_prop:effKernel}
Let $\mu_0 = \E(X_1)= \ldots = \E(X_n)$ be the common mean of the data. 
For notational simplicity, denote 
$\widehat{Z}_i^{(\theta)} = X_i - \theta$ for each $i$, 
and 
\[
	\widehat{\gamma}_{k}^{(\theta)}
		= \frac{1}{n}\sum_{i=|k|+1}^n \widehat{Z}_i^{(\theta)} \widehat{Z}_{i-|k|}^{(\theta)}   
\] 
for each $k$, 
where $\theta \in\{ \mu_0 , \bar{X} \}$.
So, $\widehat{\gamma}_{k}^{(\mu_0)} = \widehat{\gamma}_{k}^{Z}$ and 
$\widehat{\gamma}_{k}^{(\bar{X})} = \widehat{\gamma}_{k}^{X}$.

\begin{enumerate}
\item Recall the definition of $\widehat{v}'$ defined in (\ref{eqt:Stype_vHat}), 
which does not change if we replace $\bar{X}$ by any fixed $\theta$.
We could re-write $\widehat{v}'$ as follow:
\begin{align*}
	\widehat{v}' 
		&= \frac{1}{n-mh-\ell}\sum_{i=mh+\ell+1}^n \sum_{t=i-\ell+1}^i \sum_{t'=i-\ell+1}^i
				\frac{K\left( \frac{t-t'}{\ell} \right)}{\ell-\left\vert t-t' \right\vert}
				\sum_{j=0}^m \sum_{j'=0}^m d_jd_{j'} 
				\widehat{Z}_{t-hj}^{(\theta)}\widehat{Z}_{t'-hj'}^{(\theta)}\nonumber \\
		&= \frac{1}{n-mh-\ell}\sum_{i=h_m+\ell+1}^n \sum_{t,t'=1}^{\ell} \sum_{j,j'=0}^m 
				\frac{K\left( \frac{t-t'}{\ell} \right)}{\ell-\left\vert t-t' \right\vert}
				d_jd_{j'} \widehat{Z}_{i-\ell+t-jh}^{(\theta)}\widehat{Z}_{i-\ell+t'-j'h}^{(\theta)}  \\
		&= \frac{1}{n-mh-\ell}\sum_{t,t'=1}^{\ell} \sum_{j,j'=0}^m 
				\frac{K\left( \frac{t-t'}{\ell} \right)}{\ell-\left\vert t-t' \right\vert}
				d_jd_{j'} 
				\left\{ \sum_{i=hm-hj+1+t}^{n-\ell+t-jh} 
				\widehat{Z}_{i}^{(\theta)}\widehat{Z}_{i -\{ t-t' - h(j- j')\}}^{(\theta)} \right\}.
\end{align*}
Next, note that $K((k+hs)/\ell)=0$ when $s<\lceil-(\ell+k)/h\rceil$ or $s>\lfloor (\ell-k)/h \rfloor$. 
Then, by changing the running index variable and exchanging the summations, we have 
\begin{align*}
	\widehat{v}_{\diff} 
		&= \sum_{|k|\leq \ell+mh} \left\{ \sum_{s=\lceil-(\ell+k)/h\rceil}^{\lfloor (\ell-k)/h \rfloor} \delta_{|s|}  K\left( \frac{k+hs}{\ell} \right) \right\}\widehat{\gamma}^{(\theta)}_{k} \\
		&= \sum_{s=-m}^m \delta_{|s|} \sum_{k=-\ell+1-hs}^{\ell-1-hs} K\left( \frac{k+hs}{\ell} \right) \widehat{\gamma}^{(\theta)}_{k} \\	
		&= \sum_{s=-m}^m \delta_{|s|} \sum_{k=-\ell+1}^{\ell-1} K\left( \frac{k}{\ell} \right) \widehat{\gamma}^{(\theta)}_{k-hs} \\	
		&= \sum_{j,j'=0}^m d_jd_{j'}
			\sum_{k=-\ell+1}^{\ell-1}
				K\left( \frac{k}{\ell} \right) \widehat{\gamma}^{(\theta)}_{k-h(j-j')}  \\
		&= \frac{1}{n}\sum_{t,t'=1}^{\ell} \sum_{j,j'=0}^m 
				\frac{K\left( \frac{t-t'}{\ell} \right)}{\ell-|t-t'|}
				d_jd_{j'} 
				\left\{ \sum_{i=1+\max(0,t-t' - h(j- j'))}^{n+\min(0,t-t' - h(j- j'))} 
				\widehat{Z}_{i}^{(\theta)}\widehat{Z}_{i - \{t-t' - h(j- j')\}}^{(\theta)}  \right\}.
\end{align*}
Let 
\begin{align*}
	\mathcal{B}_n &= 
				\bigg\{ 1, \ldots, \ell+mh \bigg\} \bigcup 
				\bigg\{n-(\ell+mh)+1, \ldots, n\bigg\}, \\
	\mathcal{B}_{j,j',t,t'} &=
				\bigg\{ \max\{0,t-t'-h(j-j')\}+1, \ldots, hm-hj+t \bigg\} \\
				&\quad\bigcup \bigg\{n-\ell+t-jh+1, \ldots, n+\min\{0,t-t'-h(j-j')\}\bigg\}. 
\end{align*}
Clearly, for any $0\leq j,j'\leq m$, any $1\leq t,t' \leq \ell$ and any $n\in\mathbb{N}$, 
we have $\mathcal{B}_{j,j',t,t'} \subseteq \mathcal{B}_n$, and $|\mathcal{B}_n| = O(\ell+mh)$. 
Then 
\begin{align*}
	\widehat{v}_{\diff} - \widehat{v}' 
		&= \widehat{v}_{\diff} - \left(\frac{n-mh-\ell}{n}\right)\widehat{v}' 
			- \left( \frac{mh+\ell}{n} \right) \widehat{v}' \\
		&= \left\{ \frac{1}{n}\sum_{t,t'=1}^{\ell} \sum_{j,j'=0}^m 
				\frac{K\left( \frac{t-t'}{\ell} \right)}{\ell-\left\vert t-t' \right\vert}
				d_jd_{j'} 
				\sum_{i\in\mathcal{B}_{j,j',t,t'}} 
					\widehat{Z}_{i}^{(\theta)}\widehat{Z}_{i - \{t-t' - h(j- j')\}}^{(\theta)} 
			\right\}\\
		& \qquad	- \left( \frac{mh+\ell}{n} \right) \widehat{v}' \\
		&=\left( \frac{1}{n}\sum_{i,i'\in\mathcal{B}_n} a_{i,i'} \widehat{Z}_i^{(\theta)}\widehat{Z}_{i'}^{(\theta)} 
			\right)
			- \left( \frac{mh+\ell}{n} \right) \widehat{v}' ,
\end{align*} 
where $a_{i,i'} = O(1)$ are some non-random weights for $i,i'\in\mathcal{B}_n$. 
By Lemma 8 of \citet{xiaowu2012}, 
we have 
\begin{eqnarray*}
	\left\| \frac{1}{n}\sum_{i,i'\in\mathcal{B}_n} a_{i,i'} \widehat{Z}_i^{(\theta)}\widehat{Z}_{i'}^{(\theta)}  \right\|
		\leq O\left( \frac{\ell+mh}{n} \right) 
\end{eqnarray*}
for $\theta\in\{\mu_0, \bar{X}\}$.
By Minkowski inequality, we have  
\begin{align*}
	\|\widehat{v}_{\diff} - \widehat{v}'\| 
	&\leq \left\| \frac{1}{n}\sum_{i,i'\in\mathcal{B}_n} a_{i,i'} \widehat{Z}_i^{(\theta)}\widehat{Z}_{i'}^{(\theta)} \right\|
			+ \left( \frac{\ell+mh}{n} \right) \|\widehat{v}'\| \\
	&\leq O\left( \frac{\ell+mh}{n} \right) + O\left( \frac{\ell+mh}{n} \right) \|\widehat{v}'\|.
\end{align*}
In view of Proposition~\ref{prop:Ktype_Stype_equiv}, 
we have $\|\widehat{v}_{\diff} - \widehat{v}\| =  O\{(\ell+mh)/n\}( 1+ \|\widehat{v}\|)$, 
which also implies  $\|\widehat{v}_{\diff} - \widehat{v}\| =  O\{(\ell+mh)/n\}( 1+ \|\widehat{v}_{\diff}\|)$.
It is easy to see that $\E(\widehat{\gamma}_k)=\gamma_k + O(1/n)$ for $|k|\leq \ell+mh$.
Finally, using Lemma \ref{lem:rough_var}, we know that 
{\small
\begin{align}
	\|\widehat{v}_{\diff}\| 
	&\leq v + \|\widehat{v}_{\diff} - v\|
	= v + \left\{ \left\vert \E(\widehat{v}_{\diff})-v \right\vert^2 + \Var(\widehat{v}_{\diff}) \right\}^{1/2} \nonumber\\
	&\leq v + \left[ 
			\left\{ 
			\sup_{t\in\mathbb{R}}|K(t)|
			\sum_{|k|\leq \ell+mh} \left\{ |\gamma_k^Z| + O(1/n)\right\} \sum_{|s|\leq m} |\delta_s|  \right\}^2 +
			O\left( \frac{\ell+mh}{n} \right) \right]^{1/2} \nonumber\\
	&= O(1).  \label{eqt:norm_vHatDiff}
\end{align}
}
So, $\|\widehat{v}_{\diff} - \widehat{v}\| =  O\{(\ell+mh)/n\}$.

\item Note that 
\begin{eqnarray}
	&&\sum_{|k|\leq \ell+mh} K_{\diff}\left( \frac{k}{\ell} \right) \nonumber \\
 		&&\quad= \sum_{|k|\leq \ell+mh} 
			\sum_{|s|\leq m}
				\mathbb{1}_{\left\{ \left\lceil -\frac{\ell+k}{\lambda\ell} \right\rceil
							\leq s \leq \left\lfloor \frac{\ell-k}{\lambda\ell} \right\rfloor\right\}}
					\delta_{|s|} K\left( \frac{k}{\ell} + \lambda s \right) . \label{eqt:sumKdiff_proof}
\end{eqnarray}
Since $K(t) = 0$ for $|t|\geq 0$, we may remove the floor and ceiling in (\ref{eqt:sumKdiff_proof}). 
Swapping the order of summations and using the change of variable $k = k' - \lambda \ell s = k' - hs$, we have 
\begin{align*}
	\sum_{|k|\leq \ell+mh} K_{\diff}\left( \frac{k}{\ell} \right)
		&= \sum_{|k|\leq \ell+mh} 
			\sum_{|s|\leq m}
				\mathbb{1}_{\left\{  -\frac{\ell+k}{\lambda\ell} 
							\leq s \leq \frac{\ell-k}{\lambda\ell} \right\}}
				\delta_{|s|} K\left( \frac{k}{\ell} + \lambda s \right) \nonumber\\
		&= \sum_{|s|\leq m} \delta_{|s|} \sum_{k' =-\ell-h(m-s)}^{\ell+h(m+s)} 
				\mathbb{1}_{\left\{  |k'|\leq \ell \right\}}
				 K\left( \frac{k'}{\ell} \right) \nonumber\\
		&= \left( \sum_{|s|\leq m} \delta_{|s|}\right)
				\left\{ \sum_{|k'| \leq \ell'} 
				 K\left( \frac{k'}{\ell} \right)   \right\} 
		= 0, 
\end{align*}
where the last equality follows from 
the identity $\sum_{|s|\leq m} \delta_{|s|} \equiv 0$;
see Problem 7.3 of \cite{brockwellDavis1991}. 
\item Denote 
	\[
		\underline{s}(t) = \left\lceil -\frac{1+t}{\lambda}\right\rceil
		\qquad \text{and} \qquad
		\overline{s}(t) = \left\lfloor \frac{1-t}{\lambda} \right\rfloor.
	\]
	So, $\underline{s}(0) = \lceil -1/\lambda \rceil$ and $\overline{s}(0) = \lfloor 1/\lambda \rfloor$.
		We consider two cases. 
		\begin{itemize}
			\item Case 1: $\lambda > 1$. We have $\underline{s}(0) = \overline{s}(0) = 0$.
					So, $K_{\diff}(0) = \delta_0 K(0) = 1$ because $\delta_0 = K(0) = 1$. 
			\item Case 2: $\lambda = 1$. We have $\underline{s}(0) = -1$ and $\overline{s}(0) = 1$.
					So, $K_{\diff}(0) = \delta_0 K(0) + \delta_1 K(1) + \delta_1 K(-1) = 1$
					because $K(\pm1)=0$. 
		\end{itemize}
\item  We consider the following two cases. 
\begin{itemize}
\item Case 1: $|t|< 1$. Since $\lambda \geq 2$, we have $\overline{s}(0)=\underline{s}(0)=0$. So, $K_{\diff}(t) = \delta_0 K(t) = K(t)$. 
\item Case 2: $t\in\{-1,1\}$. Since $K(\pm1)=0$, we  have $K_{\diff}(t) = K(t)$.
\end{itemize}
\end{enumerate}

\subsection{Proof of Theorem~\ref{thm:robustness}}\label{sec:proof_robustness} 
Recall that $X_i = \mu_i + Z_i$, where $\E(Z_i) = 0$ for each $i=1,\ldots, n$, 
and $D_t = \sum_{j=0}^m d_j X_{t-jh}$ for each $t=mh+1, \ldots,n$. 
Now expand $D_t$ into two parts: $D_t = D_t^{(\mu)} + D_t^{(Z)}$, where
\begin{align}\label{eqt:DDmuZ}
	D_t^{(\mu)} = \sum_{j=0}^m d_j \mu_{t-jh}
	\qquad \text{and}\qquad
	D_t^{(Z)} = \sum_{j=0}^m d_j Z_{t-jh}.
\end{align}
Since $D_t D_{t'} = D_t^{(\mu)} D_{t'}^{(\mu)}  + D_t^{(\mu)} D_{t'}^{(Z)} + D_t^{(Z)} D_{t'}^{(\mu)} + D_t^{(Z)} D_{t'}^{(Z)}$, we can decompose $\widehat{v}$  
into four parts:
\begin{align}
	\widehat{v} 
		&= \frac{1}{n}\sum_{|k| < \ell} K\left(\frac{k}{\ell}\right) 
			\sum_{i=mh+|k|+1}^n 
			\nonumber\\
		&\qquad\qquad\left( D_i^{(\mu)} D_{i-|k|}^{(\mu)}  + D_i^{(\mu)} D_{i-|k|}^{(Z)} + D_i^{(Z)} D_{i-|k|}^{(\mu)} + D_i^{(Z)} D_{i-|k|}^{(Z)}\right) \nonumber \\
		&= \widehat{v}^{(\mu\mu)} + \widehat{v}^{(\mu Z)} + \widehat{v}^{(Z\mu)} + \widehat{v}^{(ZZ)},\label{eqt:muZdecomp_vp}
\end{align}
where $\widehat{v}^{(\mu\mu)}$, $\widehat{v}^{(\mu Z)}$, $\widehat{v}^{(Z\mu)}$ and $\widehat{v}^{(ZZ)}$
are defined in the obvious way, i.e., 
\begin{align}\label{eqt:vhatp_ab}
	\widehat{v}^{(\alpha \beta)} = \frac{1}{n}\sum_{|k| < \ell} K\left(\frac{k}{\ell}\right) 
				\sum_{i=mh+|k|+1}^n
					D_i^{(\alpha)} D_{i-|k|}^{(\beta)} ,
\end{align}
where $\alpha, \beta \in\{\mu, Z\}$.
Next we calculate the expectation of each of the four terms in (\ref{eqt:muZdecomp_vp}). 
Since $\E(Z_i)=0$ for each $i$, we have $\E(\widehat{v}^{(\mu Z)}) = \E(\widehat{v}^{(Z\mu)}) = 0$. 
Since $\widehat{v}^{(\mu\mu)}$ is non-stochastic, we have $\E(\widehat{v}^{(\mu\mu)}) = \widehat{v}^{(\mu\mu)}$.
Note that $\Bias_{\mu}(\widehat{v}) = \E(\widehat{v}) - v $
and $\Bias_{0}(\widehat{v}) = \E( \widehat{v}^{(ZZ)} ) - v$. 
By Lemma~\ref{lem:size_vHatpmumu}, we have
\begin{align*}
	\Bias_{\mu}(\widehat{v})
		&=  \left\{\E(\widehat{v}^{(ZZ)}) - v\right\} + \E\widehat{v}^{(\mu Z)} + \E\widehat{v}^{(Z\mu)} + \E\widehat{v}^{(\mu\mu)}\\
		&=  \Bias_{0}(\widehat{v})
			+ \left\{ \kappa\ell\mathcal{V} + O\left( \frac{\ell}{n}\right) \right\} \mathbb{1}_{(m=0)} \\
		& \qquad 
		+ O\left[ \frac{\ell}{n} \left\{  (\ell\mathbb{1}_{m=0} +mh ) \mathcal{S}^2\mathcal{J}
					+  \frac{(\ell\mathbb{1}_{m=0} +mh )^2\mathcal{C}^2}{n}\right\} \right].  
\end{align*}

Next we derive the variance. 
Note that $\Var_{0}(\widehat{v}) = \Var( \widehat{v}^{(ZZ)} )$.
Since $\widehat{v}^{(\mu\mu)}$ is non-stochastic, 
$\|\widehat{v}^{(\mu\mu)}-\E\widehat{v}^{(\mu\mu)}\|=0$. 
Using Lemma~\ref{lem:Var_vHatp_muZ}, we have
\begin{align*}
	&\sqrt{ \Var_{\mu}(\widehat{v}) } \\
		&\quad= \left\| \widehat{v} - \E\widehat{v} \right\|  \nonumber\\
		&\quad\leq \left\| \widehat{v}^{(Z Z)}-\E\widehat{v}^{(Z Z)} \right\| 
				+\left\| \widehat{v}^{(\mu Z)} -\E\widehat{v}^{(\mu Z)}\right\|
				+ \left\| \widehat{v}^{(Z\mu)} - \E\widehat{v}^{(Z \mu)}\right\|
			\nonumber \\
		&\quad= \sqrt{ \Var_{0}(\widehat{v}) }
			+O\left\{ \frac{\ell}{\sqrt{n}} \left( \mathcal{C} +\mathcal{S} \mathcal{J}  \right)\right\} \mathbb{1}_{(m=0)}
			+ O\left\{ \frac{\ell}{\sqrt{n}} \left( \frac{mh\mathcal{C}}{n} + \mathcal{S} \left(\frac{mh\mathcal{J}}{n}\right)^{1/2} \right)\right\} .
			\label{eqt:var_mu_v}
\end{align*}
Thus, we obtain the desired result.

\subsection{Proof of Theorem~\ref{thm:consistency_finite}}\label{sec:proof_consistency_finite}
We begin with stating two preliminary findings for any $m>0$ and $1/\ell + (\ell+mh)/n = o(1)$. 
\begin{itemize}
	\item Denote the estimator $\widehat{v}$ 
			by $\widehat{v}^{(ZZ)}$ when the data $\{X_i\}$ are replaced by the noises $\{Z_i\}$.
			Then, $\Bias_0(\widehat{v}) = \Bias(\widehat{v}^{(ZZ)})$,
			$\Var_0(\widehat{v}) = \Var(\widehat{v}^{(ZZ)})$, and 
			$\MSE_0(\widehat{v}) = \MSE(\widehat{v}^{(ZZ)})$, 
			whose values do not depend on the mean function $\mu(\cdot)$, thus, the subscript ``$\mu$'' is dropped. 
			By Theorem \ref{thm:robustness} with $m>0$, $p=0$ and $\mathcal{J},\mathcal{S},\mathcal{C}\asymp 1$, 
			we have 
			\begin{equation}\label{eqt:maxR_bias_se_mu_ignore1}
				\Bias_{\mu}(\widehat{v}) = \Bias(\widehat{v}^{(ZZ)}) + R_{\bias}
				\quad \text{and} \quad
				\Var_{\mu}(\widehat{v})^{1/2} = \Var(\widehat{v}^{(ZZ)})^{1/2} + R_{\se},
			\end{equation}
			where 
			$R_{\bias} = O\left( \ell mh/n \right)$  
			and 
			$R_{\se} =  O\left( \ell mh/n^{3/2} \right) \ll R_{0,\bias}$. 
			Hence, 
			\begin{align}\label{eqt:maxR_bias_se_mu_ignore2}
				\max\left( |R_{\bias}| , |R_{\se}| \right) = O(\ell mh/n).
			\end{align}
			Consequently, the estimator is consistent only if $\ell mh/n = o(1)$.
			Recall that the optimal MSE of $\widehat{v}$ is of order $O(n^{-2q/(1+2q)})$.
			Hence, 
			the estimator is 
			rate optimal only if $\ell mh/n = O(n^{-q/(1+2q)})$.
	\item Let $\widehat{v}_{\diff}^{(ZZ)} = \sum_{|k| \leq \ell+mh} K_{\diff}( k/\ell ) \widehat{\gamma}_{k}^Z$.
			In view of Proposition~\ref{prop:effKernel} (\ref{item:repKdiff}), 
			$\|\widehat{v}^{(ZZ)} - \widehat{v}_{\diff}^{(ZZ)}\| = O\{(\ell+mh)/n\} = O(\ell mh/n)$.
			Hence,  
			it suffices to prove the results for $\widehat{v}_{\diff}^{(ZZ)}$ instead of $\widehat{v}^{(ZZ)}$.
			Write $\widehat{v}_{\diff}^{(ZZ)} = \sum_{|k|\leq \ell+mh} W_k \widehat{\gamma}_k^Z$, where
			$\widehat{\gamma}^Z_k = \sum_{i=|k|+1}^n Z_iZ_{i-|k|}/n$ and 
			\begin{align}\label{eqt:def_Wk}
				W_k = \sum_{s = \underline{s}_k}^{\overline{s}_k} \delta_{s} K\left( \frac{k+hs}{\ell} \right)  ,\quad
				\underline{s}_k = \left\lceil \frac{-\ell-k}{h} \right\rceil, \quad
				\overline{s}_k = \left\lfloor \frac{\ell-k}{h} \right\rfloor .
			\end{align}
			So, $W_0 \rightarrow 1$ is a necessary condition for $\MSE(\widehat{v}^{(ZZ)}) \rightarrow 0$; 
			see Remark~\ref{rem:necessaryCond} for a detailed explanation.  
\end{itemize}
Now, we are ready to prove the theorem in each case. 
Recall that, in this regime, we consider a fixed $m\in\mathbb{N}$.
\begin{enumerate}
	\item Assume $h/\ell\rightarrow 0$. 
			Since $\ell/h \rightarrow \infty$ and $m<\infty$, 
			we know $\ell/h > m$ when $\ell$ is large enough. 
			So, for all large enough $\ell$, Assumption~\ref{ass:Kernel_q} implies that
			\begin{align*}
				W_0 &= \sum_{|s|\leq m} \delta_s K\left( \frac{hs}{\ell}\right) 
					= \sum_{|s|\leq m} \delta_s \left\{ 1+B\left\vert \frac{hs}{\ell} \right\vert^q
							+ o\left( \left\vert\frac{hs}{\ell}\right\vert^q\right)  \right\} \\
					&=B\left( \frac{hm}{\ell} \right)^q \{1+o(1)\}\sum_{|s|\leq m} 
					\left\vert \frac{s}{m} \right\vert^q \delta_s,
			\end{align*}
			where the last line follows from the identity 
			$\sum_{|s|\leq m} \delta_s \equiv 0$. 
			Since $h/\ell\rightarrow 0$, $m<\infty$, and $\sum_{|s|\leq m} |\delta_s| <\infty$, 
			we have 
			\[
				| W_0 | 
				\leq B\left( \frac{hm}{\ell} \right)^q \{1+o(1)\}\sum_{|s|\leq m} |\delta_s|
				\rightarrow 0.
			\]
			Hence, $\widehat{v}$ cannot be consistent for $v$. 
	\item Assume $h/\ell \rightarrow \lambda_{\infty} \in(0,1)$. 
			Note that $\lambda = h/\ell$. So, 
			\begin{align}\label{eqt:W0}
				W_0 \sim 1+2\sum_{s=1}^{\lfloor \ell/h\rfloor} \delta_s K(sh/\ell) 
					\sim 1 + \Omega_{\infty}.  
			\end{align}
			Since $\Omega_{\infty}\neq 0$ 
			(Assumption~\ref{ass:d_K_orthogonalNot}), 
			we know that $W_0 \not\rightarrow 1$, 
			which implies that $\widehat{v}$ is inconsistent. 
	\item Assume $h/\ell \rightarrow 1$. 
			By assumption, $|h-\ell| = O(1)$, so 
			there is a large enough constant $C>0$ such that $|h - \ell| \leq C$ for all $n$. 
			After some simple algebras, we know that  
			$\overline{s}_k = 0$ and $\underline{s}_k = -1$
			if $C < k < \sqrt{\ell}$ and $\ell$ is large enough.
			In this case, 
			\[
				W_k 
					= K(k/\ell) + \delta_1 K\left( \frac{k+\ell-h}{\ell} -1 \right) .
			\]
			Using Assumptions~\ref{ass:Kernel_q} and \ref{ass:Kernel_q_tail}, 
			we have 
			\begin{align}\label{eqt:finite_m_Wk}
				W_k-1
					&\sim B \left( \frac{k}{\ell} \right)^{q} 
									- \delta_1 B' \left( \frac{k+\ell-h}{\ell} \right)^{q'}
					\sim \frac{B''_k}{\ell^{q\wedge q'}} , 
			\end{align}
			where 
			\[
				B''_k = B k^q \mathbb{1}_{(q\leq q')}- \delta_1 B' \left( k+\ell-h \right)^{q'}\mathbb{1}_{(q'\leq q)}.
			\]
			Consider the following two cases.
			\begin{itemize}
				\item Case (i): $\sup_{C<k<\sqrt{\ell}} |B''_k| \nrightarrow 0$.
						The weight $W_k$ is asymptotically equivalent to a kernel of 
							characteristic exponent $q\wedge q'$.
						Hence, $\Bias(\widehat{v}^{(ZZ)}) = O(1/\ell^{q\wedge q'})$; 
						see, e.g.,  Theorem 2 of \cite{chanyau2015_hoc}.
						By Lemma~\ref{lem:rough_var}, $\Var(\widehat{v}^{(ZZ)}) = O(\ell/n)$.
						Consequently, 
						\[
							\MSE(\widehat{v}^{(ZZ)}) = O(1/\ell^{2(q\wedge q')}) + O(\ell/n),
						\] 
						whose order is minimized to 
						$O(n^{-2(q\wedge q')/\{1+2(q\wedge q')\}})$
						when $\ell \asymp n^{1/\{1+2(q\wedge q')\}}$.
						In this case, $\ell m h /n= O(n^{(1-2q)/(1+2q)}) = O(n^{-q/(1+2q)})$.
						So, from (\ref{eqt:maxR_bias_se_mu_ignore1})--(\ref{eqt:maxR_bias_se_mu_ignore2}), 
						we obtain $\MSE_{\mu}(\widehat{v}) = O(n^{-2(q\wedge q')/\{1+2(q\wedge q')\}})$. 
				\item Case (ii): $\sup_{C<k<\sqrt{\ell}} |B''_k| \rightarrow 0$.  
						The weight $W_k$ is equivalent to a kernel of an infinite characteristic exponent.
						Then $\Bias(\widehat{v}^{(ZZ)}) = o(1/\ell^{q\wedge q'})$
						and $\Var(\widehat{v}^{(ZZ)}) = O(\ell/n)$ (Lemma~\ref{lem:rough_var}). 
						Note that we can represent $\Bias(\widehat{v}^{(ZZ)}) = \epsilon/\ell^{q\wedge q'}$, 
						where $\epsilon = \epsilon_n \rightarrow 0$ is an unknown sequence
						depending on $\{\gamma_k\}$.
						In this case, 
						\[
							\MSE(\widehat{v}^{(ZZ)}) = \epsilon^2/\ell^{2(q\wedge q')} + O(\ell/n).
						\]
						It is minimized when $\ell \asymp (\epsilon^2 n)^{1/\{1+2(q\wedge q')\}}$, 
						which depends on the unknown $\epsilon$.
						Unless we make additional assumption on $\{\gamma_k\}$,
						it is not possible to utilize the optimal $\ell$.
						In this case, the best possible unknown-free bandwidth is 
						$\ell \asymp n^{1/\{1+2(q\wedge q')\}}$. 
						Using similar arguments as in case (i), 
						we know that the resulting best possible MSE is  
						$\MSE_{\mu}(\widehat{v})=O(n^{-2(q\wedge q')/\{1+2(q\wedge q')\}})$. 
			\end{itemize}
			
	\item Assume $h/\ell \rightarrow \lambda_{\infty} \in(1, \infty)$. 
			Suppose that $\ell$ is large enough. 
			We consider all $k$ satisfying 
			$0\leq k\leq (\lambda-1)h/(2\lambda)\sim (\lambda_{\infty}-1)\lambda_{\infty}\ell/2$. 
			Then, for large enough $\ell$, 
			we have $\overline{s}_k=\underline{s}_k=0$, which implies $W_k = K(k/\ell)$. 
			Hence, Assumption~\ref{ass:Kernel_q} implies that $\Bias(\widehat{v}^{(ZZ)}) = O(1/\ell^q)$. 
			By Lemma~\ref{lem:rough_var} again, we have $\Var(\widehat{v}^{(ZZ)}) = O(\ell/n)$. 
			Hence, $\MSE(\widehat{v}^{(ZZ)}) = O(1/\ell^{2q}+\ell/n)$, 
			whose order is minimized to $O(n^{-2q/(1+2q)})$ when $\ell \asymp n^{1/(1+2q)}$.  
			In this case, $\ell m h /n= O(n^{(1-2q)/(1+2q)}) = O(n^{-q/(1+2q)})$.
			So, from (\ref{eqt:maxR_bias_se_mu_ignore1})--(\ref{eqt:maxR_bias_se_mu_ignore2}), 
			we obtain $\MSE_{\mu}(\widehat{v}) = O(n^{-2q/(1+2q)})$. 
	\item Assume $h/\ell\rightarrow \infty$. 
			For $0\leq k \leq \ell$ and a large enough $\ell$, we have $W_k = K(k/\ell)$. 
			Assumption~\ref{ass:Kernel_q} implies that $\Bias(\widehat{v}^{(ZZ)}) =  O(1/\ell^q)$. 
			By Lemma~\ref{lem:rough_var}, we have $\Var(\widehat{v}^{(ZZ)}) = O(\ell/n)$. 
			Hence, $\MSE(\widehat{v}^{(ZZ)}) = O(1/\ell^{2q} + \ell/n)$, 
			whose order is minimized to $O(n^{-2q/(1+2q)})$ when 
			$\ell\asymp n^{1/(1+2q)}$. 
			From (\ref{eqt:maxR_bias_se_mu_ignore1})--(\ref{eqt:maxR_bias_se_mu_ignore2}), 
			$\MSE_{\mu}(\widehat{v}) = O(n^{-2q/(1+2q)})$ 
			only if $\ell m h/n = O(n^{-q/(1+2q)})$, or equivalently $h \lesssim n^{q/(1+2q)}$.
			However, it is possible if $q>1$ because $h\gg \ell$ is assumed in this case.
			If $q=1$, $\widehat{v}$ is rate suboptimal. 
\end{enumerate}
It is easy to see that the upper bounds of MSE in cases 3, 4 and 5(b) are achievable by
setting 
$\mu_i = i/n + \mathbb{1}(i/n\leq 1/2)$ and 
$X_i = \theta \varepsilon_{i-1} + \varepsilon_{i}$ for $i=1, \ldots, n$, 
where $\varepsilon_i$ follow $\Normal(0,1)$ independently and $\theta>0$. 
Consequently, we obtain the desired results.

\begin{remark}\label{rem:necessaryCond}
This remark explains why $W_0 \rightarrow 1$ is a necessary condition for 
$\MSE(\widehat{v}^{(ZZ)}) \rightarrow 0$. 
First, $\widehat{v}^{(ZZ)}$ and $\widehat{v}_{\diff}^{(ZZ)}$ are asymptotically equivalent in $\mathcal{L}^2$
according to Proposition~\ref{prop:effKernel}.
Thus, we may replace $\widehat{v}^{(ZZ)}$ by $\widehat{v}^{(ZZ)}_{\diff}$.
Then, we prove the statement by contradiction. 
Suppose that $W_0 \not\rightarrow 1$. 
We are going to prove that $\widehat{v}^{(ZZ)}_{\diff}$ is not necessarily $\mathcal{L}^2$ consistent for $v$.
Consider the following counterexample. 

Let $Z_1, \ldots, Z_n$ be independent $\Normal(0,1)$ random variables. 
In this case, 
$\gamma_0^Z = \E(Z_i^2) = 1$ and $\gamma_k^Z = \E(Z_iZ_{i-|k|}) = 0$ for $k\neq 0$. 
So, $v = \sum_{k\in\mathbb{Z}}\gamma_k^Z = 1$ and 
\[
	\E(\widehat{v}^{(ZZ)}_{\diff}) 
	= \sum_{|k|\leq \ell+mh} W_k \E\left( \frac{1}{n} \sum_{i=|k|+1}^n Z_i Z_{i-|k|} \right)
	= \sum_{|k|\leq \ell+mh} W_k \left(\frac{1}{n} \sum_{i=|k|+1}^n \gamma_k^Z \right)
	= W_0.
\]
Hence, 
the bias is $\E(\widehat{v}^{(ZZ)}_{\diff}) - v = W_0 - 1 \not\rightarrow 0$, 
which implies $\mathcal{L}^2$ inconsistency. 
Then we can conclude that $W_0 \rightarrow 1$ is necessary for $\mathcal{L}^2$ consistency. 
\end{remark}

\subsection{Proof of Theorem~\ref{thm:consistency_infinite}}\label{sec:proof_consistency_infinite}
Assume $m \rightarrow \infty$. 
Note that, in this case, (\ref{eqt:maxR_bias_se_mu_ignore1}), 
(\ref{eqt:maxR_bias_se_mu_ignore2}) and (\ref{eqt:def_Wk})
are still valid. 
\begin{enumerate}
	\item Assume $h/\ell \rightarrow 0$. 
			(a) Similar to (\ref{eqt:W0}), we have $W_0 \sim 1 + \Omega_{\infty} \not\rightarrow 1$
			by Assumption~\ref{ass:d_K_orthogonalNot}.
			So, $\widehat{v}$ is inconsistent. 
			(b) By Assumption~\ref{ass:Kernel_q} and the assumptions made in part 1(b), 
			we have 
			\begin{align}
				|W_1 -1 |
					&= \left\vert K(1/\ell)-1  + \sum_{s=1}^{\overline{s}_k} \delta_s K\left( \frac{1+hs}{\ell} \right)
						+ \sum_{s=-\underline{s}_k}^{-1} \delta_s K\left( \frac{1+hs}{\ell} \right) \right\vert \nonumber \\
					&\geq O\left( \frac{1}{\ell^q} + \frac{\ell}{mh} \right), \label{eqt:diffW1and1}
			\end{align}
			which implies that $|\Bias(\widehat{v}^{(ZZ)})| \geq O\{1/{\ell^q} + {\ell}/({mh})\}$ 
			at least for some time series.
			Clearly, if $\ell/h \gtrsim m$, then $\Bias(\widehat{v}^{(ZZ)})\not\rightarrow 0$, 
			which means that $\widehat{v}^{(ZZ)}$ is not always $\mathcal{L}^2$ consistent. 
			Now, consider $\ell/h \ll m$. 
			By (\ref{eqt:maxR_bias_se_mu_ignore1}) and (\ref{eqt:diffW1and1}), 
			we know that 
			\begin{align}\label{eqt:Bias_infinite_m_regime1}
				\Bias_{\mu}(\widehat{v}) \gtrsim \frac{1}{\ell^q} + \frac{\ell}{mh} + \frac{\ell mh}{n}.
			\end{align}
			Minimizing the right-hand side of (\ref{eqt:Bias_infinite_m_regime1}) with respect to $\ell$ and $mh$, 
			we obtain $\Bias_{\mu}(\widehat{v}) \gtrsim n^{-q/(2+2q)} \gg n^{-q/(1+2q)}$.
			Hence, $\widehat{v}$ is rate suboptimal.
	\item
			Assume $h/\ell\rightarrow \lambda\in(0,1)$. 
			(a) The proof of Theorem~\ref{thm:consistency_finite} (2) also applies here. 		
			(b) Similar to part (1), we know that $|W_1 - 1|\geq O(1/\ell^q+1/m)$. 
			By (\ref{eqt:maxR_bias_se_mu_ignore1}),
			we have 
			\begin{align}\label{eqt:Bias_infinite_m_regime2}
				\Bias_{\mu}(\widehat{v}) \gtrsim \frac{1}{\ell^q} + \frac{1}{m} + \frac{\ell^2m }{n}.
			\end{align}
			Minimizing the right-hand side of (\ref{eqt:Bias_infinite_m_regime2}) with respect to $\ell$ and $m$, 
			we obtain $\Bias_{\mu}(\widehat{v}) \gtrsim n^{-q/(2+2q)} \gg n^{-q/(1+2q)}$.
			Hence, $\widehat{v}$ is rate suboptimal.
	\item Assume $h/\ell \rightarrow 1$. 	
			Note that, (\ref{eqt:finite_m_Wk}) is also true. 
			However, in this case, $\delta_1\asymp 1/m$ as $m\rightarrow \infty$ according to 
			Assumptions~\ref{ass:d_uncorrelatedDiff}.
			Using Assumptions~\ref{ass:Kernel_q} and \ref{ass:Kernel_q_tail}, 
			we have 
			\begin{align*} 
				W_k-1
					\sim \frac{B'''_k}{\min(\ell^q, m \ell^{q'})} , 
			\end{align*}
			where 
			\[
				B'''_k = B k^q \mathbb{1}_{(\ell^q\leq m\ell^{q'})}
					- m\delta_1 B' \left( k+\ell-h \right)^{q'}\mathbb{1}_{(\ell^q\geq m\ell^{q'})}.
			\]

			Similar to the proof of Theorem~\ref{thm:consistency_finite} (3), we 
			consider the following two cases.
			\begin{itemize}
				\item Case (i): $\sup_{C<k<\sqrt{\ell}} |B'''_k| \nrightarrow 0$.  
						By (\ref{eqt:maxR_bias_se_mu_ignore1}), (\ref{eqt:maxR_bias_se_mu_ignore2}) 
						and Lemma~\ref{lem:rough_var}, we have  
						\begin{align}\label{eqt:MSE_infinite_m_regime3}
							\MSE_{\mu}(\widehat{v}) 
								= O\left( \frac{1}{\min(\ell^{2q}, m^2 \ell^{2q'}) } + \frac{\ell}{n} \right) + 
										O\left( \frac{\ell^4m^2}{n^2} \right).
						\end{align}
						\begin{itemize}
							\item If $\ell^q \gg m \ell^{q'}$, then, using (\ref{eqt:MSE_infinite_m_regime3}), 
									we have 
									$\MSE_{\mu}(\widehat{v}) 
											\gg {1}/{ \ell^{2q} } + {\ell}/{n}$.
									Minimizing the lower bound with respect to $\ell$, we obtain 
									$\MSE_{\mu}(\widehat{v}) \gg n^{-2q/(1+2q)}$. 
							\item If $\ell^q \lesssim m \ell^{q'}$, then 
								the first term of (\ref{eqt:MSE_infinite_m_regime3}) is minimized to 
								$O(n^{-2q/(1+2q)})$ when $\ell \asymp n^{1/(1+2q)}$.
								Consider the second term in (\ref{eqt:MSE_infinite_m_regime3}) with 
								$\ell \asymp n^{1/(1+2q)}$, 
								we know that $\ell^4m^2/n^2 \lesssim n^{-2q/(1+2q)}$
								if only if $q>1$ and $1\ll m \lesssim n^{(q-1)/(1+2q)}$.
								Together with the constraint $\ell^q \lesssim m \ell^{q'}$,
								we also know that $m \gtrsim n^{(q-q')/(1+2q)}$.
								Otherwise, $\widehat{v}$ is rate suboptimal. 
						\end{itemize}
				\item Case (ii): $B=\delta_1B'$.  
						Similar to the proof of Theorem~\ref{thm:consistency_finite}(3), 
						unless we make additional assumption on $\{\gamma_k\}$, 
						the best possible MSE is $O(n^{-2q/(1+2q)})$ 
						when $\ell \asymp n^{1/(1+2q)}$, 
						$1\ll m \lesssim n^{(1-q)/(1+2q)}$ and $q>1$.
						Otherwise, $\widehat{v}$ is rate suboptimal. 
			\end{itemize}			
	\item Assume $h/\ell \rightarrow \lambda_{\infty} \in(1, \infty)$.
			Similar to the proofs of Theorem~\ref{thm:consistency_finite} (4) and 
			Theorem~\ref{thm:consistency_infinite} (3), We have 
			\[
				\MSE_{\mu}(\widehat{v}) 
					= O\left( \frac{1}{\ell^{2q}} + \frac{\ell}{n} \right) + 
							O\left( \frac{\ell^4m^2}{n^2} \right),
			\]
			whose order is minimized to $O(n^{-2q/(1+2q)})$ when $\ell \asymp n^{1/(1+2q)}$, 
			$1\ll m \lesssim n^{(q-1)/(1+2q)}$ and $q>1$.
			Otherwise, $\widehat{v}$ is rate suboptimal. 
	\item Assume $h/\ell \rightarrow \infty$.
			By (\ref{eqt:maxR_bias_se_mu_ignore1}), (\ref{eqt:maxR_bias_se_mu_ignore2}) 
			and Lemma~\ref{lem:rough_var}, we have  
			\begin{align}\label{eqt:MSE_infinite_m_regime5}
				\MSE_{\mu}(\widehat{v}) 
					= O\left( \frac{1}{\ell^{2q}} + \frac{\ell}{n} \right) + 
							O\left( \frac{\ell^2m^2h^2}{n^2} \right),
			\end{align}
			where the first term is minimized to 
			$O(n^{-2q/(1+2q)})$ when $\ell \asymp n^{1/(1+2q)}$.
			Consider the second term in (\ref{eqt:MSE_infinite_m_regime5}) with 
			$\ell \asymp n^{1/(1+2q)}$, 
			we know that $\ell^2 m^2 h^2/n^2 \lesssim n^{-2q/(1+2q)}$
			if $q>1$ and $n^{1/(1+2q)} \ll mh \lesssim n^{q/(1+2q)}$.
			Otherwise, $\widehat{v}$ is rate suboptimal. 
\end{enumerate}
It is easy to see that the upper bounds of $\MSE_{\mu}(\widehat{v})$ in cases 3(b), 4(b) and 5(b) are achievable by
setting 
$\mu_i = i/n + \mathbb{1}(i/n\leq 1/2)$ and 
$X_i = \theta \varepsilon_{i-1} + \varepsilon_{i}$ for $i=1, \ldots, n$, 
where $\varepsilon_i$ follow $\Normal(0,1)$ independently and $\theta>0$.

\subsection{Proof of Theorem~\ref{thm:bias}}\label{sec:proof_bias} 
Let $\widehat{v}_{\diff}^{(ZZ)} = \sum_{|k| \leq \ell+mh} K_{\diff}( k/\ell ) \widehat{\gamma}_{k}^Z$.
In view of Proposition~\ref{prop:effKernel}, we know 
\begin{align} 
	\left\vert \E(\widehat{v}-\widehat{v}_{\diff}^{(ZZ)}) \right\vert
		&\leq \| \widehat{v}-\widehat{v}_{\diff}^{(ZZ)} \| 
		\leq O\left( \frac{\ell}{n} \right) . \label{eqt:biasProof0}
\end{align}
Now, we consider the bias of $\widehat{v}^{(ZZ)}_{\diff}$. 
\begin{align}
	\E( \widehat{v}^{(ZZ)}_{\diff} ) - v
		&= \left\{ \sum_{|k| \leq \ell+mh} K_{\diff}\left(\frac{k}{\ell} \right) \frac{1}{n} \sum_{i=|k|+1}^n \E (Z_i Z_{i-|k|}) \right\}  - v \nonumber\\
		&= \left\{ \sum_{|k| \leq \ell+mh} K_{\diff}\left(\frac{k}{\ell} \right) \left( 1 - \frac{|k|}{n}  \right)\gamma_k  \right\} - \sum_{k\in\mathbb{Z}} \gamma_k \nonumber\\
		&= \left[\sum_{|k|\leq \ell} \left\{ K_{\diff}\left(\frac{k}{\ell} \right) - 1\right\} \gamma_k\right] - I_1 + I_2 ,\label{eqt:biasProof1}
\end{align}
where
\begin{align*}
	I_1 &= \sum_{|k| \leq \ell+mh} K_{\diff}\left(\frac{k}{\ell} \right)  \frac{|k|}{n} \gamma_k, \\
	I_2 &= \sum_{\ell<|k|\leq \ell+mh} K_{\diff}\left(\frac{k}{\ell} \right)  \gamma_k
			- \sum_{|k|>\ell} \gamma_k.
\end{align*}
Note that $u_q<\infty$ implies $u_1<\infty$; and 
$K_{\diff}(k/\ell) = O(1)$ for all $k$. 
So, we have $I_1 = O(1/n)$.
Using $u_q<\infty$, we also have
\begin{equation*}\label{eqt:biasProof2}
	\left\vert I_2 \right\vert
		\leq O(1)\sum_{|k|>\ell}\frac{|k|^q}{\ell^q} |\gamma_k| 
		= o(1/\ell^q) .
\end{equation*}			
Using (\ref{eqt:biasProof0}) and (\ref{eqt:biasProof1}), we have 
\begin{align}
	\Bias( \widehat{v}) 
		&= \E( \widehat{v}^{(ZZ)}_{\diff} ) - v + \E(\widehat{v}-\widehat{v}_{\diff}^{(ZZ)}) \nonumber \\
		&=  \sum_{|k|\leq \ell} \left\{ K_{\diff}\left(\frac{k}{\ell} \right) - 1\right\} \gamma_k
			+ o(1/\ell^q)
			+ O(\ell/n).
			\label{eqt:biasProof_result}
\end{align}

Next, we note two properties of $K(\cdot)$ due to Assumption~\ref{ass:Kernel_q}.
First, 
$K(k/\ell)-1 = B|k/\ell|^q + |k/\ell|^q e_k$ 
and $e_k \rightarrow 0$ as $\ell \rightarrow\infty$ for each $k=1, \ldots, \sqrt{\ell}$; and 
So, $\sup_{1\leq k\leq \sqrt{\ell}} |e_k| = o(1)$.
Second, there is a large enough $\overline{B}\in(0, \infty)$ such that 
$|K(t)-1| \leq \overline{B}|t|^q$ for each $t \in [-1,1]$.

Then, by Proposition~\ref{prop:effKernel} 
(\ref{item:KdiffProperty}) and (\ref{item:matching}) with $\lambda\geq 2$ and $0\leq m< \infty$, we have 
\begin{align} 
	&\sum_{|k|\leq \ell} \left\{ K_{\diff}\left(\frac{k}{\ell}\right) -1 \right\} \gamma_k \nonumber\\
	&\qquad= \sum_{|k|\leq \sqrt{\ell}} \left\{ K\left(\frac{k}{\ell}\right) -1 \right\} \gamma_k
			+ \sum_{\sqrt{\ell} < |k|\leq \ell} \left\{ K\left(\frac{k}{\ell}\right) -1 \right\} \gamma_k \nonumber\\
	&\qquad= 2 \sum_{1\leq k\leq \sqrt{\ell}} \left\{ B\left( \frac{k}{\ell}\right)^q + \left( \frac{k}{\ell}\right)^q e_k \right\} \gamma_k 
			+ 2\sum_{\sqrt{\ell} < k\leq \ell} \left\{ K\left(\frac{k}{\ell}\right) -1 \right\} \gamma_k \nonumber\\
	&\qquad= \frac{Bv_q}{\ell^q} + o(1/\ell^q)
		+ I_3 + I_4, \nonumber \\\label{eqt:Bias_proof_2}
\end{align}
where 
\[
	I_3 =  \frac{2}{\ell^q} \sum_{1\leq k\leq \sqrt{\ell}} e_k k^q\gamma_k 
	\qquad \text{and} \qquad
	I_4 = 2\sum_{\sqrt{\ell} < k\leq \ell} \left\{ K\left(\frac{k}{\ell}\right) -1 \right\} \gamma_k .
\]
Using $u_q<\infty$, we have 
\begin{align*} 
	|I_3|
	&\leq \frac{2}{\ell^q} \left(\sup_{1\leq k\leq \sqrt{\ell}} |e_k| \right) \sum_{1\leq k\leq \sqrt{\ell}}  k^q|\gamma_k| 
	= \frac{1}{\ell^q} o(1) u_q
	= o(1/\ell^q) , \\ 
	|I_4|
	&\leq 2 \sum_{\sqrt{\ell} < k\leq \ell} \left\vert K\left(\frac{k}{\ell}\right) -1 \right\vert |\gamma_k| 
	\leq 2 \sum_{\sqrt{\ell} < k\leq \ell} \overline{B}\left\vert \frac{k}{\ell} \right\vert^q |\gamma_k| 
	\leq o(1/\ell^q). 
\end{align*}
Thus, $\Bias(\widehat{v}) = {B v_q}/{\ell^q} + o(1/\ell^q) + O(\ell/n)$ 
in view of (\ref{eqt:biasProof_result}).

\subsection{Proof of Theorem~\ref{thm:var}}\label{sec:proof_var} 
By Lemma~\ref{lem:blocking_vHatp}, we have 
\begin{align}\label{eqt:VarvHat_proof}
	\| \widehat{v}' -\E(\widehat{v}') \|^2
		= \Var(\widehat{v}') 
		= \frac{4\ell v^2}{n}\left\{ \int_{0}^{1} K^2(t)\, \dd t  \right\} 
			\Delta_m + o(\ell/n).
\end{align}
Similar to (\ref{eqt:biasProof0}), we have, by Proposition~\ref{prop:Ktype_Stype_equiv}, that 
\begin{align*}
	\| \widehat{v} -\E(\widehat{v})\| 
		&\leq \| \widehat{v}' -\E(\widehat{v}')\| 
				+ \| \widehat{v} - \widehat{v}'\| 
				+ |\E\widehat{v} - \E\widehat{v}'| \\
		&\leq \| \widehat{v}' -\E(\widehat{v}') \| + 2\| \widehat{v} - \widehat{v}'\| \\
		&\leq \| \widehat{v}' -\E(\widehat{v}') \| + O\left( \frac{\ell}{n} \right).  
\end{align*}
Hence, using (\ref{eqt:VarvHat_proof}) and $\ell/n=o(1)$, we have
\begin{align*}
	\Var(\widehat{v})
		&= \| \widehat{v} -\E(\widehat{v})\|^2 \\
		&= \frac{4\ell v^2}{n}\left\{ \int_{0}^{1} K^2(t)\, \dd t  \right\} 
			\Delta_m
			+ o(\ell/n).
\end{align*}

\subsection{Proof of Corollary~\ref{coro:optim}}\label{sec:proof_optim} 
For (\ref{coro:optim-1}), by Theorem~\ref{thm:bias} (\ref{thm:bias-2}) and Theorem~\ref{thm:var}, we have 
\begin{align*}
	\MSE(\widehat{v}) 
		&\sim  \frac{B^2 v_q^2}{\ell^{2q}} + \frac{4A\ell v^2 \Delta_m}{n}.
\end{align*}
The MSE is optimal when the above two terms are of the same order, 
i.e., $O(1/\ell^{2q}) = O(\ell/n)$.
It gives $\ell \asymp n^{1/(1+2q)}$.
Let $\ell = \phi n^{1/(1+2q)}$ for some $\phi\in\mathbb{R}^+$.
Then the MSE satisfies  
\begin{align}\label{eqt:limitingMSE_for_optimize}
	n^{2q/(1+2q)}\MSE(\widehat{v}) /v^2
		\rightarrow \left( B^2 \eta_q^2 \right) \phi^{-2q} + \left( 4A \Delta_m \right) \phi,
\end{align}
where $\eta_q = |v_q/v|$.
Minimizing the limit in (\ref{eqt:limitingMSE_for_optimize}) with respect to $\phi$, 
we obtain $\phi = \left\{ (q \eta_q^2 B^2)/(2A\Delta_m) \right\}^{1/(1+2q)}.$
Thus, the result follows. 

For (2), we put the optimal $\ell^{\star}$ into the estimator. The resulting MSE is 
\[
	n^{2q/(1+2q)} \MSE(\widehat{v}) 
		\rightarrow (1+2q) \left\{ \left( \frac{2A\Delta_m}{q} \right)^q |B| \eta_q^2 \right\}^{2/(1+2q)},
\]
which is monotonically increasing with $\Delta_m$. 
Note that 
(i) $\Delta_m = 1+ 2\sum_{s=1}^{m} \delta_s^2$ 
because $\delta_0 = 1$ and $\delta_s = \delta_{-s}$ for all $s$;  and 
(ii) $\sum_{s=1}^m \delta_s = -1/2$ because 
$\sum_{|s|\leq m} \delta_s = 0$ (see Problem 7.3 of \citet{brockwellDavis1991}).
So, it suffices to minimize 
\[
	\sum_{s=1}^{m} \delta_s^2 = \sum_{s=1}^{m-1}\delta_{s}^2 + \left( -\frac{1}{2} - \sum_{s=1}^{m-1}\delta_s\right)^2.
\] 
Using calculus, we know that the minimum occurs when $\delta_1 = \ldots = \delta_{m} = -1/(2m)$. 
The solution for $d_0, \ldots, d_m$ exist because 
we could solve for $d_0, \ldots, d_m$ by using the innovation algorithm
(Definition 8.3.1 of \citet{brockwellDavis1991}).
It completes the proof.

\subsection{Proof of Corollary~\ref{coro:vpHat}}\label{sec:proof_vpHat}
Let $H:[-1,1]\rightarrow\mathbb{R}$ be any function. 
Denote 
\[
	\widehat{v}(H) = \sum_{|k| < \ell} H({k}/{\ell}) \widehat{\gamma}_k^D.
\] 
For any fixed kernel $K$, define $K_p(t) = |t|^p K(t)$.
Then 
\[
	\widehat{v} = \widehat{v}(K)
	\qquad \text{and} \qquad
	\widehat{v}_p/\ell^p = \widehat{v}(K_p).
\] 
Hence, the properties of $\widehat{v}_p/\ell^p$ and $\widehat{v}$ can be derived similarly 
with some trivial modifications. 
We detail the steps below.

\begin{itemize}
	\item (Preliminary results) 
			Using similar techniques in (\ref{eqt:norm_vHatDiff}), 
			we know that $\|\widehat{v}_p\| = O(1)$.
			Similar to Propositions~\ref{prop:Ktype_Stype_equiv} and \ref{prop:effKernel}, we have 
			\begin{align}
				\|\widehat{v}_p - \widehat{v}'_p\| 
				&= O\left( \frac{\ell^{1+p}}{n} \right),  
				\label{eqt:vHatp_approx1}\\
				\|\widehat{v}_p - \widehat{v}_{p,\diff}\|_2 
				&= O\left( \frac{\ell^{1+p}}{n} \right).  
				\label{eqt:vHatp_approx2}
			\end{align}
			where $\widehat{v}'_p$ and $\widehat{v}_{p,\diff}$
			are the subsampling form and differencing-kernel form of $\widehat{v}_p$, respectively, i.e., 
			\begin{align*} 
				\widehat{v}'_p
				&= \frac{1}{|\mathcal{I}|} \sum_{i\in\mathcal{I}} 
					\sum_{t,t'\in\Lambda_i} \frac{K(|t-t'|/\ell)|t-t|^p}{\ell - |t-t'|}
								D_t D_{t'},\\
				\widehat{v}_{p,\diff} 
				&= \sum_{|k| \leq \ell+mh} K_{\diff}\left( \frac{k}{\ell}\right) |k|^p \widehat{\gamma}_{k}^X.
			\end{align*}
	\item (Bias) We follow the proof of Theorem~\ref{thm:bias}. Similar to (\ref{eqt:biasProof_result}), 
			we have 
			\begin{align}
				\Bias( \widehat{v}_p) 
					&=  \sum_{|k|\leq \ell} \left\{ K_{\diff}\left(\frac{k}{\ell} \right) - 1\right\} |k|^p\gamma_k 	
					+ o(1/\ell^q)
						+ O(\ell^{1+p}/n). 
						\label{eqr:biasProof_result-general}
			\end{align}
			Similar to (\ref{eqt:Bias_proof_2}), we have 
			\begin{align*}
				\sum_{|k|\leq \ell} \left\{ K_{\diff}\left(\frac{k}{\ell} \right) - 1\right\} |k|^p\gamma_k 
				= \frac{B  v_{p+q}}{\ell^q} + o(1/\ell^q) .
			\end{align*}			 
			Thus, $\Bias( \widehat{v}_p) = B v_{p+q}/\ell^q + o(1/\ell^q) + O(\ell^{1+p}/n)$.  
	\item (Variance) By (\ref{eqt:VarvHat_proof}) with kernel $K_p$, we have  
			\begin{align*}
				\Var(\widehat{v}_p/\ell^p)
					&= \frac{4\ell v^2}{n}\left\{ \int_{0}^{1} |t|^{2p}K^2(t)\, \dd u  \right\} 
						\Delta_m
						+ o(\ell/n).
			\end{align*}
			Multiplying both side by $\ell^{2p}$, we obtain the $\Var(\widehat{v}_p)$.
			Adding the squared bias and the variance, we have the desired expression of the MSE. 
\end{itemize}

\subsection{Proof of Corollary \ref{coro:robustness_general}}\label{sec:proof_robustness_general}
The proof is similar to Section \ref{sec:proof_robustness}.
For a general $p$,
we can derive the the robustness properties of $\widehat{v}(K_p)$. 
Next,
we recall the representation $\widehat{v}_p = \ell^p \times \widehat{v}(K_p)$, 
where $K_p(t) = |t|^p K(t)$. 
Hence, the desired result follows straightforwardly.

\subsection{Proof of Corollary~\ref{coro:prewhite}}\label{sec:proof_prewhite}
According to (\ref{eqt:removeJump})--(\ref{eqt:removeTrend}), 
we denote $X_i^{\ddag} = X_i - \mu^{\ddag}_i$, 
where $\mu^{\ddag}_i = s^{\ddag}_i + c^{\ddag}_i$,
\begin{eqnarray}
	s^{\ddag}_i 
		&=& \sum_{\substack{j\in\{0,\ldots, N\} : t_j\leq i}} \left[X_{t_j}-X_{t_j-1}\right]_{-M}^M, \label{eqt:def_s_ddag}\\
	c^{\ddag}_i
		&=&\sum_{j=0}^{N} 
			\left\{ \widehat{\beta}_{j,0} + \widehat{\beta}_{j,1} (i-t_j) \right\}
			\mathbb{1}(t_{j}\leq i \leq t_{j+1}-1). \nonumber  
\end{eqnarray}
Also recall that $X_i = \mu_i + Z_i$. So, $X_i = Z_i + \eta_i$, 
where $\eta_i = \mu_i - c^{\ddag}_i - s^{\ddag}_i$.
Define $D_t^{\eta},D_t^{c^{\ddag}},D_t^{s^{\ddag}}$ as in (\ref{eqt:DDmuZ})
with $\{\mu_i\}$ being replaced by $\{\eta_i\}$, $\{c^{\ddag}_i\}$, $\{s^{\ddag}_i\}$, 
respectively. 
Similar to (\ref{eqt:vhatp_ab}), denote 
\[
	\widehat{v}^{(\alpha \beta)}_p = \frac{1}{n}\sum_{|k| < \ell} |k|^p K\left(\frac{k}{\ell}\right) 
				\sum_{i=mh+|k|+1}^n
					D_i^{(\alpha)} D_{i-|k|}^{(\beta)} ,
\]
for $\alpha, \beta \in\{ \mu, Z, \eta, c^{\ddag}, s^{\ddag}\}$.
By Minkowski inequality, 
\begin{align*}
	\|\widehat{v}_p^{\ddag} - v_p\| 
		&\leq \|\widehat{v}_p - v_p\| + 
			2\| \widehat{v}^{(Z\eta)}_p\| + 
			\|\widehat{v}^{(\eta\eta)}_p\| \\
		&\leq \|\widehat{v}_p - v_p\| + 
			2\left( \| \widehat{v}^{(Z\mu)}_p\| + \| \widehat{v}^{(Zc^{\ddag})}_p\| + \| \widehat{v}^{(Zs^{\ddag})}_p\| \right)
			+ \|\widehat{v}^{(\eta\eta)}_p\| .
\end{align*}
We find each of the above terms one by one. 
\begin{itemize}
\item (Bound for $ \| \widehat{v}^{(Z\mu)}_p  \|$) A trivial modification of Lemma~\ref{lem:Var_vHatp_muZ} implies that 
\[
	\left\| \widehat{v}^{(Z\mu)}_p \right\| 
		= O\left\{ \frac{\ell^{1+p}}{\sqrt{n}} \left( \frac{\mathcal{C}\ell}{n} + \mathcal{S} \left(\frac{\ell\mathcal{J}}{n}\right)^{1/2} \right)\right\}
		= o\{n^{-q/(1+2p+2q)}\}.
\]
\item (Bound for $\widehat{v}^{(Zs^{\ddag})}_p$) 
By Minkowski's inequality and finiteness of $M'$, we have 
\begin{align*}
	\|M\|_4^2 
		&= M' \left\| \frac{1}{2n} \sum_{i=2}^n (X_i-X_{i-1})^2 \right\|_2 \\
		&\leq O\left( \frac{1}{n} \right) \sum_{i=2}^n \left\|  (X_i-X_{i-1})^2 \right\|_2 
		= O\left( \frac{1}{n} \right) \sum_{i=2}^n \left\|  X_i-X_{i-1} \right\|_4^2 \\
		&\leq O\left( \frac{1}{n} \right) \sum_{i=2}^n 
				\left( 
					\left\|  X_i-\mu_i \right\|_4 
					+ \left\| X_{i-1}-\mu_{i-1} \right\|_4 
					+ \left\vert \mu_i-\mu_{i-1} \right\vert_4 
				\right)^2.
\end{align*}
By assumption, $\| X_i-\mu_i \|_4 = \| X_{i-1} - \mu_{i-1} \|_4 = \{\E(Z_1^4)\}^{1/4} <\infty$ and 
$|\mu_i - \mu_{i-1}| \leq O(\mathcal{S}+\mathcal{C}/n)$. 
Thus, 
\begin{align}\label{eqt:Mbound4}
	\|M\|_4=O(1+\mathcal{S}+\mathcal{C}/n).
\end{align}
Using (\ref{eqt:def_s_ddag}), Minkowski's inequality and $0\leq N\leq N'<\infty$, we have
\begin{align*}
	\left\|\max_{i=1, \ldots, n} s_i^{\ddag} \right\|_4
		&= \left\|
				\sum_{\substack{j\in\{0,\ldots, N\} : t_j\leq i}} 
				\left\{ \left[Z_{t_j}-Z_{t_j-1}\right]_{-M}^M+
				\left(\mu_{t_j}-\mu_{t_j-1}\right) \right\}
			\right\|_4 \\
		&\leq 2(N+1) \left\|M \right\|_4 + (N+1) O(\mathcal{S}+\mathcal{C}/n) \\
		&\leq O(1+\mathcal{S}+\mathcal{C}/n).
\end{align*}
Thus, $\sup_{i} \| D_i^{(s^{\ddag})}\|_4 = O(1+\mathcal{S}+\mathcal{C}/n)$ as $m<\infty$.
By assumption, we clearly have $\sup_i\|D_i^{(Z)}\|_2<\infty$. 
Since at most $O(\ell)$ elements in $\{D_i^{(s^{\ddag})}\}_{i=mh+1}^n$ are non-zero, 
we have, by H\"older's inequality, that 
\begin{align}\label{eqt:vHatZsDagg}
	\|\widehat{v}^{(Zs^{\ddag})}_p\| 
		&\leq O\left\{ \frac{\ell^{2+p}}{n} \right\}
			\left\{ \sup_{i=1, \ldots,n}\|D_i^{(Z)}\|_4 \right\}
			\left\{ \sup_{i=1, \ldots,n} \| D_i^{(s^{\ddag})}\|_4 \right\} \nonumber\\
		&= O\left\{ \frac{\ell^{2+p} ( 1+ \mathcal{S}+\mathcal{C}/n)}{n} \right\}
			\nonumber\\
		&= o\{n^{-q/(1+2p+2q)}\},
\end{align}
where the last line is followed from $\mu\in\mathcal{M}_{p,q}$ with $q>1$.
\item (Bound for $\|\widehat{v}^{(Zc^{\ddag})}_p\|$) For each $j=0, \ldots, N$, 
let $n_j = t_{j+1}-t_{j}$, 
$\overline{t}_j = \sum_{i=t_j}^{t_{j+1}-1}i/n_j$, and 
$\overline{X}_j^{\dag} = \sum_{i=t_j}^{t_{j+1}-1}X_i^{\dag}/n_j$.
By Cauchy–Schwarz inequality, we have, for each $j$, that 
\begin{align*} 
	|\widehat{\beta}_{j,1}|
		&= \left\vert \frac{\sum_{i=t_j}^{t_{j+1}-1}(i-\overline{t}_j)(X_{i}^{\dag}-\overline{X}^{\dag}_j)}{\sum_{i=t_j}^{t_{j+1}-1}(i-\overline{t}_j)^2} \right\vert 
		\leq \left\{  \frac{\sum_{i=t_j}^{t_{j+1}-1}(X_{i}^{\dag}-\overline{X}^{\dag}_j)^2}{\sum_{i=t_j}^{t_{j+1}-1}(i-\overline{t}_j)^2} \right\}^{1/2} \\
		&\leq \left[  \frac{2\sum_{i=t_j}^{t_{j+1}-1}\left\{(Z_{i}-\overline{Z}_j)^2+(\mu_{i}-\overline{\mu}_j)^2+(s^{\ddag}_{i}-\overline{s}^{\ddag}_j)^2\right\}}{\sum_{i=t_j}^{t_{j+1}-1}(i-\overline{t}_j)^2} \right]^{1/2} \\
		&\leq \left[\frac{n_j \left\{ O_{\mathcal{L}^4}(1)+ O(\mathcal{C} + \mathcal{J}\mathcal{S} + M)^2\right\}}{O(n_j^3)} \right]^{1/2} \\
		&= \frac{O_{\mathcal{L}^4}(1) + O(\mathcal{C} + \mathcal{J}\mathcal{S} + M)}{n_j},
\end{align*}
where $R=O_{\mathcal{L}^4}(1)$ means that $\|R\|_4=O(1)$. 
So, together with (\ref{eqt:Mbound4}) and $0\leq N\leq N'<\infty$, we have 
\begin{align*}
	\|\widehat{v}^{(Zc^{\ddag})}_p\| 
		&\leq O\left( \frac{\ell^{1+p}}{n}\right) \left\| \sum_{i=1}^n D_i^{(Z)}D_i^{(c^{\ddag})} \right\| \\
		&\leq O\left( \frac{\ell^{1+p}}{n}\right) \left\| \sum_{j=0}^N \sum_{i=t_j}^{t_{j+1}-1} D_i^{(Z)} O(\ell)\widehat{\beta}_{j,1} \right\| \\
		&= O\left[ \frac{\ell^{2+p} \{1+\mathcal{C}+(\mathcal{J}+1)\mathcal{S}\}}{n} \right].
\end{align*}
Then, using $\mu\in\mathcal{M}_{p,q}$ with $q>1$, we obtain 
\begin{align}\label{eqt:vHatZcDagg}
	\|\widehat{v}^{(Zc^{\ddag})}_p\| 
		= o\{n^{-q/(1+2p+2q)}\}.
\end{align}

\item (Bound for $\|\widehat{v}^{(\eta\eta)}_p\|$) Similar to (\ref{eqt:vHatmumu12}), 
\begin{align*}
	\|\widehat{v}^{(\eta\eta)}_p\|
		&\leq \frac{6}{n}\sum_{|k| < \ell} |k|^p \left\vert K\left(\frac{k}{\ell}\right)\right\vert 
				\\&\qquad 
				\times\sum_{i=mh+1}^n
					 \left(\left\vert D_i^{(c)} \right\vert^2
					+ \left\vert D_i^{(s)}\right\vert^2
					+ \left\| D_i^{(c^{\ddag})} \right\|_4^2
					+ \left\| D_i^{(s^{\ddag})} \right\|_4^2\right).
\end{align*}
Using similar arguments as in (\ref{eqt:vHatmumu1}) and (\ref{eqt:vHatmumu2}),
we have 
\begin{align*}
	\frac{6}{n}\sum_{|k| < \ell} |k|^p \left\vert K\left(\frac{k}{\ell}\right)\right\vert 
		\sum_{i=mh+1}^n \left\vert D_i^{(c)} \right\vert^2
		&\leq O\left( \frac{\ell^{3+p}\mathcal{C}^2}{n^2}  \right)	, \\ 
	\frac{6}{n}\sum_{|k| < \ell} |k|^p \left\vert K\left(\frac{k}{\ell}\right)\right\vert 
		\sum_{i=mh+1}^n \left\vert D_i^{(s)} \right\vert^2
		&\leq O\left( \frac{\ell^{2+p}\mathcal{S}^2\mathcal{J}}{n}  \right)	.  
\end{align*}
Similarly,  
we also have
\begin{align*}
	\frac{6}{n}\sum_{|k| < \ell} |k|^p \left\vert K\left(\frac{k}{\ell}\right)\right\vert 
		\sum_{i=mh+1}^n \left\| D_i^{(c^{\ddag})}\right\|_4^2 
		&=O\left[ \frac{\ell^{3+p}\{1+\mathcal{C}^2+(\mathcal{J}+1)^2\mathcal{S}^2 \}}{n^2}  \right]	, \\
	\frac{6}{n}\sum_{|k| < \ell} |k|^p \left\vert K\left(\frac{k}{\ell}\right)\right\vert 
		\sum_{i=mh+1}^n \left\| D_i^{(s^{\ddag})} \right\|_4^2 
		&= O\left( \frac{\ell^{2+p}(1+\mathcal{S}^2 + \mathcal{C}^2/n)}{n}  \right)	.
\end{align*}
Using $\mu\in\mathcal{M}_{p,q}$ with $q>1$, we also have
$
	\|\widehat{v}^{(\eta\eta)}_p\|
		= o\{n^{-q/(1+2p+2q)}\}.
$
	\item (Bound for $\|\widehat{v}_p - v_p\|$) Since $\|\widehat{v}_p - v_p\| = O\{n^{-q/(1+2p+2q)}\}$ from Corollary~\ref{coro:vpHat}, 
the desired result follows. 

\end{itemize}

\subsection{Proof of Corollary~\ref{coro:multi}}\label{sec:proof_multi}  
The proof is similar to the proofs of Corollary~\ref{coro:vpHat}--\ref{coro:robustness_general} 
after trivial modifications. We detail the steps below. 
\begin{itemize}
	\item (Preliminary results) 
			The asymptotic relations (\ref{eqt:vHatp_approx1}) and (\ref{eqt:vHatp_approx2})
			are true in the multivariate setting. For example, (\ref{eqt:vHatp_approx1}) becomes 
			$\|\widehat{v}_p^{[r,s]} - \widehat{v}'^{[r,s]}_p\| 
				= O( {\ell^{1+p}}/{n} )( 1+{\|\widehat{v}_p^{[r,s]}\|}/{\ell^p} )$, 
			for each $r,s$, 
			where
			\begin{align*} 
				\widehat{v}'^{[r,s]}_p
				= \frac{1}{|\mathcal{I}|} \sum_{i\in\mathcal{I}} 
					\sum_{t,t'\in\Lambda_i} \frac{K(|t-t'|/\ell)|t-t|^p}{\ell - |t-t'|}
								D_t^{[r]} D_{t'}'^{[s]}.
			\end{align*}
			The proof follows by changing $D_t D_{t'}'$ and $\gamma_k$
			to $D_t^{[r]} D_{t'}'^{[s]}$ and $\gamma_k^{[r,s]}$, respectively. 

	\item (Bias) The bound (\ref{eqr:biasProof_result-general}) remains true after changing 
			$\widehat{v}_p$, $\gamma_k$ and $v$ to 
			$\widehat{v}_p^{[r,s]}$, $\gamma_k^{[r,s]}$ and $v^{[r,s]}$, respectively. 
			The proof is the same as the proof of Corollary~\ref{coro:vpHat}
			with $\gamma_k$ and $\widehat{\gamma}_k$
			being replaced by $\gamma_k^{[r,s]}$ and $\widehat{\gamma}_k^{[r,s]}$, respectively.
			We obtain 
			\[
				\Bias_0(\widehat{v}_p^{[r,s]}; v_p^{[r,s]})
					= \frac{B v_{p+q}^{[r,s]}}{\ell^q} + r_{p,\bias}^{[r,s]} ,
			\]
			where $r_{p,\bias}^{[r,s]} = o(1/\ell^q)+ o(\|\widehat{v}_p^{[r,s]}-v_p^{[r,s]}\|)+ O(\ell^{1+p}/n)$.
	\item (Variance)
			The variance term is a bit more complicated. 
			All results are extended trivially to the multivariate case except for Lemma~\ref{lem:momentsA}.
			In the multivariate case, (\ref{eqt:def_I_Ihat_proofLemma1}) and (\ref{eqt:def_I_Ihat_proofLemma2}) in Lemma~\ref{lem:momentsA}
			are, respectively,  updated to 
			\begin{eqnarray*}
				\widehat{I}^{[r,s]} &=& \frac{1}{n} \sum_{i=A}^{B} \sum_{i'=i-C}^{i+D} f(i/n, i'/n) X_i^{[r]} X_{i'}^{[s]}, \\
				I^{[r,s]} &=& \sqrt{v^{[r,r]} v^{[s,s]}}\int_{a}^b \int_{t-c}^{t+d} f(t,t') \,\dd \mathbb{B}_{t'}^{[r]} \,\dd \mathbb{B}_t^{[s]}, 
			\end{eqnarray*}
			where $\{\mathbb{B}_t = (\mathbb{B}_t^{[1]}, \ldots, \mathbb{B}_t^{[S]})^{\intercal} : t\in[a-c, b+d]\}$ is a $S$-dimensional Brownian motion such that $\Var(\mathbb{B}_t) = \rho t$, and 
			$\rho$ is a $S\times S$ matrix whose $(r,s)$th element is 
			$\rho^{[r,s]}=v^{[r,s]}/\sqrt{v^{[r,r]}v^{[s,s]}}$
			for $r,s\in\{1, \ldots, S\}$.
			For each $r,s$, 
			the results in that lemma are generalized as follows.
			\begin{enumerate}
				\item The weak convergence (\ref{eqt:Lemma_I_IHat_weakConv}) becomes 
						$\widehat{I}^{[r,s]}\inD I^{[r,s]}$.
						It follows from the Cram\'{e}r–Wold theorem. 
				\item We change $I_0$, $I_+$, $I_-$, $v$, and $\dd \mathbb{B}_t \,\dd \mathbb{B}_{t'}$ 
						to
						$I_0^{[r,s]}$, $I_+^{[r,s]}$, $I_-^{[r,s]}$, $\sqrt{v^{[r,r]}v^{[s,s]}}$ and 
						$\dd \mathbb{B}_t^{[r]}\,\dd \mathbb{B}_{t'}^{[s]}$, 
						respectively. 
						So, the generalized decomposition is 
					\begin{eqnarray*}
						I_0^{[r,s]} &=& \sqrt{v^{[r,r]}v^{[s,s]}}\rho^{[r,s]}\int_{a}^b f(t,t) \, \dd t, \\
						I_-^{[r,s]} &=& \sqrt{v^{[r,r]}v^{[s,s]}}\int_a^b \int_{t-c}^{t} f(t,t') \,\dd \mathbb{B}_{t'}^{[r]} \,\dd \mathbb{B}_t^{[s]}, \\
						I_+^{[r,s]} &=& \sqrt{v^{[r,r]}v^{[s,s]}}\int_a^{b+d} \int_{\max(a,t'-d)}^{\min(b,t')} f(t,t') \,\dd \mathbb{B}_{t}^{[r]} \,\dd \mathbb{B}_{t'}^{[s]} .
					\end{eqnarray*}
						In $I_0^{[r,s]}$, the extra factor $\rho^{[r,s]}$ appears when we apply It\^{o} isometry:
						$\dd \mathbb{B}_{t}^{[r]} \,\dd \mathbb{B}_{t}^{[s]} = \rho^{[r,s]} \,\dd t$. 
				\item The first moment is trivially extended to $\E(I^{[r,s]}) = I_0^{[r,s]}$. 
						However, the multivariate extension of the second moment 
						(\ref{eqt:VarI_lemma}) is non-trivial. The generalized result is 
					\begin{align*}
						\Var(I^{[r,s]}) 
							&= \sqrt{v^{[r,r]}v^{[s,s]}} \bigg\{
									\int_a^b \int_{t-c}^{t+d} f^2(t,t') \, \dd t' \, \dd t\nonumber \\
							&\qquad+ \rho^{[r,s]} 
								\int_a^b \int_{\max\{a,t-\min(c,d)\}}^{\min\{b,t+\min(c,d)\}} f(t,t') f(t',t) \, \dd t' \, \dd t \bigg\}.
					\end{align*}
					The proof is similar to the proof of Lemma~\ref{lem:momentsA} 
					except the following two points.
					First, when we apply It\^{o} isometry on the squared terms 
					$\E(I^{[r,s]}_+)^2$ and $\E(I^{[r,s]}_-)^2$, 
					we use $\dd\mathbb{B}_t^{[r]}\, \dd\mathbb{B}_t^{[r]} = \dd t$
					and $\dd\mathbb{B}_t^{[s]}\, \dd\mathbb{B}_t^{[s]} = \dd t$. 
					Second, when we apply It\^{o} isometry on the cross term $\E(I^{[r,s]}_+I^{[r,s]}_-)$, 
					we use $\dd \mathbb{B}_{t}^{[r]} \,\dd \mathbb{B}_{t}^{[s]} = \rho^{[r,s]} \,\dd t$. 
			\end{enumerate}
			After that we applied the generalized Lemma~\ref{lem:momentsA}
			to obtain  
			the generalized (\ref{eqt:varLemmaStatement}):
\begin{align}\label{eqt:varLemmaStatementGeneral}
	\Var_0(\widehat{v}'^{[r,s]}) \sim \frac{4w^{[r,s]}\ell}{n} \left\{ \int_0^1 K^2(u)\, \dd u \right\} \left\{ \sum_{|s|\leq m} \delta_s^2 \right\}.
\end{align}
			The proof of (\ref{eqt:varLemmaStatementGeneral}) follows by doing the following  
			trivial modifications.
			For simplicity, we denote $R_n = O_{\mathcal{L}^2}(r_n)$ if $\|R_n\|_2 =O(r_n)$.
			Using this notation, 
			we have $\widehat{v}'^{[r,s]} = \widehat{V}^{[r,s]} + O_{\mathcal{L}^2}(\ell^2/n)$, 
			where 
\begin{align*}\label{eqt:hatV_form2_Gen}
	\widehat{V}^{[r,s]}	= \frac{\ell}{n} \sum_{b=b_0}^{b_1} \widehat{I}_b^{[r,s]} 
	\quad \text{and}\quad
	\widehat{I}_b^{[r,s]} = 
					\sum_{i = (b-1)\ell+1}^{b\ell} 
					\sum_{i'=i-\ell}^{i+\ell} K\left( \frac{i-i'}{\ell}\right) \frac{D_i^{[r]} D_{i'}^{[s]}}{\ell}.
\end{align*}
			After some straightforward generalization, 
			(\ref{eqt:varLemma_inter1}) becomes 
\[
	\Var\left( \sum_{b=1}^{b'} \widehat{I}_b^{[r,s]}  \right) 
		\rightarrow b' C_1^{[r,s]} 
			+ C_0^{[r,s]}, 
\]
where $C_0^{[r,s]}$ is a constant that does not depends on $b'$, and 
\[
	C_1^{[r,s]} = 4 w^{[r,s]}\left\{ \int_{0}^{1} K^2(u)\, \dd u  \right\} 
			\left\{ \sum_{|k|\leq m} \delta_{|k|}^2 \right\}.
\]
		Following the remaining steps in Lemma~\ref{lem:blocking_vHatp}, 
		we obtain $\Var(\widehat{V}^{[r,s]}) \sim C_1^{[r,s]}\ell/n$. 
		Then we have the desired (\ref{eqt:varLemmaStatementGeneral}). 
		Finally, following the remaining steps in Corollary~\ref{coro:vpHat}, we obtain 
		\[	 
			\Var_0(\widehat{v}_p^{[r,s]})
			=  \frac{4A_p\ell^{1+2p} w^{[r,s]} \Delta_m}{n} + \left(r_{p,\se}^{[r,s]}\right)^2, 
		\]
		where 
		$\left(r_{p,\se}^{[r,s]}\right)^2 = o(\ell^{1+2p}/n)+ o(\|\widehat{v}_p^{[r,s]}-v_p^{[r,s]}\|^2)$.
	\item (Robustness) All results follow after trivial multivariate generalization. 
\end{itemize}

\section{Auxiliary results}\label{sec:auxillary}
\subsection{Lemma~\ref{lem:rough_var}}\label{sec:proof_rough_var}
\begin{lemma}\label{lem:rough_var}
Let $X_{1:n}$ be a stationary time series satisfying Assumption~\ref{ass:weakDep}.
Suppose that 
$1/\ell + (\ell+mh)/n \rightarrow 0$, and 
Assumption \ref{ass:summability_dj} holds.
(a) Then $\Var(\widehat{v}_{\diff}) \lesssim (\ell+mh)/n$.
(b) Suppose, in addition, one of the following conditions is satisfied:
\begin{enumerate}
	\item $m<\infty$ and $h/\ell\rightarrow\lambda_{\infty}\in[1, \infty]$;
	\item $m\rightarrow \infty$ and $h/\ell\rightarrow\infty$;
	\item $m\rightarrow \infty$, $h/\ell \rightarrow \lambda_{\infty}\in[1, \infty)$, and Assumption \ref{ass:d_uncorrelatedDiff} holds.
\end{enumerate}
$\Var(\widehat{v}_{\diff}) \lesssim \ell/n$,
where the upper bound is achievable in all three cases.
\end{lemma}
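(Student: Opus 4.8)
The plan is to work directly with the differencing-kernel representation $\widehat{v}_{\diff} = \sum_{|k|\le \ell+mh} K_{\diff}(k/\ell)\,\widehat{\gamma}_k^X$ and to expand the variance as a quadruple sum over autocovariances of $\{Z_i\}$. First I would reduce to a quadratic form: writing $\widehat{v}_{\diff} = n^{-1}\sum_{i,i'} W_{i,i'} (X_i-\bar X_n)(X_{i'}-\bar X_n)$ with $W_{i,i'} = K_{\diff}((i-i')/\ell)$ supported on $|i-i'|\le \ell+mh$, the centering by $\bar X_n$ contributes only $O(1/n)$ corrections (standard, via $\Theta_2<\infty$), so it suffices to bound $\Var\big(n^{-1}\sum_{i,i'} W_{i,i'} Z_i Z_{i'}\big)$. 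For part (a), I would invoke the fourth-order cumulant/physical-dependence machinery of \citet{wu2011} (Assumption~\ref{ass:weakDep} gives $\Theta_4<\infty$ and $\E Z_1^r<\infty$, $r>4$): the variance of such a quadratic form is bounded by a constant times $n^{-2}$ times the number of ``active'' index quadruples, which here is $O\big(n(\ell+mh)\big)$ after using absolute summability of $\{\gamma_k\}$ (i.e. $u_0<\infty$) and Assumption~\ref{ass:summability_dj}(ii) to control $\sum_{|s|\le m}|\delta_s|$. This yields $\Var(\widehat{v}_{\diff}) \lesssim (\ell+mh)/n$, which is the first claim.

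For part (b) the improvement from $(\ell+mh)/n$ to $\ell/n$ must come from the \emph{cancellation} built into the differencing kernel. The key observation is Proposition~\ref{prop:effKernel}: $K_{\diff}$ is a sum of shifted copies of $K$ weighted by $\{\delta_{|s|}\}$, and in each of the three regimes the ``spread'' of $K_{\diff}$ beyond $|t|\le 1$ is either absent (when $\lambda\ge 2$, by the matching property, $K_{\diff}=K$ on $[-1,1]$ and is supported there) or, when $1\le\lambda<2$, confined to $|k|\le \ell+mh$ but with the mass at large $|k|$ damped because only a bounded number of the shift terms $\delta_{|s|}K((k+\lambda\ell s)/\ell)$ are nonzero and, crucially, the contribution of lags $|k|\gg\ell$ to the variance is controlled by $\sum_{|k|>\ell}|\gamma_k|$, which is $o(1)$, rather than by the raw count $mh$. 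More precisely I would split the double sum into $|i-i'|\le \ell$ and $\ell<|i-i'|\le\ell+mh$: the first piece contributes $O(\ell/n)$ by the same counting as in (a); the second piece, in each of cases 1--3, is shown to be $o(\ell/n)$ (or even $O(\ell/n)$ with a smaller constant) by exploiting that in the tail region the $K_{\diff}$ weights are sums of $\delta_{|s|}$-weighted kernel values whose aggregate is small — in case 3 this uses Assumption~\ref{ass:d_uncorrelatedDiff} ($\delta_s\asymp -1/m$, so $\sum_{|s|\ge1}|\delta_s| = O(1)$ but each term is $O(1/m)$), and in cases 1--2 it uses $m<\infty$ or $h/\ell\to\infty$ so that the tail region where extra lags appear has width $o(\ell)$ relative to the decay of $\{\gamma_k\}$.

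The main obstacle, I expect, is making the tail estimate in the regime $h/\ell\to\lambda_\infty\in[1,2)$ with $m\to\infty$ genuinely sharp: here $K_{\diff}$ genuinely differs from $K$ on $[-1,1]$ and is supported out to $\ell+mh\asymp \ell$ (so width is still $O(\ell)$, which is fine), but one must verify that the quadratic-form variance bound still produces the constant $\asymp \ell/n$ and not something inflated by $\sum_{|s|\le m}\delta_s^2$ blowing up — Assumption~\ref{ass:d_uncorrelatedDiff} keeps $\sum\delta_s^2 = \Theta(1/m)\cdot m = \Theta(1)$ bounded, so this is exactly where that hypothesis is needed. For achievability of the upper bound, I would exhibit the explicit example already used elsewhere in the supplement (e.g. $X_i=\theta\varepsilon_{i-1}+\varepsilon_i$ with i.i.d. $\Normal(0,1)$ innovations), for which the leading variance term can be computed exactly via Theorem~\ref{thm:var}-type algebra and is seen to be of order $\ell/n$; this simultaneously settles all three cases since they all contain such MA(1) processes.
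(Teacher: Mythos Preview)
Your overall architecture---write $\widehat v_{\diff}$ as a quadratic form in $\{Z_i\}$, bound covariances of sample autocovariances via the physical-dependence measure, and reduce part (a) to a counting argument giving $(\ell+mh)/n$---matches the paper's route, and part (a) as well as the achievability example (MA(1)) are fine.

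The gap is in your mechanism for part (b). You write that ``the contribution of lags $|k|\gg\ell$ to the variance is controlled by $\sum_{|k|>\ell}|\gamma_k|$, which is $o(1)$.'' This confuses the bias analysis with the variance analysis: $\Cov(\widehat\gamma_k^Z,\widehat\gamma_{k'}^Z)$ does \emph{not} decay in $k$. For any fixed $k$, $\Var(\widehat\gamma_k^Z)\asymp 1/n$ (e.g.\ by Bartlett's formula for an MA(1) process), so summing $|K_{\diff}(k/\ell)|^2\Var(\widehat\gamma_k^Z)$ over $\ell<|k|\le\ell+mh$ gives a term of order $(mh)/n$ unless the \emph{weights} $K_{\diff}(k/\ell)$ are themselves small there. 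Decay of $\gamma_k$ buys nothing for the variance.

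Relatedly, in your ``main obstacle'' paragraph you assert that in case~3 ($m\to\infty$, $h/\ell\to\lambda_\infty\in[1,\infty)$) the support satisfies $\ell+mh\asymp\ell$. This is false: $h\asymp\ell$ but $m\to\infty$, so $mh\asymp m\ell\gg\ell$. This is precisely why Assumption~\ref{ass:d_uncorrelatedDiff} is needed in that case. The paper's argument is purely about the size of the kernel weights: when $h/\ell\to\infty$ (cases 1.1 and 2 in the paper's labeling), $K_{\diff}(k/\ell)=K(k/\ell)$ vanishes for $|k|>\ell$, so one may take the effective bandwidth $T=\ell$; when $m<\infty$ and $\lambda_\infty\in[1,\infty)$, $\ell+mh=O(\ell)$ and part (a) already gives $O(\ell/n)$; and in case~3 one splits $|k|\le\ell$ versus $\ell<|k|\le\ell+mh$ and uses that Assumption~\ref{ass:d_uncorrelatedDiff} forces $K_{\diff}(k/\ell)=O(1/m)$ on the tail, so the tail block contributes $O(1/m^2)\cdot O((\ell+mh)/n)=O(\ell/(mn))=o(\ell/n)$. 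You did identify the $O(1/m)$ damping as the key to case~3; the fix is to make \emph{that} the sole mechanism and drop the appeal to $\sum_{|k|>\ell}|\gamma_k|$.
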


\begin{proof}
Note that $\ell=\ell_n$, $h=h_n$ and $m=m_n$ may depend on $n$, 
but we drop the subscript $n$ for  notational simplicity.
Note that $Z_i$ is $\mathcal{F}_i$-measurable, so we can represent it as 
$Z_i = \sum_{j=-\infty}^i \mathcal{P}_{j} Z_i$, 
where $\mathcal{F}_i$ is the $\sigma$-algebra generated by $\{\ldots, \varepsilon_{i-1}, \varepsilon_i\}$.
So,  
\begin{align}\label{eqt:CovSampleACVF}
	&\Cov(\widehat{\gamma}_k, \widehat{\gamma}_{k'}) \nonumber\\
	&\sim \frac{1}{n^2}\sum_{i,i'=1}^n \Cov(X_iX_{i-k}, X_{i'}X_{i'-k'}) \nonumber \\
	&=\frac{1}{n^2}\sum_{i,i'=1}^n \E\left\{ X_iX_{i-k} (X_{i'}X_{i'-k'}-\gamma_{k'}) \right\}  \nonumber \\
	&= \frac{1}{n^2}\sum_{i,i'=1}^n \E\left\{ 
			\left(\sum_{j=-\infty}^i \mathcal{P}_{j}X_i\right) 
			\left( \sum_{j'=-\infty}^i \mathcal{P}_{j'}X_{i-k} \right)
			\left(\sum_{j''=-\infty}^{i'} \mathcal{P}_{j''} (X_{i'}X_{i'-k'}) \right) \right\}. \nonumber \\ 
\end{align}						
Since, for any random variables $A$ and $B$, we know that 
$\mathcal{P}_j A$ and $\mathcal{P}_{j'} B$ are uncorrelated for $j\neq j'$, 
we can simplify (\ref{eqt:CovSampleACVF}) as 
\[
	\Cov(\widehat{\gamma}_k, \widehat{\gamma}_{k'}) 
	\sim \frac{1}{n^2}\sum_{i,i'=1}^n \sum_{j=-\infty}^{\max(i,i')}
			\E\left[ \left(\mathcal{P}_{j}X_i\right) \left( \mathcal{P}_{j}X_{i-k} \right) \left\{\mathcal{P}_{j} (X_{i'}X_{i'-k'}) \right\} \right] , 
\]
thus 
\begin{align}
	|\Cov(\widehat{\gamma}_k, \widehat{\gamma}_{k'})|
	\leq \frac{1}{n^2}\sum_{i,i'=1}^n \sum_{j=-\infty}^{\max(i,i')}
			\| \mathcal{P}_{j}X_i \|_4 
			\| \mathcal{P}_{j}X_{i-k} \|_4 
			\| \mathcal{P}_{j} (X_{i'}X_{i'-k'}) \|_2 , \label{eq:covACVF_depMea}
\end{align}
where the last line follows from the generalized H\"{o}lder's inequality 
for a product of three random variables.
Define $\omega_{p,i} = \| \mathcal{P}_0X_i \|_p$ for each $i$.
Note that for any $j,a,b$, we have, by stationarity, that $\| \mathcal{P}_jX_a \|_4 = \omega_{4,a-j}$ and 
\begin{align*}
	\| \mathcal{P}_j X_a X_b \|_2 
	&= \| \E(X_aX_b \mid \mathcal{F}_j) - \E(X_aX_b \mid \mathcal{F}_{j-1})\|_2\\
	&= \| \E(X_aX_b - X_{a,\{j\}} X_{b,\{j\}} \mid \mathcal{F}_j) \|_2 \\
	&= \left\| \E\left\{ X_a(X_b -X_{b,\{j\}}) + X_{b,\{j\}}(X_a- X_{a,\{j\}}) \mid \mathcal{F}_j \right\} \right\|_2 \\
	&\leq \| X_a \|_4 \| X_b -X_{b,\{j\}} \|_4 +  \|X_{b,\{j\}} \|_4 \|X_a- X_{a,\{j\}}\|_4 \\
	&= C_2 \left\{ \omega_{4,b-j} + \omega_{4,a-j} \right\},
\end{align*}
for some finite $C_2>0$, 
where $X_{a,\{j\}}$ is a coupled version of $X_a=g(\ldots, \varepsilon_{a-1},\varepsilon_a)$ 
with the $\varepsilon_j$ in $X_a$ being replaced by an i.i.d. copy $\varepsilon_j'$; 
see Section~\ref{sec:notations} for the detailed definition. 
Then (\ref{eq:covACVF_depMea}) can be bounded as follows:
\begin{eqnarray}\label{eqt:covACVF_final}
	|\Cov(\widehat{\gamma}_k, \widehat{\gamma}_{k'}) | 
	 \lesssim \frac{1}{n^2}\sum_{i,i'=1}^n \sum_{j=-\infty}^{\max(i,i')} 
			\omega_{4,i-j} \omega_{4,i-k-j} ( \omega_{4,i'-j} + \omega_{4,i'-k'-j}) . 
\end{eqnarray}						
By Theorem 1 of \cite{wu2005} and the assumption that $\Theta_4<\infty$, we have 
\begin{align}
	\sum_{i\in\mathbb{Z}} \omega_{4,i} 
	\leq \sum_{i\in\mathbb{Z}} \theta_{4,i} 
	= \sum_{i=0}^{\infty} \theta_{4,i} 
	= \Theta_{4} <\infty. \label{eqt:sumOmega}
\end{align}	
Now, let $T$ be an integer to be specified. 
Using (\ref{eqt:covACVF_final}) and (\ref{eqt:sumOmega}), 
we have   
\begin{align}
	& \sum_{-T\leq k,k' \leq T} \left\vert 
			\Cov(\widehat{\gamma}_k, \widehat{\gamma}_{k'}) \right\vert \nonumber \\
		&\qquad\lesssim  \frac{1}{n^2}\sum_{-T\leq k,k' \leq T} \sum_{i,i'=1}^n \sum_{j=-\infty}^{\max(i,i')} 
			\omega_{4,i-j} \omega_{4,i-k-j} ( \omega_{4,i'-j} + \omega_{4,i'-k'-j}) \nonumber\\
		&\qquad\leq \frac{1}{n^2}\sum_{|k'|\leq T} \sum_{i=1}^n 
				 \sum_{j\in\mathbb{Z}} \omega_{4,i-j}
				 \left( \sum_{k\in\mathbb{Z}} \omega_{4,i-k-j} \right)
				 \left[ \sum_{i'\in\mathbb{Z}} ( \omega_{4,i'-j} + \omega_{4,i'-k'-j} )  \right] \nonumber\\
		&\qquad= \frac{1}{n^2}\sum_{|k'|\leq T} \sum_{i=1}^n 
				 \left\{\sum_{j=-\infty}^{\infty} \omega_{4,i-j}
				 \left( \sum_{k=-\infty}^{\infty} \omega_{4,k} \right)
				 \left( 2\sum_{i'=-\infty}^{\infty} \omega_{4,i'}  \right) \right\} \nonumber\\
		&\qquad\leq \frac{1}{n^2}\sum_{|k|\leq T} \sum_{i=1}^n 
				 \left\{\sum_{j\in\mathbb{Z}} \omega_{4,j}
				 \left( \sum_{k\in\mathbb{Z}} \omega_{4,k} \right)
				 \left( 2\sum_{i'\in\mathbb{Z}} \omega_{4,i'} \right) \right\} \nonumber\\
		&\qquad= \frac{1}{n^2}\sum_{|k|\leq T} \sum_{i=1}^n  O(1)  = O\left( T/n \right), \label{eqt:sumAbsCov}
\end{align}	
i.e., 
$\sum_{-T\leq k,k' \leq T} \left\vert \Cov(\widehat{\gamma}_k, \widehat{\gamma}_{k'}) \right\vert \lesssim T/n$.
For (a), let $L = \ell+mh$. Since $K$ is bounded, we have 
\begin{align}
	\Var(\widehat{v}_{\diff})
		&= \sum_{-L\leq k,k' \leq L} K_{\diff}\left(\frac{k}{\ell} 
				\right)K_{\diff}\left(\frac{k'}{\ell} \right) 
				\Cov(\widehat{\gamma}_k, \widehat{\gamma}_{k'}) \label{eqt:varvhatDiff_general}\\
		&=  O(1) \sum_{-\ell\leq k,k' \leq \ell} 
										\left\vert \Cov(\widehat{\gamma}_k, \widehat{\gamma}_{k'}) \right\vert \nonumber \\
		&= O(L/n), \nonumber
\end{align}	
where the last line is obtained by using (\ref{eqt:sumAbsCov}) with $T=L$. 
For (b), we consider the following three cases. 
\begin{enumerate}
	\item Case 1: $m<\infty$. 
			We further divide it into two sub-cases.  
			\begin{itemize}
				\item Case 1.1: $h/\ell\rightarrow\infty$. For a large enough $n$, we have $K_{\diff}(t) = K(t)$, 
						which is zero when $|t|>1$. 
						Using (\ref{eqt:varvhatDiff_general}) and (\ref{eqt:sumAbsCov}) with $T=\ell$, we have 
						$\Var(\widehat{v}_{\diff})=O(\ell/n)$.
				\item Case 1.2: $h/\ell\rightarrow\lambda_{\infty}\in[1, \infty)$.
						Using (\ref{eqt:varvhatDiff_general}) and (\ref{eqt:sumAbsCov}) with $T=L$, we have 
						$\Var(\widehat{v}_{\diff})=O(L/n)=O(\ell/n)$.
			\end{itemize}
	\item Case 2: $m\rightarrow \infty$ and $h/\ell\rightarrow\infty$. 
			For a large enough $n$, we have $K_{\diff}(t) = K(t)$.
			Hence, similar to Case 1.1, we have  $\Var(\widehat{v}_{\diff}) = O(\ell/n)$.
	\item Case 3: $m\rightarrow \infty$, $h/\ell \rightarrow \lambda_{\infty}\in[1, \infty)$, and Assumption \ref{ass:d_uncorrelatedDiff} holds.
			For $|k|>\ell$, using Assumption \ref{ass:d_uncorrelatedDiff}, 
			we know that $K_{\diff}(k/\ell) = O(1/m)$.
			Then, using (\ref{eqt:sumAbsCov}) with $T=\ell$ and $T=L$, we have 
			\begin{align*}
				\Var(\widehat{v}_{\diff})
					&\leq  \left\{ O(1) \sum_{|k|,|k'| \leq \ell} 
							\left\vert \Cov(\widehat{\gamma}_k, \widehat{\gamma}_{k'}) \right\vert\right\}
							+ \left\{ O(1/m^2) \sum_{\ell< |k|,|k'| \leq L} 
							\left\vert \Cov(\widehat{\gamma}_k, \widehat{\gamma}_{k'}) \right\vert \right\}\\
					&=O\left( \frac{\ell}{n} \right) + O\left( \frac{L}{nm^2} \right) 
					= O\left( \frac{\ell}{n} \right).
			\end{align*}	
\end{enumerate}
Finally, we show that 
the upper bound of the variance is achievable.
In this lemma, we assume $h/\ell\rightarrow\lambda_{\infty}\in[1,\infty]$. 
By Proposition \ref{prop:effKernel}, we know that, for $|k|\leq \ell$, 
the value of $K_{\diff}(k/\ell)$ can only be 
$K(k/\ell)$ or $K(k/\ell)+\delta_1\{K((k+h)/\ell) + K((k-h)/\ell)\}$. 
By continuity of $K$, we know that $\asymp\ell$ elements of 
$\{ K_{\diff}(k/\ell) : |k|\leq \ell\}$ are bounded away from zero. 
Consider $X_i = \theta \varepsilon_{i-1} + \varepsilon_{i}$, where $\varepsilon_i$ follow $\Normal(0,1)$ independently 
and $\theta>0$. 
By Proposition 7.3.4 of \citet{brockwellDavis1991}, 
we know that 
$n\Cov(\widehat{\gamma}_k, \widehat{\gamma}_{k'}) 
	\rightarrow C_{k,k'}$, 
where $C_{k,k'}>0$ is bounded away from zero for $||k|-|k'||\leq 2$, 
and $C_{k,k'}=0$ for $||k|-|k'||> 2$.
Consequently, $\Var(\widehat{v}_{\diff}) \gtrsim \ell/n$. 
Thus, we obtain the desired results. 
\end{proof}

\subsection{Lemma~\ref{lem:momentsA}}
\begin{lemma}\label{lem:momentsA}
Let $\{X_{i}\}$ be a stationary time series satisfying Assumption~\ref{ass:weakDep}.
Let $A,B,C,D\in\mathbb{Z}$ be deterministic numbers 
such that $n=(B+D)-(A-C)\rightarrow \infty$, $A/n\rightarrow a$, $B/n\rightarrow b$, $C/n\rightarrow c$, $D/n\rightarrow d$ and the limits satisfies $a\leq b$ and $c,d\geq 0$.  
Also let $f:\mathbb{R}^2 \mapsto \mathbb{R}$ be a deterministic function.  
Define  
\begin{eqnarray}
	\widehat{I} &=&  \frac{1}{n} \sum_{i=A}^{B} \sum_{i'=i-C}^{i+D} f(i/n, i'/n) X_i X_{i'}, 
	\label{eqt:def_I_Ihat_proofLemma1}\\
	I &=& v\int_{a}^b \int_{t-c}^{t+d} f(t,t') \,\dd \mathbb{B}_{t'} \,\dd \mathbb{B}_t, 
	\label{eqt:def_I_Ihat_proofLemma2}
\end{eqnarray}
where $\{\mathbb{B}_t : t\in[a-c, b+d]\}$ is a standard Brownian motion.  
\begin{enumerate}
	\item \label{lem:momentsA_limit} (Weak limit) 
			The double sum $\widehat{I}$ can be asymptotically 
			represented as the stochastic integral $I$ in the following sense: 
			\begin{eqnarray}\label{eqt:Lemma_I_IHat_weakConv}
				\widehat{I} \inD I.
			\end{eqnarray}
	\item \label{lem:momentsA_represent} (Adapted representation) 
		The stochastic integral can be decomposed into three parts: $I = I_0 + I_- + I_+$, 
		where 
		\begin{align*}
			I_0 &= v\int_{a}^b f(t,t) \, \dd t, \\
			I_- &= v\int_a^b \int_{t-c}^{t} f(t,t') \,\dd \mathbb{B}_{t'} \,\dd \mathbb{B}_t, \\
			I_+ &= v\int_a^{b+d} \int_{\max(a,t'-d)}^{\min(b,t')} f(t,t') \,\dd \mathbb{B}_{t} \,\dd \mathbb{B}_{t'} .
		\end{align*}
	\item \label{lem:momentsA_moments} (Moments) The first two moments of $I$ are given by 
		\begin{align}
			\E(I) &= I_0 , \nonumber \\
			\Var(I) &= v^2\int_a^b \int_{t-c}^{t+d} f^2(t,t') \, \dd t' \, \dd t \nonumber\\
					& \qquad+ v^2\int_a^b \int_{\max\{a,t-\min(c,d)\}}^{\min\{b,t+\min(c,d)\}} f(t,t') f(t',t) \, \dd t' \, \dd t. \label{eqt:VarI_lemma}
		\end{align}
			
\end{enumerate}
\end{lemma}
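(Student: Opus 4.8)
The plan is to prove Lemma~\ref{lem:momentsA} in three stages, mirroring the structure of the statement. For part~(\ref{lem:momentsA_limit}), the key tool is an invariance principle for quadratic forms of weakly dependent time series, namely Theorem~3 of \cite{wu2011} (the same device already invoked in the proof of Theorem~\ref{thm:var}). First I would write $S_j = \sum_{i=1}^j X_i$ and approximate $S_{\lfloor nt\rfloor}/\sqrt{n}$ by a rescaled Brownian motion $\sqrt{v}\,\mathbb{B}_t$ on $[a-c,b+d]$ in the Skorokhod sense, which is legitimate under Assumption~\ref{ass:weakDep} ($\Theta_4<\infty$ and $\E|Z_1|^r<\infty$ for some $r>4$ give more than enough moments). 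Writing $\widehat{I}$ as a Riemann--Stieltjes-type double sum against the increments $\Delta S_i = X_i$, I would then argue that $\widehat I$ is (up to $o_p(1)$ terms) a continuous functional of the partial-sum process, and pass to the limit to obtain $\widehat I \inD v\int_a^b\int_{t-c}^{t+d} f(t,t')\,\dd\mathbb{B}_{t'}\,\dd\mathbb{B}_t =: I$. Care is needed because the inner and outer sums share the diagonal $i=i'$, so the discrete object is genuinely a Hilbert--space-valued (second chaos plus a drift) limit rather than a pure It\^o integral; the smoothness/boundedness of $f$ on the relevant compact rectangle makes the functional continuous for the Skorokhod topology away from a null set.

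For part~(\ref{lem:momentsA_represent}) the work is purely to re-express the non-adapted double integral $I$ as a sum of an It\^o (forward-adapted) integral over $t'<t$, a drift term from the diagonal $t'=t$, and a second It\^o integral obtained by swapping the roles of $t$ and $t'$. Concretely, I would split the region $\{(t,t'): a\le t\le b,\ t-c\le t'\le t+d\}$ along the diagonal: the portion with $t'\le t$ is already adapted and gives $I_-$; the portion with $t'>t$ is integrated against a ``future'' increment, so I reverse the order of integration to put the larger time-index outermost, producing $I_+$ with the stated limits $\max(a,t'-d)\le t\le\min(b,t')$; and the diagonal contributes, by the quadratic-variation (It\^o isometry) heuristic $\dd\mathbb{B}_t\,\dd\mathbb{B}_t=\dd t$, the ordinary Lebesgue integral $I_0 = v\int_a^b f(t,t)\,\dd t$. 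The only subtlety is to justify that the diagonal picks up exactly the quadratic variation and nothing more; this follows from the discrete identity $X_i X_i$ appearing once per $i$ in $\widehat I$, and $n^{-1}\sum_{i=A}^B f(i/n,i/n) X_i^2 \inP v\int_a^b f(t,t)\,\dd t$ by the ergodic theorem applied to $\{X_i^2\}$ together with $\E(X_1^2)=\gamma_0$ --- wait, more precisely one uses $n^{-1}\sum X_i^2 \to \gamma_0$, not $v$, so I would instead track the diagonal contribution through the partial-sum approximation directly, where the quadratic-variation bracket of $\sqrt v\,\mathbb{B}$ is $v\,t$, yielding the factor $v$ rather than $\gamma_0$. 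This point --- getting the right constant on the diagonal, $v$ versus $\gamma_0$ --- is where I expect the bookkeeping to be most delicate, and it hinges on doing the diagonal accounting at the level of the Brownian limit rather than naively on the discrete sum.

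For part~(\ref{lem:momentsA_moments}) I would compute moments of $I$ using the decomposition just obtained. The mean is immediate: $I_-$ and $I_+$ are It\^o integrals of adapted integrands, hence mean zero, so $\E(I)=I_0$. For the variance, $I_0$ is deterministic, so $\Var(I)=\E(I_-+I_+)^2 = \E I_-^2 + 2\E(I_- I_+) + \E I_+^2$. By It\^o isometry, $\E I_-^2 = v^2\int_a^b\int_{t-c}^t f^2(t,t')\,\dd t'\,\dd t$ and $\E I_+^2 = v^2\int_a^{b+d}\int_{\max(a,t'-d)}^{\min(b,t')} f^2(t,t')\,\dd t\,\dd t'$; changing variables to put both over the common region shows these two add up to $v^2\int_a^b\int_{t-c}^{t+d} f^2(t,t')\,\dd t'\,\dd t$, the first term in (\ref{eqt:VarI_lemma}). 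The cross term $\E(I_- I_+)$ is the interesting one: both are double It\^o integrals, and by the product formula for stochastic integrals (or by writing each as an iterated integral and matching the ``inner'' differentials) the expectation is nonzero only where the integrands' time-indices coincide, which forces $t$ in $I_-$ to equal $t'$ in $I_+$ and vice versa; this produces $v^2\int\int f(t,t')f(t',t)\,\dd t'\,\dd t$ over the symmetric overlap region $\max\{a,t-\min(c,d)\}\le t'\le\min\{b,t+\min(c,d)\}$, giving the second term. The main obstacle throughout is handling the boundary/overlap regions correctly --- making sure the limits on each iterated integral are exactly right so that the forward part, backward part, and diagonal reassemble to the symmetric object $I$ without double-counting or omission; once the regions are pinned down the algebra is routine.
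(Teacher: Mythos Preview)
Your proposal is correct and follows essentially the same route as the paper: the functional CLT for partial sums plus continuous mapping for part~(\ref{lem:momentsA_limit}), the diagonal/lower/upper split of the double stochastic integral (with the upper piece made adapted by swapping the order of integration) for part~(\ref{lem:momentsA_represent}), and It\^o isometry for the squared terms together with a common-outer-differential matching for the cross term in part~(\ref{lem:momentsA_moments}). Your self-correction on the diagonal constant is exactly right and mirrors the paper's treatment: the decomposition $I=I_0+I_-+I_+$ is carried out on the limiting Brownian object (where $(\dd\mathbb{B}_t)^2=\dd t$ yields the factor $v$), not on the discrete sum $\widehat I$ whose raw diagonal would indeed produce $\gamma_0$.
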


\begin{proof}[Proof of Lemma~\ref{lem:momentsA}]
(\ref{lem:momentsA_limit}) 
By Assumption~\ref{ass:weakDep}, we have 
the functional central limit theorem (Theorem 3 of \cite{wu2005}):
\[
	\left\{ \frac{1}{\sqrt{n}}
	\sum_{a-c \leq i/n \leq t} X_i  \right\}_{t\in[a-c,b+d]}
	\inD 
	\left\{ \sqrt{v} \mathbb{B}_t \right\}_{t\in[a-c,b+d]}.
\]
Then, the result follows from the continuous mapping theorem.
(\ref{lem:momentsA_represent}) Let 
$a-c := t_{A-C} < t_{A-C+1} < \ldots < t_{B+D} =: b+d$, 
where $A,B,C,D\in\mathbb{Z}$. 
The set $\mathcal{T} = \{ t_i : i=A-C, \ldots, B+D\}$ partitions $[a-c, b+d)$ into $n = (B+D) - (A-C)$
disjoint intervals of equal width, i.e.,  
\[
	[a-c, b+d) = \bigcup_{i=A-C}^{B+D-1} [t_i, t_{i+1}). 
\]
The mesh size $\|\mathcal{T}\| = \inf_{A-C \leq i < B+D} (t_{i+1}-t_i) = (b+d-a+c)/n$
is strictly decreasing with $n$. 
Then the stochastic integral $I$ can be defined as the $\mathcal{L}^2$ limit as follows: 
\begin{align*}
	I 	&= v\lim_{\|\mathcal{T}\|\rightarrow 0} \sum_{i=A}^B\sum_{i'=i-C}^{i+D} f(t_i, t_{i'})
			\left( \mathbb{B}_{t_{i'}} - \mathbb{B}_{t_{i'-1}}\right)
			\left( \mathbb{B}_{t_i} - \mathbb{B}_{t_{i-1}}\right)  \nonumber \\
		&= v\lim_{\|\mathcal{T}\|\rightarrow 0} \sum_{i=A}^B 
			\bigg\{
				f(t_i, t_i) \left( \mathbb{B}_{t_{i}} - \mathbb{B}_{t_{i-1}}\right)\nonumber \\
				&\qquad+
				\left(\sum_{i'=i-C}^{i-1} + \sum_{i'=i+1}^{i+D}\right)  
				f(t_i, t_{i'})
				\left( \mathbb{B}_{t_{i'}} - \mathbb{B}_{t_{i'-1}}\right)
			\bigg\}
			\left( \mathbb{B}_{t_i} - \mathbb{B}_{t_{i-1}}\right) \nonumber \\
		&= I_0 + I_- + I_+,  \label{eqt:proof_intergal_L2_limit}  
\end{align*}
where 
\begin{align*}
	I_0 &= v\left\{
				\lim_{\|\mathcal{T}\|\rightarrow 0} 
				\sum_{i=A}^B f(t_i, t_i) (t_i-t_{i-1})
			\right\},\\
	I_- &= v\left\{
				\lim_{\|\mathcal{T}\|\rightarrow 0} 
				\sum_{i=A}^B
				\sum_{i'=i-C}^{i-1} f(t_i, t_{i'})
				\left( \mathbb{B}_{t_{i'}} - \mathbb{B}_{t_{i'-1}}\right)
				\left( \mathbb{B}_{t_i} - \mathbb{B}_{t_{i-1}}\right) 
			\right\} ,\nonumber
			\\
	I_+ &= v\left\{
				\lim_{\|\mathcal{T}\|\rightarrow 0} 
				\sum_{i'=A+1}^{B+D}
				\sum_{i= \max(A,i'-D)}^{\min(B,i'-1)}
				f(t_i, t_{i'})
				\left( \mathbb{B}_{t_i} - \mathbb{B}_{t_{i-1}}\right)
				\left( \mathbb{B}_{t_{i'}} - \mathbb{B}_{t_{i'-1}}\right) 
			\right\}   .
\end{align*}
Note that $I_+$ is obtained after swapping two summations.
Thus the desired results follow by the definition of It\^{o} integrals. 

For (\ref{lem:momentsA_moments}), we start with the first moment. 
The integrands of the outer integrals in $I_{\pm}$ are adapted to their respective outer integrals, i.e., 
\[
	\int_{t-c}^{t} f(t,t') \,\dd \mathbb{B}_{t'} \in \mathcal{F}_{t}
	\qquad \text{and} \qquad
	\int_{\max(a,t'-d)}^{\min(b,t')} f(t,t') \,\dd \mathbb{B}_{t}  \in \mathcal{F}_{t'},
\] 
where $\mathcal{F}_{u}$ is the filtration of $\{\mathbb{B}_s : s\leq u\}$ up to time $u$. 
So, $\E(I_+) = \E(I_-) = 0$. 
Consequently, $\E(I) = I_0$ (because $I_0$ is non-stochastic). 
Next, we work on the variance. Since $\E(I_-+ I_+)=0$, 
we have $\E(I^2) = \E(I_+^2) + \E(I_-^2) + 2\E(I_- I_+) + I_0^2$, 
which implies that 
\begin{eqnarray}\label{eqt:varA_expanded}
	\Var(I) = \E(I_+^2) + \E(I_-^2) + 2\E(I_- I_+).
\end{eqnarray}
Applying It\^{o} isometry twice and then swapping the integrals, we have 
\begin{eqnarray}
	\E(I_+^2) 
			&=& v^2\int_a^{b+d} \E\left\{  \int_{\max(a,t'-d)}^{\min(b,t')} f(t,t') \, \dd \mathbb{B}_t\right\}^2\, \dd t'\nonumber \\
			&=& v^2\int_a^{b+d}  \int_{\max(a,t'-d)}^{\min(b,t')} f^2(t,t') \, \dd t \, \dd t' \nonumber\\
			&=& v^2\int_a^{b}  \int_{t}^{t+d} f^2(t,t') \, \dd t' \, \dd t. \label{eqt:itoIso_1}
\end{eqnarray}
Similarly, we have $\E(I_-^2) = v^2\int_a^{b}  \int_{t-c}^{t} f^2(t,t') \, \dd t' \, \dd t$.
Summing $\E(I_+^2)$ and $\E(I_-^2)$, we have 
\begin{eqnarray}\label{eqt:varA_expanded_squares}
	\E(I_+^2)+\E(I_-^2) = v^2\int_a^{b}  \int_{t-c}^{t+d} f^2(t,t') \, \dd t' \, \dd t.
\end{eqnarray}
To calculate the cross product term $\E(I_- I_+)$, we first re-express $I_-$ and $I_+$ 
by inserting an indicator in the integrand of $I_-$ and changing dummy indexes in $I_+$, respectively: 
\begin{align*}
	I_- &= v\int_a^{b+d} \int_{t-c}^{t} f(t,t')\mathbb{1}(t< b) \,\dd \mathbb{B}_{t'} \,\dd \mathbb{B}_t, \\
	I_+ &= v\int_a^{b+d} \int_{\max(a,t-d)}^{\min(b,t)} f(t',t) \,\dd \mathbb{B}_{t'} \,\dd \mathbb{B}_{t} .
\end{align*}
Then, applying It\^{o} isometry again, we have 
\begin{align*}
	\E(I_+ I_-) 
			&= v^2\int_a^{b+d} \E\left[
					\left\{\int_{t-c}^{t} f(t,t')\mathbb{1}(t< b) \,\dd \mathbb{B}_{t'}\right\}
					\left\{\int_{\max(a,t-d)}^{\min(b,t)} f(t',t) \,\dd \mathbb{B}_{t'}\right\}
			\right]\, \dd t \\
			&= v^2\int_a^{b} \int_{\max(a,t-d,t-c)}^{t} f(t',t)f(t,t') \,\dd t' \, \dd t .
\end{align*}
By exchanging integrals, we obtain 
\begin{align}
	2\E(I_+ I_-) 
			&= v^2\int_a^{b} \int_{\max(a,t-d,t-c)}^{t} f(t',t)f(t,t') \,\dd t' \, \dd t \nonumber\\
			& \qquad\qquad+v^2\int_a^{b} \int_{t}^{\min(b,t-d,t-c)} f(t',t)f(t,t') \,\dd t' \, \dd t  \nonumber\\
			&= v^2\int_a^b \int_{\max\{a,t-\min(c,d)\}}^{\min\{b,t+\min(c,d)\}} f(t,t') f(t',t) \, \dd t' \, \dd t. \label{eqt:varA_expanded_cross}
\end{align}
Substituting (\ref{eqt:varA_expanded_squares}) and (\ref{eqt:varA_expanded_cross}) into (\ref{eqt:varA_expanded}), 
we obtain the desired result. 
\end{proof}

\subsection{Proposition~\ref{prop:acvf_lagk}}   
\begin{proposition}\label{prop:acvf_lagk}
For any $x_1, \ldots, x_n$, 
define $\widehat{\gamma}_k^x = \sum_{i=1+|k|}^n (x_i - \bar{x})(x_{i-|k|}- \bar{x})/n$, where $\bar{x} = \sum_{i=1}^n x_i/n$. 
Then the following identity holds for all $k\in\{0, \pm 1, \ldots, \pm (n-1)\}$:
\[
	\widehat{\gamma}_{k}^x
	\equiv \widehat{\gamma}_0^x  - \frac{1}{2n} \sum_{j=1+|k|}^n (x_j - x_{j-|k|})^2 - \frac{1}{2n}\left\{ \sum_{j=1}^{|k|} (x_j-\bar{x})^2 + 
						\sum_{j=n-|k|+1}^n (x_j-\bar{x})^2\right\}.
\]
\end{proposition}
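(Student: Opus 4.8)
The plan is to prove this as a purely algebraic finite-sample identity, with no probabilistic input whatsoever. First I would note that both sides depend on $k$ only through $|k|$, so without loss of generality we may take $k\geq 0$. Introduce the centered values $y_i = x_i - \bar{x}$ for $i=1,\ldots,n$, so that $\widehat{\gamma}_k^x = n^{-1}\sum_{i=1+k}^n y_i y_{i-k}$ and, crucially, $y_i - y_{i-k} = x_i - x_{i-k}$ because the centering constant cancels in the difference. This last observation is what lets the squared-difference term on the right-hand side appear in terms of the raw $x_j$ rather than the $y_j$.

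The key step is the polarization identity $y_i y_{i-k} = \tfrac12(y_i^2 + y_{i-k}^2) - \tfrac12(y_i - y_{i-k})^2$, applied termwise and summed over $i = 1+k,\ldots,n$. This gives
\[
	n\,\widehat{\gamma}_k^x = \frac12\sum_{i=1+k}^n y_i^2 + \frac12\sum_{i=1+k}^n y_{i-k}^2 - \frac12\sum_{i=1+k}^n (x_i - x_{i-k})^2.
\]
It then remains only to re-index the two sums of squares. The first equals $\sum_{j=1+k}^n y_j^2 = n\,\widehat{\gamma}_0^x - \sum_{j=1}^{k} y_j^2$, and the second equals $\sum_{j=1}^{n-k} y_j^2 = n\,\widehat{\gamma}_0^x - \sum_{j=n-k+1}^{n} y_j^2$, both using $\sum_{j=1}^n y_j^2 = n\,\widehat{\gamma}_0^x$. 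Substituting these, dividing by $n$, and recalling $y_j = x_j - \bar{x}$ yields precisely the claimed formula, with the two boundary sums $\sum_{j=1}^{|k|}$ and $\sum_{j=n-|k|+1}^{n}$ coming from the "missing" index sets of the two re-indexed sums.

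I do not anticipate any genuine obstacle: the statement is an exact identity and the argument is two lines of elementary algebra once the polarization identity is invoked. The only point requiring a little care is the bookkeeping of the summation ranges — verifying that the indices dropped in passing from $\sum_{i=1+k}^n y_i^2$ to $\sum_{j=1}^n y_j^2$ are exactly $\{1,\ldots,k\}$, and those dropped from $\sum_{i=1+k}^n y_{i-k}^2$ are exactly $\{n-k+1,\ldots,n\}$, which is the source of the two boundary correction terms and their factor $1/(2n)$.
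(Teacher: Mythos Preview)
Your proof is correct and follows essentially the same elementary expand-and-regroup approach as the paper's proof. The only cosmetic difference is that you center first ($y_i = x_i - \bar{x}$) and invoke the polarization identity, whereas the paper works with the raw $x_i$ and expands $(x_j - x_{j-k})^2$ directly, tracking the $\bar{x}$ terms by hand; your version is slightly cleaner but the substance is identical.
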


\begin{proof}[Proof of Proposition~\ref{prop:acvf_lagk}]
Let $\overline{x^2} = \sum_{i=1}^n x_i^2/n$. For $k\geq 0$, we have  
\begin{align}\label{eqt:gammaExpansion1}
	&\widehat{\gamma}_0^x  - \frac{1}{2n} \sum_{j=1+k}^n (x_j - x_{j-k})^2 \nonumber\\
		&\qquad=  \frac{1}{n} \sum_{i=1}^n (x_i^2 + \bar{x}^2 - 2x_i \bar{x})  
				- \frac{1}{2n} \sum_{j=1+k}^n (x_j^2 + x_{j-k}^2 + 2x_j x_{j-k}) \nonumber\\
		&\qquad= \overline{x^2} - \bar{x}^2 
			- \left[ \overline{x^2} 
				- \frac{1}{2n}\left( \sum_{j=1}^k x_j^2 + \sum_{j=n-k+1}^n x_j^2 \right)
				- \frac{1}{n}\sum_{j=1+k}^n x_j x_{j-k}\right] \nonumber\\
		&\qquad= \frac{1}{n}\sum_{j=1+k}^n x_j x_{j-k} - \bar{x}^2 
				+\frac{1}{2n}\left( \sum_{j=1}^k x_j^2 + \sum_{j=n-k+1}^n x_j^2 \right) . \nonumber\\
\end{align}
Note that 
\begin{align}\label{eqt:gammaExpansion2}
	\frac{1}{n}\sum_{j=1+k}^n x_j x_{j-k}
	= \widehat{\gamma}_{k}^x 
				+ \bar{x}^2\left(\frac{n-k}{n} \right)
				- \frac{\bar{x}}{n} \left( \sum_{j=1}^k x_j + \sum_{j=n-k+1}^n x_j \right).
\end{align}
So, putting (\ref{eqt:gammaExpansion2}) into (\ref{eqt:gammaExpansion1}), we have 
\begin{align*}
	\widehat{\gamma}_0^x  - \frac{1}{2n} \sum_{j=1+k}^n (x_j - x_{j-k})^2 
		= \widehat{\gamma}_{k}^x 
				+ \frac{1}{2n}\left\{ \sum_{j=1}^k (x_j-\bar{x})^2 + 
						\sum_{j=n-k+1}^n (x_j-\bar{x})^2\right\}.
\end{align*}
Similar results can be obtained for $k<0$.
Rearranging the terms, we obtain the desired result. 
\end{proof}

\subsection{Lemma~\ref{lem:size_vHatpmumu}}  
\begin{lemma}\label{lem:size_vHatpmumu}
Assume all conditions in Theorem~\ref{thm:robustness}.
Define $\widehat{v}^{(\mu\mu)}$ as in (\ref{eqt:vhatp_ab}). 
Then
\begin{align}\label{eqt:size_vHatpmumu}
	\widehat{v}^{(\mu\mu)} 
		= 	\left\{ \kappa\ell\mathcal{V} + O\left( \frac{\ell}{n}\right) \right\} \mathbb{1}_{(m=0)}
			+ O\left[ \frac{\ell}{n} \left\{  (\ell\mathbb{1}_{m=0} +mh ) \mathcal{S}^2\mathcal{J}
					+  \frac{(\ell\mathbb{1}_{m=0} +mh )^2\mathcal{C}^2}{n}\right\} \right],
\end{align}
where the upper bound is achievable by some mean functions.
\end{lemma}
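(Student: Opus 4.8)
The plan is to start from the decomposition $\mu_i = c_i + s_i$ of (\ref{eqt:muDef}), so that $D_i^{(\mu)} = D_i^{(c)} + D_i^{(s)}$ with $D_i^{(c)} = \sum_{j=0}^m d_j c_{i-jh}$ and $D_i^{(s)} = \sum_{j=0}^m d_j s_{i-jh}$. Expanding the product $D_i^{(\mu)}D_{i'}^{(\mu)}$ splits $\widehat v^{(\mu\mu)}$ into four bilinear forms $\widehat v^{(cc)} + \widehat v^{(cs)} + \widehat v^{(sc)} + \widehat v^{(ss)}$, each of the form $\frac1n\sum_{|k|<\ell}K(k/\ell)\sum_i D_i^{(\alpha)}D_{i-|k|}^{(\beta)}$. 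I would bound these, treating $m\geq 1$ and $m=0$ separately, since $D_i^{(\mu)}$ is a genuine difference in the first case but a global centering $\mu_i - \bar\mu_n$ in the second.

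For $m\geq 1$ I would reuse the pointwise estimates already established inside the proof of Proposition~\ref{prop:Ktype_Stype_equiv}: from (\ref{eqt:ED_2}), $|D_i^{(c)}| = O(\mathcal C mh/n)$ because $\sum_j d_j = 0$ and $c$ is $\mathcal C$-Lipschitz; from (\ref{eqt:ED_3}), $|D_i^{(s)}| = O(\mathcal S)\varrho_i$ with $\varrho_i = \mathbb{1}(\{i-mh,\dots,i\}\cap\{T_1,\dots,T_{\mathcal J}\}\neq\emptyset)$, where the bound uses $\mathcal G\gtrsim\ell+mh$ so each window of length $mh$ meets at most $O(1)$ change points. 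Since each change point switches on $\varrho_i$ for only $O(mh)$ indices, $\sum_i\varrho_i = O(mh\mathcal J)$, and the gap condition also gives $mh\mathcal J\lesssim n$. Using $|K|\le\sup|K|$, $\sum_{|k|<\ell}|K(k/\ell)| = O(\ell)$ and $\varrho_i\varrho_{i-|k|}\le\varrho_i$, this yields $|\widehat v^{(cc)}| = O(\ell m^2h^2\mathcal C^2/n^2)$ and $|\widehat v^{(ss)}| = O(\ell\,mh\,\mathcal J\mathcal S^2/n)$; the two cross terms are $O(\ell m^2h^2\mathcal C\mathcal S\mathcal J/n^2)$, which is dominated by AM--GM against the sum of the previous two (using $mh\mathcal J\lesssim n$). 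Summing gives the $\mathbb 1_{(m=0)}=0$ part of (\ref{eqt:size_vHatpmumu}).

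For $m=0$, $D_i^{(\mu)} = \mu_i - \bar\mu_n$, so $\widehat v^{(\mu\mu)} = \sum_{|k|<\ell}K(k/\ell)\widehat\gamma_k^\mu$ where $\widehat\gamma_k^\mu$ is the sample autocovariance of $\mu_{1:n}$. I would invoke the algebraic identity of Proposition~\ref{prop:acvf_lagk} to write $\widehat\gamma_k^\mu$ as $\widehat\gamma_0^\mu$ minus the variate-difference sum $\frac{1}{2n}\sum_j(\mu_j-\mu_{j-|k|})^2$ minus two boundary sums. The term $\widehat\gamma_0^\mu\sum_{|k|<\ell}K(k/\ell)$ gives the leading $\kappa\ell\mathcal V$ once one inserts $\sum_{|k|<\ell}K(k/\ell)=\kappa\ell+O(1)$ and $\widehat\gamma_0^\mu = \mathcal V + (\text{Riemann error})$; the variate-difference sums are controlled by splitting $\mu_j-\mu_{j-|k|}$ into its $c$- and $s$-parts exactly as above, giving $\sum_j(\mu_j-\mu_{j-|k|})^2 = O(\mathcal C^2k^2/n + \mathcal S^2\mathcal J|k|)$, which after weighting by $K(k/\ell)/(2n)$ and summing over $|k|<\ell$ produces the $\mathbb 1_{(m=0)}$ error term; the boundary sums and Riemann errors are collected into the $O(\ell/n)$ slack. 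For achievability I would exhibit explicit means: the pure linear trend $\mu(t)=t$ (for which $D_i^{(c)}$ is a nonzero constant $-\tfrac{h}{n}\sum_j jd_j$, forcing $\widehat v^{(\mu\mu)}\asymp\ell m^2h^2/n^2$ when $m\geq1$ and $\asymp\kappa\ell/12$ when $m=0$) and the single jump $\mu(t)=\mathbb 1(t>1/2)$ (for which $|D_i^{(s)}|\asymp 1$ on $\asymp mh$ indices, forcing the $mh\,\mathcal S^2\mathcal J/n$ term).

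The step I expect to be the main obstacle is the $m=0$ bookkeeping: confirming that the boundary sums $\frac{1}{2n}\sum_{|k|<\ell}K(k/\ell)\{\sum_{j\le|k|}+\sum_{j>n-|k|}\}(\mu_j-\bar\mu_n)^2$ together with the Riemann-approximation errors for $\widehat\gamma_0^\mu$ versus $\mathcal V$ (including $\bar\mu_n$ versus $\bar\mu$) really collapse into $O(\ell/n)$ plus pieces already inside the stated error term, rather than something larger. This is where $\ell/n\to0$ and, again, $\mathcal G\gtrsim\ell+mh$ (limiting how many of the $O(\ell)$ indices near each end can see a jump) must be used carefully. Everything else is routine once the four-way decomposition and the pointwise bounds on $D_i^{(c)}$ and $D_i^{(s)}$ are in hand.
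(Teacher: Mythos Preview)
Your proposal is correct and follows essentially the same route as the paper: for $m\geq 1$ you use the identical pointwise bounds $|D_i^{(c)}|=O(\mathcal C mh/n)$ and $|D_i^{(s)}|=O(\mathcal S)\varrho_i$ with $\sum_i\varrho_i=O(mh\mathcal J)$, and for $m=0$ you invoke Proposition~\ref{prop:acvf_lagk} exactly as the paper does. The only cosmetic difference is that the paper applies $|xy|\leq x^2+y^2$ at the outset to reduce everything to pure squared terms (avoiding your separate AM--GM step for the cross pieces), but the content is the same.
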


\begin{proof}[Proof of Lemma~\ref{lem:size_vHatpmumu}]
We begin with the case $m>0$. 
Recall that $\mu_i = c_i + s_i$, where  
$c_i := c(i/n)$ and 
$s_i := s(i/n) = \sum_{j=0}^{\mathcal{J}} \xi_{j} \mathbb{1}(T_j \leq i < T_{j+1})$ for each $i$. 
Similar to (\ref{eqt:DDmuZ}), we define 
\[
	D_i^{(s)} = \sum_{j=0}^m d_j s_{i-jh}
	\qquad \text{and} \qquad
	D_i^{(c)}=\sum_{j=0}^m d_j c_{i-jh}
\]
for each $i$.
Using the inequality $|xy| \leq x^2 + y^2$, we have 
\begin{align*}
	\left\vert D_i^{(\mu)} D_{i-|k|}^{(\mu)} \right\vert 
		&=\left\vert D_i^{(c)} D_{i-|k|}^{(c)} 
					+D_i^{(c)} D_{i-|k|}^{(s)}
					+D_i^{(s)} D_{i-|k|}^{(c)}
					+D_i^{(s)} D_{i-|k|}^{(s)}
					\right\vert\\
		&\leq 3\left( 
					\left\vert D_i^{(c)} D_{i}^{(c)} \right\vert
					+\left\vert D_i^{(s)} D_{i}^{(s)}\right\vert
					+\left\vert D_{i-k}^{(c)} D_{i-k}^{(c)} \right\vert
					+\left\vert D_{i-k}^{(s)} D_{i-k}^{(s)}\right\vert
					\right).
\end{align*}
So, we have 
\begin{align}
	|\widehat{v}^{(\mu \mu)}|
		&\leq \frac{1}{n}\sum_{|k| < \ell} \left\vert K\left(\frac{k}{\ell}\right)\right\vert 
				\sum_{i=mh+|k|+1}^n
					\left\vert D_i^{(\mu)} D_{i-|k|}^{(\mu)} \right\vert  \nonumber \\
		&\leq \frac{6}{n}\sum_{|k| < \ell} \left\vert K\left(\frac{k}{\ell}\right)\right\vert 
				\sum_{i=mh+1}^n 
				\left( 
					\left\vert D_i^{(c)} \right\vert^2
					+\left\vert D_i^{(s)} \right\vert^2
					\right).	\label{eqt:vHatmumu12}
\end{align}
Note that, for any $i$, 
if there is no change point within the time interval $i-mh, \ldots, i$, 
then $D_i^{(s)}=0$. 
Since there are $\mathcal{J}$ change points, 
at most $(mh+1)\mathcal{J}$ elements in $\{D_i^{(s)} \}_{i=mh+1}^n$ are non-zero. 
By assumption, the non-zero $D_i^{(s)}$ satisfies 
$|D_i^{(s)}| \leq \sum_{j=0}^{m}|d_j| \times O(\mathcal{S}) = O(\mathcal{S})$
as $\sup_{n\in\mathbb{N}}\sum_{j=0}^m|d_j|<\infty$ and $\mathcal{G} \gtrsim \ell+mh$.
So,
\begin{align}
	\frac{6}{n}\sum_{|k| < \ell} \left\vert K\left(\frac{k}{\ell}\right)\right\vert 
				\sum_{i=mh+|k|+1}^n
				\left\vert D_i^{(s)} \right\vert^2
	\leq O\left( \frac{mh\ell \mathcal{S}^2\mathcal{J}}{n}\right). 
	\label{eqt:vHatmumu1}
\end{align}
Similarly, for any $i$, we have $|D_i^{(c)}| = O( \mathcal{C}mh/n )$.
Hence, 
\begin{align}
	\frac{6}{n}\sum_{|k| < \ell} \left\vert K\left(\frac{k}{\ell}\right)\right\vert 
				\sum_{i=mh+|k|+1}^n
				\left\vert D_i^{(c)}\right\vert^2
	\leq  O\left( \frac{m^2h^2\ell\mathcal{C}^2}{n^2}\right) .	
	\label{eqt:vHatmumu2}
\end{align}
Putting (\ref{eqt:vHatmumu1}) and (\ref{eqt:vHatmumu2}) into (\ref{eqt:vHatmumu12}), 
we obtain 
\[
	|\widehat{v}^{(\mu\mu)}| 
		\leq O\left( \frac{mh\ell \mathcal{S}^2\mathcal{J}}{n}\right)
				+O\left( \frac{m^2h^2\ell\mathcal{C}^2}{n^2}\right).
\]

When $m=0$, we have $D_i^{(\mu)} = \mu_i- \overline{\mu}$ for all $i$, and 
$K_{\diff}(\cdot) \equiv K(\cdot)$ by Proposition~\ref{prop:effKernel} (\ref{item:KdiffProperty}). 
Using Proposition~\ref{prop:acvf_lagk}, we have
\begin{align}
	\widehat{v}^{(\mu\mu)} 
		&= \sum_{|k| < \ell} K\left(\frac{k}{\ell}\right) 
				\frac{1}{n} \sum_{i=|k|+1}^n (\mu_i-\overline{\mu})( \mu_{i-|k|} -\overline{\mu})\nonumber\\
		&= \sum_{|k| < \ell} K\left(\frac{k}{\ell}\right) 
				\left\{
				\frac{1}{n}\sum_{i=1}^n(\mu_i-\overline{\mu})^2
				- \frac{1}{2n} \sum_{j=1+|k|}^n  (\mu_j - \mu_{j-|k|})^2
				- r_k
				\right\} , \label{eqt:vHatmumuCase2}
\end{align}
where $r_0=0$, $r_{k}= r_{-k}$ and, for $k=1,2,\ldots,\ell-1$, 
\[
	r_k =  \frac{1}{2n}\left\{ \sum_{j=1}^k (\mu_j-\bar{\mu})^2 + 
					\sum_{j=n-k+1}^n (\mu_j-\bar{\mu})^2\right\} = O(\ell\mathcal{V}/n).
\]
Similar to (\ref{eqt:vHatmumu1}) and (\ref{eqt:vHatmumu2}), we have 
\begin{align*}
	\frac{1}{2n} \sum_{j=1+|k|}^n  (\mu_j - \mu_{j-|k|})^2
	&\leq \frac{1}{2n} \sum_{j=1+|k|}^n \left\{  2(s_j - s_{j-|k|})^2+2(c_j - c_{j-|k|})^2 \right\}\\
	&\leq O\left( \frac{\mathcal{S}^2\mathcal{J}\ell}{n}\right)
			+ O\left( \frac{\mathcal{C}^2\ell^2}{n^2}\right).
\end{align*}
Together with the Riemann approximations 
$\sum_{|k| < \ell} K(k/\ell) = \ell\{\kappa+O(1/\ell)\}$
and $\sum_{i=1}^n(\mu_i-\overline{\mu})^2/n = \mathcal{V} + O(1/n)$, 
we can simplfy (\ref{eqt:vHatmumuCase2}) by $\kappa \ell \mathcal{V}$ in the following sense:
\begin{align*}
	\widehat{v}^{(\mu\mu)} 
		&= \ell\left\{ \kappa + O(1/\ell)\right\}\left\{ \mathcal{V} + O(1/n)\right\}\\
		&\qquad
			+ O(\ell)\left\{ 
			O\left( \frac{\mathcal{S}^2\mathcal{J}\ell}{n}\right)
			+ O\left( \frac{\mathcal{C}^2\ell^2}{n^2}\right)
			+ O(\ell\mathcal{V}/n)
			\right\} \\
		&= \kappa\ell\mathcal{V} + O\left( \frac{\ell}{n}\right) +
			O\left( \frac{\mathcal{S}^2\mathcal{J}\ell^2}{n}\right)
			+ O\left( \frac{\mathcal{C}^2\ell^3}{n^2}\right), 
\end{align*}
where the last line follows from $\ell/n = o(1)$ and $\kappa \neq 0$.
Finally, it is easy to check that the upper bound is achievable.
Hence, the desired result follows. 
\end{proof}

\subsection{Lemma~\ref{lem:Var_vHatp_muZ}} 
\begin{lemma}\label{lem:Var_vHatp_muZ}
Assume all conditions in Theorem~\ref{thm:robustness}.
Define $\widehat{v}^{(\mu Z)}$ as in (\ref{eqt:vhatp_ab}). 
Then
\begin{align*} 
	\left\| \widehat{v}^{(\mu Z)} \right\| 
		= O\left\{ \frac{\ell}{\sqrt{n}} \left( \mathcal{C} +\mathcal{S} \mathcal{J}  \right)\right\} \mathbb{1}_{(m=0)}
			+  O\left\{ \frac{\ell}{\sqrt{n}} \left( \frac{mh\mathcal{C}}{n} + \mathcal{S} \left(\frac{mh\mathcal{J}}{n}\right)^{1/2} \right)\right\} .
\end{align*}
\end{lemma}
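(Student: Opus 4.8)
The plan is to regard $\widehat v^{(\mu Z)}$ as a \emph{deterministic} linear functional of the noise sequence $\{Z_i\}$ and to bound its $\mathcal L^2$ norm through the absolute summability of the ACVF. First I would unfold the inner difference: substituting $D_{i-|k|}^{(Z)}=\sum_{j'=0}^m d_{j'}Z_{i-|k|-j'h}$ into the definition (\ref{eqt:vhatp_ab}) of $\widehat v^{(\mu Z)}$ and collecting the coefficient of each $Z_t$ via the change of index $t=i-|k|-j'h$, one writes
\[
	\widehat v^{(\mu Z)}=\sum_t c_t Z_t,\qquad
	c_t=\frac1n\sum_{|k|<\ell}\sum_{j'=0}^m K\!\left(\frac k\ell\right)d_{j'}\,D^{(\mu)}_{t+|k|+j'h}\,\mathbb 1_{\{mh+|k|+1\le t+|k|+j'h\le n\}},
\]
with each $c_t$ non-random (and in the $m=0$ global-centering case $D^{(\mu)}_i=\mu_i-\bar\mu$, with the extra $\bar Z_n$-type correction of strictly smaller order and absorbed at the end). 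Since $\E(Z_t)=0$ we have $\E\widehat v^{(\mu Z)}=0$, so only an $\mathcal L^2$ bound is needed.

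Next I would pass from $\|\widehat v^{(\mu Z)}\|$ to $\|c\|_2$. Using $\E(Z_tZ_{t'})=\gamma_{t-t'}$ and Cauchy--Schwarz on the inner convolution,
\[
	\|\widehat v^{(\mu Z)}\|^2=\sum_{t,t'}c_tc_{t'}\gamma_{t-t'}=\sum_k\gamma_k\sum_t c_t c_{t-k}\le\Big(\sum_k|\gamma_k|\Big)\|c\|_2^2=u_0\,\|c\|_2^2,
\]
where $u_0<\infty$ by Assumption~\ref{ass:weakDep}. Hence it suffices to show $\|c\|_2=O\{\tfrac{\ell}{\sqrt n}(\mathcal C+\mathcal S\mathcal J)\}\mathbb 1_{(m=0)}+O\{\tfrac{\ell}{\sqrt n}(\tfrac{mh\mathcal C}{n}+\mathcal S(\tfrac{mh\mathcal J}{n})^{1/2})\}$.

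To bound $\|c\|_2$ I would split $\mu_i=c_i+s_i$ as in (\ref{eqt:muDef}), so $D^{(\mu)}=D^{(c)}+D^{(s)}$ and $c_t=c_t^{(c)}+c_t^{(s)}$; by Minkowski it is enough to handle the two pieces separately. For $m>0$: by (\ref{eqt:ED_2}), $|D^{(c)}_s|=O(\mathcal C mh/n)$ uniformly in $s$, while by (\ref{eqt:ED_3}) and the gap condition $\mathcal G\gtrsim\ell+mh$, $D^{(s)}_s\ne0$ for at most $O(mh\mathcal J)$ indices $s$ and then $|D^{(s)}_s|=O(\mathcal S)$. Using $\sum_k|K(k/\ell)|=O(\ell)$ and $\sum_{j'}|d_{j'}|=O(1)$ (Assumption~\ref{ass:summability_dj}), the continuous part gives $|c_t^{(c)}|=O(\ell\mathcal C mh/n^2)$, hence $\|c^{(c)}\|_2^2=O(\ell^2\mathcal C^2m^2h^2/n^3)$; for the step part, a Cauchy--Schwarz on the $(k,j')$-sum followed by interchanging the $t$- and $(k,j')$-summations gives
\[
	\|c^{(s)}\|_2^2\le\frac{O(\ell)}{n^2}\sum_{|k|<\ell}\sum_{j'=0}^m\Big|K\!\left(\tfrac k\ell\right)\Big|\,|d_{j'}|\sum_s|D^{(s)}_s|^2=O\!\left(\frac{\ell^2\,mh\,\mathcal S^2\mathcal J}{n^2}\right).
\]
Taking square roots and combining with the norm reduction yields the stated $m>0$ bound. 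For $m=0$ one has $|c_i-\bar c|=O(\mathcal C)$ and $|s_i-\bar s|=O(\mathcal S\mathcal J)$ (range bounds), and the same Cauchy--Schwarz-and-swap argument — now with $O(n)$ nonzero terms rather than $O(mh\mathcal J)$ — gives $\|c^{(c)}\|_2^2=O(\ell^2\mathcal C^2/n)$ and $\|c^{(s)}\|_2^2=O(\ell^2\mathcal S^2\mathcal J^2/n)$, i.e.\ the $\mathbb 1_{(m=0)}$ term.

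The only delicate point — the main obstacle — is the index bookkeeping in the last step: one must carry the range restriction inside $c_t$, perform the reindexing $t=i-|k|-j'h$ correctly, and, crucially, invoke the sparsity of $D^{(s)}$ only \emph{after} swapping the $t$- and $(k,j')$-summations, so that the count $O(mh\mathcal J)$ (resp.\ $O(n)$ for $m=0$) enters exactly once. Everything else reduces to two applications of Cauchy--Schwarz and the uniform summability of $\{|d_j|\}$, $\{|\delta_s|\}$ and $\{|\gamma_k|\}$ guaranteed by Assumptions~\ref{ass:weakDep} and~\ref{ass:summability_dj}; this is essentially the same mechanism used in Lemma~\ref{lem:size_vHatpmumu}.
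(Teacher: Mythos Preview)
Your proposal is correct and follows essentially the same strategy as the paper: write $\widehat v^{(\mu Z)}$ as a deterministic linear functional of the noise, split $\mu=c+s$, and control each piece through the $\ell^2$ norm of the coefficients. The execution differs in two minor ways. First, the paper keeps the linear form in terms of $D_i^{(Z)}$, writing $\widehat v^{(\mu Z)}=\sum_i\alpha_{c,i}D_i^{(Z)}+\sum_i\alpha_{s,i}D_i^{(Z)}$, and then invokes Lemma~1 of \citet{wu2010} (via the physical-dependence bound established in (\ref{eqt:stableD})) to get $\|\sum_i\alpha_iD_i^{(Z)}\|^2\lesssim\sum_i\alpha_i^2$; you instead unfold $D^{(Z)}$ one level further to $\sum_tc_tZ_t$ and use the more elementary inequality $\|\sum_tc_tZ_t\|^2\le u_0\|c\|_2^2$, which needs only $u_0<\infty$ and Cauchy--Schwarz. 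Second, for the step part the paper counts the nonzero $\alpha_{s,i}$ directly and bounds their size, whereas your Cauchy--Schwarz-then-swap argument routes the sparsity count $O(mh\mathcal J)$ through $\sum_s|D_s^{(s)}|^2$; your route is arguably cleaner here because the count enters exactly once without having to track how the kernel window enlarges the support. Both approaches yield the identical bounds; yours trades the external lemma for one extra layer of index bookkeeping.
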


\begin{proof}
Similar to Lemma~\ref{lem:size_vHatpmumu}, 
we let $c_i = c(i/n)$ and $s_i = s(i/n)$ for $i=1, \ldots, n$. 
Similar to (\ref{eqt:vHat_quad}), 
we have 
\begin{align*}
	\widehat{v}^{(\mu Z)} 
		&= \sum_{i=1}^n\sum_{i'=1}^n
			\left\{ \frac{1}{n} K\left( \frac{i-i'}{\ell} \right) \mathbb{1}_{(\min(i,i')\geq mh+1)} 
				\left(D_{i'}^{(c)}+D_{i'}^{(s)}\right) D_{i}^{(Z)} \right\}.
\end{align*}
So, by Minkowski's inequality, 
\begin{align}\label{eqt:normvHatmuZ12}
	\left\|  \widehat{v}^{(\mu Z)}  \right\| 
		&\leq \left\| \sum_{i=1}^n \alpha_{c,i} D_{i}^{(Z)}\right\| 
				+ \left\| \sum_{i=1}^n \alpha_{s,i} D_{i}^{(Z)}\right\|,
\end{align}
where 
\begin{align*}
	\alpha_{c,i} &=\sum_{i'=1}^n
			\left\{ \frac{1}{n} K\left( \frac{i-i'}{\ell} \right) \mathbb{1}_{(\min(i,i')\geq mh+1)} 
				D_{i'}^{(c)} \right\} ,\\
	\alpha_{s,i} &=\sum_{i'=1}^n
			\left\{ \frac{1}{n} K\left( \frac{i-i'}{\ell} \right) \mathbb{1}_{(\min(i,i')\geq mh+1)}
				D_{i'}^{(s)} \right\} .
\end{align*}
We consider two cases: $m>0$ and $m=0$.

Suppose $m>0$.
Note that $|\alpha_{c,i}|\leq  O(\mathcal{C}mh\ell/n^2)$ for $i=1, \ldots, n$, 
Using (\ref{eqt:stableD}) and Lemma 1 of \citet{wu2010}, we have 
\begin{align}\label{eqt:normvHatmuZ1}
	\left\| \sum_{i=1}^n \alpha_{c,i} D_{i}^{(Z)}\right\| 
	\leq \sqrt{n \cdot  O\left(  \frac{ \mathcal{C}mh\ell}{n^2}  \right)^2}
		=  O\left( \frac{mh\ell\mathcal{C}}{n^{3/2}} \right).
\end{align}
Note also that at most $O(\ell\mathcal{J})$ elements in $\{\alpha_{s,i}\}_{i=mh+1}^n$ are non-zero. 
Those non-zero values satisfy 
$|\alpha_{s,i}| = O(\ell\mathcal{S}/n )$.
By Lemma 1 of \citet{wu2010} again, we have 
\begin{align}\label{eqt:normvHatmuZ2}
	\left\| \sum_{i=1}^n \alpha_{s,i} D_{i}^{(Z)}\right\| 
		\leq \sqrt{O( mh\mathcal{J} ) \cdot O \left( \frac{\ell\mathcal{S}}{n}  \right)^2}
		=  O\left( \frac{m^{1/2}h^{1/2}\ell\mathcal{S}\mathcal{J}^{1/2}}{n} \right) .
\end{align}
Putting (\ref{eqt:normvHatmuZ1}) and (\ref{eqt:normvHatmuZ2}) into (\ref{eqt:normvHatmuZ12}), 
we have 
\[
	\left\|  \widehat{v}^{(\mu Z)}  \right\| 
	=  O\left\{ \frac{\ell}{\sqrt{n}} \left( \frac{mh\mathcal{C}}{n} + \mathcal{S} \left(\frac{mh\mathcal{J}}{n}\right)^{1/2} \right)\right\} .
\]

Suppose $m=0$. We have $\alpha_{c,i}= O(\mathcal{C}\ell/n)$ 
and $\alpha_{s,i}= O(\mathcal{JS}\ell/n)$ 
for $i=1, \ldots, n$. 
Similarly, by Lemma 1 of \citet{wu2010} again, we have 
\[
	\left\|  \widehat{v}^{(\mu Z)}  \right\| 
	 = O\left\{ \frac{\ell}{\sqrt{n}} \left( \mathcal{C} +\mathcal{S} \mathcal{J}  \right)\right\}.
\] 
Thus, the desired result follows.
\end{proof}

\subsection{Lemma~\ref{prop:varSum_bApC}}
\begin{proposition}\label{prop:varSum_bApC}
Let $T_0, T_2 \ldots\in\mathcal{L}^2$ be a stationary time series. 
If there exists $C_0,C_1\in\mathbb{R}$ such that, for any $k=0,1,\ldots$, 
\begin{eqnarray}\label{eqt:varSum_cond}
	\Var\left( \sum_{i=0}^k T_i \right) = (k+1) C_1 + C_0,
\end{eqnarray}
then 
\begin{eqnarray}\label{eqt:varSum_result}
	\Cov(T_0, T_k) 
		= \left\{ \begin{array}{ll} 
			C_1+C_0  & \text{if $k=0$;}	\\
			-C_0/2 & \text{if $k=1$;}	\\
			0 & \text{if $k\geq 2$.}
		\end{array}\right.
\end{eqnarray} 
\end{proposition}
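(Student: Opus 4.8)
The plan is to exploit stationarity to collapse everything onto a single autocovariance sequence $\gamma_k := \Cov(T_0, T_k)$, which is well defined since $T_i\in\mathcal{L}^2$, and then extract $\gamma_0$, $\gamma_1$ and the vanishing of $\gamma_k$ for $k\ge 2$ by taking first and second differences of the affine function $k\mapsto \Var(\sum_{i=0}^k T_i)$. Write $W_k = \sum_{i=0}^k T_i$ and $S_k = \Var(W_k)$, so that (\ref{eqt:varSum_cond}) reads $S_k = (k+1)C_1 + C_0$ for all $k\ge 0$. Evaluating at $k=0$ immediately gives $\gamma_0 = S_0 = C_1 + C_0$, which is the $k=0$ case of (\ref{eqt:varSum_result}).

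Next I would compute the first difference for $k\ge 1$ by writing $W_k = W_{k-1} + T_k$ and expanding:
\[
	S_k - S_{k-1} = \Var(T_k) + 2\Cov(W_{k-1}, T_k) = \gamma_0 + 2\sum_{j=1}^k \gamma_j,
\]
where the last step uses stationarity, $\Cov(T_{k-i}, T_k) = \gamma_i$, after reindexing the sum $\Cov(W_{k-1}, T_k) = \sum_{i=0}^{k-1}\Cov(T_i, T_k) = \sum_{j=1}^{k}\gamma_j$. By hypothesis $S_k - S_{k-1} = C_1$ for every $k\ge 1$, so $\gamma_0 + 2\sum_{j=1}^k \gamma_j = C_1$ for all $k\ge 1$. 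Taking a second difference, i.e.\ subtracting this identity at $k-1$ from its value at $k$ for any $k\ge 2$, cancels $C_1$ and leaves $2\gamma_k = 0$, hence $\gamma_k = 0$ for $k\ge 2$. Finally, putting $k=1$ in the identity gives $\gamma_0 + 2\gamma_1 = C_1$, so $\gamma_1 = (C_1-\gamma_0)/2 = -C_0/2$ by the value of $\gamma_0$ already found. This establishes all three cases of (\ref{eqt:varSum_result}).

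There is no real obstacle here; the proof is essentially this calculation. The only points requiring care are the range bookkeeping (the first-difference identity holds for $k\ge 1$, the second-difference consequence only for $k\ge 2$) and the reindexing $\Cov(W_{k-1}, T_k) = \sum_{j=1}^k\gamma_j$. I would also note in passing that the hypothesis (\ref{eqt:varSum_cond}) is self-consistent precisely because it forces the autocovariances to be supported on lags $0$ and $1$, so no contradiction arises and $C_0, C_1$ may indeed be arbitrary reals.
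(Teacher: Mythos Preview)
Your proof is correct and follows essentially the same idea as the paper's: both use stationarity to express $\Var\bigl(\sum_{i=0}^k T_i\bigr)$ in terms of the autocovariances $\gamma_h=\Cov(T_0,T_h)$ and then read off $\gamma_0,\gamma_1,\gamma_2,\ldots$ from the affine constraint. The only difference is presentational---the paper writes out the triangular sum $\sum_{|h|\le k}(k+1-|h|)\gamma_h$ and proceeds by explicit induction on $k$, whereas your first- and second-difference argument yields $\gamma_k=0$ for all $k\ge 2$ in one stroke, a slight streamlining.
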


\begin{proof}[Proof of Proposition~\ref{prop:varSum_bApC}]
It is not hard to see that 
\begin{align}
	\Var\left( \sum_{i=0}^k T_i \right)
		&= \sum_{|h|\leq k} (k+1 -|h|)\Cov(T_0, T_h) \nonumber \\ 
		&= (k+1)\Cov(T_0, T_h) + 2\sum_{h=1}^k (k+1 -h)\Cov(T_0, T_h). \label{eqt:varSum}
\end{align}
The result is trivial for $k=0$. 
For $k=1$, applying the identity (\ref{eqt:varSum}) and the assumption (\ref{eqt:varSum_cond}), we have 
\begin{align*}
	2 C_1 + C_0 = \Var\left( T_0 + T_1 \right) 
		&= \Var(T_0) + \Var(T_1) + 2\Cov(T_0,T_1) \\
		&= 2(C_1+C_0) + 2\Cov(T_0,T_1).
\end{align*}
Rearranging the terms, we have $\Cov(T_0,T_1) = -C/2$. 
Similarly, for $k=2$, we have 
\begin{align*}
	3 C_1 + C_0 = \Var\left( T_0 + T_1 +T_2\right) 
		&= \sum_{|h|\leq 2} (3-|h|)\Cov(T_0, T_h)\\
		&= 3(C_1+C_0) + 2\left\{ 2(-C_0/2) + \Cov(T_0, T_2)\right\}.
\end{align*}
Hence, $\Cov(T_0, T_2)=0$. 
Assume that there exists $k'\geq 2$ such that the (\ref{eqt:varSum_result}) is true for all $k\leq k'$. 
Then, we use (\ref{eqt:varSum}) and (\ref{eqt:varSum_cond}) to obtain 
\begin{align*}
	&(k'+2) C_1 + C_0 \\
		&\qquad= \Var\left( \sum_{i=0}^{k'+1}T_i \right) \\
		&\qquad= \sum_{|h|\leq k'+1} (k'+1-|h|)\Cov(T_0, T_h)\\
		&\qquad= (k'+2)(C_1+C_0) + 2\left\{ 2(-C_0/2) + 0 + \ldots + 0 + \Cov(T_0, T_{k'+1})\right\}.
\end{align*}
It implies that $\Cov(T_0, T_{k'+1}) = 0$. 
By mathematical indication, (\ref{eqt:varSum_result}) is also true for all $k\geq 2$.
\end{proof}

\subsection{Lemma~\ref{lem:blocking_vHatp}}\label{sec:proof_blocking_vHatp}
\begin{lemma}\label{lem:blocking_vHatp}
Let $\widehat{v}'$ be defined as (\ref{eqt:Stype_vHat}).
Suppose that the assumptions in Theorem~\ref{thm:var} are satisfied. 
Then 
\begin{align}\label{eqt:varLemmaStatement}
	\Var(\widehat{v}') \sim \frac{4v^2\ell}{n} \left\{ \int_0^1 K^2(t)\, \dd t \right\} \left\{ \sum_{|s|\leq m} \delta_s^2 \right\}.
\end{align}
\end{lemma}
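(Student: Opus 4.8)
The plan is to compute $\Var(\widehat{v}')$ by a blocking argument combined with the stochastic-integral approximation of Lemma~\ref{lem:momentsA}. First I would write $\widehat{v}'$ in its quadratic form and, up to a negligible $O_{\mathcal{L}^2}(\ell^2/n)$ boundary error, reduce it to a sum over consecutive blocks of length $\ell$: set $\widehat{v}' = \widehat{V} + O_{\mathcal{L}^2}(\ell^2/n)$ where $\widehat{V} = (\ell/n)\sum_{b} \widehat{I}_b$ with
\[
	\widehat{I}_b = \sum_{i=(b-1)\ell+1}^{b\ell} \sum_{i'=i-\ell}^{i+\ell} K\!\left(\frac{i-i'}{\ell}\right)\frac{D_i D_{i'}}{\ell}.
\]
Each $\widehat{I}_b$ is a double sum of the type handled by Lemma~\ref{lem:momentsA}, once $D_i$ is expanded into $\sum_{j} d_j Z_{i-jh}$; the index shifts $-jh$ are $O(\ell)$ (since $\lambda = h/\ell$ is a fixed constant $\geq 2$), which is exactly the regime where Lemma~\ref{lem:momentsA} applies after rescaling time by $\ell$.

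Next I would apply Lemma~\ref{lem:momentsA} to obtain, for a fixed number $b'$ of consecutive blocks,
\[
	\Var\!\left(\sum_{b=1}^{b'}\widehat{I}_b\right) \longrightarrow b' C_1 + C_0,
\]
where $C_0$ does not depend on $b'$ and $C_1 = 4v^2\{\int_0^1 K^2(u)\,\dd u\}\{\sum_{|s|\leq m}\delta_s^2\}$. The constant $C_1$ is the crux: it comes from the variance formula \eqref{eqt:VarI_lemma} in Lemma~\ref{lem:momentsA}, where the two integral terms produce a factor $2\int_0^1 K^2$ together with the matching pattern of the lagged products $D_iD_{i'}$; because $\lambda \geq 2$, cross-terms between $Z_{i-jh}$ and $Z_{i-j'h}$ with $j\neq j'$ have disjoint supports in the relevant range, so only the diagonal $j=j'$ survives, contributing $\sum_j d_j^2 = \delta_0 = 1$ from each factor — but more carefully, after reassembling the double sum one gets the autocovariance weights $\delta_s = \sum_j d_jd_{j-|s|}$, and squaring and summing over the product structure yields $\sum_{|s|\leq m}\delta_s^2 = \Delta_m$. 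The affine-in-$b'$ form then lets me invoke Proposition~\ref{prop:varSum_bApC}: the block process $\{\widehat{I}_b\}$ (asymptotically) has $\Cov(\widehat{I}_0,\widehat{I}_k) = 0$ for $k\geq 2$, $\Cov(\widehat{I}_0,\widehat{I}_1) = -C_0/2$, $\Var(\widehat{I}_0) = C_1 + C_0$.

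From here the conclusion is a short computation: with $b_1 - b_0 \sim n/\ell$ blocks,
\[
	\Var(\widehat{V}) = \frac{\ell^2}{n^2}\,\Var\!\left(\sum_{b} \widehat{I}_b\right) \sim \frac{\ell^2}{n^2}\cdot\frac{n}{\ell}\,C_1 = \frac{\ell}{n}\,C_1,
\]
since the telescoping covariance contributions are lower order. Substituting $C_1$ gives $\Var(\widehat{V}) \sim (4v^2\ell/n)\{\int_0^1 K^2\}\Delta_m$, and the $O_{\mathcal{L}^2}(\ell^2/n)$ discrepancy between $\widehat{v}'$ and $\widehat{V}$ contributes only $O(\ell^4/n^2) = o(\ell/n)$ to the variance (using $\ell/n\to 0$ and Minkowski), so $\Var(\widehat{v}') \sim \Var(\widehat{V})$, which is \eqref{eqt:varLemmaStatement}. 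The main obstacle I anticipate is the bookkeeping in identifying $C_1$ precisely — tracking how the differencing coefficients $d_j$ recombine into $\delta_s$ through the double sum and the It\^o-isometry cross terms of Lemma~\ref{lem:momentsA}, and verifying that all cross-block and boundary remainders are genuinely $o(\ell/n)$; the reduction to the affine form \eqref{eqt:varSum_cond} and the subsequent application of Proposition~\ref{prop:varSum_bApC} are then routine.
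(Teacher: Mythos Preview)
Your plan matches the paper's proof almost step for step: the same blocking $\widehat{V}=(\ell/n)\sum_b\widehat{I}_b$, the same application of Lemma~\ref{lem:momentsA} to obtain $\Var(\sum_{b=1}^{b'}\widehat{I}_b)\to b'C_1+C_0$ (with $\lambda\geq2$ making the indicators $\mathbb{1}_{(\lambda k-1\leq u\leq\lambda k+1)}$ disjoint so the square collapses to $\sum_{|k|\leq m}\delta_k^2$), the same invocation of Proposition~\ref{prop:varSum_bApC}, and the same final computation. The one step you should make explicit is one the paper isolates separately: for $|b-b'|>2+\lambda m$ the blocks use disjoint observations, so the limiting Brownian functionals live on disjoint intervals and are independent --- this, not the affine formula for fixed $b'$, is what justifies summing $\sim n/\ell$ block covariances and getting the $o(b_1-b_0)$ remainder.
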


\begin{proof}[Proof of Lemma~\ref{lem:blocking_vHatp}]
For simplicity, we denote $R_n = O_{\mathcal{L}^2}(r_n)$ if $\|R_n\|_2 =O(r_n)$.
Let $b_0$ be the smallest integer such that $(b_0-1)\ell \geq mh+\ell$;
and $b_1$ be the largest integer such that $b_1\ell \leq n$. 
Then 
\begin{eqnarray*}
	\widehat{v}' 
		&=& \frac{1}{n} \sum_{i=(b_0-1)\ell+1}^{b_1\ell} \sum_{|k|\leq \ell} K\left( \frac{k}{\ell} \right) D_i D_{i-k} + O_{\mathcal{L}^2}\left( \frac{\ell^2}{n} \right) \\
		&=& \widehat{V} + O_{\mathcal{L}^2}\left( \frac{\ell^2}{n} \right) ,
\end{eqnarray*}
where 
\[
	\widehat{V} = \frac{1}{n} \sum_{b=b_0}^{b_1} \sum_{i=(b-1)\ell+1}^{b\ell} \sum_{|k|\leq \ell} K\left( \frac{k}{\ell} \right) D_i D_{i-k}.
\]
Hence, we know that 
$\| \widehat{v}' - \E(\widehat{v}') \| =  
		\| \widehat{V}-\E(\widehat{V}) \|  + O(\ell^2/n)$.
Then, it suffices to find $\Var(\widehat{V}) = \| \widehat{V}-\E(\widehat{V}) \|^2$. 
Now, write 
\begin{align}\label{eqt:hatV_form2}
	\widehat{V}	= \frac{\ell}{n} \sum_{b=b_0}^{b_1} \widehat{I}_b,
	\quad \text{where} \quad
	\widehat{I}_b = \sum_{i = (b-1)\ell+1}^{b\ell} 
					\sum_{i'=i-\ell}^{i+\ell} K\left( \frac{i-i'}{\ell}\right) \frac{D_i D_{i'}}{\ell}.
\end{align}
For any $b'\in\mathbb{N}$, we consider 
\begin{align*}
	\sum_{b=1}^{b'} \widehat{I}_b	
		&= \frac{1}{\ell} \sum_{i = 1}^{b'\ell} 
					\sum_{i'=i-\ell}^{i+\ell} K\left( \frac{i-i'}{\ell}\right) 
					\sum_{j,j'=0}^m d_j d_{j'} X_{i'-j'h}X_{i-jh} \\
		&= \frac{1}{\ell} \sum_{j,j'=0}^m \sum_{i = 1-jh}^{b'\ell-jh} 
					\sum_{i'=i+jh-\ell- j'h}^{i+jh+\ell- j'h} K\left( \frac{i-i'+ h(j-j')}{\ell}\right) 
					d_j d_{j'}X_iX_{i'} \\
		&= \frac{1}{\ell} 
					\sum_{i = 1-mh-(\ell+mh)}^{b'\ell}  
					\sum_{i'=i-\ell-mh}^{i+\ell+mh} 
					\Bigg\{\sum_{j,j'=0}^m 
						d_j d_{j'}K\left( \frac{i-i'+h(j-j')}{\ell} \right) 
						\\&\qquad
					\times\mathbb{1}_{\left(1-jh\leq i \leq b'\ell-jh\right)} 
					\mathbb{1}_{\left(h(j-j')-\ell \leq i'-i \leq h(j-j')+\ell \right)}
					\Bigg\}
					X_iX_{i'}.
\end{align*}
Using Lemma~\ref{lem:momentsA}, we know that 
\begin{align}\label{eqt:A1A2_var}
	\Var\left( \sum_{b=1}^{b'} \widehat{I}_b  \right)\nonumber
		&\rightarrow \Var\left\{ \int_{-m\lambda-(1+m\lambda)}^{b'} \int_{t-(1+m\lambda)}^{t+(1+m\lambda)} \psi(t,t') \, \dd\mathbb{B}_{t'}\, \dd\mathbb{B}_{t} \right\} \\
		&= \Psi_1 + \Psi_2,
\end{align}
where
\begin{align*}
	\psi(t,t') &= \sum_{j,j'=0}^m 
						d_j d_{j'} K\left( t-t' + \lambda (j-j')\right) \\
			& \qquad\qquad\times \mathbb{1}_{\left(-\lambda j \leq t \leq b'-\lambda j\right)}
					\mathbb{1}_{\left(\lambda (j-j')-1 \leq t'-t\leq  \lambda (j-j')+1\right)},
\end{align*}
and
\begin{eqnarray*}
	\Psi_1 &=& \int_{-m\lambda-(1+m\lambda)}^{b'} \int_{t-(1+m\lambda)}^{t+(1+m\lambda)} \psi^2(t,t') \, \dd t' \, \dd t ,\\
	\Psi_2 &=& \int_{-m\lambda-(1+m\lambda)}^{b'} \int_{t-(1+m\lambda)}^{t+(1+m\lambda)} \psi(t,t')\psi(t',t) \, \dd t' \, \dd t.
\end{eqnarray*}
Next, we compute $\Psi_1$ and $\Psi_2$ one by one. 
First, let $u = t'-t$, then we can express $\psi(t,t')$ as 
\begin{align*}
	\psi(t,t') &= \sum_{j,j'=0}^m 
						d_j d_{j'}K\left( -u + \lambda (j-j')\right) 
					\mathbb{1}_{\left(-\lambda j \leq t \leq b'-\lambda j\right)}
					\mathbb{1}_{\left(\lambda (j-j')-1\leq u \leq \lambda (j-j')+1\right)} \\
			&= \sum_{|k|\leq m} K\left( -u + \lambda k\right) \chi_k(t) \mathbb{1}_{(\lambda k - 1\leq u \leq\lambda k+1)},  
\end{align*}
where  
\[
	\chi_k(t) = \sum_{j=\max(0,k)}^{m+\min(0,k)} d_j d_{j-k} 
				\mathbb{1}_{\{-\lambda j \leq t \leq b'-\lambda j\}}.
\]
Since $\lambda\geq 2$, 
we have
\begin{align*}
	\Psi_1 &= \int_{-m\lambda-(1+m\lambda)}^{b'} \int_{-(1+m\lambda)}^{1+m\lambda} 
				\left\{ \sum_{|k|\leq m} K(-u+\lambda k) \chi_k(t) \mathbb{1}_{(\lambda k - 1\leq u \leq \lambda k +1)} \right\}^2 \, \dd u \, \dd t \\
		&= \int_{-m\lambda-(1+m\lambda)}^{b'} \int_{-(1+m\lambda)}^{1+m\lambda} 
				\sum_{|k|\leq m} K^2(-u+\lambda k) \chi_k^2(t) \mathbb{1}_{(\lambda k - 1\leq u \leq \lambda k +1)} \, \dd u \, \dd t \\
		&= \sum_{|k|\leq m} \left\{ \int_{k\lambda-1}^{k\lambda+1} K^2(-u+\lambda k)  \, \dd u \right\} \\
		& \qquad \qquad\times
			 \int_{-m\lambda-(1+m\lambda)}^{b'} \left\{\sum_{j=\max(0,k)}^{m+\min(0,k)} d_j d_{j-k} 
				\mathbb{1}_{\{-\lambda j \leq t \leq b'-\lambda j\}} \right\}^2 \, \dd t \\
		&= \sum_{|k|\leq m} \left\{ \int_{-1}^{1} K^2(u)\, \dd u  \right\} 
			\left\{ \mathop{\sum\sum}_{j,j'=\max(0,k)}^{m+\min(0,k)}
					\int_{-\lambda \min(j,j')}^{b'-\lambda \max(j,j')} 
							d_j d_{j-k} d_{j'} d_{j'-k} 
					\, \dd t \right\} \\
		&= \sum_{|k|\leq m} \left\{ 2\int_{0}^{1} K^2(u)\, \dd u  \right\} 
			\left\{ \mathop{\sum\sum}_{j,j'=\max(0,k)}^{m+\min(0,k)} (b'-|j-j'|)d_j d_{j-k} d_{j'} d_{j'-k} \right\} \\
		&= b' \left\{ 2\int_{0}^{1} K^2(u)\, \dd u  \right\} 
			\left\{ \sum_{|k|\leq m} \delta_{|k|}^2 \right\} +C_0',
\end{align*}
where $C_0'$ does not depend on $b'$.
Similarly, for $\Psi_2$, we also let $u=t'-t$. Using $\lambda\geq 2$, we have 
\begin{align}\label{eqt:crossProd_psi}
	\psi(t,t') \psi(t',t)
		&= \psi(t,u+t) \psi(u+t,t) \nonumber \\
		&= \mathop{\sum\sum}_{-m\leq k,k'\leq m} 
			K(-u+\lambda k)K(u+\lambda k')
			\chi_k(t) \chi_{k'}(u+t)  \nonumber \\
		&\qquad\qquad\qquad\times	\mathbb{1}_{(\lambda k - 1\leq u \leq\lambda k+1)}
			\mathbb{1}_{(\lambda k' - 1\leq -u\leq\lambda k'+1)}  \nonumber \\
		&= \sum_{|k|\leq m} 
			K(-u+\lambda k)K(u-\lambda k)
			\chi_k(t) \chi_{-k}(u+t) 
			\mathbb{1}_{(\lambda k - 1\leq u \leq\lambda k+1)}\nonumber \\
\end{align}
for all $t$ and $u$ except countably many values. 
Using (\ref{eqt:crossProd_psi}), we have 
\begin{align*}
	\Psi_2 &= \int_{-m\lambda-(1+m\lambda)}^{b'}
			\int_{-(1+m\lambda)}^{1+m\lambda} 
				\psi(t,u+t) \psi(u+t,t) \, \dd u\, \dd t \\
		&= \sum_{|k|\leq m} \int_{-m\lambda-(1+m\lambda)}^{b'}
			\int_{\lambda k -1}^{\lambda k+1} 
			K(-u+\lambda k)K(u-\lambda k)
			\chi_k(t) \chi_{-k}(u+t) \, \dd u\, \dd t \\
		&= \sum_{|k|\leq m} 
			\int_{\lambda k -1}^{\lambda k+1} 
			K(-u+\lambda k)K(u-\lambda k) \sum_{j=\max(0,k)}^{m+\min(0,k)}
			\sum_{j'=\max(0,-k)}^{m+\min(0,-k)}
			\\&\qquad\qquad	d_jd_{j-k}d_{j'}d_{j'+k}
			\int_{-m\lambda-(1+m\lambda)}^{b'}\mathbb{1}_{(-\lambda j \leq u \leq b'-\lambda j - u)}  \, \dd t\, \dd u \\
		&= \sum_{|k|\leq m} 
			\delta_k^2
			\int_{\lambda k -1}^{\lambda k+1} 
			K(-u+\lambda k)K(u-\lambda k) \cdot (b'-u) \, \dd u \\
		&= b' \left\{ \sum_{|k|\leq m} 
			\delta_k^2
			\int_{-1}^{1} 
			K(-u)K(u)  \, \dd u \right\} + C_0'' \\
		&= b' \left\{ 2\int_{0}^{1} K^2(u)\, \dd u  \right\} 
			\left\{ \sum_{|k|\leq m} \delta_{|k|}^2 \right\} + C_0'',
\end{align*}
where $C_0''$ does not depends on $b'$.
Consequently, (\ref{eqt:A1A2_var}) can be simplified to 
\begin{align}\label{eqt:varLemma_inter1}
	\Var\left( \sum_{b=1}^{b'} \widehat{I}_b  \right) 
		\rightarrow b' C_1 
			+ C_0, 
\end{align}
where $C_0 = C_0' + C_0''$ and 
\[
	C_1 = 4\left\{ \int_{0}^{1} K^2(u)\, \dd u  \right\} 
			\left\{ \sum_{|k|\leq m} \delta_{|k|}^2 \right\}.
\]
By Proposition~\ref{prop:varSum_bApC}, we know that, for any finite $|b-b'|$,  
\[
	\Cov\left( \widehat{I}_b, \widehat{I}_{b'}\right) 
		\inD \left\{ \begin{array}{ll} 
			C_1+C_0  & \text{if $b-b'=0$;}	\\
			-C_0/2 & \text{if $|b-b'|=1$;}	\\
			0 & \text{if $|b-b'|\geq 2$.}
		\end{array}\right.
\]
We are going to show that $\Cov\left( \widehat{I}_b, \widehat{I}_{b'}\right) \rightarrow 0$
is also true for large $|b-b'|$. 
Without loss of generality, assume that $b<b'$. 
Note that the latest observation involved in $\widehat{I}_b$ is $X_{b\ell}$, 
whereas the earliest observation involved in $\widehat{I}_{b'}$ is $X_{(b'-1)\ell+1-\ell-mh}$. 
Hence, if $b'-b>2+\lambda m$, the observations used in $\widehat{I}_b$ and in $\widehat{I}_{b'}$
are fully non-overlapping. 
Using Lemma~\ref{lem:momentsA}, we know that there exists a functional $G$ and 
two non-overlapping intervals $\Lambda_b, \Lambda_{b'}$ such that 
\[
	\begin{bmatrix} \widehat{I}_b \\ \widehat{I}_{b'} \end{bmatrix}
		\inD \begin{bmatrix} G(\{\mathbb{B}_{t}: t\in \Lambda_b\})\\ G(\{\mathbb{B}_{t}: t\in \Lambda_{b'}\}) \end{bmatrix},
\]
where $\{\mathbb{B}_t : t\in\mathbb{R}\}$ is a standard Brownian motion on $\mathbb{R}$. 
Since $\{\mathbb{B}_{t}: t\in \Lambda_b\}$ and $\{\mathbb{B}_{t}: t\in \Lambda_{b'}\}$
are independent, we know that $\Cov(\widehat{I}_b, \widehat{I}_{b'})\rightarrow 0$. 

In view of (\ref{eqt:hatV_form2}) and $\ell(b_1-b_0)\sim n$, we have 
\begin{eqnarray*}
	\Var(\widehat{V}) 
		&=& \frac{\ell^2}{n^2} \sum_{|k|\leq b_1-b_0} (b_1-b_0+1-|k|) \Cov(\widehat{I}_0, \widehat{I}_k) \\
		&\sim& \frac{\ell^2}{n^2} \left\{ (b_1-b_0+1)(C_1+C_0) + 2(b_1-b_0)(-C_0/2)  + o(b_1-b_0) \right\} \\
		&\sim& \frac{\ell}{n} \left\{(C_1+C_0) + 2(-C_0/2)  + o(1) \right\} \\
		&=& \frac{C_1\ell}{n}  .
\end{eqnarray*}
Hence, 
\[
	\Var(\widehat{v}') 
		= \frac{4\ell}{n}\left\{ \int_{0}^{1} K^2(u)\, \dd u  \right\} 
			\left\{ \sum_{|k|\leq m} \delta_{|k|}^2 \right\} + o(\ell/n).
\]
\end{proof}

\subsection{Derivation of (\ref{eqt:compareMSEexisting})}\label{sec:derivationOfCompareMSEexisting}
Assume $K=K_{\Bart}$.
By equation (19) of \citet{chan2020} with their suggested parameter $c_1=1$, we know that 
\[
	n^{2/3} \MSE\{\widehat{v}_{(\Chan)}\} /v^2
		\rightarrow  48^{1/3}  (v_1/v_0)^{2/3}.
\]
According to the discussion in Section 4.2 of \citet{chan2020}, 
we know that $\MSE\{\widehat{v}_{(\Wu)}\} \sim (-v_1/\ell)^2 + 7 v^2 \ell /(2n)$. 
Minimizing the MSE with respect to $\ell$, we obtain 
\[
	n^{2/3} \MSE\{\widehat{v}_{(\Wu)}\} /v^2
		\rightarrow (1323/16)^{1/3} (v_1/v_0)^{2/3}.
\]
From (\ref{eqt:optMSE}), we have 
\[
	n^{2/3} \MSE\{\widehat{v}_{(3)}\} /v^2
		\rightarrow (49/3)^{1/3}  (v_1/v_0)^{2/3}.
\]
Hence, we obtain 
\begin{eqnarray*}
	\frac{\MSE\{\widehat{v}_{(\Chan)}\}}{\MSE\{\widehat{v}_{(3)}\}} 
		\rightarrow (12/7)^{2/3} \approx 1.43
	\quad \text{and} \quad
	\frac{\MSE\{\widehat{v}_{(\Wu)}\}}{\MSE\{\widehat{v}_{(3)}\}} 
		\rightarrow (3/2)^{4/3} \approx 1.71 . 
\end{eqnarray*}

\section{Additional plots}\label{sec:additional_plots}
Additional simulation results for Section~\ref{sec:sim_MSE} are shown below. 
Figure~\ref{fig:dTAVC_A032_n400} and Figure~\ref{fig:dTAVC_A032_n200_full}
show the results for $n=200$ and $n=400$, respectively.

\begin{figure}[h]
\begin{center}
\includegraphics[width=\textwidth]{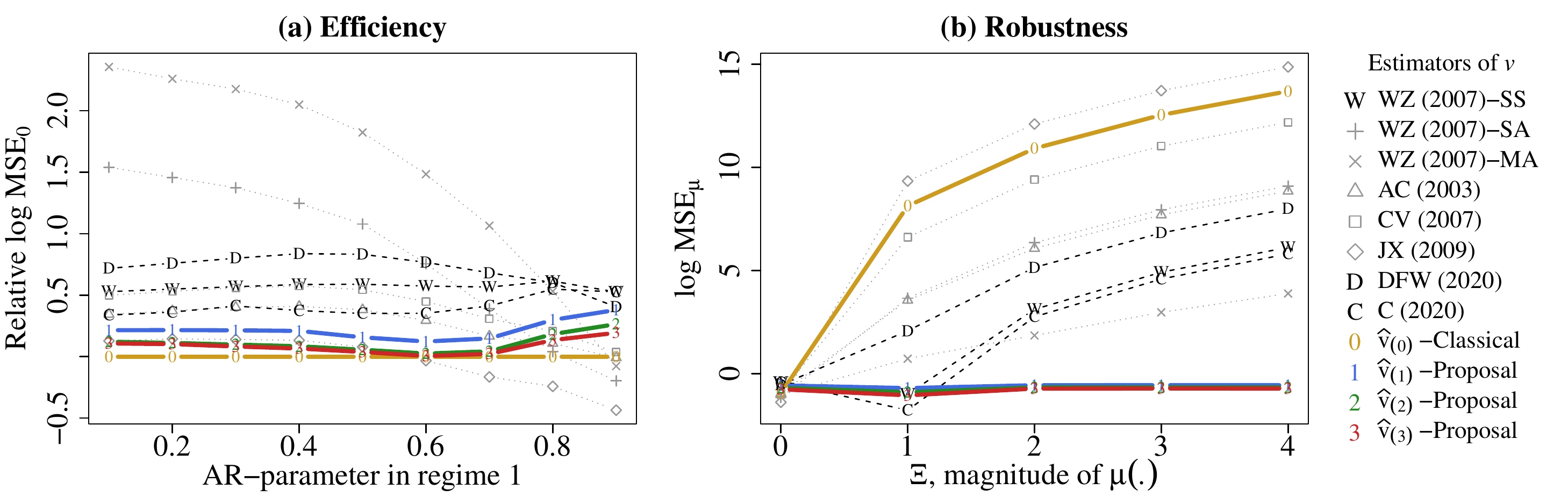} 
\end{center} 
\caption{The caption in Figure~\ref{fig:dTAVC_A032_n200} also applies here. 
		The results above are based on time series of size $n=200$,
		where SS, SA and MA in WZ (2007) refer to 
		the estimators that use sum of squared differences, sum of absolute differences, 
		and median of absolute differences, respectively.}
		\label{fig:dTAVC_A032_n200_full}
\end{figure}

\begin{figure}[h]
\begin{center}
\includegraphics[width=\textwidth]{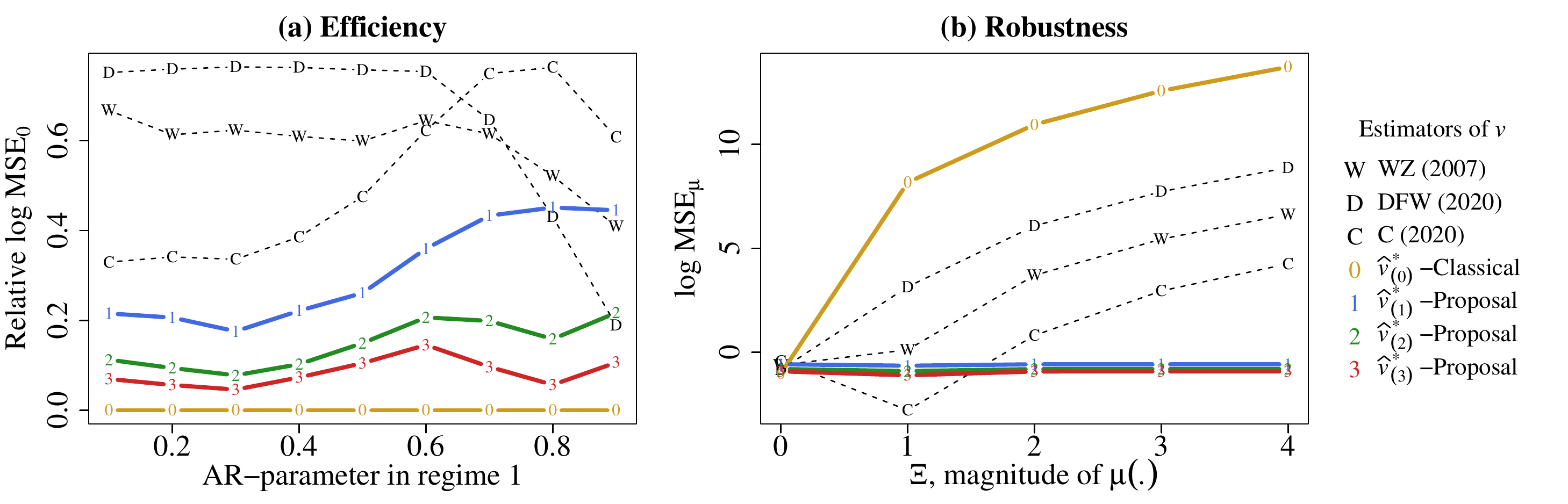} 
\end{center} 
\caption{The caption in Figure~\ref{fig:dTAVC_A032_n200} also applies here. 
		The results above are based on time series of size $n=400$ (instead of $n=200$).}
		\label{fig:dTAVC_A032_n400}
\end{figure}

\section{Additional simulation experiments}\label{sec:robustness_experiment_additional}
Additional simulation experiments are performed in this section. 
We consider the following three mean functions:
\begin{itemize}
	\item[(a)] $\mu(t) = \mathcal{C} t$;  
	\item[(b)] $\mu(t) = \mathcal{S}\left\{ \mathbb{1}(t>1/4)+\mathbb{1}(t>2/4)+\mathbb{1}(t>3/4) \right\}$; and  
	\item[(c)] $\mu(t) = \mathcal{C} t + \mathcal{S}\left\{\mathbb{1}(t>1/4)+\mathbb{1}(t>2/4)+\mathbb{1}(t>3/4)\right\}$. 
\end{itemize}
Case (a) considers a continuous mean function with a Lipschitz constant $\mathcal{C}$. 
Case (b) considers a step discontinuous mean function with a maximal step size $\mathcal{S}$.
Case (c) is a superposition of the mean functions in cases (a) and (b). 
The same noise sequence is used as in part (b) of the experiment in Section~\ref{sec:sim_MSE}.
The MSEs of different estimators in cases (a)--(c) when $n\in\{200,400\}$ are plotted in Figure~\ref{fig:dTAVC_K015_full}.

The proposed estimators (i.e., $\widehat{v}_{(0)}^{\star},\ldots, \widehat{v}_{(3)}^{\star}$)
are the most robust estimators in all three cases. 
When $n=200$, we notice that the MSEs of the proposed estimators are slightly inflated when $\mathcal{S}$ 
is small but non-zero. 
It is because the the proposed estimators are only robust asymptotically. 
When the sample size increases to $n=400$, this phenomenon does not exist.
Moreover, the inflated MSEs quickly decline back to the null values when $\mathcal{S}$ further increases
because the proposed rough centering procedure enhances the finite-sample robustness.

\begin{figure}[h]
\begin{center}
\includegraphics[width=\textwidth]{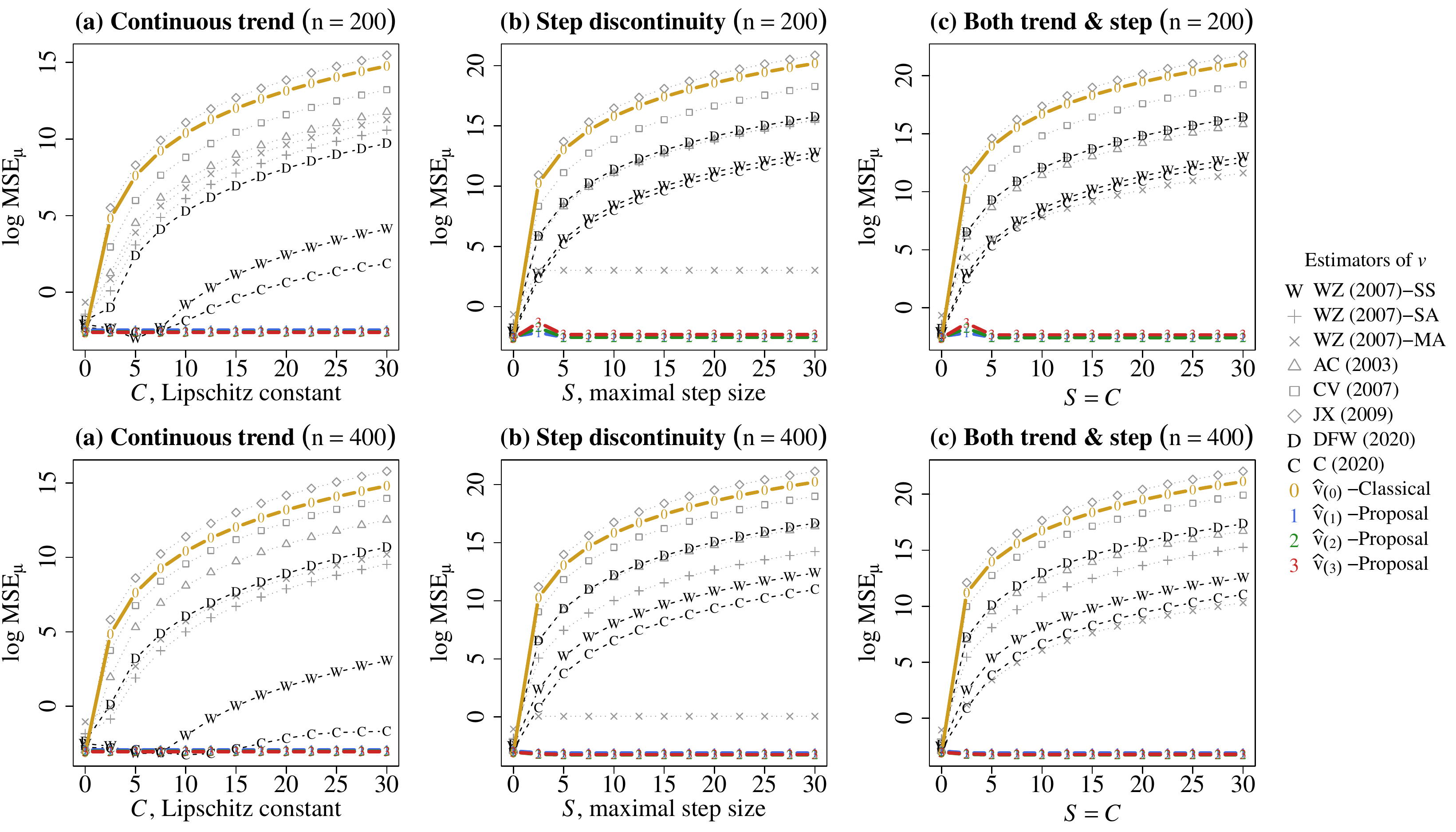} 
\end{center} 
\caption{The values of $\log \MSE_\mu(\cdot)$ against $\mathcal{C}$ or $\mathcal{S}$ 
under the mean functions defined in Section~\ref{sec:robustness_experiment_additional}.}
		\label{fig:dTAVC_K015_full}
\end{figure}

\end{document}